\pdfoutput=1

\documentclass[sigconf]{acmart}

\settopmatter{printacmref=false} 
\renewcommand\footnotetextcopyrightpermission[1]{} 
\pagestyle{plain} 

\AtBeginDocument{%
  \providecommand\BibTeX{{%
    \normalfont B\kern-0.5em{\scshape i\kern-0.25em b}\kern-0.8em\TeX}}}

%
\usepackage{enumitem}
\usepackage{xspace}

\usepackage[textwidth=2cm,textsize=small]{todonotes}

\newcommand{\domagoj}[1]{\todo[inline, color=blue!15]{{\bf Domagoj:} #1}}
\newcommand{\cristian}[1]{\todo[inline, color=orange!30]{{\bf Cristian:} #1}}
\newcommand{\thomas}[1]{\todo[inline, color=green!30]{{\bf Thomas:} #1}}
\newcommand{\floris}[1]{\todo[inline, color=red!30]{{\bf Floris:} #1}}

\renewcommand{\domagoj}[1]{}
\renewcommand{\cristian}[1]{}
\renewcommand{\thomas}[1]{}
\renewcommand{\floris}[1]{}

\usepackage{algorithm} 
\usepackage[noend]{algpseudocode}

\usepackage{tikz}
\usepackage{pgfplots}

\pgfplotsset{compat=1.12}
\usetikzlibrary{arrows,decorations.pathmorphing,backgrounds,positioning,fit,calc,automata,matrix}
\usetikzlibrary{trees}
\usetikzlibrary{shapes}
\usetikzlibrary{chains}
\usetikzlibrary{patterns}
\usetikzlibrary{tikzmark}

\usepackage{xspace}
\usepackage[overload]{empheq}
\usepackage{mathtools}
\DeclarePairedDelimiter{\ceil}{\lceil}{\rceil}


\newcommand{\NN}{\mathbb{N}}

\newcommand{\RR}{\mathbb{R}}

\newcommand{\cS}{\mathcal{S}}

\newcommand{\cV}{\mathcal{V}}

\newcommand{\ssum}{\Sigma}
\newcommand{\sprod}{\Pi}




\newtheorem*{mylem*}{Lemma}

\newtheorem*{myclaim*}{Claim}
\newtheorem*{myprop*}{Proposition}

{}

\newcommand{\cG}{\mathcal{G}}

\newcommand{\sem}[2]{\llbracket #1 \rrbracket(#2)}

\newcommand{\Mnam}{\mathcal{M}}
\newcommand{\Mvar}{\mathcal{V}}
\newcommand{\Fun}{\mathcal{F}}

\newcommand{\llet}{\texttt{let } V = e_1 \texttt{ in } e_2}
\newcommand{\ones}{\mathbf{1}}
\newcommand{\diag}{\texttt{diag}}

\newcommand{\I}{\mathcal{I}}

\newcommand{\Sch}{\mathcal{S}}

\newcommand{\mtr}[1]{\textsf{Mat}[#1]}

\newcommand{\dom}{\mathcal{D}}
\newcommand{\conc}{\texttt{mat}}

\newcommand{\DD}{\texttt{Symb}}
\newcommand{\size}{\texttt{size}}

\newcommand{\ttype}{\texttt{type}_{\Sch}}

\newcommand{\type}{\texttt{type}}

\newcommand{\rak}{\texttt{RA}$_K^+$\xspace}
\newcommand{\lara}{\texttt{LARA}\xspace}
\newcommand{\lang}{\texttt{MATLANG}\xspace}
\newcommand{\langf}[1]{\texttt{MATLANG}[#1]}
\newcommand{\langfor}{\texttt{for}\text{-}\texttt{MATLANG}\xspace}
\newcommand{\langforf}[1]{\texttt{for}\text{-}\texttt{MATLANG}[#1]\xspace}
\newcommand{\langsum}{\texttt{sum}-\texttt{MATLANG}\xspace}
\newcommand{\langprod}{\texttt{FO}-\texttt{MATLANG}\xspace}
\newcommand{\langmprod}{\texttt{prod}-\texttt{MATLANG}\xspace}

\newcommand{\ffor}[3]{\texttt{for}\, #1,#2 \texttt{.}\, #3}

\newcommand{\initf}[4]{\texttt{for}\, #2,#3\!=\! #1 \texttt{.}\, #4}

\newcommand{\mmin}[1]{\texttt{min}(#1)}

\newcommand{\ccol}[2]{\texttt{col(}#1,#2\texttt{)}}
\newcommand{\red}[2]{\texttt{reduce(}#1,#2\texttt{)}}
\newcommand{\nneq}[2]{\texttt{neq(}#1,#2\texttt{)}}
\newcommand{\ccoleq}[2]{\texttt{col}_{\texttt{eq}}\texttt{(}#1,#2\texttt{)}}

\newcommand{\push}[2]{#1\texttt{.push}(#2)}
\newcommand{\pop}[1]{#1\texttt{.pop}}
\newcommand{\getsize}[1]{#1\texttt{.size}}
\newcommand{\gettop}[1]{#1\texttt{.top}}

\newcommand{\isplus}[1]{\texttt{isplus}\left( #1 \right)}
\newcommand{\isprod}[1]{\texttt{isprod}\left(#1 \right)}
\newcommand{\isone}[1]{\texttt{isone}\left(#1 \right)}
\newcommand{\isinput}[1]{\texttt{isinput}\left(#1 \right)}
\newcommand{\getfirst}[1]{\texttt{getfirst}\left(#1 \right)}
\newcommand{\getinput}[1]{\texttt{getinput}\left(#1 \right)}
\newcommand{\getroot}{\texttt{getroot}()}
\newcommand{\isnotlast}[2]{\texttt{not{\_}last}\left(#1,#2 \right)}
\newcommand{\nextgate}[2]{\texttt{next{\_}gate}\left(#1, #2 \right)}

\newcommand{\Iden}[1]{\texttt{Iden}(#1)}
\newcommand{\pondIden}[2]{\texttt{E}\left[ #1, #2 \right]}


\newcommand{\logspace}{{\sc LOGSPACE}}


\newcommand{\ddom}{\mathbb{D}}
\newcommand{\fdom}{\operatorname{dom}}
\newcommand{\att}{\mathbb{A}}
\newcommand{\tuples}{\mathbf{tuples}}
\newcommand{\supp}{\operatorname{supp}}
\newcommand{\cJ}{\mathcal{J}}
\newcommand{\cR}{\mathcal{R}}
\newcommand{\adom}{\mathbf{adom}}

\newcommand{\ksum}{\oplus}
\newcommand{\kprod}{\odot}
\newcommand{\bigksum}{\bigoplus}
\newcommand{\bigkprod}{\bigodot}
\newcommand{\kzero}{\mymathbb{0}}
\newcommand{\kone}{\mymathbb{1}}

\newcommand{\row}{\mathsf{row}}

\newcommand{\col}{\mathsf{col}}

\newcommand{\arae}{Q}

\newcommand{\ssem}[2]{\llbracket #1 \rrbracket_{#2}}

\newcommand{\hadprod}{\circ} 
\newcommand{\qhadprod}{\Pi^{\hadprod}}

\newcommand{\cA}{\mathcal{A}}
\newcommand{\arity}{\operatorname{arity}}

\DeclareMathAlphabet{\mymathbb}{U}{BOONDOX-ds}{m}{n}



\begin{document}

\title{Expressive power of linear algebra query languages}
\author{Floris Geerts}
\affiliation{%
  \institution{University of Antwerp}
}
\email{floris.geerts@uantwerp.be}

\author{Thomas Mu\~noz}
\affiliation{%
  \institution{PUC Chile and IMFD Chile}
}
\email{tfmunoz@uc.cl}

\author{Cristian Riveros}
\affiliation{%
  \institution{PUC Chile and IMFD Chile}
}
\email{cristian.riveros@uc.cl}

\author{Domagoj Vrgo\v{c}}
\affiliation{%
  \institution{PUC Chile and IMFD Chile}
}
\email{dvrgoc@ing.puc.cl}


\begin{abstract}
Linear algebra algorithms often require some sort of iteration or recursion as is illustrated by standard algorithms for Gaussian elimination, matrix inversion, and transitive closure. A key characteristic shared by these 
algorithms is that they allow looping for a number of steps that is bounded by the matrix dimension. 
In this paper we extend the matrix query language \lang with this type of recursion, and show that this suffices to express  classical linear algebra algorithms. We study the expressive power of this language and show that it naturally corresponds to arithmetic circuit families, which are often said to capture linear algebra. Furthermore, we analyze several sub-fragments of our language, and show that their expressive power is closely tied to logical formalisms on semiring-annotated relations.
\end{abstract}

%
%
\maketitle

\section{Introduction}
Linear algebra-based algorithms have become a key component in data analytic workflows. As such, there is a growing interest in the database community to integrate linear algebra functionalities into relational database management systems \cite{Jermaine/17/LAonRA,2019Boehm,LARA_Berlin_2016,JankovLYCZJG19,Khamis0NOS18}. In particular, from a query language perspective, several proposals have recently been put forward to unify relational algebra and linear algebra. Two notable examples of this are: \lara~\cite{HutchisonHS17}, a minimalistic language in which a number of atomic operations on 
associative tables are proposed, and \lang, a query language for 
matrices \cite{matlang-journal}.\looseness=-1

Both \lara and \lang have been studied by the database theory community, showing interesting connections to relational algebra and logic. For example, fragments of \lara are known to capture first-order logic with aggregation~\cite{BarceloH0S20}, and \lang has been recently shown to be equivalent to a restricted version of the (positive) relational algebra on $K$-relations, \rak~\cite{brijder2019matrices}, where $K$ denotes a semiring. On the other hand, 
some standard constructions in linear algebra 
are out of reach for these languages. For instance, it was shown that under standard complexity-theoretic assumptions, \lara can not compute the inverse of a matrix or its determinant~\cite{BarceloH0S20}, and operations such as the transitive closure of a matrix are known to be inexpressible in \lang~\cite{matlang-journal}. Given that these are fundamental constructs in linear algebra, one might wonder how to extend \lara or \lang in order to allow expressing such properties.

One approach would be to add these constructions explicitly to the language. Indeed, this was done for \lang in~\cite{matlang-journal}, and \lara in ~\cite{BarceloH0S20}. In these works, the authors have extended the core language with the trace, the inverse, the determinant, or the eigenvectors operators and study the expressive power of the result. However, one can argue that there is nothing special in these operators, apart they have been used historically in linear algebra textbooks and they extend the expressibility of the core language. The question here is whether these new operators form a sound and natural choice to extend the core language, or are they just some particular queries that we would like to support. 

In this paper we take a more principled approach by studying what are the atomic operations needed to define standard linear algebra algorithms. Inspecting any linear algebra textbook, one sees that most linear algebra procedures heavily rely on the use of for-loops in which iterations happen over the dimensions of the matrices involved. To illustrate this, let us consider the example of computing the transitive closure of a graph. This can be done using a modification of the Floyd-Warshall algorithm~\cite{cormen}, which takes as its input an $n\times n$ adjacency matrix $A$ representing our graph, and operates according to the following pseudo-code:
\begin{tabbing}
\quad\texttt{for}\=\,  $k = 1..n$ \texttt{do}\\
\> \texttt{for}\=\,  $i = 1..n$ \texttt{do}\\
\> \> \texttt{for}\=\,  $j = 1..n$ \texttt{do}\\
\> \> \> $A[i,j] := A[i,j] + A[i,k] \cdot A[k,j]$
\end{tabbing}
After executing the algorithm, all of the non zero entries signify an edge in the (irreflexive) transitive closure graph.
%
\cristian{I am not sure about this sentence. Depending on the order how is iterated, maybe you can compute more than $n$. Actually, if you don't add the identity to the original matrix, even the transitive closure will not work.}


By
examining standard linear algebra algorithms such as Gaussian elimination, $LU$-decomposition, computing the inverse of a matrix, or its determinant, we can readily see that this pattern continues. Namely, we observe that there are two main components to such algorithms: (i) the ability to iterate up to the matrix dimension; and (ii) the ability to access a particular position in our matrix. In order to allow this behavior in a query language, we propose to extend \lang with limited recursion in the form of for-loops, resulting in the language \langfor. To  simulate the two components of standard linear algebra algorithms in a natural way, we simulate a loop of the form \texttt{for}\, $i=1..n$ \texttt{do} by leveraging canonical vectors. In other words, we use the canonical vectors $b_1=(1,0,\ldots)$, $b_2=(0,1,\ldots)$, \ldots, to access specific rows and columns, and iterate over these vectors. In this way,
we obtain a language able to compute important linear algebra operators such as $LU$-decomposition, determinant, matrix inverse, among other things.

Of course, a natural question to ask now is whether this really results in a language suitable for linear algebra? We argue that the correct way to approach this question is to compare our language to arithmetic circuits, which have been shown to capture the vast majority of existing matrix algorithms, from basic ones such as computing the determinant and the inverse, to complex procedures such as discrete Fourier transformation, and Strassen's algorithm (see \cite{ShpilkaY10,allender} for an overview of the area), and can therefore be considered to effectively capture linear algebra. In the main technical result of this paper, we show that \langfor indeed computes the same class of functions over matrices as the ones computed by arithmetic circuit families of bounded degree.  As a consequence, \langfor inherits all expressiveness properties of circuits, and thus can simulate any linear algebra algorithm definable by circuits.

Having established that \langfor indeed provides a good basis for a linear algebra language, we move to a more fine-grained analysis of the expressiveness of its different fragments. For this, we aim to provide a connection with logical formalisms, similarly as was done by linking \lara and \lang to first-order logic with aggregates~\cite{BarceloH0S20,matlang-journal}. As we show, capturing different logics correspond to restricting how matrix variables are updated in each iteration of the for-loops allowed in \langfor. For instance, if we only allow to add some temporary result to a variable in each iteration (instead of rewriting it completely like in any programming language), we obtain a language, called \langsum, which is equivalent to \rak, directly extending an analogous result shown for \lang, mentioned earlier~\cite{brijder2019matrices}. We then study updating matrix variables based on another standard linear algebra operator, the Hadamard product, resulting in a fragment called \langprod, which we show to be equivalent to weighted logics~\cite{DrosteG05}. Finally, in \langmprod 
we 
update the variables based on the standard matrix product, and link this fragment 
to the ones discussed previously.  

\smallskip
\noindent
\textbf{Contribution and outline.} 
\begin{itemize}[leftmargin=0.5cm]
	\item After we recall \lang in Section~\ref{sec:matlang}, we show in Section~\ref{sec:formatlang}
	how for-loops can be added to \lang in a natural way. We also observe that
	\langfor strictly extends \lang. In addition, we discuss some design decisions behind the definition of \langfor, noting that our use of canonical vectors results in the availability of an order relation.
	
	\item In Section~\ref{sec:queries} we show that \langfor can compute important linear algebra algorithms in a natural way. We provide expressions in \langfor for LU decomposition (used to solve linear systems of equations), the determinant and matrix inversion.
	\item More generally, in Section~\ref{sec:circuits} we report our main technical contribution.
	 We show that every  uniform arithmetic circuits of polynomial degree correspond to a \langfor expression, and vice versa, when a \langfor expression has polynomial degree, then there is an equivalent uniform family of arithmetic circuits. As a consequence, \langfor inherits all expressiveness properties of such circuits.
	\item  Finally, in Section~\ref{sec:restrict} we generalize the semantics of \langfor to matrices with values in a semiring $K$, and show that two natural fragment of \langfor, \langsum, and \langprod, are equivalent to the (positive) relational algebra and weighted logics on binary $K$-relations, respectively. We also briefly comment on a minimal fragment of \langfor, based on \langmprod, that is able to compute matrix inversion.
\end{itemize}
Due to space limitations, most proofs are referred to the appendix.

\smallskip
\noindent
\textbf{Related work.} 
We already mentioned \lara~\cite{HutchisonHS17} and \lang~\cite{matlang-journal}
whose expressive power was further analyzed in~\cite{BarceloH0S20,brijder2019matrices,Geerts19,Geerts20}.
Extensions of \texttt{SQL} for matrix manipulations are reported in~\cite{Jermaine/17/LAonRA}. Most relevant
is~\cite{JankovLYCZJG19} in which a recursion mechanism is added to \texttt{SQL} which resembles for-loops.
The expressive power of this extension is unknown, however. Classical logics with aggregation~\cite{Hella:2001} and fixed-point logics with counting~\cite{GroheP17} can also be used for linear algebra. More generally, for the descriptive complexity of linear algebra we refer to~\cite{dghl_rank,holm_phd}. Most of these works require to encode real numbers inside relations, whereas we treat real numbers as atomic values. We refer to relevant papers related to arithmetic circuits and logical formalisms on semiring-annotated relations in the corresponding sections later in the paper.

\section{MATLANG}\label{sec:matlang}
We start by recalling the matrix query language \lang, introduced in \cite{matlang-journal}, which serves as our starting point.

\smallskip
\noindent
\textbf{Syntax.}\,  Let $\Mvar = \{V_1, V_2, \ldots\}$ be a countably infinite set of \textit{matrix variables} and $\Fun=\bigcup_{k>1}\Fun_k$ with
$\Fun_k$ a set of \textit{functions} of the  form $f:\RR^k \to \RR$, where $\RR$ denotes the set of real numbers. The syntax of $\lang$ expressions is defined by the following grammar\footnote{The original syntax also permits the operator $\llet$, which replaces every occurrence of $V$ in $e_2$ with the value of $e_1$. Since this is just syntactic sugar, we omit this operator. We also explicitly include matrix addition and scalar multiplication, although these can be simulated by pointwise function applications. Finally, we use transposition instead of conjugate transposition since we work with matrices over $\RR$.}:

\begin{tabular}{lcll}
$e$ & $::=$ & $V\in \Mvar$ & (matrix variable)\\
 & $|$ & $e^T$ & (transpose)\\ 
 & $|$ & $\ones(e)$ & (one-vector)\\ 
 & $|$ & $\diag(e)$ & (diagonalization of a vector)\\  
 & $|$ & $e_1 \cdot e_2$ & (matrix multiplication)\\   
 & $|$ & $e_1 + e_2$ & (matrix addition)\\   
 & $|$ & $e_1\times e_2$ & (scalar multiplication)\\
 & $|$ & $f(e_1,\ldots ,e_k)$ & (pointwise application of $f\in\Fun_k$).    
\end{tabular}
\vspace{1ex}


$\lang$ is parametrized by a collection of functions $\Fun$ but in the remainder of the paper we only make this dependence explicit, and write $\langf{\Fun}$, for some set $\Fun$ of functions, when these functions are crucial for some results to hold. When we simply write \lang, we mean that any function can be used (including not using any function at all).




\smallskip
\noindent
\textbf{Schemas and typing.}\,
To define the semantics of \lang\ expressions we need a notion of schema and well-typedness of expressions. A \lang\ \textit{schema} $\Sch$ is a pair $\Sch=(\Mnam,\size)$, where $\Mnam\subset \Mvar$ is a finite set of matrix variables, and $\size: \Mnam \mapsto \DD\times \DD$ is a function that maps each matrix variable in $\Mnam$ to a pair of \textit{size symbols}. The $\size$ function helps us determine whether certain matrix operations, such as matrix multiplication, can be performed for matrices adhering to a schema. 
We denote size symbols by Greek letters $\alpha,\beta,\gamma$. We also assume that $1\in \DD$. 
To help us determine whether a \lang\ expression can always be evaluated, we define the \textit{type} of an expression $e$, with respect to a schema $\Sch$, denoted by $\ttype(e)$, inductively as follows:
\begin{itemize}
\item $\ttype(V):= \size(V)$, for a matrix variable $V\in\Mnam$;
\item $\ttype(e^T):= (\beta,\alpha)$ if $\ttype(e)=(\alpha,\beta)$;
\item $\ttype(\ones(e)):= (\alpha,1)$ if $\ttype(e)=(\alpha,\beta)$;
\item $\ttype(\diag(e)):= (\alpha,\alpha)$, if $\ttype(e)=(\alpha,1)$;
\item $\ttype(e_1 \cdot e_2):= (\alpha,\gamma)$ if  $\ttype(e_1)=(\alpha,\beta)$, and $\ttype(e_2)=(\beta,\gamma)$;
\item $\ttype(e_1 + e_2):=(\alpha,\beta)$ if $\ttype(e_1)=\ttype(e_2)=(\alpha,\beta)$;
\item $\ttype(e_1\times e_2):=(\alpha,\beta)$ if $\ttype(e_1)=(1,1)$ and $\ttype(e_2)=(\alpha,\beta)$; and
\item $\ttype(f(e_1,\ldots ,e_k)):= (\alpha,\beta)$, whenever $\ttype(e_1) = \cdots = \ttype(e_k) := (\alpha,\beta)$ and $f\in\Fun_k$.
\end{itemize}
 We call an expression \textit{well-typed} according to the schema $\Sch$, if it has a defined type. 
A well-typed expression can be evaluated regardless of the actual sizes of the matrices assigned to matrix variables, as we describe next.

\smallskip
\noindent
\textbf{Semantics.}\, We use $\mtr{\RR}$ to denote the set of all real matrices and for 
$A\in\mtr{\RR}$, $\dim(A)\in\NN^2$ denotes its dimensions.
A (\lang) \textit{instance} $\I$ over a schema $\Sch$ is a pair $\I = (\dom,\conc)$, where $\dom : \DD \mapsto \NN$ assigns a value to each size symbol (and thus in turn  dimensions to each matrix variable), and $\conc : \Mnam \mapsto \mtr{\RR}$ assigns a concrete matrix to each matrix variable $V\in \Mnam$, such that $\dim(\conc(V)) = \dom(\alpha)\times \dom(\beta)$ if $\size(V) = (\alpha,\beta)$. That is, an instance tells us the dimensions of each matrix variable, and also the concrete matrices assigned to the variable names in $\Mnam$. We assume that $\dom(1) = 1$, for every instance $\I$. If $e$ is a well-typed expression according to $\Sch$, then we denote by $\sem{e}{\I}$ the matrix obtained by evaluating $e$ over $\I$, and define it as follows:
\begin{itemize}
\item $\sem{V}{\I} := \conc(V)$, for $V\in \Mnam$;
\item $\sem{e^T}{\I} := \sem{e}{\I}^T$, where $A^T$ is the transpose of matrix $A$;
\item $\sem{\ones(e)}{\I}$ is a $n\times 1$ vector with $1$ as all of its entries, where $\dim(\sem{e}{\I})=(n,m)$;
\item $\sem{\diag(e)}{\I}$ is a diagonal matrix with the vector $\sem{e}{\I}$ on its main diagonal, and zero in every other position;
\item $\sem{e_1\cdot e_2}{\I} := \sem{e_1}{\I} \cdot \sem{e_2}{\I}$;
\item $\sem{e_1+ e_2}{\I} := \sem{e_1}{\I} + \sem{e_2}{\I}$;
\item $\sem{e_1\times e_2}{\I} := a\times \sem{e_2}{\I}$ with $\sem{e_1}{\I}=[a]$; and
\item $\sem{f(e_1,\ldots ,e_k)}{\I}$ is a matrix $A$ of the same size as $\sem{e_1}{\I}$, and where $A_{ij}$ has the value $f(\sem{e_1}{\I}_{ij},\ldots ,\sem{e_k}{\I}_{ij})$.
\end{itemize}
\noindent
Although \lang\ forms a solid basis for a matrix query language, it is limited in expressive power. Indeed, \lang\ is subsumed by first order logic with aggregates that uses only three variables \cite{matlang-journal}. 
Hence,
 no \lang\ expression exists that can compute the transitive closure of a graph (represented by its adjacency matrix) or can compute the inverse of a matrix. Rather than extending \lang\ with specific linear algebra operators, such as matrix inversion, we  
next introduce a limited form of recursion in \lang.\looseness=-1

\section{Extending MATLANG with for loops}\label{sec:formatlang}
To extend \lang\ with recursion, we take inspiration from classical linear algebra algorithms, such as those described in \cite{num}. Many of these algorithms are based on \textit{for-loops} in which the termination condition for each loop is determined by the matrix dimensions. We have seen how the transitive closure of a matrix can be computed using for-loops in the Introduction. Here we add this ability to \lang, and show that the resulting language, called $\langfor,$ can compute properties outside of the scope of \lang. We see more advanced examples, such as Gaussian elimination and matrix inversion, later in the paper. 

\subsection{Syntax and semantics of \langfor} The syntax of \langfor is defined just as for \lang but with an extra rule in the grammar:
\medskip

\begin{tabular}{lcll}
 $\ffor{v}{X}{e}$ & (canonical for loop, with $v, X \in \Mvar$). 
\end{tabular}

\medskip
\noindent Intuitively, $X$ is a matrix variable which is iteratively updated according to the expression $e$. We simulate iterations of the form ``\texttt{for} $i\in [1..n]$'' by letting $v$ loop over the \textit{canonical vectors} $b_1^n,\ldots,b_n^n$ of dimension $n$. Here,
%
$b_1^n = [1\ 0 \cdots 0]^T$, $b_2^n = [0\ 1\ 0 \cdots 0]^T$, etc. When $n$ is clear from the context we simply write $b_1,b_2,\ldots$. In addition, the expression $e$ in the rule above may depend on $v$. 

We next make the semantics precise and start by
declaring the type of loop expressions.
Given a schema $\Sch$, the type of a \langfor expression $e$, denoted $\ttype(e)$, is defined inductively as in \lang but with following extra rule:
\begin{itemize}
\item $\ttype(\ffor{v}{X}{e}) := (\alpha,\beta)$, if \\
$\ttype(e)=\ttype(X) =(\alpha,\beta)$ and $\ttype(v) = (\gamma,1)$.
\end{itemize}
We note that $\Sch$ now necessarily includes $v$ and $X$ as variables and assigns size symbols to them.
We also remark that in the definition of the type of $\ffor{v}{X}{e}$, we require that $\ttype(X) = \ttype(e)$ as this expression updates the content of the variable $X$ in each iteration using the result of $e$. We further restrict the type of 
$v$ to be a vector, i.e., $\ttype(v)=(\gamma,1)$, since $v$ will be instantiated with canonical vectors.
%
A \langfor\ expression $e$ is well-typed over a schema $\Sch$ if its type is defined. 

For well-typed expressions we next define their semantics. This is done in an inductive way, just as for \lang. To define the semantics of $\ffor{v}{X}{e}$ over an instance $\I$, we need the following notation. Let $\I$ be an instance and $V\in \Mnam$. Then $\I[V := A]$ denotes an instance that coincides with $\I$, except that the value of the matrix variable $V$ is given by the matrix $A$. Assume that
$\ttype(v)= (\gamma,1)$, and $\ttype(e) = (\alpha,\beta)$ and $n := \dom(\gamma)$. Then, $\sem{\ffor{v}{X}{e}}{\I}$ is defined iteratively, as follows:
\begin{itemize}
\item Let $A_0 := \mathbf{0}$ be the zero matrix of size $\dom(\alpha)\times \dom(\beta)$.
\item For $i=1,\ldots n$, compute $A_i:= \sem{e}{\I[v := b^{n}_i, X:= A_{i-1}]}$.
\item Finally, set $\sem{\ffor{v}{X}{e}}{\I}:= A_{n}$.
\end{itemize}

For better understanding how \langfor  works, we next provide some  examples.
We start by showing that the one-vector and $\diag$ operators are redundant
in \langfor.
\begin{example}\label{ex:onevec}
We first show how the one-vector operator $\ones(e)$ can be expressed using \texttt{for} loops.
It suffices to consider the expression
$$e_{\ones}:=\ffor{v}{X}{X+v},$$
with $\ttype(v)=(\alpha,1)=\ttype(X)$ if $\ttype(e)=(\alpha,\beta)$. This expression is well-typed
and is of type $(\alpha,1)$. When evaluated over some instance $\I$ with $n=\dom(\alpha)$, $\sem{e_{\ones}}{\I}$ is defined as follows.
Initially, $A_0:=\mathbf{0}$. Then $A_i:=A_{i-1}+b_i^n$, i.e., the $i$th canonical vector is added to $A_{i-1}$.
Finally, $\sem{e_{\ones}}{\I}:=A_n$ and this now clearly coincides with $\sem{\ones(e)}{\I}$.\qed
\end{example}

%
\begin{example}\label{ex:diag}
We next show that the $\diag$ operator is redundant in \langfor.
Indeed, it suffices to consider the expression
$$e_{\mathsf{diag}}:=
\ffor{v}{X}{X + (v^T\cdot e) \times v\cdot v^T},$$ where $e$ is a \langfor\  expression of type $(\alpha,1)$. For this expression to be well-typed, $v$ has to be a vector variable of type $\alpha\times 1$ and $X$ a matrix variable of type $(\alpha,\alpha)$. Then, $\sem{e_{\mathsf{diag}}}{\I}$ is defined as follows.
Initially, $A_0$ is the zero matrix of dimension $n\times n$, where $n=\dom(\alpha)$. Then, in each iteration
$i\in[1..n]$, $A_{i}:=A_{i-1}+  ((b_i^n)^T\cdot\sem{e}{\I})\times (b_i^n\cdot (b_i^n)^T)$. In other words, $A_i$ is obtained by adding the matrix with value $(\sem{e}{\I})_i$ on position $(i,i)$ to $A_{i-1}$. Hence, $\sem{e_{\mathsf{diag}}}{\I}:=A_n=\sem{\diag(e)}{\I}$.\qed
 \end{example}

These examples illustrate that we can limit \langfor to consist of the following ``core'' operators: transposition, matrix multiplication and addition, scalar multiplication, pointwise function application, and for-loops. More specific, \langfor is defined by the following simplified syntax:
$$
e ::= V \ \mid \ e^T \!\ \mid \ e_1 \cdot e_2 \ \mid \ e_1 + e_2 \ \mid \ e_1\times e_2  \ \mid \  f(e_1,\ldots ,e_k) \ \mid \ \ffor{v}{X}{e}
$$
Similarly as for \lang, we write $\langforf{\Fun}$ for some set $\Fun$ of functions when these are required for the task at hand.



As a final example, we show that we can compute whether a graph contains a 4-$\textsf{clique}$ using \langfor.
\begin{example}\label{ex:fourcliques}
To test for $4$-cliques it suffices to consider the following expression with for-loops nested four times:
\begin{tabbing}
\texttt{for\,}\=$u,\,X_1.\ \ X_1 \ + $\\
\> \texttt{for\,}\=$v,\,X_2.\ \ X_2 \ +$ \\
\>\>\texttt{for\,}\=$w,\,X_3.\ \ X_3 \ +$ \\
\>\>\>\texttt{for\,}\=$x,\,X_4.\ \ X_4 \ +$ \\
\>\>\>\>$u^T\cdot V\cdot v \cdot u^T\cdot V\cdot w\cdot u^T\cdot V\cdot x \cdot $\\
\>\>\>\>$v^T\cdot V\cdot w \cdot v^T\cdot V\cdot x\cdot w^T\cdot V\cdot x \cdot g(u,v,w,x)$
\end{tabbing}
with $g(u,v,w,x)=f(u,v)\cdot f(u,w)\cdot f(u,x)\cdot f(v,w)\cdot f(v,x)\cdot f(w,x)$ and
$f(u,v)=1-u^T\cdot v$. Note that $f(b_i^n,b_j^n)=1$ if $i\neq j$ and $f(b_i^n,b_j^n)=0$ otherwise.
Hence, $g(b_i^n,b_j^n,b_k^n,b_\ell^n)=1$ if and only if all $i,j,k,l$ are pairwise different.
When evaluating the expression on an instance $\I$ such that $V$ is assigned to the adjacency 
matrix of a graph, the expression above evaluates to a non-zero value if and only if the graph
contains a four-clique.\qed
\end{example}
%
%

Given that \lang can not check for 4-cliques \cite{matlang-journal}, we easily obtain the following.

\begin{proposition}
\label{cor-ml-fml}
For any collection of functions $\Fun$, 
$\langf{\Fun}$ is properly subsumed by $\langforf{\Fun}$.
\end{proposition}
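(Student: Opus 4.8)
The plan is to prove the two halves of ``properly subsumed'' separately: the containment $\langf{\Fun}\subseteq\langforf{\Fun}$, and the strictness of this containment, for every fixed $\Fun$.

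First I would dispatch containment, which is essentially syntactic. The grammar of $\langforf{\Fun}$ is obtained from that of $\langf{\Fun}$ by adding the single production $\ffor{v}{X}{e}$, and the typing rules and semantics of all operators shared by the two languages are literally identical. Hence any $\langf{\Fun}$ expression is already a $\langforf{\Fun}$ expression, and a trivial induction on its structure shows it denotes the same function on every instance. (If one prefers to work with the ``core'' fragment in which $\ones$ and $\diag$ have been removed, one instead appeals to Examples~\ref{ex:onevec} and~\ref{ex:diag}, which define these operators with for-loops; the conclusion is the same.)

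For strictness I would exhibit a single query that separates the two languages for all $\Fun$ at once, namely $4$-clique detection. The upper bound is Example~\ref{ex:fourcliques}, but to make it work uniformly the key observation is that the separating expression can be written with \emph{no} auxiliary functions. Indeed, when $V$ is the adjacency matrix of a loop-free graph its diagonal is zero, so any coincidence among the four loop variables $u,v,w,x$ forces one of the six factors $u^T\cdot V\cdot v,\ldots,w^T\cdot V\cdot x$ to read a diagonal entry $V_{ii}=0$; the distinctness factor $g$ is therefore redundant and may be dropped. What remains is a four-fold nested for-loop whose body is a product of terms $p^T\cdot V\cdot q$, built only from transposition, matrix multiplication and for-loops. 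This places $4$-clique detection in $\langforf{\emptyset}$, and since $\langforf{\emptyset}\subseteq\langforf{\Fun}$ for every $\Fun$ (an expression using no function is legal whatever $\Fun$ is), it lies in $\langforf{\Fun}$ for all $\Fun$. The matching lower bound is quoted from \cite{matlang-journal}: $\langf{\Fun}$ is subsumed by three-variable first-order logic with aggregation, in which $4$-cliques cannot be detected, and this holds for every $\Fun$. Combining the two witnesses gives $\langf{\Fun}\subsetneq\langforf{\Fun}$.

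The main obstacle, where I would spend the most care, is precisely this uniform-in-$\Fun$ requirement, which cuts both ways. On the positive side one must ensure the witness uses no function outside $\Fun$; the verbatim expression of Example~\ref{ex:fourcliques} uses the complement $f=1-u^T\cdot v$, and one should note that producing such a complement from the equality indicator $u^T\cdot v\in\{0,1\}$ is not available through the core operations alone, since these are entrywise monotone on non-negative data---hence the need for the ``drop $g$'' reformulation above rather than a naive attempt to simulate $1-x$. On the negative side one must be confident that the cited inexpressibility is robust to arbitrary pointwise functions; this is exactly why the bound is phrased through three-variable logic, into which each $f\in\Fun$ is absorbed as an interpreted function symbol without increasing the variable width, while $4$-clique has width four. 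I would therefore treat the containment of $\langf{\Fun}$ into three-variable first-order logic with aggregation, and the non-definability of $4$-clique there, as the black boxes imported from \cite{matlang-journal}.
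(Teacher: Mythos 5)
Your proof is correct and follows the same route as the paper's: the containment is the trivial syntactic inclusion, and the separation is witnessed by $4$-clique, combining Example~\ref{ex:fourcliques} with the inexpressibility of $4$-clique in \lang cited from~\cite{matlang-journal}. Where you go beyond the paper is in noticing that the witness of Example~\ref{ex:fourcliques} is not literally function-free: the factor $f(u,v)=1-u^T\cdot v$ involves a complement, which is not among the core operations (transpose, product, sum, scalar product, for-loops all preserve and are built from nonnegative data), so the example as written only certifies the separation for collections $\Fun$ that supply such a function. Since the proposition quantifies over \emph{all} $\Fun$, including $\Fun=\emptyset$, your reformulation---dropping $g$ and using the fact that the adjacency matrix of a simple graph has zero diagonal, so any coincidence among $u,v,w,x$ already annihilates one of the six factors $u^T\cdot V\cdot v,\ldots,w^T\cdot V\cdot x$---is exactly what is needed to make the paper's one-line argument airtight uniformly in $\Fun$; that, together with your observation that the three-variable logic upper bound on $\langf{\Fun}$ absorbs arbitrary pointwise functions, is a genuine (if small) sharpening. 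One caveat: your stated reason that the complement ``is not available through the core operations alone'' because these are entrywise monotone is not rigorous. Monotonicity does not by itself exclude an expression agreeing with $1-u^T\cdot v$ on canonical vectors: for instance $u^T\cdot M\cdot v$, with $M$ the all-ones matrix minus the identity, is monotone and agrees with it there, so the real question is whether such a constant matrix is constructible in $\langforf{\emptyset}$, which your remark does not settle. Fortunately this remark is purely motivational; your actual proof never relies on it and stands without it.
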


%
%
%

\subsection{Design decisions behind \langfor}

\noindent\textbf{Loop Initialization.} As the reader may have observed, in the semantics of for-loops we 
always initialize $A_0$ to the zero matrix~$\mathbf{0}$ (of appropriate dimensions). It is often convenient
to start the iteration given some concrete matrix  originating from the result of evaluation a \langfor\ expression $e_0$. To make this explicit, we write $\initf{e_0}{v}{X}{e}$ and its semantics is defined as above
with the difference that $A_0:=\sem{e_0}{\I}$. We observe, however, that $\initf{e_0}{v}{X}{e}$ can already
be expressed in \langfor. In other words, we do not loose generality by assuming an initialization of $A_0$ by $\mathbf{0}$.
The key insight is that in \langfor\ we can check during evaluation whether or not
the current canonical vector $b_i^n$ is equal to the $b_1^n$. This 
is due to the fact that for-loops iterate over the canonical vectors in a fixed order. We discuss this more in the next paragraph.
In particular, we can define a \langfor expression $\mmin$, which when evaluated on an instance, returns $1$ if its input vector is $b_1^n$, and returns $0$ otherwise. Given $\mmin$, consider now the
\langfor\ expression
 $$\ffor{v}{X}{\mmin{v}\cdot e(v,X/e_0) + (1-\mmin{v})\cdot e(v,X)},$$
 where we explicitly list $v$ and $X$ as matrix variables on which $e$ potentially depends on, and where
 $e(v,X/e_0)$ denotes the expression obtained by replacing every occurrence of $X$ in $e$ with $e_0$.
%
When evaluating this expression on an instance $\I$, $A_0$ is initial set to the zero matrix, in the first iteration (when  $v=b_1^n$ and thus $\mmin{v}=1$)
we have $A_1=\sem{e}{\I[v:=b_1^n,X:=\sem{e_0}{\I}]}$, and for consecutive iterations (when only the part related to $1-\mmin{v}$ applies) $A_i$ is updated as before. Clearly, the result of this evaluation is equal to
$\sem{\initf{e_0}{v}{X}{e}}{\I}$.

As an illustration, we consider the Floyd-Warshall algorithm given in the Introduction. 


\begin{example}\label{ex:floyd}
Consider the following expression:
\begin{tabbing}
$e_{FW} := $ \texttt{for\,}\=$v_k,\, X_1\!=\!A.\ \ X_1 \ + $\\
\> \texttt{for\,}\=$v_i, \, X_2.\ \ X_2 \ +$ \\
\>\>\texttt{for\,}\=$v_j,\, X_3.\ \ X_3 \ +$ \\
\>\>\>$(v_i^T\cdot X_1\cdot v_k \cdot v_k^T\cdot X_1\cdot v_j)\times v_i\cdot v_j^T$
\end{tabbing}
The expression $e_{FW}$ simulates the Floyd-Warshall algorithm by updating the matrix $A$, which is stored in the variable $X_1$. The inner sub-expression here constructs an $n\times n$ matrix that contains one in the position $(i,j)$ if and only if one can reach the vertex $j$ from $i$ by going through $k$, and zero elsewhere. If an instance $\I$ assigns to $A$ the adjacency matrix of a graph, then $\sem{e_{FW}}{\I}$ will be equal to the matrix produced by the algorithm given in the Introduction.
\qed
\end{example}

\noindent\textbf{Order.} By introducing for-loops we not only extend \lang\ with bounded recursion, we also introduce order information. Indeed, the semantics of the \texttt{for} operator assumes that the canonical vectors $b_1,b_2,\ldots$
are accessed in this order. It implies, among other things, that \langfor\ expressions are not permutation-invariant.
We can, for example, return the bottom right-most entry in a matrix. Indeed, consider the expression $e_{\mathsf{max}} := \ffor{v}{X}{v}$ which, for it to be well-typed, requires both $v$ and $X$ to be of type $(\alpha,1)$. Then, $\sem{e_{\mathsf{max}}}{\I}=b_n^n$, for $n=\dom(\alpha)$, simply because initially, $X=\mathbf{0}$, but $X$ will be overwritten by $b_1^n,b_2^n,\ldots,b_n^n$, in this order. Hence, at the end of the evaluation $b_n^n$ is returned.
To extract the bottom right-most entry from a matrix, we now simply use $e_{\mathsf{max}}^T\cdot V\cdot e_{\mathsf{max}}$.

Although the order is implicit in \langfor, we can explicitly use this order in \langfor expressions. More precisely, the order on canonical vectors is made accessible by
using the matrix:
\[
S_{\leq} = \begin{bmatrix}
1 & 1 & \cdots &  1 \\
0 & 1 & \cdots & 1\\
\vdots & \vdots & \ddots & 1 \\
0 & 0 & \cdots & 1 
\end{bmatrix}.
\] 
We observe that $S_{\leq}$ has the property that $b_i^T\cdot S_{\leq} \cdot b_j=1$, for two canonical vectors $b_i$ and $b_j$ of the same dimension, if and only if $i\leq j$. Otherwise, $b_i^T\cdot S_{\leq} \cdot b_j=0$. 
Interestingly, we can build the matrix $S_{\leq}$ with the following \langfor expression:
$$
e_{\leq}=\ffor{v}{X}{X + \left((X\cdot e_{\mathsf{max}}) + v \right)\cdot v^T + v\cdot e^T_{\mathsf{max}}},
$$
where $e_{\mathsf{max}}$ is as defined above. The intuition behind this expression is that by using the last canonical vector $b_n$, as returned by $e_{\mathsf{max}}$, we have access to the last column of $X$ (via the product $X\cdot e_{\mathsf{max}}$). We use this column such that after the $i$-th iteration, this column contains the $i$-th column of $S_{\leq}$. This is done by incrementing $X$ with $v\cdot e_{\mathsf{max}}^T$.
To construct $S_{\leq}$, in the $i$-th iteration we further increment $X$ with 
(i)~the current last column in $X$ (via $X\cdot e_{\mathsf{max}}\cdot v^T$) which holds
the $(i-1)$-th column of $S_{\leq}$; and (ii)~the current canonical vector (via $v\cdot v^T$). Hence, after iteration $i$, $X$ contains the first $i$ columns of $S_{\leq}$ and holds the $i$th column of $S_{\leq}$ in its last column. It is now readily verified that $X=S_{\leq}$ after the $n$th iteration.

It should be clear that if we can compute $S_{\leq}$ using $e_{\leq}$, then we can easily define the following predicates and vectors related with the order of canonical vectors:
\begin{itemize}
	\item $\mathsf{succ}(u,v)$ such that $\mathsf{succ}(b_i^n,b_j^n)=1$ if $i\leq j$ and $0$ otherwise. Similarly, we can define
	$\mathsf{succ}^+(u,v)$ such that  $\mathsf{succ}^+(b_i^n,b_j^n)=1$ if $i < j$ and $0$ otherwise;
	\item $\mathsf{min}(u)$ such that  $\mathsf{min}(b_i^n)=1$ if $i=1$ and $\mathsf{min}(b_i^n)=0$ otherwise; 
	\item $\mathsf{max}(u)$ such that  $\mathsf{max}(b_i^n)=1$ if $i=n$ and $\mathsf{min}(b_i^n)=0$ otherwise; and
	\item $e_{\mathsf{min}}$ and $e_{\mathsf{max}}$ such that $\sem{e_{\mathsf{min}}}{\I}=b_1^n$ and 
	$\sem{e_{\mathsf{max}}}{\I}=b_n^n$, respectively.
\end{itemize}
The definitions of these expressions 
are detailed in the appendix.

Having order information available results in \langfor to be quite expressive. We heavily rely on order information in the next sections to compute the inverse of matrices and more generally to simulate low complexity Turing machines and arithmetic circuits.

\section{Algorithms in Linear Algebra}\label{sec:queries}
One of our main motivations to introduce for-loops is to be able
to express classical linear algebra algorithms in a natural way. We have seen that \langfor is
quite expressive as it can check for cliques, compute the transitive closure, and can even
leverage a successor relation on canonical vectors. The big question is how expressive \langfor
actually is. We will answer this in the next section by connecting \langfor with 
arithmetic circuits of polynomial degree. Through this connection, one can move back and forth between \langfor and arithmetic circuits, and as a consequence, anything computable by such a circuit can be
computed by \langfor as well. When it comes to specific linear algebra algorithms, the detour via circuits
can often be avoided. Indeed, in this section we illustrate that \langfor is able to
compute LU decompositions of matrices. These decompositions form the basis of many other algorithms, such as solving linear systems of equations. We further show that \langfor is expressive enough to compute matrix inversion and the determinant. We recall that matrix inversion and determinant need to be explicitly added as separate operators in \lang~\cite{matlang-journal} and that the LARA language is unable to invert matrices under usual complexity-theoretic assumptions~\cite{BarceloH0S20}.

\subsection{LU decomposition}
%
%
A lower-upper (LU) decomposition factors a matrix $A$ as the product of a lower triangular matrix $L$ and upper triangular matrix $U$.  
This decomposition, and more generally LU decomposition with row pivoting (PLU),  underlies many linear algebra algorithms and 
we next show that \langfor can compute these decompositions.

\smallskip
\noindent
\textbf{LU decomposition by Gaussian elimination.} LU decomposition can be seen as a matrix form of Gaussian elimination in which the columns of $A$
are reduced, one by one, to obtain the matrix $U$. The reduction of columns of $A$ is achieved
as follows. Consider the first column $[A_{11},\ldots,A_{n1}]^T$ of $A$ and  define 
$c_1 := [0, \alpha_{21},\ldots, \alpha_{n1}]^T$ 
with $\alpha_{j1} := -\frac{A_{j1}}{A_{11}}$. Let $T_1:=I+ c_1\cdot b_1^T$ and consider
$T_1\cdot A$. That is, the $j$th row of $T_1\cdot A$ is obtained by multiplying the first row of $A$ by $\alpha_{j1}$ and adding it to the $j$th row of $A$. As a result, the first column of $T_1\cdot A$ is equal to $[A_{11},0,\ldots,0]^T$, i.e., 
all of its entries below the diagonal are zero.  One then iteratively performs a similar computation, using a matrix $T_i:=I+c_i\cdot b_i^T$, where $c_i$ now depends on the $i$th column in $T_{i-1}\cdots T_1\cdot A$. As a consequence, $T_i\cdot T_{i-1}\cdots T_1\cdot A$ is upper triangular
in its first $i$ columns. At the end of this process, $T_n\cdots T_1\cdot A=U$ where $U$ is the desired upper triangular matrix.
Furthermore, it is easily verified that each $T_i$ is invertible and by defining $L:=T_1^{-1}\cdot\cdots\cdot T_n^{-1}$ one obtains a lower triangular matrix satisfying $A=L\cdot U$. The above procedure is only successful when the denominators used in the definition of the vectors $c_i$ are non-zero. When this is the case we call a matrix $A$ \textit{LU-factorizable}. 

In case when such a denominator is zero in one of the reduction steps, one can remedy this situation by \textit{row pivoting}. That is, when the $i$th entry of the
$i$th row in $T_{i-1}\cdots T_1\cdot A$ is zero, one replaces the $i$th row by  $j$th row in this matrix, with $j>i$, provided that $i$the entry of the $j$th row is non-zero. If no such row exists, this implies that all elements below the diagonal are zero already in column $i$ and one can proceed with the next column. One can formulate this in matrix terms by stating that there exists a permutation matrix $P$, which pivots rows, such that $P\cdot A=L\cdot U$. Any matrix $A$ is LU-factorizable \textit{with pivoting}.

\smallskip
\noindent
\textbf{Implementing LU decomposition in \langfor.} 
We first assume that the input matrices are LU-factorizable. We deal with general matrices later on.
To implement the above procedure, we need to compute the vector $c_i$ for each column $i$. We do this in two steps. First, we extract from our input matrix its $i$th column and set all its upper diagonal entries to zero
by means of 
 the 
 expression:
$$\ccol{V}{y} := \ffor{v}{X}{\mathsf{succ}^+(y,v)\cdot(v^T\cdot V \cdot y)\cdot v + X}.$$
Indeed, when $V$ is assigned to a matrix $A$ and $y$ to $b_i$, we have that $X$ will be initially assigned
$A_0=\mathbf{0}$ and in consecutive iterations,  $A_j=A_{j-1}+ b_j^T\cdot A\cdot b_i$ if $j>i$ (because $\mathsf{succ}^+(b_i,b_j)=1$ if $j>i$) and $A_j=A_{j-1}$ otherwise (because $\mathsf{succ}^+(b_i,b_j)=0$ for $j\leq i$). 
The result of this evaluation is the desired column vector.
%
%
Using $\ccol{V}{y}$, we can now compute $T_i$ by the following expression:
$$\red{V}{y} := e_{\mathsf{Id}}+ f_/(\ccol{V}{y},-(y^T\cdot V\cdot y)\cdot \ones(y))\cdot y^T,$$
where $f_/:\RR^2\to\RR:(x,y)\mapsto x/y$ is the division function. 
When $V$ is assigned to $A$ and $y$ to $b_i$, $f_/(\ccol{A}{b_i},-(b_i^T\cdot A\cdot b_i)\cdot \ones(b_i))$ is equal to the vector $c_i$ used in the definition of $T_i$. To perform the reduction steps for all columns, we consider
the expression:
$$
e_{U}(V) :=  \left( \initf{e_{\mathsf{Id}}}{y}{X}{\red{X\cdot V}{y}\cdot X} \right) \cdot V.
$$
That is, when $V$ is assigned $A$, $X$ will be initially $A_0=I$, and then
$A_i=\red{A_{i-1}\cdot A}{b_i}=T_i\cdot T_{i-1}\cdots T_1\cdot A$, as desired.
We show in the appendix that, because we can obtain the matrices $T_i$ in \langfor and that these
are easily invertible, we can also construct an expression $e_L(V)$ which evaluates to $L$ when $V$ is assigned to
$A$. We may thus conclude the following.
\begin{proposition}\label{prop:gauss}
There exists $\langforf{f_/}$ expressions $e_L(V)$ and $e_U(V)$ such that
$\sem{e_L}{\I}=L$ and $\sem{e_U}{\I}=U$ form an LU-decomposition of $A$,
where $\conc(V)=A$ and $A$ is LU-factorizable.\qed
\end{proposition}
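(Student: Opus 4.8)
The plan is to verify the already-exhibited expression $e_U(V)$ and then to construct $e_L(V)$, using throughout that LU-factorizability of $A$ guarantees every pivot fed to $f_/$ is non-zero, so that all divisions are well-defined.

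First I would establish correctness of $e_U$ by induction on the loop counter. Writing $X_i$ for the value of the accumulator after the $i$th iteration of the inner for-loop in $e_U(V)$, the base case is $X_0 = I$ (the initialization by $e_{\mathsf{Id}}$), and the inductive step uses that $\red{W}{b_i}$ evaluates to $T_i$ whenever $W$ is assigned $T_{i-1}\cdots T_1\cdot A$, which by the induction hypothesis is exactly $X_{i-1}\cdot A$. Hence $X_i = T_i\cdot X_{i-1} = T_i\cdots T_1$, so after $n$ steps the inner loop returns $T_n\cdots T_1$; multiplying by $V$ yields $T_n\cdots T_1\cdot A = U$, as claimed.

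Next comes the construction of $e_L$. The key algebraic observation is that each reduction matrix $T_i = I + c_i\cdot b_i^T$ is invertible with $T_i^{-1} = I - c_i\cdot b_i^T$: indeed $(I + c_i b_i^T)(I - c_i b_i^T) = I - c_i\,(b_i^T c_i)\,b_i^T$, and $b_i^T c_i$ equals the $i$th entry of $c_i$, which is $0$ by construction. Since $L = T_1^{-1}\cdots T_n^{-1}$, it suffices to accumulate these inverses in order, appending each new factor on the right. The obstacle is that a single for-loop carries only one accumulator: to form $T_i^{-1}$ at step $i$ we need the vector $c_i$, which depends on the partial product $T_{i-1}\cdots T_1\cdot A$, yet that product cannot be held simultaneously with the running product of inverses. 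I would resolve this by recomputing $T_{i-1}\cdots T_1$ with a nested for-loop that replays the $e_U$-recursion but truncates at the current index using the order predicate $\mathsf{succ}^+$; concretely, an expression $e_{\mathrm{acc}}(V,y) := \initf{e_{\mathsf{Id}}}{w}{X}{\mathsf{succ}^+(w,y)\cdot \red{X\cdot V}{w}\cdot X + (1-\mathsf{succ}^+(w,y))\cdot X}$ evaluates to $T_{i-1}\cdots T_1$ when $y$ is $b_i$.

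With $e_{\mathrm{acc}}$ in hand, the vector $c_i$ is obtained exactly as inside $\red{\cdot}{\cdot}$ but applied to $e_{\mathrm{acc}}(V,y)\cdot V$, and the outer loop
\[
e_L(V) := \initf{e_{\mathsf{Id}}}{y}{Y}{\,Y\cdot\bigl(e_{\mathsf{Id}} - c(V,y)\cdot y^T\bigr)}
\]
with $c(V,y) := f_/\!\bigl(\ccol{e_{\mathrm{acc}}(V,y)\cdot V}{y},\,-(y^T\cdot e_{\mathrm{acc}}(V,y)\cdot V\cdot y)\cdot\ones(y)\bigr)$ produces $Y_i = Y_{i-1}\cdot T_i^{-1}$, hence $Y_n = T_1^{-1}\cdots T_n^{-1} = L$. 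A final induction mirroring the one for $e_U$ confirms this, and since each $T_i$ is lower unitriangular while $U$ is upper triangular, $L$ is lower triangular and $A = L\cdot U$. I expect the main delicacy to lie in the nested recomputation via $\mathsf{succ}^+$: one must check that the truncation reproduces exactly the partial transform needed for $c_i$ and that no division by a zero pivot ever occurs, both of which follow from LU-factorizability.
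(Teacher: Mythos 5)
Your proof takes a genuinely different route from the paper's. The paper never builds $L$ as a product at all: it reuses the inner accumulator of $e_U$, namely $e_{L^{-1}}(V)$ evaluating to $L^{-1}$, and obtains $L$ in closed form as $e_L(V) := -1\times e_{L^{-1}}(V) + 2\times e_{\mathsf{Id}}$, on the strength of expanding a product of Gauss transforms into a sum (cross terms $c_i\,(b_i^T\cdot c_j)\,b_j^T$ vanish because $b_i^T\cdot c_j = 0$ for $i\leq j$), so that no extra loops and no extra divisions are needed. You instead accumulate $Y_i = Y_{i-1}\cdot T_i^{-1}$ so that $Y_n = T_1^{-1}\cdots T_n^{-1} = L$, paying for the missing second accumulator with a nested, $\mathsf{succ}^+$-truncated replay of the elimination. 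Your core algebra is sound, and it is worth noting that the sum expansion the paper invokes is order-sensitive: it holds for the increasing-index product $T_1^{-1}\cdots T_n^{-1}$ that you compute, not for the decreasing-index product $T_n\cdots T_1 = L^{-1}$ (there the cross terms have $i>j$ and do not vanish), so your more laborious ordering is in fact the safe one.

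However, there is a genuine gap, and it sits exactly at the point you defer at the end (``no division by a zero pivot ever occurs \ldots follow[s] from LU-factorizability''). The semantics of \langfor is eager: in $e_{\mathrm{acc}}(V,y)$ with $y = b_i$, the subexpression $\red{X\cdot V}{w}$ is evaluated in \emph{every} iteration $w = b_j$, including the discarded ones with $j \geq i$, and there it divides by $-(b_j^T\cdot A^{(i)}\cdot b_j) = -A^{(i)}_{jj}$, where $A^{(i)} = T_{i-1}\cdots T_1\cdot A$ is the partially reduced matrix. LU-factorizability guarantees only that the pivots $A^{(j)}_{jj}$ are non-zero; it says nothing about the diagonal entries $A^{(i)}_{jj}$ with $j > i$. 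Concretely, $A = \left(\begin{smallmatrix}1 & 1\\ 1 & 0\end{smallmatrix}\right)$ is LU-factorizable (pivots $1$ and $-1$), yet already $e_{\mathrm{acc}}(V,b_1)$ forces the iteration $w = b_2$ to evaluate $\red{A}{b_2}$, i.e., to divide by $-A_{22} = 0$, so your $e_L$ is undefined on this input; the paper does treat division by zero as undefined, which is why it inserts explicit guards for exactly this situation in the proof of Proposition~\ref{prop:palu}. The repair stays inside $\langforf{f_/}$: guard the denominator rather than the whole term, e.g.\ divide by $\mathsf{succ}^+(w,y)\times\bigl(-(w^T\cdot X\cdot V\cdot w)\bigr)\times\ones(w) + \bigl(1-\mathsf{succ}^+(w,y)\bigr)\times\ones(w)$, so that the dead iterations divide by $1$.
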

We remark that the proposition holds when division is added as a function in $\Fun$
in \langfor. When row pivoting is needed, we can also obtain a permutation matrix
$P$ such that $P\cdot A=L\cdot U$ holds by means of an expression in \langfor, provided
that we additionally allow the function $f_{>0}$, 
where $f_{>0}:\RR\to\RR$ is such that $f_{>0}(x):=1$ if $x>0$ and $f_{>0}(x):=0$ otherwise.

 \begin{proposition}\label{prop:palu}
There exist expressions $e_{L^{-1}P}(M)$ and $e_U(M)$ in $\langforf{f_/,f_{>0}}$  such that
$L^{-1}\cdot P=\sem{e_{L^{-1}P}}{\I}$ and $U=\sem{e_U}{\I}$, satisfy $L^{-1}\cdot P\cdot A=U$.\qed
 %
\end{proposition}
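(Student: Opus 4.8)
The plan is to simulate Gaussian elimination \emph{with partial pivoting} and to show that each column step can be carried out by a $\langforf{f_/,f_{>0}}$ sub\-expression, accumulating the whole transformation in a single outer for-loop over the columns. Recall the classical PLU theorem: running the reduction of Proposition~\ref{prop:gauss} but inserting, before the $i$th elimination, a row swap $P_i$ that moves a nonzero entry into the pivot position $(i,i)$, yields $U = E_n P_n\cdots E_1 P_1\cdot A$, where each $E_i$ is a unit lower-triangular elimination matrix and each $P_i$ is a transposition of rows $i$ and $p>i$ (or the identity). Conjugating each $E_k$ past the later permutations (which only touch indices $>k$ and hence preserve its shape) shows that $M:=E_nP_n\cdots E_1P_1$ factors as $L^{-1}\cdot P$ with $L^{-1}$ lower triangular and $P=P_n\cdots P_1$ a permutation, and $M\cdot A=U$ upper triangular. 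Thus it suffices to build a \langfor expression computing $M$ directly as the accumulated product of the row operations, and to set $e_U(M):=e_{L^{-1}P}(M)\cdot M$.

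The only genuinely new ingredient beyond Proposition~\ref{prop:gauss} is pivot selection, and this is exactly what $f_{>0}$ buys us. First, a nonzero indicator is obtained pointwise by $\mathsf{nz}(e):=f_{>0}(e)+f_{>0}((-1)\times e)$, returning $1$ on nonzero entries and $0$ on zeros. Writing $B:=X\cdot M$ for the matrix reduced so far ($X$ the current accumulated transform, $M$ the input variable), at column $y=b_i$ I form the $0/1$ column $d:=\mathsf{nz}(\ccol{B}{y})$ marking the nonzeros of column $i$ strictly below the diagonal, and the scalar $\delta:=\mathsf{nz}(y^T\cdot B\cdot y)$ indicating whether the diagonal entry itself is nonzero. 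Using the order matrix $S_{\leq}$, the predicate $\mathsf{succ}^+$, and $\ones$, one builds from $d$ (by a thresholded prefix sum $f_{>0}(S_{<}\cdot d)$ and an entrywise AND encoded as $f_{>0}(\,\cdot+\cdot-\ones)$) a vector $\mathsf{piv}$ equal to the canonical vector $b_p$ of the \emph{first} row $p>i$ with $B_{pi}\neq0$, and a scalar $\epsilon$ equal to $1$ iff such a $p$ exists; these are routine combinations of $f_{>0}$ with the order machinery of Section~\ref{sec:formatlang}. A swap should fire exactly when the diagonal is zero but a nonzero lies below, i.e. with guard $s:=(1-\delta)\times\epsilon$, so I set
\[
P(y,X):=e_{\mathsf{Id}}+s\times\big(y\cdot\mathsf{piv}^T+\mathsf{piv}\cdot y^T-y\cdot y^T-\mathsf{piv}\cdot\mathsf{piv}^T\big),
\]
which is $e_{\mathsf{Id}}$ when $s=0$ and the transposition of rows $i$ and $p$ when $s=1$.

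For the elimination I reuse the structure of $\red{\cdot}{y}$ applied to the pivoted matrix $B':=P(y,X)\cdot X\cdot M$. Its pivot entry $\beta:=y^T\cdot B'\cdot y$ is nonzero precisely when some pivot exists, so to keep $f_/$ away from a zero denominator I use a guarded denominator $\beta':=-\beta-(1-\mathsf{nz}(\beta))$ (equal to $-\beta\neq0$ when a pivot exists, and to $-1$ when the column is already clean below the diagonal, in which case the numerator $\ccol{B'}{y}$ is $\mathbf 0$), and set $E(y,X):=e_{\mathsf{Id}}+f_/\big(\ccol{B'}{y},\,\beta'\times\ones(y)\big)\cdot y^T$; this is $e_{\mathsf{Id}}$ exactly on a clean column. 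Assembling everything,
\[
e_{L^{-1}P}(M):=\initf{e_{\mathsf{Id}}}{y}{X}{E(y,X)\cdot P(y,X)\cdot X},\qquad e_U(M):=e_{L^{-1}P}(M)\cdot M.
\]
Correctness follows by induction on $i$ with the invariant that $X\cdot M$ is upper triangular in its first $i-1$ columns and $X$ is a product of unit lower-triangular and transposition matrices; since the swap only exchanges two rows that are both already zero in columns $<i$ below the diagonal, previously cleared columns are undisturbed, and column $i$ is cleared below the diagonal after step $i$. At $i=n$ we get $X\cdot M=U$ and $X=M=L^{-1}P$ by the factorization argument of the first paragraph. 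I expect the main obstacle to be this pivot-selection machinery together with the division guard: expressing ``first nonzero below the diagonal,'' distinguishing the clean-column case, and guaranteeing that $f_/$ is never evaluated at a zero denominator while forcing $E(y,X)=P(y,X)=e_{\mathsf{Id}}$ precisely when no pivot is needed. The function $f_{>0}$ is what makes all of this possible, providing both the nonzero test and, via thresholded prefix sums over $S_{\leq}$, the first-nonzero index.
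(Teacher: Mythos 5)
Your proposal is correct and follows essentially the same route as the paper's own proof: a single outer for-loop initialized at the identity that, for each column, applies a guarded transposition matrix of the form $I-(u-\mathsf{piv})(u-\mathsf{piv})^T$ followed by a guarded elimination step whose denominator is kept away from zero using $f_{>0}$ (with the correction vanishing exactly on clean columns), and then the same commutation argument pushing the interleaved permutations past the elimination matrices to obtain the factorization $L^{-1}\cdot P$ with $L^{-1}\cdot P\cdot A=U$. The only divergence is at the gadget level: you locate the first nonzero entry below the diagonal via thresholded prefix sums against the order matrix $S_{<}$, whereas the paper uses a dedicated for-loop expression $\nneq{\cdot}{\cdot}$ that scans canonical vectors and latches onto the first nonzero position; both constructions are routine given the order machinery of Section~\ref{sec:formatlang}, so this is an implementation detail rather than a different proof.
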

Intuitively, by allowing $f_{>0}$ we introduce a limited form of \texttt{if-then-else} in \langfor, which is needed
to continue reducing columns only when the right pivot has been found.

\subsection{Determinant and inverse}
Other key linear algebra operations include the computation of the determinant and
the inverse of a matrix (if the matrix is invertible). As a consequence of the expressibility
in $\langforf{f_/,f_{>0}}$ of LU-decompositions with pivoting, it can be shown that the determinant
and inverse can be expressed as well. 

However, the results
in the next section (connecting \langfor with arithmetic circuits) imply that the determinant
and inverse of a matrix can already be defined in $\langforf{f_/}$. So instead of using LU decomposition with pivoting for matrix inversion and computing the determinant, we provide an alternative solution.

More specifically, we rely on Csanky's algorithm for computing the inverse of a matrix~\cite{Csanky76}. This algorithm uses the characteristic
polynomial $p_A(x)=\mathsf{det}(xI-A)$ of a matrix. When expanded as a polynomial
$p_A(x)=\sum_{i=0}^{n} c_i x^i$ and it is known that $A^{-1}=\frac{-1}{\phantom{-1}c_n}\sum_{i=0}^{n-1}c_i A^{n-1-i}$
if $c_n\neq 0$. Furthermore, $c_0=1$, $c_n=(-1)^n\mathsf{det}(A)$ and the coefficients $c_i$ of $p_A(x)$
are known to satisfy the system of equations $S\cdot c=s$ given by:
$$
\left(\begin{matrix}
1 & 0 & 0 & \cdots & 0 & 0\\
S_1 & 2 & 0 & \cdots  &0 & 0\\
S_2 & S_1 & 3 & \cdots  &0 & 0\\
\vdots & \vdots & \vdots & \vdots & \vdots & 0\\
S_{n-1} & S_{n-2} & S_{n-3} & \cdots & S_1 & n\\
\end{matrix}\right)\cdot
\left(\begin{matrix}
c_1\\
c_2\\
c_3\\
\vdots\\
c_n\\
\end{matrix}\right)=\left(\begin{matrix}
S_1\\
S_2\\
S_3\\
\vdots\\
S_n\\
\end{matrix}\right),
$$
with $S_i=\mathsf{tr}(A^i)$. We show, in the appendix, that we can construct all ingredients of this system of equations in $\langforf{f_/}$. By observing that the matrix $S$ is a lower triangular matrix with non-zero elements on its diagonal, we can write it in the form $D_S+(S-D_{S})=D_S\cdot(I+D_S^{-1}\cdot (S-D_S))$ with $D_S$ the diagonal matrix consisting of the diagonal entries of $S$.
Hence $S^{-1}=(I+D_{S}^{-1}\cdot(S-D_{S}))^{-1}\cdot D_S^{-1}$. 
We remark $D_S^{-1}$ can simply be obtained by inverting the (non-zero) elements on the diagonal by means of $f_/$ in $\langforf{f_/}$. Furthermore, we observe that $(I+D_S^{-1}(S-D_S))^{-1}=\sum_{i=0}^{n}(D_S^{-1}(S-D_S))^i$ which is something
we can compute in $\langforf{f_/}$ as well. Hence, we can invert $S$ and obtain the vector $(c_1,\ldots,c_n)^T$ as $S^{-1}\cdot s$. To compute $A^{-1}$ it now suffices to compute
$\frac{-1}{\phantom{-1}c_n}\sum_{i=0}^{n-1}c_i A^{n-1-i}$. To find the determinant,
we compute $(-1)^nc_n$. All this can be done in $\langforf{f_/}$.
We may thus conclude:
\begin{proposition}\label{prop:inverse}
There are $\langforf{f_/}$ expressions $e_{\mathsf{det}}(V)$ and $e_{\mathsf{inv}}(V)$ such that
$\sem{e_{\mathsf{det}}}{\I}=\mathsf{det}(A)$, and  
$\sem{e_{\mathsf{inv}}}{\I}=A^{-1}$ when $\I$ assigns $V$
to $A$ and $A$ is invertible.
\qed
\end{proposition}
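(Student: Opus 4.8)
The strategy is exactly the one sketched just above the statement: implement Csanky's algorithm~\cite{Csanky76} entirely inside $\langforf{f_/}$. Every quantity in that algorithm is obtained from the power traces $S_i=\mathsf{tr}(A^i)$ by matrix arithmetic, by solving one triangular linear system $S\cdot c=s$, and by one final scalar division, so the whole task reduces to exhibiting a $\langforf{f_/}$ expression for each of the following ingredients and composing them: (i) the traces $S_1,\dots,S_n$ and the right-hand side $s=(S_1,\dots,S_n)^T$; (ii) the coefficient matrix $S$; (iii) the inverse $S^{-1}$, whence the coefficient vector $c=S^{-1}\cdot s$; and (iv) the assembly $e_{\mathsf{inv}}(V)=\frac{-1}{c_n}\sum_{i=0}^{n-1}c_i A^{n-1-i}$ together with $e_{\mathsf{det}}(V)=(-1)^n c_n$. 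Throughout I would use the order predicates $\mathsf{succ}$, $\mathsf{succ}^+$, $e_{\mathsf{min}}$, $e_{\mathsf{max}}$, the identity $e_{\mathsf{Id}}$, and the order matrix $S_{\leq}$ already built in Section~\ref{sec:formatlang}, since these are what let a for-loop address the $i$-th row, column, or subdiagonal.

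First I would construct the traces. The trace of any matrix expression $B$ of type $(\alpha,\alpha)$ is itself a for-loop $\ffor{w}{Y}{Y+w^T\cdot B\cdot w}$, which sums the diagonal entries $b_j^T\cdot B\cdot b_j$ into a $1\times1$ result. To obtain $A^i$ for the selector $v=b_i$ I would use a gated inner loop that multiplies a running variable (initialized to $I$ via $e_{\mathsf{Id}}$) by $V$ for the first $i$ steps and by $I$ afterwards, the gate being the $0/1$ scalar $\mathsf{succ}(w,v)$; after this loop the variable equals $A^i$. Nesting ``trace of $A^i$'' inside an outer loop over $v$ and adding $\mathsf{tr}(A^i)\cdot v$ to an accumulator yields the vector $s$. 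The diagonal part $D_S=\diag(1,2,\dots,n)$ of $S$ is obtained as $\diag(S_{\leq}^T\cdot\ones(V))$, because the $i$-th entry of $S_{\leq}^T\cdot\ones$ counts the ones in the $i$-th column of $S_{\leq}$, namely $i$. The strictly-lower Toeplitz part, whose $(j,k)$-entry is $S_{j-k}$ for $j>k$, I would accumulate offset by offset in a loop over $v=b_d$, adding $(b_d^T\cdot s)$ times the indicator of the $d$-th subdiagonal, the latter definable from $S_{\leq}$ (equivalently as the $d$-th power of the down-shift matrix, itself built by a gated loop). Summing the two parts gives $S$.

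Next I would invert $S$. The point that keeps this inside $\langforf{f_/}$ is that $D_S=\diag(1,\dots,n)$ is \emph{a priori} nonzero, so $D_S^{-1}$ is formed by applying $f_/$ entrywise to the diagonal with no zero-test. Writing $N:=D_S^{-1}\cdot(S-D_S)$, the matrix $N$ is strictly lower triangular, hence $N^n=\mathbf{0}$ and $(I+N)^{-1}=\sum_{i=0}^{n-1}(-N)^i$; this finite series is computed by a length-$n$ for-loop keeping the current power of $-N$ alongside the partial sum. Then $S^{-1}=(I+N)^{-1}\cdot D_S^{-1}$ and $c=S^{-1}\cdot s$. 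Finally, another loop over $i$ recomputes $A^{n-1-i}$ (again by a gated power) and accumulates $(b_i^T\cdot c)\cdot A^{n-1-i}$, and one application of $f_/$ dividing by $c_n=b_n^T\cdot c$ produces $e_{\mathsf{inv}}(V)$; taking $e_{\mathsf{det}}(V)=(-1)^n c_n$, with the sign $(-1)^n$ produced by a length-$n$ loop multiplying a scalar by $-1$, gives the determinant.

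\textbf{Main obstacle.} No single step is deep; the difficulty is entirely in the bookkeeping of realizing structured matrices and variable-length powers under the rigid for-loop discipline, where every loop runs for exactly $n$ steps and carries a \emph{single} updatable variable. Implementing $A^i$ for varying $i$, the subdiagonal indicators, and the diagonal $1,\dots,n$ therefore all go through the gating/selector trick with $\mathsf{succ}$ and $S_{\leq}$; the key verification is that this masking uses only the $0/1$-valued order predicates, so it stays within $\langforf{f_/}$ and never needs the conditional $f_{>0}$. One should also check the two finite-identity facts that make the construction exact rather than approximate: that $N$ is nilpotent of index $\le n$ (so the geometric series truncates to the $n$ terms produced by one loop), and that $A^{-1}$ really equals the stated combination of the $c_i$. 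This is precisely the conceptual gain over Proposition~\ref{prop:palu}: Csanky's system is triangular with the \emph{constant} diagonal $1,2,\dots,n$, so every division is by a value known in advance to be nonzero, and correctness of the final division by $c_n$ is guaranteed by invertibility of $A$, which forces $c_n=(-1)^n\mathsf{det}(A)\neq0$.
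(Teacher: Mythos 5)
Your proposal is correct and takes essentially the same route as the paper's own proof: Csanky's algorithm, with powers and traces $\mathsf{tr}(A^i)$ computed by $\mathsf{succ}$-gated for-loops, the triangular coefficient matrix inverted by factoring out its nonzero diagonal with $f_/$ and summing the nilpotent Neumann series $(I+N)^{-1}=\sum_{i}(-N)^i$, and the final assembly of $A^{-1}$ and $(-1)^n c_n$ with one division by $c_n$. The only differences are cosmetic — you work with the main-body normalization of Csanky's system (diagonal $1,\dots,n$) and build $S$ from subdiagonal indicators, whereas the appendix uses the equivalent all-ones-diagonal variant, builds $S$ from shift matrices, and packages the triangular inversion as a separate lemma — so nothing substantive needs to change.
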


\section{Expressiveness of for loops}\label{sec:circuits}

In this section we explore the expressive power of $\langfor.$ Given that arithmetic circuits \cite{allender} capture most standard linear algebra algorithms \cite{Raz02,ShpilkaY10}, they seem as a natural candidate for comparison. Intuitively, an arithmetic circuit is similar to a boolean circuit \cite{aroraB2009}, except that it has gates computing the sum and the product function, and processes elements of $\RR$ instead of boolean values. To connect \langfor to arithmetic circuits we need a notion of uniformity of such circuits. After all, a \langfor expression can take matrices of arbitrary dimensions as input and we want to avoid having  different circuits for each dimension. To handle inputs of different sizes, we thus consider a notion of uniform families of arithmetic circuits, defined via a Turing machine generating a description of the circuit for each input size $n$.

What we show in the remainder of this section is that any function $f$ which operates on matrices, and is computed by a uniform family of arithmetic circuits of bounded degree, can also be computed by a \langfor expression, and vice versa. In order to keep the notation light, we will focus on 
 \langfor schemas over ``square matrices'' where each variable has type $(\alpha,\alpha),(\alpha,1),(1,\alpha)$, or $(1,1)$, although all of our results hold without these restrictions as well. In what follows, we will write $\langfor$ to denote $\langforf{\emptyset}$, i.e. the fragment of our language with no additional pointwise functions. We begin by defining circuits and then show how circuit families can be simulated by $\langfor.$

\subsection{From arithmetic circuits to \langfor}
Let us first recall the definition of arithmetic circuits. 
An \textit{arithmetic circuit} $\Phi$ over a set $X=\{x_1,\ldots,x_n\}$ of input variables is a directed
acyclic labeled graph. The vertices of $\Phi$ are called \textit{gates} and denoted by $g_1,\ldots,g_m$;
the edges in $\Phi$ are called \textit{wires}. The children of a gate $g$ correspond to all gates
$g'$ such that $(g,g')$ is an edge. The parents of $g$ correspond to all gates $g'$ 
such that $(g',g)$ is an edge. The \textit{in-degree}, or a \textit{fan-in}, of a gate $g$ refers to its number of children, and 
the \textit{out-degree} to its number of parents. We will not assume any restriction on the in-degree of a gate, and will thus consider circuits with unbounded fan-in. Gates with in-degree $0$ are called \textit{input gates}
and are labeled by either a variable in $X$ or a constant $0$ or $1$. All other gates
are labeled by either $+$ or $\times$, and are referred to as \textit{sum gates} or \textit{product gates}, respectively.
Gates with out-degree $0$ are called \textit{output gates}. When talking about arithmetic circuits, one usually focuses on circuits with $n$ input gates and a single output gate.\looseness=-1

The \textit{size} of $\Phi$, denoted by $|\Phi|$, is its number of gates and wires. The \textit{depth} of $\Phi$, denoted
by $\mathsf{depth}(\Phi)$, is the length of the longest directed path from any of its output gates to any of the input gates. The \textit{degree} of a gate is defined inductively: an input gate has degree~1, a sum gate has a degree equal to the maximum of degrees of its children, and a product gate has a degree equal to the sum of the degrees of its children. When $\Phi$ has a single output gate, the \textit{degree} of $\Phi$, denoted by $\mathsf{degree}(\Phi)$, is defined as the degree of its output gate. If $\Phi$ has a single output gate and its input gates take values from $\RR$, then $\Phi$ corresponds to a polynomial in $\RR[X]$ in a natural way. In this case, the {degree} of $\Phi$ equals the degree of the polynomial corresponding to $\Phi$.
%
If $a_1,\ldots ,a_n$ are values in $\RR$, then 
the result of the circuit on this input is the value computed by the corresponding polynomial, denoted by $\Phi(a_1,\ldots ,a_k)$.

In order to handle inputs of different sizes, we use the notion of uniform circuit families. An \textit{arithmetic circuit family} is a set of arithmetic circuits $\{\Phi_n\mid n=1,2,\ldots\}$ where $\Phi_n$ has $n$ input variables and a single output gate. An arithmetic circuit family is \textit{uniform} if there exists a \logspace-Turing machine,
which on input $1^n$, returns an encoding of the arithmetic circuit $\Phi_n$ for each $n$.
We observe that uniform arithmetic circuit families are necessarily of polynomial size. 
Another important parameter is the circuit depth. A circuit family is of logarithmic depth, whenever $\mathsf{depth}(\Phi_n)\in \mathcal{O}(log\, n)$. We  now show that \langfor subsumes uniform arithmetic circuit families that are of logarithmic depth. 


\begin{theorem}
\label{th-circuits-ml}
For any uniform arithmetic circuit family $\{\Phi_n\mid n=1,2,\ldots\}$ of logarithmic depth there is a \langfor schema $\Sch$ and an expression $e_\Phi$ using a matrix variable $v$, with $\ttype(v)=(\alpha,1)$ and $\ttype(e) = (1,1)$, such that for any input values $a_1,\ldots ,a_n$: 
\begin{itemize}
\item If $\I = (\dom,\conc)$ is a \lang\ instance such that $\dom(\alpha) = n$ and $\conc(v) = [a_1 \ldots a_n]^T$.
\item Then $\sem{e_\Phi}{\I} = \Phi_n(a_1,\ldots ,a_n)$.
\end{itemize}
\end{theorem}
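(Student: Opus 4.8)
The plan is to simulate the evaluation of $\Phi_n$ directly inside \langfor, proceeding layer by layer and driving the whole process with a single outer for-loop whose $n$ iterations comfortably exceed the circuit depth $O(\log n)$. First I would put the family into a convenient normal form: since logarithmic depth, polynomial size and \logspace-uniformity are all preserved under standard transformations, I may assume each $\Phi_n$ is \emph{layered}, with input and constant gates in layer $0$ and every gate in layer $\ell>0$ taking its children only from layer $\ell-1$, alternating sum- and product-layers. The guiding idea is then an \emph{iterate-to-fixpoint} scheme: maintain a value assignment to all gates, and in each round replace every gate's value by the sum (resp.\ product) of the current values of its children, while pinning input gates to the $a_i$ and constant gates to $0/1$. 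Once a gate's children carry their correct values the gate does too and stays correct, so after at least $L=O(\log n)\le n$ rounds the output gate is correct and further rounds are idempotent; hence running the outer loop the full $n$ times is safe.

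The representational core is how to name gates inside \langfor. Under the ``square matrices'' convention of this section I would index gates by pairs $(i,j)$ and store the entire value assignment in one $n\times n$ state matrix $X$ of type $(\alpha,\alpha)$, with the value of gate $(i,j)$ sitting in position $(i,j)$ (for larger polynomial size one uses constant-length tuples of canonical vectors and the unrestricted setting). Using the order matrix $S_{\leq}$ and the derived predicates $\mathsf{succ},\mathsf{succ}^+,\mmin{\cdot}$ of Section~\ref{sec:formatlang}, together with the ability—central to this section—of \langfor to simulate \logspace\ computations over such index encodings, I would convert the uniformity machine for $\{\Phi_n\}$ into \langfor-definable predicates: a \emph{type} predicate separating input/constant/sum/product gates, a \emph{wiring} predicate testing whether one gate is a child of another, and, for input gates, a predicate recovering which coordinate the gate reads (a simple scalar extraction $v^T\cdot b_i^n=a_i$). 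Establishing that these \logspace-computable, index-encoded relations are genuinely \langfor-definable is the main obstacle, and it is precisely what makes $e_\Phi$ uniform: the single fixed expression must reconstruct all of $\Phi_n$'s structure from $n=\dom(\alpha)$ alone.

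Given these predicates, one round of the update is expressible by a constant number of nested for-loops exactly in the style of Examples~\ref{ex:onevec}--\ref{ex:diag}. A sum gate's new value is produced by an inner loop accumulating, over all candidate children, the current child value times the $0/1$ wiring indicator. A product gate's new value is produced by an inner loop whose running product $X$ (a $1\times1$ matrix initialised to $1$) is multiplied by the contribution ``$w\times(\text{value})+(1-w)$'', where $w$ is the wiring indicator: children contribute their value and non-children contribute the multiplicative identity $1$. This is exactly how unbounded fan-in is absorbed—we multiply over all $n^{O(1)}$ potential children with non-children neutralised—and it is the only genuinely non-linear ingredient. Reading and writing the value at a given gate is done through the canonical-vector projections $b_i^n,(b_j^n)^T$, precisely as in the $\diag$ simulation.

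Finally, after the outer loop terminates I would read off the value sitting at the unique output gate as a $1\times1$ matrix, obtaining an expression of type $(1,1)$ whose value equals $\Phi_n(a_1,\dots,a_n)$ by correctness of the fixpoint iteration. The two steps I expect to be delicate are (i) the faithful \langfor-encoding of the \logspace\ circuit description into the type and wiring predicates, which is the technical heart and the point where both logarithmic depth and \logspace-uniformity are used; and (ii) verifying that the layered update, run for $n\ge L$ rounds with non-children neutralised to $1$, stabilises to the true gate values. The product gates, handled by the value-or-$1$ trick, and the bookkeeping of storing the assignment in a bounded-dimension matrix are where the argument needs the most care.
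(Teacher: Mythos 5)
Your overall architecture---extracting type and wiring predicates from the uniformity machine via a Turing-machine simulation inside \langfor, and then evaluating the circuit with nested for-loops and canonical-vector indexing---matches the paper's general strategy, and your identification of that simulation step as the technical heart is accurate. However, your evaluation scheme has a genuine gap: the iterate-to-fixpoint, layer-by-layer update requires the loop state to hold, \emph{simultaneously}, one real value per gate of $\Phi_n$. A uniform log-depth family can have size $n^k$ for arbitrary fixed $k$, while the state available to a \langfor loop in this theorem is a constant number of matrices whose dimensions are tied to the single size symbol $\alpha$ with $\dom(\alpha)=n$, i.e.\ $O(n^2)$ reals in total. You cannot escape this with an extra size symbol $\beta$: the theorem quantifies over \emph{all} instances with $\dom(\alpha)=n$ and $\conc(v)=[a_1 \ldots a_n]^T$, so the expression must be correct whatever value $\dom(\beta)$ takes; and gate values are arbitrary reals, so they cannot be bit-packed into fewer entries. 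Your parenthetical remedy (``constant-length tuples of canonical vectors'') solves the problem of \emph{indexing} $n^k$ gates, but not of \emph{storing} $n^k$ intermediate values, so the construction breaks down for every family whose size is $\omega(n^2)$.

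This is exactly the obstruction the paper's proof is designed around, and it is where logarithmic depth is really used. The paper evaluates $\Phi_n$ by a \emph{depth-first} traversal maintaining two stacks inside a single $n\times n$ matrix: a stack of gate IDs, each binary-encoded in $k\log n\le n-3$ entries of one row, and a stack of partial values; the storage needed is thus proportional to the circuit \emph{depth} $O(\log n)$, never to its size, and the traversal is driven for $n^k$ steps by iterating over $k$-tuples of canonical vectors. In your scheme, by contrast, logarithmic depth only bounds the number of rounds, which is the cheap resource; the expensive resource (state) is left unbounded. Repairing your proof would force you to replace the global value assignment by path-local bookkeeping with on-demand recomputation---at which point you are reconstructing the paper's two-stack argument.
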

It is important to note that the expression $e_\Phi$ does not change depending on the input size, meaning that it is uniform in the same sense as the circuit family being generated by a single Turing machine. The different input sizes for a \langfor instance are handled by the typing mechanism of the language.

\domagoj{New proof sketch.}
\textit{Proof sketch.} The proof of this Theorem, which is the deepest technical result of the paper, depends crucially on two facts: (i) that any polynomial time Turing machine working within linear space and producing linear size output, can be simulated via a \langfor\ expression; and (ii) that evaluating an arithmetic circuit $\Phi_n$ can be done using two stacks of  depth $n$.

Evaluating  $\Phi_n$ on input $(a_1,\ldots ,a_n)$ can be done in a depth-first manner by maintaining  two stacks: the gates-stack that tracks the current gate being evaluated, and the values-stack that stores the value that is being computed for this gate. The idea behind having two stacks is that whenever the number of items on the gates-stack is higher by one than the number of items on the values-stack, we know that we are processing a fresh gate, and we have to initialize its current value (to 0 if it is a sum gate, and to 1 if it is a product gate), and push it to the values-stack. We then proceed by processing the children of the head of the gates-stack one by one, and aggregate the results using sum if we are working with a sum gate, and by using product otherwise. 

In order  to access the information about the gate we are processing (such as whether it is a sum or a product gate, the list of its children, etc.) we use the uniformity of our circuit family. Namely, we know that we can generate the circuit $\Phi_n$ with a \logspace-Turing machine $M_\Phi$ by running it on the input $1^n$. Using this machine, we can in fact compute all the information needed to run the two-stack algorithms described above. For instance, we can construct a \logspace\ machine that checks, given two gates $g_1$ and $g_2$, whether $g_2$ is a child of $g_1$. Similarly, we can construct a machine that, given $g_1$ and $g_2$ tells us whether $g_2$ is the final child of $g_1$, or the one that produces the following child of $g_1$ (according to the ordering given by the machine $M_\Phi$). Defining these machines based of $M_\Phi$ is similar to the algorithm for the composition of two \logspace\ transducers, and is commonly used to evaluate arithmetic circuits \citep{allender}. 


To simulate the circuit evaluation algorithm that uses two stacks, in \langfor we can use a binary matrix of size $n\times n$, where $n$ is the number of inputs. The idea here is that  the gates-stack corresponds to the first $n-3$ columns of the matrix, with each gate being encoded as a binary number in positions $1,\ldots,n-3$ of a row. The remaining three columns are reserved for the values-stack, the number of elements on the gates stack, and the number of elements on the values stack, respectively. The number of elements is encoded as a canonical vector of size $n$. Here we crucially depend on the fact that the circuit is of logarithmic depth, and therefore the size of the two stacks is bounded by $n$ (apart from the portion before the asymptotic bound kicks-in, which can be hard-coded into the expression $e_\Phi$). Similarly, given that the circuits are of polynomial size, we can assume that gate ids can be encoded into $n-3$ bits.

This matrix is then updated in the same way as the two-stack algorithm. It processes gates one by one, and using the successor relation for canonical vectors determines whether we have more elements on the gates stack. In this case, a new value is added to the values stack ($0$ if the gate is a sum gate, and $1$ otherwise), and the process continues. Information about the next child, last child, or input value, are obtained using the expression which simulates the Turing machine generating this data about the circuit (the machines used never produce an output longer than their input). Given that the size of the circuit is polynomial, say $n^k$, we can initialize the matrix with the output gate only, and run the simulation of the two-stack algorithm for $n^k$ steps (by iterating $k$ times over size $n$ canonical vectors). After this, the value in position  $(1,n-2)$ (the top of the values stack) holds the final results. \qed

\smallskip
While Theorem \ref{th-circuits-ml} gives us an idea on how to simulate arithmetic circuits, it does not tell us which classes of functions over real numbers can be computed by \langfor expressions. In order to answer this question, we note that arithmetic circuits can be used to compute functions over real numbers. Formally, a circuit family $\{\Phi_n\mid n=1,2,\ldots\}$ computes a function $f:\bigcup_{n\geq 1} \mathbb{R}^n\mapsto\mathbb{R}$, if for any $a_1,\ldots a_n\in \mathbb{R}$ it holds that $\Phi_n(a_1,\ldots ,a_n) = f(a_1,\ldots ,a_n)$. To make the connection with \langfor\!, we need to look at circuit families of bounded degree. 

A circuit family $\{\Phi_n\mid n=1,2,\ldots\}$ is said to be of \textit{polynomial degree} if $\mathsf{degree}(\Phi_n)\in O(p(n))$, for some polynomial $p(n)$. Note that polynomial size circuit families are not necessarily of polynomial degree. An easy corollary of Theorem \ref{th-circuits-ml} tells us that all functions computed by uniform family of circuits of polynomial degree and logarithmic depth can be simulated using \langfor expressions. However, we can actually drop the restriction on circuit depth due to the result of Valiant et. al.~\cite{valiant1981fast} and Allender et. al. \cite{AllenderJMV98} which says that any function computed by a uniform circuit family of polynomial degree (and polynomial depth), can also be computed by a uniform circuit family of logarithmic depth. Using this fact, we can conclude the following:


\begin{corollary}
\label{cor-circ-ml}
For any function $f$ computed by a uniform family of arithmetic circuits of polynomial degree, there is an equivalent \langfor formula $e_f$.
\end{corollary}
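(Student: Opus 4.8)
The plan is to obtain $e_f$ by composing the depth-reduction theorem for bounded-degree arithmetic circuits with Theorem~\ref{th-circuits-ml}. We start from the hypothesis: a uniform family $\{\Phi_n\mid n=1,2,\ldots\}$ of polynomial degree that computes $f$. The first step is to check that this family also has polynomial depth, so that the depth-reduction result applies. This is immediate: as already observed, uniformity forces the family to be of polynomial size, and since the depth of any circuit is bounded by its size, $\{\Phi_n\}$ has polynomial depth as well.

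The second step is to invoke the result of Valiant et al.~\cite{valiant1981fast} and Allender et al.~\cite{AllenderJMV98}. Applied to $\{\Phi_n\}$, which is uniform of polynomial degree and polynomial depth, it yields an equivalent uniform family $\{\Phi'_n\mid n=1,2,\ldots\}$ of logarithmic depth. Here \emph{equivalent} means that $\Phi'_n$ computes the same polynomial as $\Phi_n$ for every $n$; in particular $\Phi'_n(a_1,\ldots,a_n)=\Phi_n(a_1,\ldots,a_n)=f(a_1,\ldots,a_n)$ for all real inputs, so $\{\Phi'_n\}$ still computes $f$, and $\mathsf{depth}(\Phi'_n)\in\mathcal{O}(\log n)$.

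The final step is a direct application of Theorem~\ref{th-circuits-ml} to $\{\Phi'_n\}$. Since this family is uniform and of logarithmic depth, the theorem produces a single \langfor schema and expression $e_f$, using an input vector variable $v$ with $\ttype(v)=(\alpha,1)$, such that for every instance $\I$ with $\dom(\alpha)=n$ and $\conc(v)=[a_1\,\ldots\,a_n]^T$ we have $\sem{e_f}{\I}=\Phi'_n(a_1,\ldots,a_n)$. Combining this with the equivalence of the previous step gives $\sem{e_f}{\I}=f(a_1,\ldots,a_n)$, which is exactly the claimed equivalence between $e_f$ and $f$.

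The main obstacle is entirely contained in the depth-reduction step, and within it the delicate point is the preservation of \emph{uniformity}. The classical Valiant construction only guarantees the \emph{existence} of an equivalent shallow circuit for each $n$; what we really need is that the family $\{\Phi'_n\}$ remains \logspace-uniform, so that it is a legitimate input to Theorem~\ref{th-circuits-ml}. This is exactly what the uniform version of Allender et al. provides, and I would justify it by arguing that the transducer carrying out the reduction can be composed with the original uniformity machine $M_\Phi$ while staying in logarithmic space, using the same \logspace-composition argument that underlies the simulation in Theorem~\ref{th-circuits-ml}. Once uniformity is secured, the remaining steps are purely a matter of chaining the cited results, and no further computation is required.
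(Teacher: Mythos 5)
Your proposal is correct and follows essentially the same route as the paper: observe that uniformity forces polynomial size (hence polynomial depth), apply the uniform depth-reduction result of Valiant et al.~\cite{valiant1981fast} and Allender et al.~\cite{AllenderJMV98} to obtain an equivalent \logspace-uniform family of logarithmic depth, and then invoke Theorem~\ref{th-circuits-ml}. Your added attention to the preservation of uniformity under depth reduction is precisely the point the paper delegates to the citation of \cite{AllenderJMV98}, so nothing is missing.
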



Note that there is nothing special about circuits that have a single output, and both Theorem \ref{th-circuits-ml} and Corollary \ref{cor-circ-ml} also hold for functions  $f:\bigcup_{n\geq 1} \mathbb{R}^n\mapsto\mathbb{R}^{s(n)}$, where $s$ is a polynomial. Namely, in this case, we can assume that circuits for $f$ have multiple output gates, and that the depth reduction procedure of \cite{AllenderJMV98} is carried out for each output gate separately. Similarly, the construction underlying the proof of Theorem \ref{th-circuits-ml} can be performed for each output gate independently, and later composed into a single output vector.

\subsection{From \langfor to circuits}

Now that we know that arithmetic circuits can be simulated using \langfor expressions, it is natural to ask whether the same holds in the other direction. That is, we are asking whether for each \langfor expression $e$ over some schema $\Sch$ there is a uniform family of arithmetic circuits computing precisely the same result depending on the input size. 

In order to handle the fact that \langfor\ expressions can produce any matrix, and not just a single value, as their output, we need to consider circuits which have multiple output gates. Similarly, we need to encode matrix inputs of a \langfor\ expression in our circuits. We will write $\Phi(A_1,\ldots ,A_k)$, where $\Phi$ is an arithmetic circuit with multiple output gates, and each $A_i$ is a matrix of dimensions $\alpha_i\times \beta_i$, with $\alpha_i,\beta_i \in \{n,1\}$ to denote the input matrices for a circuit $\Phi$. We will also write $\texttt{type}(\Phi)=(\alpha,\beta)$, with $\alpha,\beta\in \{n,1\}$, to denote the size of the output matrix for $\Phi$. We call such circuits \textit{arithmetic circuits over matrices}. When $\{\Phi_n\mid n=1,2,\ldots\}$ is a uniform family of arithmetic circuits over matrices, we will assume that the Turing machine for generating $\Phi_n$ also gives us the information about how to access a position of each input matrix, and how to access the positions of the output matrix, as is usually done when handling matrices with arithmetic circuits \cite{Raz02}. The notion of degree is extended to be the sum of the degrees of all the output gates. With this  at hand, we can now show the following result.

\begin{theorem}
\label{th-ml-to-circuits}
Let $e$ be a \langfor expression over a schema $\Sch$, and let $V_1,\ldots ,V_k$ be the variables of $e$ such that $\ttype(V_i)\in \{(\alpha,\alpha), (\alpha,1), (1,\alpha), (1,1)\}$. Then there exists a uniform arithmetic circuit family over matrices $\Phi_n(A_1,\ldots ,A_k)$ such that:
\begin{itemize}
\item For any instance $\I = (\dom,\conc)$ such that $\dom(\alpha) = n$ and $\conc(V_i) = A_i$ it holds that:
\item $\sem{e}{\I} = \Phi_n(A_1,\ldots ,A_k)$.
\end{itemize}
\end{theorem}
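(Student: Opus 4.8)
\textit{Proof proposal.} The plan is to proceed by structural induction on the expression $e$, compiling each syntactic construct into a gadget of an arithmetic circuit over matrices, and then to argue that the whole construction can be emitted by a \logspace machine on input $1^n$. Because we work over $\langforf{\emptyset}$ and restrict to the square-matrix types, every subexpression $e'$ of $e$ has a statically determined type in $\{(\alpha,\alpha),(\alpha,1),(1,\alpha),(1,1)\}$, so with $n := \dom(\alpha)$ fixed I know in advance the dimensions of the matrix $\sem{e'}{\I}$ computed at each node of the parse tree. The circuit I build will have, for each node $\nu$ of the (fixed, constant-size) parse tree of $e$ and each entry position $(p,q)$ of the matrix computed at $\nu$, a designated gate whose value equals $\sem{e'}{\I}_{pq}$; the output gates are those attached to the root.

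First I would treat the non-recursive operators, which are the standard matrix gadgets. A variable $V_i$ becomes the input matrix $A_i$ (its gate at $(p,q)$ is the input gate $(A_i)_{pq}$); transpose rewires the gate at $(p,q)$ to the child gate at $(q,p)$; addition $e_1+e_2$ places a binary sum gate at each entry; scalar multiplication $e_1 \times e_2$ multiplies every entry of the $e_2$-block by the single gate of the $(1,1)$-block for $e_1$; and matrix product $e_1\cdot e_2$ uses, at each output entry $(p,q)$, a sum gate over product gates $(e_1)_{pr}\cdot(e_2)_{rq}$ with $r$ ranging over $[1..n]$ (fan-in $n$, which is allowed since we use unbounded fan-in).

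The crux is the for-loop $\ffor{v}{X}{e'}$, which I would handle by unrolling. Since $\ttype(v)=(\gamma,1)$ with $\gamma\in\{\alpha,1\}$, the loop runs $m:=\dom(\gamma)$ times with $m\in\{n,1\}$. I would lay down $m$ sequential copies of the circuit for the body $e'$; in the $i$-th copy the loop variable $v$ is replaced by the \emph{constant} canonical vector $b_i^n$ (its gate at position $(p,1)$ is the constant $1$ if $p=i$ and $0$ otherwise), and every occurrence of $X$ is wired to the root gates of the $(i-1)$-th copy (for $i>1$) or to constant-$0$ gates (for $i=1$, realising $A_0=\mathbf{0}$). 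The root gates of the $m$-th copy are the output gates of the loop gadget, matching the semantics $A_m$. To make the gate indexing precise, I would identify each gate by a triple consisting of a parse-tree node $\nu$, a \emph{loop context} $\bar c$ giving the current iteration index in $[1..m]$ for each for-loop enclosing $\nu$, and an entry position $(p,q)$; the wiring rules above are then purely local functions of such triples.

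For uniformity and size, note that the parse tree of $e$ is fixed, so the number of nested for-loops is a constant $\ell$; each gadget contributes polynomially many gates per copy, and unrolling replicates a body at most $n$ times per enclosing loop, so $|\Phi_n|$ is polynomial in $n$. Each gate index fits in $O(\log n)$ bits (a constant node identifier, $\ell$ indices and two positions, each in $[1..n]$), and deciding the label of a gate and enumerating its children is a local computation on these indices --- for the loop gadget it amounts to decrementing one context coordinate or emitting a canonical-vector constant --- hence computable in \logspace. A \logspace transducer can therefore enumerate the gates and wires of $\Phi_n$ given $1^n$. Correctness then follows by induction: at each node the designated gates compute the entries of $\sem{e'}{\I}$, and the unrolled loop reproduces the iteration $A_0,\ldots,A_m$ of the semantics, so the theorem follows by taking the root. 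The main obstacle I expect is precisely the bookkeeping for the loop gadget: setting up the context-indexed gate naming so that the $X$-to-previous-iteration wiring and the hardcoded canonical vectors are simultaneously correct and \logspace-computable, and verifying that nested unrolling stays polynomial (which hinges on $e$, and hence $\ell$, being fixed independently of $n$).
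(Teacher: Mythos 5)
Your proposal is correct and follows essentially the same route as the paper's proof: a structural induction that compiles each operator into the standard entry-wise matrix gadgets, and handles $\ffor{v}{X}{e'}$ by unrolling into $n$ sequential copies of the body circuit with hardcoded canonical-vector constants and the $X$-input of each copy wired to the outputs of the previous one (zero constants for the first), which is exactly the paper's composition $\Phi^{e}_n=\Phi^{e'}_n\bigl(\cdots\Phi^{e'}_n\bigl(\Phi^{\mathbf{0}},\Phi^{v_1}\bigr)\cdots,\Phi^{v_n}\bigr)$. Your explicit gate-naming scheme (parse-tree node, loop context, entry position) merely spells out the uniformity argument that the paper dispatches more briefly by analogy with composing \logspace\ transducers, noting that one only needs to track the current iteration index of each enclosing loop.
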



It is not difficult to see that the proof of Theorem \ref{th-circuits-ml} can also be extended to support arithmetic circuits over matrices. In order to identify the class of functions computed by \langfor expressions, we need to impose one final restriction: than on the degree of an expression. Formally, the \textit{degree of \langfor expression $e$} over a schema $\Sch$, is the minimum of the degrees of any circuit family  $\{\Phi_n\mid n=1,2,\ldots\}$ that is equivalent to $e$. That is, the expression $e$ is of polynomial degree, whenever there is an equivalent circuit family for $e$ of a polynomial degree.  
For example, all \langfor expressions seen so far have polynomial degree.
With this definition, we can now identify the class of functions for which arithmetic circuits and \langfor formulas are equivalent. This is the main technical contribution of the paper. 

\begin{corollary}
\label{th-equivalence}
Let $f$ be a function with input matrices $A_1,\ldots ,A_k$ of dimensions $\alpha\times \beta$, with $\alpha,\beta \in \{n,1\}$. Then, $f$ is computed by a uniform circuit family over matrices of polynomial degree if and only if there is a \langfor expression of polynomial degree for $f$. 
\end{corollary}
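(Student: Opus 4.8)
The plan is to obtain Corollary~\ref{th-equivalence} by assembling the two directions already developed in this section, observing that the definition of the degree of a \langfor expression is tailored precisely so that the biconditional becomes essentially bookkeeping once the matrix-valued versions of the earlier results are in place. The first thing I would record is a well-definedness remark: by (the matrix-valued extension of) Theorem~\ref{th-ml-to-circuits}, every \langfor expression $e$ over a square-matrix schema admits at least one equivalent uniform family of arithmetic circuits over matrices. Hence the set of circuit families equivalent to $e$ is nonempty, and the degree of $e$ --- the minimum of the degrees of such families --- is well-defined. This is the only place where Theorem~\ref{th-ml-to-circuits} is strictly needed, but it is essential, since without it the phrase ``$e$ has polynomial degree'' would not even make sense.

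For the ($\Leftarrow$) direction, suppose $e$ is a \langfor expression of polynomial degree computing $f$. By the very definition of polynomial degree there is a uniform circuit family over matrices $\{\Phi_n\}$ equivalent to $e$ whose degree is $O(p(n))$ for some polynomial $p$. Since $\{\Phi_n\}$ is equivalent to $e$ and $e$ computes $f$, the family $\{\Phi_n\}$ computes $f$ as well. Thus $f$ is computed by a uniform circuit family over matrices of polynomial degree, which is exactly what this direction asserts.

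For the ($\Rightarrow$) direction, suppose $f$ is computed by a uniform family of arithmetic circuits over matrices $\{\Phi_n\}$ of polynomial degree. Applying the matrix-valued form of Corollary~\ref{cor-circ-ml} --- which internally invokes the Valiant--Allender depth reduction to bring the family down to logarithmic depth and then Theorem~\ref{th-circuits-ml} to simulate it --- yields a \langfor expression $e_f$ equivalent to $f$. It remains only to certify that $e_f$ has polynomial degree, and for this I would simply note that the original family $\{\Phi_n\}$, being equivalent to $f$ and hence to $e_f$, is a circuit family of polynomial degree equivalent to $e_f$; by the definition of the degree of a \langfor expression, $e_f$ is therefore of polynomial degree, completing the direction.

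The step I expect to carry the real weight is not the argument above, which is essentially definitional, but the matrix-valued liftings of Theorem~\ref{th-circuits-ml}, Corollary~\ref{cor-circ-ml}, and Theorem~\ref{th-ml-to-circuits} that it consumes. I would need to confirm that the two-stack simulation of Theorem~\ref{th-circuits-ml} runs componentwise over the multiple output gates required to emit a full matrix (using the paper's remark that each output gate is handled independently and the results composed into an output vector), that the depth reduction is applied per output gate so that polynomial degree is preserved across the entire matrix output rather than being inflated, and that the encoding of matrix inputs and outputs via the uniform \logspace\ Turing machine is compatible on both the circuit-to-expression and expression-to-circuit sides. Once these liftings are granted, the biconditional reduces to matching ``equivalent to $e$'' with ``computes $f$'' and reading off the degree bound, so no further estimation is needed.
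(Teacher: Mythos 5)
Your proof is correct and follows essentially the same route the paper intends: the corollary is read off from the definition of the degree of a \langfor expression, with the ($\Rightarrow$) direction supplied by the matrix-valued form of Corollary~\ref{cor-circ-ml} (depth reduction plus Theorem~\ref{th-circuits-ml}) and the ($\Leftarrow$) direction and well-definedness supplied by Theorem~\ref{th-ml-to-circuits}. Your closing remark on the per-output-gate liftings is exactly the caveat the paper itself records, so nothing is missing.
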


Note that this result crucially depends on the fact that expressions in \langfor are of polynomial degree. Some \langfor expression are easily seen to produce results which are not polynomial. An example of such an expression is, for instance, $e_{\texttt{exp}} = \ffor{v}{X=A}{X\cdot X}$, over a schema $\Sch$ with $\ttype(v)= (\gamma,1)$, and $\ttype(X)=(1,1)$. Over an instance which assigns $n$ to $\gamma$ this expression computes the function $a^{2^n}$, for $A=[a]$. Therefore, a natural question to ask then is whether we can determine the degree of a \langfor expression. Unfortunately, as we show in the following proposition this question is in fact undecidable.

\begin{proposition}
\label{prop-undec}
Given a \langfor expression $e$ over a schema $\Sch$, it is undecidable to check whether $e$ is of polynomial degree.
\end{proposition}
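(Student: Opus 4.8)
The plan is to reduce from the (undecidable) halting problem on the empty input. Given a Turing machine $M$, I would effectively construct a \langfor expression $e_M$ that is of polynomial degree if and only if $M$ halts, so that deciding polynomial degree would decide halting. The construction exploits precisely the blow-up behind the expression $e_{\texttt{exp}}=\ffor{v}{X=A}{X\cdot X}$ discussed just before the statement: squaring a scalar accumulator at every one of the $n$ iterations yields $a^{2^n}$, a polynomial of degree $2^n$. The trick is to make $e_M$ square its accumulator only while $M$ has not yet halted, so that the total number of squarings --- hence the degree --- stays bounded by a constant exactly when $M$ halts.

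First I would build a \emph{bounded-halting} test. Let $M'$ be the machine that, on input $1^i$, simulates $M$ for $i$ steps and outputs $1$ if $M$ has not halted within $i$ steps and $0$ otherwise. Since $M$ touches at most $i$ tape cells in $i$ steps, $M'$ runs in time polynomial in $i$ and space $O(i)$; taking $i\le n$ for the ambient dimension $n$, this is a polynomial-time, linear-space computation with a single output bit. By the first ingredient of the proof of Theorem~\ref{th-circuits-ml} --- that any such Turing machine can be simulated by a \langfor expression --- there is a \langfor expression $e_h$ of type $(1,1)$ with free variable $v$ such that $\sem{e_h}{\I[v:=b_i^n]}=1$ iff $M$ has not halted within $i$ steps. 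Because that simulation is constructed effectively, the map $M\mapsto e_h$ is computable.

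Next I would assemble $e_M:=\initf{V}{v}{X}{e_h\cdot(X\cdot X)+(1-e_h)\cdot X}$ over a schema with $\ttype(v)=(\gamma,1)$ and $\ttype(V)=\ttype(X)=(1,1)$, reusing the same if-then-else combination $c\cdot(\cdot)+(1-c)\cdot(\cdot)$ already employed for loop initialization with $\mmin{v}$. On an instance with $\dom(\gamma)=n$ and $\conc(V)=[a]$, iteration $i$ replaces $X$ by $X\cdot X$ when $\sem{e_h}{\I[v:=b_i^n]}=1$ and leaves it unchanged otherwise, so the final degree is $2^{m(n)}$ with $m(n)$ the number of $i\in\{1,\dots,n\}$ for which $M$ has not halted within $i$ steps. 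If $M$ halts after $t$ steps then $m(n)=t-1$ for all $n\ge t$ and $e_M$ computes the constant-degree polynomial $a^{2^{t-1}}$; if $M$ never halts then $m(n)=n$ and $e_M$ computes $a^{2^n}$, exactly as $e_{\texttt{exp}}$. To handle the ``minimum over equivalent families'' in the definition of degree, I would invoke the standard fact that any arithmetic circuit computing a polynomial $p$ has degree at least $\deg(p)$ (sum gates take the maximum and product gates the sum of their children's degrees). Hence in the non-halting case every equivalent family $\{\Phi_n\}$ satisfies $\mathsf{degree}(\Phi_n)\ge 2^n$, so $e_M$ is not of polynomial degree, while in the halting case a constant-degree family exists; this completes the reduction.

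The step I expect to be the main obstacle is making the bounded-halting test $e_h$ fully precise. One must verify that simulating $M$ for $i\le n$ steps genuinely falls within the polynomial-time, linear-space regime required by Theorem~\ref{th-circuits-ml} (bounding the growth of $M$'s configuration and encoding the step counter $i$ by the canonical vector $b_i^n$), and that the resulting expression correctly reads $v=b_i^n$ as input and returns a usable $(1,1)$ value inside the surrounding for-loop. Everything else --- the degree bookkeeping and the halting reduction --- is routine once this simulation is in place.
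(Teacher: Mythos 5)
Your proof is correct, and at its core it is the same reduction the paper uses: undecidability of halting on the empty input, the \langfor simulation of Turing machines (Proposition~\ref{prop:transducer}, i.e.\ the first ingredient of Theorem~\ref{th-circuits-ml}) to detect halting, and a repeated-squaring loop to manufacture degree $2^n$. The differences lie in how the gadget is wired, and they are worth comparing. The paper tests halting only once per dimension: its $e_M\in\{0,1\}$ indicates whether $M$ halts within $n$ steps, and $d_M:=e_M\cdot e_{\mathsf{exp}}$ with $e_{\mathsf{exp}}$ evaluating to the input-independent constant $n^{2^n}$, so $d_M$ has polynomial degree iff $M$ does \emph{not} halt. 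You instead gate the squaring inside every iteration via the parameterized test $e_h(b_i^n)$ (``$M$ has not halted within $i$ steps''), which flips the orientation (polynomial degree iff $M$ halts) and is precisely the step you flag as the main obstacle; you could sidestep it by adopting the paper's once-per-dimension test $e_h$ and writing $e_M:=e_h\cdot\bigl(\initf{V}{v}{X}{X\cdot X}\bigr)+(1-e_h)\cdot V$, which yields the same two outcomes ($a^{2^n}$ when $M$ never halts, degree at most $2^{t-1}$ when it halts after $t$ steps). What your variant buys --- and this is a genuine gain in robustness --- is that your expression keeps a live scalar input $a$, so in the non-halting case it computes the honest polynomial $a^{2^n}$, and the lower bound is the elementary fact that a circuit's syntactic degree bounds the degree of the polynomial it computes; this rules out \emph{every} equivalent circuit family, uniform or not. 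The paper's gadget is input-independent (it computes the constant $0$ or $n^{2^n}$), so its lower bound must instead argue that no uniform (hence polynomial-size) family of polynomial degree can output the \emph{value} $n^{2^n}$, an argument that silently relies on reading ``equivalent circuit family'' in the definition of expression degree as uniform --- otherwise a non-uniform degree-one family (a sum of $n^{2^n}$ constant $1$-gates) computes any constant and the reduction collapses. Your version is immune to that subtlety.
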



Of course, one might wonder whether it is possible to define a syntactic subclass of \langfor expressions that are of polynomial degree and can still express many important linear algebra algorithms. We identify one such class in Section \ref{ss:sumML}, called \langsum, and in fact show that this class is powerful enough to capture relational algebra on (binary) $K$-relations.

\floris{What I am missing is what it is precisely that prevents us from doing any circuit in \langfor and what do these results actually imply. What cannot be done?}

\subsection{Supporting additional operators}

%
%
%

The equivalence of \langfor and arithmetic circuits we prove above assumes that circuits can only use the sum and product gates (note that even without the sum and the product function, \langfor\ can simulate these operations via matrix sum/product). However, both arithmetic circuits and expressions in $\langfor$ can be allowed to use a multitude of functions over $\RR$. The most natural addition to the set of functions is the division operator, which is crucially needed in many linear algebra algorithms, such as, for instance, Gaussian elimination, or $LU$ decomposition (recall Proposition \ref{prop:gauss}).
Interestingly, the equivalence in this case still holds, mainly due to a surprising result which shows that (almost all) divisions can in fact be removed for arithmetic circuits which allow sum, product, and division gates \cite{allender}.

More precisely, in \cite{strassen1973vermeidung,borodin1982fast,kaltofen1988greatest} it was shown that for any function of the form $f = g/h$, where $g$ and $h$ are relatively prime polynomials of degree $d$, if $f$ is computed by an arithmetic circuit of size $s$, then both $g$ and $h$ can be computed by a circuit whose size is polynomial in $s + d$. Given that we can postpone the division without affecting the final result, this, in essence, tells us that division can be eliminated (pushed to the top of the circuit), and we can work with sum-product circuits instead. The degree of a circuit for $f$, can then be defined as the maximum of degrees of circuits for $g$ and $h$. Given this fact, we can again use the depth reduction procedure of \cite{AllenderJMV98}, and extend Corollary 
\ref{th-equivalence} to circuits with division.
\begin{corollary}
\label{cor-division}
Let $f$ be a function taking as its input matrices $A_1,\ldots ,A_k$ of dimensions $\alpha\times \beta$, with $\alpha,\beta \in \{n,1\}$. Then, $f$ is computed by a uniform circuit family over matrices of polynomial degree that allows divisions, if and only if there is a $\langforf{f_/}$ expression of polynomial degree for $f$.
\end{corollary}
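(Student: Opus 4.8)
The plan is to reduce Corollary~\ref{cor-division} to the already-established division-free equivalence of Corollary~\ref{th-equivalence}, by showing that allowing division gates on the circuit side and allowing the function $f_/$ on the \langfor side are each expressible in terms of the division-free setting plus a single division ``at the top.'' I would prove the two implications separately, since they rely on different ingredients.

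For the direction from circuits to \langfor, suppose $f$ is computed by a uniform family of arithmetic circuits over matrices of polynomial degree that uses division gates. The key tool is the division-elimination result of \cite{strassen1973vermeidung,borodin1982fast,kaltofen1988greatest}, as recalled in the text: writing the rational function computed at each output gate as $g/h$ with $g,h$ relatively prime polynomials of degree $d$, both $g$ and $h$ are computable by division-free circuits of size polynomial in $s+d$, and the degree of the resulting division-free circuits is controlled by the maximum of $\deg(g)$ and $\deg(h)$. Since the original family has polynomial degree, $d$ is polynomially bounded, so the numerator and denominator circuits form uniform division-free families of polynomial degree (I would note that uniformity is preserved because the division-elimination transformation is itself effective and \logspace-computable from the circuit description, which is what ``uniform'' requires). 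By Corollary~\ref{th-equivalence}, each of $g$ and $h$ is computed by a \langfor expression of polynomial degree, say $e_g$ and $e_h$; composing these with a single top-level application of $f_/$, i.e. $f_/(e_g,e_h)$, yields a $\langforf{f_/}$ expression of polynomial degree computing $f$. Because the division is applied only once at the outermost level, the degree stays polynomial.

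For the converse, from \langfor to circuits, I would take a $\langforf{f_/}$ expression $e$ of polynomial degree and produce, via (the matrix version of) Theorem~\ref{th-ml-to-circuits}, a uniform arithmetic circuit family computing $e$; the only new feature relative to the division-free theorem is that each pointwise application of $f_/$ becomes a division gate in the circuit. Since $e$ has polynomial degree by hypothesis, and the circuit obtained is equivalent to $e$, the resulting family is a uniform circuit family over matrices with division gates of polynomial degree. Here I would appeal to the fact, already used in the paper, that the proof of Theorem~\ref{th-circuits-ml} and hence Theorem~\ref{th-ml-to-circuits} extends routinely to handle additional pointwise functions by introducing corresponding gate types, so that $f_/$ simply corresponds to a division gate.

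The main obstacle, and the point deserving the most care, is controlling the degree throughout the division-elimination step in the first direction. The subtlety is that ``polynomial degree'' for the division-enabled circuit must translate into polynomial degree for the division-free numerator and denominator circuits, and one must check that the degree notion for the rational function $f=g/h$ (defined as $\max(\deg g,\deg h)$, as the text indicates) is compatible with the degree bound on the original circuit. I would verify that postponing the divisions to the top does not blow up the degree superpolynomially, and that the \logspace uniformity of the circuit family is genuinely preserved under the transformation of \cite{AllenderJMV98} together with the division-elimination construction. Once these bookkeeping facts on degree and uniformity are in place, both implications follow by plugging into Corollary~\ref{th-equivalence}.
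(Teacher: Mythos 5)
Your proposal is correct and follows essentially the same route as the paper: the paper's justification for this corollary is precisely the division-elimination result of \cite{strassen1973vermeidung,borodin1982fast,kaltofen1988greatest} (postponing the single division to the top, with the degree of $f=g/h$ taken as the maximum of the degrees of the circuits for $g$ and $h$), combined with the depth reduction of \cite{AllenderJMV98} and an appeal to Corollary~\ref{th-equivalence}, with the uniformity and degree bookkeeping left implicit just as in your sketch. Your explicit two-direction decomposition---building $f_/(e_g,e_h)$ at the outermost level in one direction, and treating $f_/$ as an extra gate type in the extension of Theorem~\ref{th-ml-to-circuits} in the other---merely spells out what the paper compresses into a single paragraph.
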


\floris{Again, should we link back to the previous section?}

An interesting line of future work here is to see which additional functions can be added to arithmetic circuits and \langfor formulas, in order to preserve their equivalence. Note that this will crucially depend on the fact that these functions have to allow the depth reduction of \cite{AllenderJMV98} in order to be supported.

\floris{The depth reduction thing is thus crucial for simulating by \langfor. This was mentioned in the proof, but as mentioned before, we may want to make more explicit what goes wrong otherwise, as to better provide an understanding of what can be done with matrices?}

\cristian{I don't follow this comment from Floris. However, we should say here that \langfor can always simulate arithmetic circuits with new operators if the depth of the family is poly-logarithmic. }

\section{Restricting the power of for loops}\label{sec:restrict}
We conclude the paper by zooming in on some special fragments of \langfor and in which matrices can take values from an arbitrary (commutative) semiring $K$. In particular, we first consider \langsum, in which iterations can only perform
additive updates, and show that it is equivalent in expressive power to the (positive)
relational algebra on $K$-relations. We then extend \langsum such that also updates involving pointwise-multiplication (Hadamard product) are allowed. The resulting fragment, \langprod, is shown to be equivalent in expressive power to weighted logics. Finally, we consider the fragment \langmprod in which updates involving sum and matrix multiplication, and possibly order information, is allowed. From the results in Section~\ref{sec:queries}, we infer that the latter fragment suffices to compute matrix inversion. An overview of the fragments and their relationships are depicted in Figure~\ref{thefigure}.

\subsection{Summation matlang and relational algebra}
\label{ss:sumML}
When defining $4$-cliques and in several other expressions we have seen so far, we 
only update $X$ by adding some matrix to it. This restricted form of for-loop proved useful throughout the paper, and we therefore introduce it as a special operator. That is, we define:
$$\Sigma v. e := \ffor{v}{X}{X + e}.$$
We define the subfragment of \langfor, called \langsum, to consist of the $\Sigma$ operator plus the ``core'' operators in \lang, namely, transposition, matrix multiplication and addition, scalar multiplication, and pointwise function applications.

One property of \langsum is that it only allows expressions of polynomial degree. Indeed, one can easily show that \langsum can only create matrix entries that are polynomial in the dimension $n$ of the expression. More precisely, we can show the following:
\begin{proposition}\label{prop:poly}
Every expression in \langsum is of polynomial degree.
\end{proposition}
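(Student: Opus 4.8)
The plan is to treat every entry of the input matrices as a fresh formal variable, so that over an instance $\I$ with $\dom(\alpha)=n$ each \langsum expression $e$ computes, in every position of $\sem{e}{\I}$, a polynomial in these variables. I would attach to $e$ a purely syntactic quantity $\deg(e)$, defined by recurrences over the grammar, show by structural induction that $\deg(e)$ upper-bounds the total degree of all these entry-polynomials \emph{and} is bounded by a constant independent of $n$, and finally observe that the blockwise translation of $e$ into an arithmetic circuit (the construction behind Theorem~\ref{th-ml-to-circuits}) has formal degree tracked exactly by these same recurrences. Since the degree of $e$ is defined as the minimum formal degree over equivalent uniform circuit families, it is then at most this constant, hence in particular polynomial in $n$, which is what the statement asks.

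For the recurrences, an input matrix variable $V$ gives $\deg(V)=1$, while a loop variable $v$, once instantiated by a canonical vector $b_i^n$, has only constant entries and so contributes $\deg(v)=0$. Transposition leaves the multiset of entries unchanged, so $\deg(e^T)=\deg(e)$; addition takes a pointwise maximum, $\deg(e_1+e_2)=\max(\deg(e_1),\deg(e_2))$; and both matrix and scalar multiplication produce entries that are sums of products of one entry from each factor, giving $\deg(e_1\cdot e_2)=\deg(e_1)+\deg(e_2)$ and likewise for $\times$. (If pointwise functions are present, one restricts to the function-free setting used throughout this section, or to functions realizable by bounded-degree circuits, for which $\deg(f(e_1,\ldots,e_k))$ is a constant multiple of $\max_i\deg(e_i)$; the remaining analysis is unaffected.) Because multiplication is the only operation that raises the degree, and it does so by addition while everything else preserves it or takes a maximum, a routine induction yields $\deg(e)\le$ (the number of variable-leaves of $e$), a constant since $e$ is fixed independently of $n$.

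The crucial case is the summation operator. Recall $\Sigma v.\,e := \ffor{v}{X}{X+e}$, where the accumulator $X$ is fresh and therefore does not occur in the body $e$. Consequently the iterative semantics unfolds to $\sem{\Sigma v.\,e}{\I}=\sum_{i=1}^{n}\sem{e}{\I[v:=b_i^n]}$, a genuine sum of $n$ matrices none of which depends on the accumulator. Each summand has degree at most $\deg(e)$ (computed with $v$ already counted as a constant vector), and forming a sum does not raise the total degree, so $\deg(\Sigma v.\,e)=\deg(e)$. This is the heart of the argument and the exact point where the \langsum restriction matters: in full \langfor the body may read $X$, which is what lets degree explode---for instance $e_{\texttt{exp}}=\ffor{v}{X=A}{X\cdot X}$ reaches degree $2^n$---whereas the additive, accumulator-free updates of \langsum can never compound multiplicatively across iterations.

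I expect the main obstacle to be bookkeeping rather than depth. The two points that must be made precise are (i) that in $\Sigma v.\,e$ the body is genuinely accumulator-free, so the loop collapses to a summation that preserves degree, together with the fact that instantiating loop variables by canonical vectors contributes degree $0$; and (ii) that the bound on the entrywise polynomial degree indeed matches the formal degree of the circuit family supplied by Theorem~\ref{th-ml-to-circuits}. The latter reduces to checking that translating each matrix operation into the corresponding gadget of gates produces a circuit whose formal degree obeys exactly the recurrences above, so that the minimum formal degree over equivalent families---the definition of the degree of $e$---is bounded by our constant and is therefore polynomial (in fact constant) in $n$.
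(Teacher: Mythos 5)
Your proof is correct and follows essentially the route the paper intends (the paper in fact omits a formal proof, remarking only that \langsum can only create matrix entries of bounded polynomial degree): a structural induction on entry-wise polynomial degrees whose crucial case is that $\Sigma v.\,e$ unfolds to an accumulator-free sum over canonical-vector instantiations, which preserves degree, together with the bridge to the circuit construction of Theorem~\ref{th-ml-to-circuits}. One small imprecision: since the paper defines the degree of a circuit over matrices as the \emph{sum} of the degrees of all output gates, a matrix-typed \langsum expression has degree $O(n^2)$ rather than ``in fact constant'' (and the canonical-vector constants formally count as degree-$1$ input gates, not degree $0$), but both points leave the polynomial bound, and hence the proposition, intact.
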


Interestingly enough, this restricted version of for-loop already allows us to capture the \lang\ operators that are not present in the syntax of \langsum. More precisely, we see from Examples~\ref{ex:onevec} and~\ref{ex:diag} that the one-vector and $\diag$ operator are expressible in \langsum. Combined with the observation that the $4$-clique
expression of Example~\ref{ex:fourcliques} is in \langsum, the following result is immediate.
\begin{corollary}
\lang\ is strictly subsumed by \langsum.
\end{corollary}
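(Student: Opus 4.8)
The plan is to establish two inclusions: that $\lang$ is subsumed by $\langsum$, and that this subsumption is strict. For the subsumption direction, I would observe that $\langsum$ already contains by definition all the ``core'' $\lang$ operators (transposition, matrix multiplication, matrix addition, scalar multiplication, and pointwise function application), so the only $\lang$ operators requiring attention are $\ones(e)$ and $\diag(e)$. The key point is that Example~\ref{ex:onevec} expresses $\ones(e)$ as $\ffor{v}{X}{X+v}$ and Example~\ref{ex:diag} expresses $\diag(e)$ as $\ffor{v}{X}{X + (v^T\cdot e)\times v\cdot v^T}$; in both cases the body of the for-loop has exactly the shape $X + (\text{something not mentioning } X)$, which is precisely the syntactic form $\Sigma v.\, e'$ admitted by $\langsum$. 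So both redundant operators fall inside $\langsum$, and hence every $\lang$ expression can be rewritten into an equivalent $\langsum$ expression by structural induction, replacing each $\ones$ and $\diag$ subexpression with its $\Sigma$-form witness.

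For the strictness direction, I would invoke the $4$-clique expression of Example~\ref{ex:fourcliques}. First I would check that this expression actually lies in $\langsum$: each of its four nested for-loops again has a body of the form $X_i + e_i$ where $e_i$ does not mention $X_i$, so each is a legitimate use of the $\Sigma$ operator, and the innermost arithmetic is built purely from matrix products, transposes, scalar multiplications, and pointwise applications of $f$ and $g$ — all core operators. Hence the $4$-clique test is expressible in $\langsum$. On the other hand, it is cited earlier (from~\cite{matlang-journal}) that $\lang$ cannot check for $4$-cliques. Combining these two facts gives a query expressible in $\langsum$ but not in $\lang$, which, together with the subsumption, yields strict subsumption.

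The corollary is genuinely immediate given the machinery already developed, so there is no deep obstacle; the only point requiring a little care is the syntactic bookkeeping, namely verifying that the witness expressions in Examples~\ref{ex:onevec}, \ref{ex:diag}, and~\ref{ex:fourcliques} genuinely conform to the $\Sigma$-restricted grammar (that every for-loop body is literally of the form $X + e$ with $e$ free of $X$) rather than using the more general update mechanism of $\langfor$. Once that is confirmed, the result follows directly, and indeed the excerpt flags it as ``immediate.'' I would therefore present the proof as a short two-part argument — inclusion via the two examples, strictness via the $4$-clique example combined with the known inexpressibility in $\lang$ — without any new computation.
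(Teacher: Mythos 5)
Your proposal is correct and follows exactly the paper's own argument: subsumption via the observation that the witnesses in Examples~\ref{ex:onevec} and~\ref{ex:diag} for $\ones$ and $\diag$ have bodies of the form $X + e'$ with $e'$ free of $X$ (hence are legitimate $\Sigma$-expressions), and strictness via the $4$-clique expression of Example~\ref{ex:fourcliques} being in \langsum while $4$-clique detection is inexpressible in \lang~\cite{matlang-journal}. The only difference is that you spell out the syntactic bookkeeping the paper leaves implicit by calling the result ``immediate,'' which is a fair elaboration rather than a different route.
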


%
%
%

What operations over matrices can be defined with \langsum that is beyond \lang? In~\cite{brijder2019matrices}, it was shown that \lang\ is strictly included in the (positive) relational algebra on $K$-relations, denoted by $\mathsf{RA}_{K}^+$~\cite{GreenKT07}.\footnote{The algebra used in~\cite{brijder2019matrices} differs slightly from the one given in~\cite{GreenKT07}. In this paper we work with the original algebra $\mathsf{RA}_{K}^+$ as defined in~\cite{GreenKT07}.} 
%
%
It thus seems natural to compare the expressive power of \langsum with $\mathsf{RA}_{K}^+$. 
The main result in this section is that \langsum\ and $\mathsf{RA}_{K}^+$ 
are equally expressive over binary schemas. 
To make this equivalence precise, we next give the 
definition of $\mathsf{RA}_{K}^+$~\cite{GreenKT07} and then show how to connect both formalisms.

Let $\ddom$ be a data domain and $\att$ a set of attributes. A relational signature is a finite subset of $\att$. A relational schema is a function $\cR$ on finite set of symbols $\fdom(\cR)$ such that $\cR(R)$ is a relation signature for each $R \in \fdom(\cR)$. To simplify the notation, from now on we write $R$ to denote both the symbol $R$ and the relational signature $\cR(R)$.
Furthermore, we write $R \in \cR$ to say that $R$ is a symbol of $\cR$. 
For $R \in \cR$, an $R$-tuple is a function $t: R \rightarrow \ddom$. We denote by $\tuples(R)$ the set of all $R$-tuples. Given $X \subseteq R$, we denote by $t[X]$ the restriction of $t$ to the set $X$.

A semiring $(K, \ksum, \kprod, \kzero, \kone)$ is an algebraic structure where $K$ is a non-empty set, $\ksum$ and $\kprod$ are binary operations over $K$, and $\kzero, \kone \in K$. Furthermore,  $\ksum$ and $\kprod$ are associative operations, $\kzero$ and $\kone$ are the identities of $\ksum$ and $\kprod$ respectively, $\ksum$ is a commutative operation, $\kprod$ distributes over $\ksum$, and $\kzero$ annihilates $K$ (i.e. $\kzero \kprod k = k \kprod \kzero = \kzero$). As usual, we assume that all semirings in this paper are commutative, namely, $\kprod$ is also commutative. We use $\bigksum_X$ or $\bigkprod_X$ for the $\ksum$- or $\kprod$-operation over all elements in $X$, respectively. Typical examples of semirings are the reals $(\RR, +, \times, 0,1)$, the natural numbers $(\NN, +, \times, 0,1)$, and the boolean semiring $(\{0,1\}, \vee, \wedge, 0, 1)$. 

Fix a semiring $(K, \ksum, \kprod, \kzero, \kone)$ and a relational schema $\cR$. A $K$-relation of $R \in \cR$ is a function $r: \tuples(R) \rightarrow K$ such that the support  $\supp(r) = \{t \in \tuples(R) \mid r(t) \neq \kzero\}$ is finite. 
A $K$-instance $\cJ$ of $\cR$ is a function that assigns relational signatures of $\cR$ to $K$-relations. Given $R \in \cR$, we denote by $R^\cJ$ the $K$-relation associated to $R$. Recall that $R^\cJ$ is a function and hence  $R^\cJ(t)$ is the value in $K$ assigned to $t$. 
Given a $K$-relation $r$ we denote by $\adom(r)$ the active domain of $r$ defined as $\adom(r) = \{t(a) \mid t \in \supp(r) \wedge a \in R\}$.
Then the active domain of an $K$-instance $\cJ$ of $\cR$ is defined as $\adom(\cJ) = \bigcup_{R \in \cR} \adom(R^\cJ)$. 

An $\mathsf{RA}_{K}^+$  expression $\arae$ over $\cR$ is given by the following syntax:
$$
\begin{array}{rcl}
\arae & := & R \ \mid \ \arae \cup \arae \ \mid \  \pi_X(\arae) \ \mid \  \sigma_X(\arae) \ \mid \ \rho_f(\arae) \ \mid \ \arae \bowtie \arae
\end{array}
$$
where $R \in \cR$, $X \subseteq \att$ is finite, and $f: X \rightarrow Y$ is a one to one mapping with $Y \subseteq \att$. One can extend the schema $\cR$ to any expression over $\cR$ recursively as follows: $\cR(R) = R$, $\cR(\arae \cup \arae') = \cR(\arae)$, $\cR(\pi_X(\arae)) = X$, $\cR(\sigma_X(\arae)) = \cR(\arae)$, $\cR(\rho_f(\arae)) = X$ where $f:X \rightarrow Y$, and $\cR(\arae \bowtie \arae') = \cR(\arae) \cup \cR(\arae')$ for every expressions $\arae$ and $\arae'$.
We further assume that any expression $\arae$ satisfies the following syntactic restrictions: $\cR(\arae') = \cR(\arae'')$ whenever $\arae = \arae' \cup \arae''$, $X \subseteq \cR(\arae')$ whenever $\arae = \pi_X(\arae')$ or $\arae = \sigma_X(\arae')$, and $Y = \cR(\arae')$ whenever $\arae = \rho_f(\arae')$ with $f: X \rightarrow Y$.

Given an $\mathsf{RA}_{K}^+$ expression $\arae$ and a $K$-instance $\cJ$ of $\cR$, we define the semantics $\ssem{\arae}{\cJ}$ as a $K$-relation of $\cR(\arae)$ as follows. For $X \subseteq \att$, let $\operatorname{Eq}_X(t) = \kone$ when $t(a) = t(b)$ for every $a, b \in X$, and $\operatorname{Eq}_X(t) = \kzero$ otherwise. For every tuple $t \in \cR(\arae)$:
$$
\begin{array}{ll}
\text{if $\arae = R$, then} & \ssem{\arae}{\cJ}(t) = R^\cJ(t) \\
\text{if $\arae = \arae_1 \cup \arae_2$, then} & \ssem{\arae}{\cJ}(t) = \ssem{\arae_1}{\cJ}(t) \ksum \ssem{\arae_2}{\cJ}(t)  \\
\text{if $\arae = \pi_X(\arae')$, then} & \ssem{\arae}{\cJ}(t) = \bigksum_{t': t'[X] = t} \ssem{\arae'}{\cJ}(t') \\
\text{if $\arae = \sigma_X(\arae')$, then} & \ssem{\arae}{\cJ}(t) = 
\ssem{\arae'}{\cJ}(t) \kprod \operatorname{Eq}_X(t)  \\
\text{if $\arae = \rho_f(\arae')$, then} & \ssem{\arae}{\cJ}(t) = 
\ssem{\arae'}{\cJ}(t \circ f)  
\\
\text{if $\arae = \arae_1 \bowtie \arae_2$, then} & \ssem{\arae}{\cJ}(t) =  \ssem{\arae_1}{\cJ}(t[Y]) \kprod  \ssem{\arae_2}{\cJ}(t[Z]),
\end{array}
$$
where $Y = \cR(\arae_1)$ and $Z = \cR(\arae_2)$. It is important to note that the $\bigksum$-operation in the semantics of $\pi_X(\arae')$ is well-defined given that the support of $\ssem{\arae'}{\cJ}$ is always finite. 

We are now ready for comparing \langsum with $\mathsf{RA}_{K}^+$ . First of all, we need to extend \langsum from $\RR$ to any semiring. Let $\mtr{K}$ denote the set of all $K$-matrices. 
Similarly as for \lang\ over $\RR$, given a \lang\ schema $\Sch$, a $K$-instance $\I$ over $\Sch$ is a pair $\I = (\dom,\conc)$, where $\dom : \DD \mapsto \NN$ assigns a value to each size symbol, and $\conc : \Mnam \mapsto \mtr{K}$ assigns a concrete $K$-matrix to each matrix variable. Then it is straightforward to extend the semantics of \lang, \langfor, and \langsum from $(\RR, +, \times, 0, 1)$ to $(K, \ksum, \kprod, \kzero, \kone)$ by switching $+$ with $\ksum$ and $\times$ with $\kprod$. 

The next step to compare \langsum with $\mathsf{RA}_{K}^+$  is to represent $K$-matrices as $K$-relations.
Let $\Sch=(\Mnam,\size)$ be a \lang\ schema. On the relational side
we have for each size symbol $\alpha\in\DD\setminus\{1\}$, attributes $\alpha$, $\row_\alpha$, and $\col_\alpha$ in $\att$. Furthermore, for each $V\in\Mnam$ and $\alpha \in \DD$ we denote
by $R_V$ and $R_\alpha$ its corresponding relation name, respectively. Then, given $\Sch$ we define the relational schema $\text{Rel}(\Sch)$ such that $\fdom(\text{Rel}(\Sch)) =  \{R_\alpha \mid \alpha\in\DD\} \cup \{R_V \mid V \in \Mnam\}$ where $\text{Rel}(\Sch)(R_\alpha) = \{\alpha\}$ and:
\[
\text{Rel}(\Sch)(R_V) = \begin{cases}
\lbrace\row_\alpha,\col_\beta \rbrace & \text{ if $ \size(V)=(\alpha,\beta)$} \\
\lbrace\row_\alpha \rbrace & \text{ if $ \size(V)=(\alpha,1)$} \\
\lbrace\col_\beta \rbrace  &
\text{ if $ \size(V)=(1,\beta)$} \\
\lbrace\rbrace & \text{ if $\size(V)=(1,1)$}.
\end{cases}
\]
Consider now a matrix instance $\I = (\dom,\conc)$ over $\Sch$.
Let $V\in\Mnam$ with $\size(V)=(\alpha,\beta)$ and let $\conc(V)$ be its corresponding $K$-matrix of dimension $\dom(\alpha)\times \dom(\beta)$.
To encode $\I$ as a $K$-instance in $\mathsf{RA}_{K}^+$, we use as data domain $\ddom = \mathbb{N} \setminus \{0\}$. Then we construct the $K$-instance $\text{Rel}(\I)$ such that for each $V\in\Mnam$ we define 
$R_V^{\text{Rel}(\I)}(t):=\conc(V)_{ij}$ whenever $t(\row_\alpha) = i \leq \dom(\alpha)$ and $t(\col_\beta) = j \leq \dom(\beta)$, and $\kzero$ otherwise. Furthermore, for each $\alpha \in \DD$ we define $R_\alpha^{\text{Rel}(\I)}(t):=\kone$ whenever $t(\alpha) \leq \dom(\alpha)$, and $\kzero$ otherwise. In other words, $R_\alpha$ and $R_\beta$ encodes the active domain of a matrix variable $V$ with $\size(V)=(\alpha,\beta)$. Given that the $\mathsf{RA}_{K}^+$ framework of \cite{GreenKT07} represents the ``absence'' of a tuple in the relation with $\kzero$, we need to separately encode the indexes in a matrix.
This is where $R_\alpha^{\text{Rel}(\I)}$ and $R_\beta^{\text{Rel}(\I)}$ are used for.
%
%
We are now ready to state the first connection between \langsum and $\mathsf{RA}_{K}^+$  by using the previous encoding. The proof of the proposition below is by induction on the structure of expressions.
\begin{proposition}\label{prop:sum_to_ara} 
	For each \langsum expression $e$ over schema $\Sch$ such that $\Sch(e)=(\alpha,\beta)$ with $\alpha\neq 1\neq\beta$, there exists an $\mathsf{RA}_{K}^+$  expression $\Phi(e)$ over relational schema $\text{Rel}(\Sch)$ such that $\text{Rel}(\Sch)(\Phi(e))=\{\row_\alpha,\row_\beta\}$ and 
	such that for any instance $\I$ over~$\Sch$,
	$$
	\sem{e}{\I}_{i,j}=\ssem{\Phi(e)}{\text{Rel}(\I)}(t)
	$$
	for tuple $t(\mathrm{row}_\alpha)=i$ and $t(\mathrm{col}_\beta)=j$. Similarly for when $e$ has schema $\Sch(e)=(\alpha,1)$, $\Sch(e)=(1,\beta)$ or $\Sch(e)=(1,1)$, then $\Phi(e)$ has schema $\text{Rel}(\Sch)(\Phi(e))=\{\mathrm{row}_\alpha\}$,
	$\text{Rel}(\Sch)(\Phi(e))=\{\mathrm{col}_\alpha\}$, or
	$\text{Rel}(\Sch)(\Phi(e))=\{\}$, respectively.
\end{proposition}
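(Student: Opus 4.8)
The plan is to proceed by induction on the structure of the \langsum expression $e$ and to construct $\Phi(e)$ operator-by-operator, so that under the encoding $\text{Rel}$ the identity $\sem{e}{\I}_{i,j} = \ssem{\Phi(e)}{\text{Rel}(\I)}(t)$ holds for the tuple $t$ with $t(\row_\alpha)=i$ and $t(\col_\beta)=j$. I treat the four type cases $(\alpha,\beta)$, $(\alpha,1)$, $(1,\beta)$, $(1,1)$ uniformly by identifying an axis of dimension $1$ with the absence of the corresponding attribute, so that a dimension-$1$ axis simply contributes no $\row$ or $\col$ attribute; this bookkeeping is routine. For the base case, a matrix variable $V$ with $\size(V)=(\alpha,\beta)$ is sent to $\Phi(V):=R_V$, and by the definition of $\text{Rel}(\I)$ we have $R_V^{\text{Rel}(\I)}(t)=\conc(V)_{ij}=\sem{V}{\I}_{ij}$, as required.

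For the core operators I translate each into the matching relational primitive and invoke the induction hypothesis on subexpressions. Matrix addition $e_1+e_2$ (both of type $(\alpha,\beta)$) becomes $\Phi(e_1)\cup\Phi(e_2)$, since union is pointwise $\ksum$. Transposition $e^T$ becomes $\rho_f(\Phi(e))$ where $f$ swaps row- and column-attributes, i.e.\ $\row_\alpha\mapsto\col_\alpha$ and $\col_\beta\mapsto\row_\beta$. Scalar multiplication $e_1\times e_2$ with $\ttype(e_1)=(1,1)$ becomes $\Phi(e_1)\bowtie\Phi(e_2)$: the scalar $\Phi(e_1)$ has empty schema, so the join multiplies every entry of $\Phi(e_2)$ by it. Matrix multiplication $e_1\cdot e_2$ of types $(\alpha,\beta)$ and $(\beta,\gamma)$ becomes $\pi_{\{\row_\alpha,\col_\gamma\}}(\rho_{g_1}(\Phi(e_1))\bowtie\rho_{g_2}(\Phi(e_2)))$, where $g_1,g_2$ rename $\col_\beta$ in the first operand and $\row_\beta$ in the second to a common fresh attribute $\kappa$; the join realizes $\sem{e_1}{\I}_{i\kappa}\kprod\sem{e_2}{\I}_{\kappa j}$ and the projection performs the $\bigksum_\kappa$ contraction that defines the matrix product. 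The only delicate point here is that when the same size symbol occurs on several axes (for instance for square matrices, where $\row_\alpha$ and $\col_\alpha$ both appear), I first rename all attributes of both operands to fresh names before joining, and rename back afterwards, so the join does not accidentally equate unrelated indices.

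The crux is the operator $\Sigma v.\,e=\ffor{v}{X}{X+e}$. Because $v$ ranges over the canonical vectors $b_1^n,\ldots,b_n^n$ and the body is purely additive, its semantics unfolds to $\sem{\Sigma v.\,e}{\I}=\bigksum_{i=1}^{n}\sem{e}{\I[v:=b_i^n]}$. The idea is to turn the summation index $i$ into a free attribute that is projected away at the end. Let $\ttype(v)=(\gamma,1)$ and introduce a fresh attribute $\iota$ for its loop index. I translate the body $e$ as above, except that every occurrence of $R_v$ is replaced by the generic canonical-vector relation $D_v$ over $\{\iota,\row_\gamma\}$ defined by $D_v(t)=\kone$ iff $t(\iota)=t(\row_\gamma)$ and $\kzero$ otherwise; this relation encodes $b_\iota$ and is itself $\mathsf{RA}_{K}^+$-definable from $R_\gamma$ by renaming $R_\gamma$ into two copies, joining them, and applying $\sigma_{\{\iota,\row_\gamma\}}$ to force equality. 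Since every occurrence of $v$ shares the single attribute $\iota$, the joins performed while translating $e$ automatically synchronise them to the same index, matching the single quantifier over $i$. Writing $\Phi'(e)$ for this translation, whose output schema is the intended attributes together with $\iota$, I set $\Phi(\Sigma v.\,e):=\pi_X(\Phi'(e))$ with $X$ the intended output attributes; the projection performs $\bigksum_i$ and removes $\iota$, reproducing the unfolded semantics.

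The main obstacle is exactly this last case: one must check that replacing $v$ by $D_v$ commutes with every inductive step, i.e.\ that threading the shared index attribute $\iota$ through products, transpositions and nested summations yields the same value as literally substituting $b_i^n$ and then summing. This requires care when $v$ occurs inside further nested summations---each needs its own fresh index attribute---and when the size symbol $\gamma$ of $v$ coincides with symbols already present in $e$, so that renaming cannot be avoided. I would discharge this by strengthening the induction hypothesis to expressions carrying a set of active loop variables, each with a designated fresh index attribute, and proving the entrywise identity in that more general form; the ordinary statement is then the special case with no active loop variables. Finally, I note that the equivalence is intended for the function-free fragment: a general pointwise application $f(e_1,\ldots,e_k)$ is not expressible in $\mathsf{RA}_{K}^+$ (the Hadamard product already lands in the strictly larger \langprod), so this case either does not arise or, for $f$ built from $\ksum,\kprod$ and constants, reduces to the primitives above. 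The verification that every constructed expression has the claimed output schema then follows routinely from the typing rules.
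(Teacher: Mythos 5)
Your proposal follows essentially the same route as the paper's own proof: a structural induction with a strengthened hypothesis in which every free loop variable contributes a dedicated fresh index attribute, the canonical vector $v$ encoded as an equality (diagonal) relation built by joining two renamed copies of the domain relation $R_\gamma$ with a selection, and $\Sigma v$ translated as projecting that attribute away, with the remaining operators mapped to union, renaming, and join/project exactly as you describe. The one detail the paper makes explicit that you leave implicit is schema alignment for $+\mapsto\cup$ when the two summands have different free loop variables (the paper pads a summand by a factor $v^T\cdot v$, which is $\kone$ on canonical vectors, so both translations carry the same attributes); under your strengthened hypothesis this is a routine fix.
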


We now move to the other direction.
To translate $\mathsf{RA}_{K}^+$  into \langsum, we must restrict our comparison to $\mathsf{RA}_{K}^+$  over $K$-relations with at most two attributes. Given that linear algebra works over vector and matrices, it is reasonable to restrict to unary or binary relations as input. Note that this is only a restriction on the input relations and not on intermediate relations, namely, expressions can create relation signatures of arbitrary size from the binary input relations. 
Thus, from now we say that a relational schema $\cR$ is binary if $|R| \leq 2$ for every $R \in \cR$. We also make the assumption that there is an (arbitrary) order, denoted by $<$, on the attributes in $\att$. 
This is to identify which attributes correspond to rows and columns when moving to matrices. 
Then, given that relations will be  either unary or binary and there is an order on the attributes, we write $t = (v)$ or $t = (u,v)$ to denote a tuple over a unary or binary relation $R$, respectively, where $u$ and $v$ is the value of the first and second attribute with respect to $<$.

Consider a binary relational schema $\cR$. With each $R\in \cR$ we associate a matrix variable $V_R$ such that, if $R$ is a binary relational signature, then $V_R$ represents a (square) matrix, and, if not (i.e. $R$ is unary), then $V_R$ represents a vector. Formally, fix a symbol $\alpha \in \DD \setminus \{1\}$. Let $\text{Mat}(\cR)$ denote the \lang \ schema
$(\Mnam_\cR,\size_\cR)$ such that $\Mnam_\cR = \{ V_R \mid R \in \cR\}$ and $\size_\cR(V_R) = (\alpha, \alpha)$ whenever $|R| = 2$, and $\size_\cR(V_R) = (\alpha, 1)$ whenever $|R|=1$. 
Take now a $K$-instance $\cJ$ of $\cR$ and suppose that $\adom(\cJ) = \{d_1, \ldots, d_n\}$ is the active domain of $\cJ$ (the order over $\adom(\cJ)$ is arbitrary). Then we define the matrix instance $\text{Mat}(\cJ) = (\dom_\cJ,\conc_\cJ)$ such that $\dom_\cJ(\alpha) = n$, $\conc_\cJ(V_R)_{i,j} = R^{\cJ}((d_i, d_j))$ whenever $|R|=2$, and $\conc_\cJ(V_R)_{i} = R^{\cJ}((d_i))$ whenever $|R|=1$. 
Note that, although each $K$-relation can have a different active domain, we encode them as square matrices by considering the active domain of the $K$-instance. By again using an inductive proof on the structure of 
$\mathsf{RA}_{K}^+$ expressions, we obtain the following result.
\begin{proposition}\label{prop:ara_to_sum} 
	Let $\cR$ be a binary relational schema. For each $\mathsf{RA}_{K}^+$  expression $\arae$ over $\cR$  such that $|\cR(\arae)| = 2$, there exists a \langsum  expression $\Psi(\arae)$ over \lang \ schema $\text{Mat}(\cR)$ such that for any $K$-instance $\cJ$ with $\adom(\cJ) = \{d_1, \ldots, d_n\}$ over $\cR$,
	$$
	\ssem{\arae}{\cJ}((d_i, d_j))=\sem{\Psi(\arae)}{\text{Mat}(\cJ)}_{i,j}.
	$$
	Similarly for when $|\cR(\arae)| = 1$, or $|\cR(\arae)| = 0$ respectively.
\end{proposition}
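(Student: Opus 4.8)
The plan is to prove Proposition~\ref{prop:ara_to_sum} by structural induction on the $\mathsf{RA}_{K}^+$ expression $\arae$. The immediate obstacle is that subexpressions of a binary‑output expression may have arbitrarily large arity (a join of two binary relations over disjoint attributes already yields a $4$‑ary relation), whereas \langsum\ only manipulates two‑dimensional matrices, so there is no obvious place to store a $k$‑ary intermediate relation. To sidestep this, I would not attempt to store an intermediate relation in a single matrix. Instead I would prove a stronger inductive invariant in which a subexpression $\arae'$ with $\cR(\arae')=\{a_1<\cdots<a_m\}$ is represented by an auxiliary \langsum\ expression $\widehat\Psi(\arae')$ of type $(1,1)$ over the schema $\text{Mat}(\cR)$ augmented with one fresh vector variable $v_a$ of type $(\alpha,1)$ per attribute $a\in\cR(\arae')$. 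The invariant is that for every $K$‑instance $\cJ$ with $\adom(\cJ)=\{d_1,\dots,d_n\}$ and all indices $i_1,\dots,i_m$,
$$
\ssem{\arae'}{\cJ}\big((d_{i_1},\dots,d_{i_m})\big)\;=\;\sem{\widehat\Psi(\arae')}{\text{Mat}(\cJ)[v_{a_1}:=b_{i_1},\dots,v_{a_m}:=b_{i_m}]}.
$$
Thus the arity is tracked by the number of free canonical‑vector parameters rather than by matrix dimensions, which removes the dimensionality problem entirely.

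Each algebra operator then becomes a short local combinator on these scalars. For a base relation $R$ with attributes $a_1<a_2$ I take $\widehat\Psi(R):=v_{a_1}^T\cdot V_R\cdot v_{a_2}$ (and $v_{a_1}^T\cdot V_R$ in the unary case), which reads off $(V_R)_{i_1i_2}=R^{\cJ}((d_{i_1},d_{i_2}))$. Union becomes $+$ with the same free variables on both sides; renaming $\rho_f$ is a relabeling of the free vector variables along $f$; a join $\arae_1\bowtie\arae_2$ becomes the scalar product $\widehat\Psi(\arae_1)\cdot\widehat\Psi(\arae_2)$, where attributes shared between $\cR(\arae_1)$ and $\cR(\arae_2)$ are carried by the \emph{same} free variable, so the product realizes the natural‑join semantics automatically. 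Selection $\sigma_X(\arae')$ becomes $\widehat\Psi(\arae')$ multiplied by $\prod_{\ell}(v_{a_\ell}^T\cdot v_{a_{\ell+1}})$ over consecutive attributes of $X$, using that $b_i^T\cdot b_j$ equals $\kone$ iff $i=j$ and $\kzero$ otherwise (and that $\kzero$ annihilates), which precisely realizes $\operatorname{Eq}_X$. Finally, projection $\pi_X(\arae')$ becomes $\Sigma v_{a}.\,(\cdots)$ iterated over exactly the eliminated attributes $a\in\cR(\arae')\setminus X$, each $\Sigma$ ranging over $b_1,\dots,b_n$.

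The one genuinely non‑formal point, which I would isolate as a preliminary lemma, concerns this last case. The $\mathsf{RA}_{K}^+$ semantics of projection sums $\ssem{\arae'}{\cJ}(t')$ over all tuples $t'$ with $t'[X]=t$, ranging over the whole data domain $\ddom$, whereas $\Sigma v_a$ only sums over the $n$ active‑domain elements. These agree because $\ssem{\arae'}{\cJ}(t')=\kzero$ whenever $t'$ uses a value outside $\adom(\ssem{\arae'}{\cJ})$, together with the standard domain‑monotonicity fact $\adom(\ssem{\arae'}{\cJ})\subseteq\adom(\cJ)$, valid for every $\mathsf{RA}_{K}^+$ expression since each positive operator only propagates values already present in its inputs. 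I would prove this containment first so that every $\Sigma$‑step is justified.

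It remains to derive the statement of the proposition from the invariant by a final materialization step that binds all free variables. For $\cR(\arae)=\{a_1<a_2\}$ I set
$$
\Psi(\arae):=\Sigma v_{a_1}.\,\Sigma v_{a_2}.\ \widehat\Psi(\arae)\times\big(v_{a_1}\cdot v_{a_2}^T\big),
$$
and since $b_i\cdot b_j^T$ is the matrix with $\kone$ in position $(i,j)$ and $\kzero$ elsewhere, the two summations materialize the scalar values into a matrix whose $(i,j)$‑entry is $\ssem{\arae}{\cJ}((d_i,d_j))$, matching $\conc_\cJ$. The unary case uses $\Psi(\arae):=\Sigma v_{a_1}.\ \widehat\Psi(\arae)\times v_{a_1}$ to build the corresponding vector, and the nullary case takes $\Psi(\arae):=\widehat\Psi(\arae)$, which already has no free variables. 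After materialization no auxiliary vector variable survives, so $\Psi(\arae)$ is a genuine \langsum\ expression over $\text{Mat}(\cR)$, as required, and it uses only transposition, matrix and scalar multiplication, addition, and $\Sigma$. The main obstacle is therefore the conceptual one of representing high‑arity intermediate relations; once the scalar‑with‑free‑vector‑variables invariant is fixed, every inductive step is a routine local translation.
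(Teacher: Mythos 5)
Your proposal is correct and takes essentially the same approach as the paper's own proof: the paper likewise strengthens the induction to scalar-valued ($(1,1)$-type) \langsum\ expressions carrying one free vector variable per attribute, translates each operator locally (union as $+$, join as scalar product with shared variables, selection via $v_A^T\cdot v_B$ factors, projection as $\Sigma$ over the eliminated attributes), and then materializes the binary/unary/nullary result with the same final $\Sigma v_{A_1}.\Sigma v_{A_2}.\,e\times(v_{A_1}\cdot v_{A_2}^T)$ construction. The only difference is that you make explicit the active-domain containment lemma justifying that $\Sigma$ over canonical vectors matches the projection sum over the full data domain, a point the paper leaves implicit.
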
 

It is important to remark that the expression $\arae$ of the previous result can have intermediate expressions that are not necessary binary, given that the proposition only restricts that the input relation and the schema of $\arae$ must have arity at most two. We recall from~\cite{brijder2019matrices} that \lang\ corresponds to $\mathsf{RA}_{K}^+$ where intermediate expressions are at most ternary, and this underlies, e.g., the inability of \lang\ to check for $4$-cliques. In \langsum, we can deal with intermediate relations of arbitrary arity. In fact, each new attribute can be seen to correspond to an application of the $\Sigma$ operator. For example, in the $4$-clique expression, four $\Sigma$ operators are needed, in analogy to how
$4$-clique is expressed in $\mathsf{RA}_{K}^+$.

Given the previous two propositions we derive the following conclusion which is the first characterization of relational algebra with a (sub)-fragment of linear algebra.
\begin{corollary}
	\langsum and $\mathsf{RA}_{K}^+$  over binary relational schemas are equally expressive. 
\end{corollary}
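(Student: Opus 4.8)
The plan is to obtain the two inclusions by reading them off directly from Propositions~\ref{prop:sum_to_ara} and~\ref{prop:ara_to_sum}, after agreeing on a single notion of when an expression in each formalism computes ``the same'' function. I would take as this common ground the correspondence, set up by the encodings $\text{Rel}(\cdot)$ and $\text{Mat}(\cdot)$, between square $K$-matrices indexed by a shared active domain and binary $K$-relations.

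For the inclusion $\langsum \subseteq \mathsf{RA}_{K}^+$, I would simply invoke Proposition~\ref{prop:sum_to_ara}: it assigns to every \langsum expression $e$ over a schema $\Sch$ an $\mathsf{RA}_{K}^+$ expression $\Phi(e)$ over $\text{Rel}(\Sch)$ that agrees entrywise, so that $\sem{e}{\I}_{i,j} = \ssem{\Phi(e)}{\text{Rel}(\I)}(t)$ for the corresponding tuple $t$. This already shows that any matrix function expressible in \langsum is expressible in $\mathsf{RA}_{K}^+$ under the encoding $\text{Rel}(\cdot)$. For the converse $\mathsf{RA}_{K}^+ \subseteq \langsum$, I would invoke Proposition~\ref{prop:ara_to_sum}, which to any $\mathsf{RA}_{K}^+$ expression $\arae$ over a binary relational schema $\cR$ (with output arity at most two) assigns a \langsum expression $\Psi(\arae)$ over $\text{Mat}(\cR)$ that agrees on every $K$-instance $\cJ$. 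Combining the two gives both directions, and these are exactly the two halves of the equal-expressiveness claim.

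The only genuinely delicate step, and the one I would expect to be the main obstacle, is verifying that the two encodings are compatible, so that ``equally expressive'' is a coherent statement and not merely two independent translations. Concretely, I would check that, restricted to binary relational schemas on one side and to the square-matrix fragment of \langsum on the other, $\text{Rel}(\cdot)$ and $\text{Mat}(\cdot)$ are mutually inverse up to renaming of size symbols and up to the fixed order on the active domain: $\text{Rel}(\text{Mat}(\cR))$ recovers $\cR$ and $\text{Mat}(\text{Rel}(\Sch))$ recovers $\Sch$, and the same at the level of instances once $\adom(\cJ)$ is identified with $\{1,\dots,n\}$. The subtle point is that $\text{Mat}(\cdot)$ collapses every relation to a square matrix over a single active-domain symbol, whereas $\text{Rel}(\cdot)$ keeps distinct row and column symbols; since the whole section works with square matrices this mismatch is harmless, but it is precisely what must be made explicit to close the argument. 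Once this compatibility is spelled out, the two propositions yield the claimed equivalence of expressive power.
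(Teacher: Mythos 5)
Your proof is correct and follows exactly the paper's route: the corollary is obtained there as an immediate consequence of Propositions~\ref{prop:sum_to_ara} and~\ref{prop:ara_to_sum}, which provide the two semantics-preserving translations under the encodings $\text{Rel}(\cdot)$ and $\text{Mat}(\cdot)$. Your additional check that the two encodings are compatible (up to collapsing to a single size symbol and fixing an order on the active domain) is a reasonable sanity check, but the paper's notion of equal expressiveness is precisely the existence of the two translations, so nothing further is required.
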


As a direct consequence, we have that \langsum cannot compute matrix inversion. Indeed, using similar arguments as
in~\cite{matlang-journal}, i.e., by embedding $\mathsf{RA}_{K}^+$  in (infinitary) first-order logic with counting and by leveraging its locality, one can show that \langsum cannot compute the transitive closure of an adjacency matrix. By contrast, the transitive closure can be expressed by means of matrix inversion~\cite{matlang-journal}. We also note that the evaluation of the $\Sigma$ operator is
independent of the order in which the canonical vectors are considered. This is because $\oplus$ is commutative.
Hence, \langsum cannot express the order predicates mentioned in Section~\ref{sec:formatlang}.

\subsection{Hadamard product and weighted logics}\label{subsec:langprod}

Similarly to using sum, we can use other operations to update $X$ in the for-loop. The next natural choice is to consider products of matrices. In contrast to matrix sum, we have two options: either we can choose to use matrix product or to use the pointwise matrix product, also called the Hadamard product. We treat  matrix product in the next subsection and first explain here the connection of sum and Hadamard product operators to weighted logics.

For the rest of this section, fix a semiring $(K, \ksum, \kprod, \kzero, \kone)$. The Hadamard product over $K$-matrices can be defined as the pointwise application of $\kprod$ between two matrices of the same size. Formally, we define the expression $e \hadprod e'$ where $e, e'$ are expressions with respect to $\cS$ and $\ttype(e) = \ttype(e')$ for some schema $\Sch=(\Mnam,\size)$. Then the semantics of $e \hadprod e'$ is the pointwise application of $\kprod$, namely, $\sem{e \hadprod e'}{\I}_{ij} = \sem{e}{\I}_{ij} \kprod \sem{e'}{\I}_{ij}$ for any instance $\I$ of $\cS$. This enables us to define, similar as for  $\Sigma v$, the  pointwise-product quantifier $\qhadprod v$ as follows:
$$
\qhadprod v. \  e := \ffor{v}{X\!=\!\kone}{X \circ e}.
$$
where $\kone$ is a matrix with the same type as $X$ and all entries equal to the $\kone$-element of $K$ (i.e., we need to initialize $X$ accordingly with the $\kprod$-operator).
We cal \langprod  the subfragment of \langfor that consists of \langsum \ extended with $\qhadprod v$.

\begin{example}
	Similar to the trace of a matrix, a useful function in linear algebra is to compute the product of the values on the diagonal. 
	Using the $\qhadprod v$ operator, this can be easily expressed:
	 $$
	 e_{\mathsf{dp}}(V) := \qhadprod v. \ v^T\cdot V \cdot v.$$
\end{example}

Clearly, the inclusion of this new operator extends the expressive power to \langsum. For example,  $\sem{e_{\mathsf{dp}}}{\I}$ can be an exponentially large number in the dimension $n$ of the input.
By contrast, one can easily show that all expressions in \langsum can only return numbers polynomial in  $n$. That is, \langprod is more expressive than \langsum and $\mathsf{RA}_{K}^+$. 

\cristian{I add this example, which I believe help to say that this fragment express some interesting properties. Also I used it in the diagram and to show that there are formulas that cannot be defined in sum-matlang.}

To measure the expressive power of \langprod, we use weighted logics~\cite{DrosteG05} (WL) as a yardstick. Weighted logics extend monadic second-order logic from the boolean semiring to any semiring $K$. Furthermore, it has been used extensively to characterize the expressive power of weighted automata in terms of logic~\cite{droste2009handbook}. We use here the first-order subfragment of weighted logics to suit our purpose and, moreover, we extend its semantics over weighted structures (similar as in~\cite{GradelV17}).

A relational vocabulary $\Gamma$ is a finite collection of relation symbols such that each $R \in \Gamma$ has an associated arity, denoted by $\arity(R)$.
A $K$-weighted structure over $\Gamma$ (or just structure) is a pair $\cA = (A, \{R^\cA\}_{R \in \Gamma})$ such that $A$ is a non-empty finite set (i.e. the domain) and, for each $R \in \Gamma$, $R^\cA: A^{\arity(R)} \rightarrow K$ is a function that associates to each tuple in $A^{\arity(R)}$ a weight in $K$.

Let $X$ be a set of first-order variables. A $K$-weighted logic (WL) formula $\varphi$ over $\Gamma$ is defined by the following syntax:
$$
\begin{array}{rcl}
\varphi & := & x = y \ \mid \ R(\bar{x}) \ \mid \ \varphi \ksum \varphi \ \mid \ \varphi \kprod \varphi \ \mid \ \Sigma x. \varphi \ \mid \ \Pi x. \varphi
\end{array}
$$ 
where $x, y \in X$, $R \in \Gamma$, and $\bar{x} = x_1, \ldots, x_k$ is a sequence of variables in $X$ such that $k=\arity(R)$. As usual, we say that $x$ is a free variable of $\varphi$, if $x$ is not below $\Sigma x$ or $\Pi x$ quantifiers (e.g. $x$ is free in $\Sigma y. R(x,y)$ but $y$ is not). 
Given that $K$ is fixed, from now on we talk about structures and formulas without mentioning $K$ explicitly.  

An assignment $\sigma$ over a structure $\cA = (A, \{R^\cA\}_{R \in \Gamma})$ is a function $\sigma: X \rightarrow A$. Given $x \in X$ and $a \in A$, we denote by $\sigma[x \mapsto a]$ a new assignment such that $\sigma[x \mapsto a](y) = a$ whenever $x = y$ and $\sigma[x \mapsto a](y) = \sigma(y)$ otherwise. For $\bar{x} = x_1, \ldots, x_k$,  we write $\sigma(\bar{x})$ to say $\sigma(x_1),\ldots, \sigma(x_k)$. Given a structure $\cA = (A, \{R^\cA\}_{R \in \Gamma})$ and an assignment $\sigma$, we define the semantics $\ssem{\varphi}{\cA}(\sigma)$ of $\varphi$ as follows:
$$
\begin{array}{ll}
\text{if $\varphi := x = y$, then} & \ssem{\varphi}{\cA}(\sigma) = 
\left\{
\begin{array}{ll}
\kone & \text{if $\sigma(x) = \sigma(y)$} \\
\kzero & \text{otherwise}
\end{array}
\right. \\
\text{if $\varphi := R(\bar{x})$, then} & \ssem{\varphi}{\cA}(\sigma) = R^\cA(\sigma(\bar{x})) \\
\text{if $\varphi := \varphi_1 \ksum \varphi_2$, then} & \ssem{\varphi}{\cA}(\sigma) = \ssem{\varphi_1}{\cA}(\sigma) \ksum \ssem{\varphi_2}{\cA}(\sigma)  \\
\text{if $\varphi := \varphi_1 \kprod \varphi_2$, then} & \ssem{\varphi}{\cA}(\sigma) = \ssem{\varphi_1}{\cA}(\sigma) \kprod \ssem{\varphi_2}{\cA}(\sigma)  \\
\text{if $\varphi := \Sigma x. \, \varphi'$, then} & \ssem{\varphi}{\cA}(\sigma) =  \bigksum_{a \in A} \ssem{\varphi'}{\cA}(\sigma[x \mapsto a]) \\
\text{if $\varphi := \Pi x. \, \varphi'$, then} & \ssem{\varphi}{\cA}(\sigma) =  \bigkprod_{a \in A} \ssem{\varphi'}{\cA}(\sigma[x \mapsto a])
\end{array}
$$
When $\varphi$ contains no free variables, we omit $\sigma$ and write $\ssem{\varphi}{\cA}$ instead of $\ssem{\varphi}{\cA}(\sigma)$.

For comparing the expressive power of \langprod with WL, we have to show how to encode \lang\ instances into structures and vice versa. For this, we make two assumptions to put both languages at the same level: (1) we restrict structures to relation symbols of arity at most two and (2) we restrict instances to square matrices. The first assumption is for the same reasons as when comparing \langsum with $\mathsf{RA}_K^+$, and the second assumption is to have a crisp translation between both languages. Indeed, understanding the relation of \langprod with WL for non-square matrices is slightly more complicated and we leave this for future work. 

Let $\Sch=(\Mnam,\size)$ be a schema of square matrices, that is, there exists an $\alpha$ such that $\size(V) \in \{1, \alpha\} \times \{1,\alpha\}$ for every $V \in \Mnam$.
Define the relational vocabulary $\text{WL}(\Sch) = \{R_V \mid V \in \Mnam\}$ such that $\arity(R_V) = 2$ if $\size(V) = (\alpha, \alpha)$, $\arity(R_V) = 1$ if $\size(V) \in \{(\alpha,1), (1,\alpha)\}$, and $\arity(R_V) = 0$ otherwise.
Then given a matrix instance $\I = (\dom,\conc)$ over $\Sch$ define the structure $\text{WL}(\I) = (\{1, \ldots, n\}, \{R_V^{\I}\} )$ such that $\dom(\alpha) = n$ and $R_V^{\I}(i, j) = \conc(V)_{i,j}$ if $\size(V) = (\alpha, \alpha)$, $R_V^{\I}(i) = \conc(V)_{i}$ if $\size(V) \in \{(\alpha,1), (1,\alpha)\}$, and $R_V^{\I} = \conc(V)$ if $\size(V) = (1,1)$.

To encode weighted structures into matrices and vectors, the story is similar as for $\mathsf{RA}_K^+$. Let $\Gamma$ be a relational vocabulary where $\arity(R) \leq 2$. 
Define $\text{Mat}(\Gamma) = (\Mnam_\Gamma,\size_\Gamma)$ such that $\Mnam_\Gamma = \{ V_{R} \mid R \in \Gamma\}$ and $\size_\Gamma(V_{R})$ is equal to $(\alpha, \alpha), (\alpha, 1)$, or $(1,1)$ if $\arity(R)=2$, $\arity(R)=1$, or $\arity(R)=0$, respectively, for some $\alpha \in \DD$. Similarly, let $\cA = (A, \{R^{\cA}\}_{R \in \Gamma})$ be a structure with $A = \{a_1, \ldots, a_n\}$, ordered arbitrarily.
Then we define the matrix instance $\text{Mat}(\cA) = (\dom,\conc)$ such that $\dom(\alpha) = n$, $\conc(V_{R})_{i,j} = R^{\cA}(a_i, a_j)$ if $\arity(R)=2$, $\conc(V_{R})_{i} = R^{\cA}(a_i)$ if $\arity(R)=1$, and $\conc(V_{R}) = R^{\cA}$ otherwise.

Let $\Sch$ be a \lang\ schema of square matrices and $\Gamma$ a relational vocabulary of relational symbols of arity at most $2$. We can then show the equivalence of \langprod and WL as follows. 
\begin{proposition} \label{prop:wl}
Weighted logics over $\Gamma$ and \langprod over $\Sch$ have the same expressive power. More specifically,
\begin{itemize}
	\item for each \langprod expression $e$ over $\Sch$ such that $\Sch(e)=(1,1)$, there exists a WL-formula $\Phi(e)$ over $\text{WL}(\Sch)$ such that for every instance $\I$ of~$\Sch$, 
	$
	\sem{e}{\I} = \ssem{\Phi(e)}{\text{WL}(\I)}
	$.
	\item for each WL-formula $\varphi$ over $\Gamma$ without free variables, there exists a \langprod expression $\Psi(\varphi)$ such that for any structure $\cA$ over~$\text{Mat}(\Gamma)$,
	$
	\ssem{\varphi}{\cA}=\sem{\Psi(\varphi)}{\text{Mat}(\cA)}
	$.
\end{itemize}	
\end{proposition}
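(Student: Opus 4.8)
The plan is to prove both inclusions by a single structural induction, after strengthening the statement so that it also covers subexpressions and subformulas that still carry ``free'' indices. The guiding invariant is that free first-order variables of a WL formula correspond to canonical vector variables of a \langprod expression, while the (at most two) row and column indices of a matrix correspond to two distinguished free variables. Concretely, I would prove the stronger claim that a \langprod subexpression $e$ of type $(\tau_r,\tau_c)$ with $\tau_r,\tau_c\in\{1,\alpha\}$, whose free canonical vector variables are $v_1,\dots,v_m$, translates to a WL formula $\Phi(e)$ whose free variables are $x_{v_1},\dots,x_{v_m}$, together with a row variable $x_r$ when $\tau_r=\alpha$ and a column variable $x_c$ when $\tau_c=\alpha$, and such that, for any instantiation of $v_1,\dots,v_m$ to canonical vectors $b_{k_1},\dots,b_{k_m}$, the entry $\sem{e}{\I}_{i,j}$ equals $\ssem{\Phi(e)}{\mathrm{WL}(\I)}(\sigma)$ for the assignment $\sigma=\{x_{v_\ell}\mapsto k_\ell,\ x_r\mapsto i,\ x_c\mapsto j\}$. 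Dually, a WL formula $\varphi$ with free variables $x_1,\dots,x_m$ translates to a type-$(1,1)$ expression $\Psi(\varphi)$ depending on canonical vector variables $v_{x_1},\dots,v_{x_m}$ and satisfying the mirror-image identity. The two bullets of the proposition are the special cases where no free variables remain. Crucially, matrix index variables never exceed two while canonical vector variables are unbounded, so there is no clash with the two-dimensionality of matrices.

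For the WL-to-\langprod translation $\Psi$, the base cases are $x=y\mapsto v_x^T\cdot v_y$ (which evaluates to $\kone$ exactly when the two canonical vectors coincide, since $b_i^T\cdot b_j=\kone$ iff $i=j$) and $R(\bar x)\mapsto v_{x_1}^T\cdot V_R\cdot v_{x_2}$, $v_x^T\cdot V_R$, or $V_R$ according to whether $\arity(R)=2,1,0$. The connectives $\ksum$ and $\kprod$ become matrix addition and matrix product of $(1,1)$-matrices (where matrix product, Hadamard product, and scalar product all coincide with $\kprod$), with variables shared between the two operands realized by shared canonical vector variables; and the quantifiers $\Sigma x$ and $\Pi x$ map to the operators $\Sigma v_x$ and $\qhadprod v_x$. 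For the \langprod-to-WL translation $\Phi$, a matrix variable $V$ becomes $R_V(x_r,x_c)$ (with arity adjusted for vectors and scalars), a canonical vector variable $v$ viewed as an expression of type $(\alpha,1)$ becomes the equality $x_r=x_v$, transposition swaps $x_r$ and $x_c$, matrix addition becomes $\ksum$, scalar multiplication becomes $\kprod$, and a matrix product $e_1\cdot e_2$ becomes $\Sigma z.\,(\Phi(e_1)\kprod\Phi(e_2))$, where the fresh variable $z$ is identified with the contracted column variable of $e_1$ and row variable of $e_2$; finally the for-loop quantifiers $\Sigma v$ and $\qhadprod v$ map to $\Sigma x_v$ and $\Pi x_v$. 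Each step is checked for well-typedness, and the arithmetic identities required are exactly the defining clauses of the two semantics.

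The one genuinely new ingredient relative to the \langsum/$\mathsf{RA}_{K}^+$ correspondence is the product quantifier, so I would isolate as a lemma that $\qhadprod v.\,e=\ffor{v}{X\!=\!\kone}{X\circ e}$ computes $\bigkprod_{i=1}^{n}\sem{e}{\I[v:=b_i^n]}$ entrywise. This holds because the Hadamard product applies $\kprod$ pointwise and the initialization of $X$ to the all-$\kone$ matrix is itself available in \langfor through the initialized for-loop of Section~\ref{sec:formatlang}. Granting this, $\qhadprod v_x$ matches $\Pi x$ exactly as $\Sigma v_x$ matches $\Sigma x$. I would also note in passing that both quantifiers are order-insensitive, since $\ksum$ and $\kprod$ are commutative and the WL quantifiers range over an unordered domain, so no order predicates are needed in this fragment (in contrast to \langmprod).

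The main obstacle is not any single identity but the bookkeeping that keeps the two roles of free variables, namely matrix indices versus summed or multiplied canonical vectors, disjoint and consistent throughout the induction: one must allocate a fresh first-order variable for each contracted dimension in a matrix product, rename carefully so that additions and products of subexpressions share precisely the intended free variables, and verify that the square-matrix typing restriction keeps the number of index variables at two at every step. Once the strengthened invariant is fixed, each inductive case reduces to matching a defining clause of $\sem{\cdot}{\cdot}$ with the corresponding clause of $\ssem{\cdot}{\cdot}$, and the remaining verification is routine.
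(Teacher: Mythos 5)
Your proposal is correct and follows essentially the same route as the paper's proof: both directions proceed by structural induction on a strengthened statement in which free first-order variables correspond to canonical vector variables (plus two distinguished row/column index variables), and your translation clauses --- $x=y \mapsto v_x^T\cdot v_y$, $R(\bar x)\mapsto v_{x_1}^T\cdot V_R\cdot v_{x_2}$, transposition as swapping the index variables, matrix product as $\Sigma z$ over a fresh contracted variable, and $\Sigma x \leftrightarrow \Sigma v_x$, $\Pi x \leftrightarrow \qhadprod v_x$ --- coincide with those in the paper's appendix. The only cosmetic difference is that you isolate the semantics of $\qhadprod$ as an explicit lemma, which the paper treats as immediate from the operator's definition.
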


\floris{Three comments: (1) a reviewer may be puzzled about square matrix assumption; (2) is it clear why in the proposition sentences are considered (no free variables); and (3) can we say anything at all about its repercussions in terms of expressive power (apart from being equivalent)? }

\subsection{Matrix multiplication as a quantifier}\label{subsec:langlinear}
In a similar way, we can consider a fragment in which sum and the usual product of matrices can be used
in for-loops. Formally, for an expression $e$ we define the operator:
$$
\sprod v.\,  e=\ffor {v}{X = I}{X\cdot e}.
$$
where $I$ is the identity matrix. We call \langmprod the subfragment of \langfor that consists of \langsum extended with $\sprod v$. It is readily verified that  $\qhadprod v$ can expressed in terms of $\sprod v$.
Furthermore, by contrast to the Hadamard product, matrix multiplication is a non-commutative operator. As a consequence, one can formulate expressions that are not invariant under the order in which the canonical vectors
are processed.

\begin{proposition}
	Every expression in \langprod can be defined in \langmprod. Moreover, there exists an expression that uses the $\sprod v$ quantifier that cannot be defined in \langprod.
\end{proposition}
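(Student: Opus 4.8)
The plan is to prove the two claims separately. For the inclusion $\langprod\subseteq\langmprod$ I would show that the Hadamard quantifier $\qhadprod v$ can be simulated using the matrix‑product quantifier $\sprod v$ together with the $\Sigma$ quantifier already present in \langsum; since \langmprod extends \langsum with $\sprod v$, replacing every occurrence of $\qhadprod v$ by its simulation (bottom‑up on the expression tree, which works because the translation is compositional) yields an equivalent \langmprod expression. For the strictness I would isolate a structural invariant — permutation equivariance — that every \langprod expression satisfies but that some \langmprod expression violates.

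For the simulation, suppose $\qhadprod v.\,e$ with $\ttype(e)=(\alpha,\beta)$, so its value at position $(j,k)$ is $\prod_i (\sem{e}{\I[v:=b_i]})_{jk}$. A single such entry is a product over $v$ of scalars, and on $1\times1$ matrices the matrix product coincides with the semiring product, so $\sprod v.\,(w^T\cdot e\cdot u)$ computes exactly this entry when $w=b_j$ and $u=b_k$. Reassembling the matrix by summing these scalars into their positions via fresh $\Sigma$‑quantifiers over $w$ and $u$ gives
$$\qhadprod v.\,e \ \equiv\ \Sigma w.\,\Sigma u.\,\Big(\big(\sprod v.\,(w^T\cdot e\cdot u)\big)\times (w\cdot u^T)\Big),$$
with $\ttype(w)=(\alpha,1)$ and $\ttype(u)=(\beta,1)$. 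I would verify the typing ($w\cdot u^T$ has type $(\alpha,\beta)$ and the bracketed scalar has type $(1,1)$) and that the $\sprod$‑initialization to $I=[\kone]$ in the scalar case matches the $\kone$‑initialization of $\qhadprod$; the identity matrix $I$ needed inside bodies is itself available as $\diag(\ones(\cdot))$, which by Examples~\ref{ex:onevec}--\ref{ex:diag} already lies in \langsum. Everything used ($\Sigma$, $\sprod$, transposition, matrix and scalar product) belongs to \langmprod.

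For strictness I would formalize the action of a permutation $\pi$ of $\{1,\dots,n\}$ on an instance $\I$ with $\dom(\alpha)=n$: conjugate square matrices by $P_\pi$, left‑ or right‑multiply vectors, and fix scalars, writing $\I^\pi$ for the result. The key lemma is that every \langprod expression is \emph{equivariant}, i.e.\ $\sem{e}{\I^\pi}=P_\pi\sem{e}{\I}P_\pi^T$ (respectively $P_\pi\sem{e}{\I}$, $\sem{e}{\I}P_\pi^T$, $\sem{e}{\I}$) according to the type of $e$. This follows by structural induction: the core operators are equivariant because $P_\pi$ is orthogonal and the pointwise operators merely relabel positions, while for $\Sigma v$ and $\qhadprod v$ one uses the identity $\I^\pi[v:=b_{\pi(i)}]=\I[v:=b_i]^\pi$ to reindex the aggregate, with commutativity of $\ksum$ and $\hadprod$ absorbing the reordering $i\mapsto\pi(i)$. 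The matrix‑product quantifier breaks exactly this last step. I would then exhibit $e^{\ast}:=\sprod v.\,(I+v\cdot v^T\cdot V)$ with $\ttype(V)=(\alpha,\alpha)$, whose value is $\prod_{i=1}^n(I+b_i b_i^{T}\conc(V))$; taking $n=2$, $V=\left[\begin{smallmatrix}0&1\\1&0\end{smallmatrix}\right]$ and $\pi$ the transposition gives the two factors $\left[\begin{smallmatrix}1&1\\0&1\end{smallmatrix}\right]$ and $\left[\begin{smallmatrix}1&0\\1&1\end{smallmatrix}\right]$, whose two orders of multiplication disagree, so $\sem{e^{\ast}}{\I^\pi}\neq P_\pi\sem{e^{\ast}}{\I}P_\pi^T$. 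Hence $e^{\ast}$ is not equivariant and cannot be expressed in \langprod.

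The main obstacle is the bookkeeping in the equivariance lemma around the bound loop variable: verifying $\I^\pi[v:=b_{\pi(i)}]=\I[v:=b_i]^\pi$, checking that the reindexing $i\mapsto\pi(i)$ is precisely what commutativity of $\ksum$ and $\hadprod$ lets one undo, and doing this uniformly across all four type shapes. Everything else — the correctness and typing of the simulation, and the concrete $2\times2$ computation in the counterexample — is routine.
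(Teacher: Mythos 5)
Your proof is correct, and it is in fact substantially more rigorous than what the paper itself provides: the paper has no appendix proof of this proposition, justifying the first claim in the text as ``readily verified'' and the second only by the remark that matrix multiplication is non-commutative, so one can write expressions ``not invariant under the order in which the canonical vectors are processed.'' Your simulation $\qhadprod v.\,e \equiv \Sigma w.\,\Sigma u.\,\bigl(\bigl(\sprod v.\,(w^T\cdot e\cdot u)\bigr)\times (w\cdot u^T)\bigr)$ is exactly the kind of entrywise reassembly the paper intends, and it type-checks as you say. For the separation, your move from ``iteration-order invariance'' to \emph{permutation equivariance} of the computed function is the key step the paper's remark glosses over, and it is genuinely needed: sensitivity to the iteration order is a property of the evaluation procedure, not of the input-output function, so by itself it cannot rule out that the same function is computed by some other \langprod expression. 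Equivariance ($\sem{e}{\I^\pi}=P_\pi\sem{e}{\I}P_\pi^T$, etc.), proved by induction using commutativity of $\ksum$ and $\kprod$ and the identity $\I^\pi[v:=b_{\pi(i)}]=\I[v:=b_i]^\pi$, is a property of the function, and your $2\times 2$ counterexample (where $\I^\pi=\I$ but conjugating the output by $P_\pi$ changes it) correctly violates it. One minor caveat worth flagging: your concrete instance separates the two values only when $\kone\ksum\kone\neq\kone$ in $K$ (it degenerates, e.g., over the Boolean semiring, where both products of your triangular factors are the all-ones matrix), so strictly speaking your witness establishes the proposition over $\RR$ and any semiring without that idempotency --- which matches the paper's default setting, but the paper's blanket statement for arbitrary commutative $K$ would need a different witness (and is false for the trivial semiring), a looseness inherited from the paper rather than introduced by you.
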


What is interesting is that \langsum extended with $\sprod v$ suffices to compute the transitive closure,
 provided that we allow for the $f_{>0}$ function. Indeed, one can use the expression $e_{\mathsf{TC}}(V):=f_{>0}\bigl(\sprod v.\, (e_{\mathsf{Id}}+V)\bigr)$ for this purpose because
 $\sem{e_{\mathsf{TC}}}{\I}=f_{>0}\bigl((I+A)^n\bigr)$ when $\I$ assigns an $n\times n$ adjacency matrix $A$ to $V$, and non-zero entries in $(I+A)^n$ coincide with non-zero entries in  the transitive closure of $A$.
%
%
Furthermore, if we extend this fragment with access to the matrix $S_{<}$, defining the (strict) order on canonical vectors, then Csanky's matrix inversion algorithm becomes expressible (if $f_/$ is allowed).
 We leave the study of this fragment and, in particular, the relationship to full \langfor, for future work. 
 
Finally, in Figure~\ref{thefigure} we show a diagram of all the fragments of \langfor introduced in this section and their corresponding equivalent formalisms.
%
%
%

\begin{figure}
	
	\begin{tikzpicture}[->,>=stealth, semithick, auto, initial text= {}, initial distance= {3mm}, accepting distance= {4mm}, node distance=0.5cm, semithick]
	
	\node [rectangle, draw=black, fill=white, minimum height=4mm, minimum width=2cm, rounded corners] (ML) at (0, 0) {$\texttt{ML}$};
	
	\node [inner sep=0mm] (SML) at ($(ML) + (0,0.65)$) {$\texttt{sum}\text{-}\texttt{ML}\equiv \texttt{RA}^+_K$};
	
	\node [circle, radius=4mm,draw=black, fill=black, inner sep=0mm] (pCLIQUE) at ($(ML) + (1.4,0.1)$) {};
	\node [right of=pCLIQUE,inner sep=0mm, node distance=0.65cm] (CLIQUE) {$\textsc{4Clique}$};
	
	\begin{pgfonlayer}{background}
	\node (SMLc)[draw=black, inner sep=1.5mm, rounded corners,fit=(SML)(ML)(CLIQUE)] {};
	\end{pgfonlayer}

	\node [inner sep=0mm] (FOML) at ($(SML) + (0,0.7)$) {$\texttt{FO}\text{-}\texttt{ML} \equiv \texttt{WL}$};
	
	\node [circle, radius=4mm,draw=black, fill=black, inner sep=0mm] (pDP) at ($(pCLIQUE) + (1.7,0.2)$) {};
	\node [right of=pDP,inner sep=0.5mm, node distance=0.3cm] (DP) {$\textsc{DP}$};

	\begin{pgfonlayer}{background}
	\node (FOMLc) [draw=black, inner sep=1.5mm, rounded corners,fit=(SMLc)(ML)(FOML)(DP)] {};
	\end{pgfonlayer}

	\node [inner sep=0mm] (PML)  at ($(FOML) + (0,0.7)$) {$\texttt{prod}\text{-}\texttt{ML} + S_{<}$};
	
	\node [circle, radius=4mm,draw=black, fill=black, inner sep=0mm] (pINV) at ($(pDP) + (1,1)$) {};
	\node [right of=pINV,inner sep=0mm, node distance=0.35cm] (INV) {$\textsc{Inv}$};
	
	\node [circle, radius=4mm,draw=black, fill=black, inner sep=0mm] (pDET) at ($(pDP) + (1,0.2)$) {};
	\node [right of=pDET,inner sep=0mm, node distance=0.35cm] (DET) {$\textsc{Det}$};

	\begin{pgfonlayer}{background}
	\node (PMLc) [draw=black, inner sep=1.5mm, rounded corners,fit=(SMLc)(ML)(FOMLc)(PML)(INV)(DET)] {};
	\end{pgfonlayer}

	\node [inner sep=0mm] (forML) at ($(PML) + (1,0.7)$) {$\texttt{for}\text{-}\texttt{ML} \equiv \text{Arithmetic Circuits}$};
	
	\node [circle, radius=4mm,draw=black, fill=black, inner sep=0mm] (pPALU) at ($(pINV) + (1, 0.5)$) {};
	\node [right of=pPALU,inner sep=0mm, node distance=0.5cm] (PALU) {$\textsc{PLU}$};
	
	\begin{pgfonlayer}{background}
	\node [draw=black, inner sep=1.5mm, rounded corners,fit=(SMLc)(ML)(FOMLc)(PMLc)(forML)(PALU)] {};
	\end{pgfonlayer}
	
	\end{tikzpicture}
	
	\caption{Fragments of \langfor and their equivalences. The functions \textsc{4Clique}, \textsc{DP} (diagonal product), \textsc{Inv}, \textsc{Det}, and \textsc{PLU} decomposition are placed in their fragments.} \label{thefigure}
\vspace{-2ex}
\end{figure}
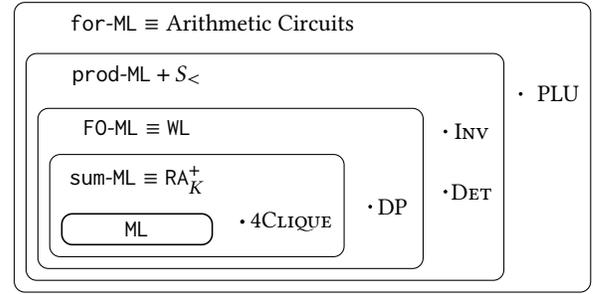



\section{Conclusions}\label{sec:conclude}
We proposed \langfor, an extension of \lang with limited recursion,
and showed that it is able to capture most of linear algebra due to its
connection to arithmetic circuits. We further revealed interesting connections
to logics on annotated relations. Our focus was on language design and
expressivity. An interesting direction for future work relates to efficient
evaluation of (fragments) of \langfor. A possible starting point is \cite{Christ_2013}
in which a general methodology for communication-optimal algorithms
for for-loop linear algebra programs is proposed.


\bibliographystyle{ACM-Reference-Format}

\bibliography{nourlbiblio}


\newpage 
\appendix

\onecolumn
\section*{Appendix}
\section{Preliminaries}
We first introduce some additional notations and describe simplifications that will be used later in the appendix.
\subsection{Definitions}
We sometimes want to iterate over $k$ canonical vectors. We define the following shorthand notation:
\begin{align*}
  \ffor{v_1,\ldots, v_k}{X}{e(X,v_1,\ldots, v_n)}:= &\ffor{v_1}{X_1}{X_1 +} \\
  &\hspace{1em}\initf{X_1}{v_2}{X_2}{X_2 + } \\
  &\hspace{2em}\initf{X_2}{v_3}{X_3}{X_3 + } \\
  &\hspace{8em}\ddots \\
  &\hspace{4em}\initf{X_{k-1}}{v_k}{X_k}{ e(X_k,v_1,\ldots, v_k)}.
\end{align*}
To reference $\ell$ different vector variables $X_1,\ldots,X_\ell$ in every iteration and update them in different ways we define:
\begin{multline*}
\ffor{v}{X_1,\ldots, X_\ell}{\left( e_1(X_1,v), e_2(X_2,v), \ldots, e_l(X_\ell,v) \right)} :=
\ffor{v}{X}{e_1(X\cdot e_{\mathsf{min}},v)\cdot (e_{\diag}(e_{\ones}(X^T))\cdot e_{\mathsf{min}})^T +\\ e_2(X\cdot e_{\mathsf{min} + 1},v)\cdot (e_{\diag}(e_{\ones}(X^T))\cdot e_{\mathsf{min} + 1})^T + \ldots + e_\ell(X\cdot e_{\mathsf{max}},v)\cdot (e_{\diag}(e_{\ones}(X^T))\cdot e_{\mathsf{max}})^T}
\end{multline*}
We note that for the latter expression to be semantically correct $v$ has to be of type $\gamma\times 1$, 
both $X_i$ and $e_i$ for $ i=1,\ldots,\ell$ have to be of type $\alpha\times 1$, 
and $X$ has to be of type $\alpha\times\beta$, where $\dom(\beta)=\ell$. Here
we use $e_{\diag}(e_{\ones}(X^T))$ to compute the $\beta\times\beta$ identity and ensure the typing of the
$e_{\mathsf{min} + i}$.
When evaluated on an instance $\I$,
$e_{\mathsf{min}}, e_{\mathsf{min} + i}$ evaluate to $b_1^{\dom(\beta)}$ and $b_{1+i}^{\dom(\beta)}$, 
respectively, and we show their defining expressions in section \ref{app:order}.
Similarly for $e_{\mathsf{max}}=b_n^{\dom(\beta)}$.
The combinations of both previous operators results in:
$$
\ffor{v_1,\ldots, v_k}{X_1,\ldots, X_\ell}{\left( e_1(X_1,v_1,\ldots, v_k), e_2(X_2,v_1,\ldots, v_k), \ldots, e_\ell(X_\ell,v_1,\ldots, v_k) \right)} :=\ffor{v_1,\ldots, v_k}{X}{e'(X,v_1,\ldots, v_k)}
$$
where 
\begin{align}
e'(X,v_1,\ldots,v_k):=&e_1(X\cdot e_{\mathsf{min}},v_1,\ldots,v_k)\cdot (e_{\diag}(e_{\ones}(X^T))\cdot e_{\mathsf{min}})^T \\
&+ e_2(X\cdot e_{\mathsf{min} + 1},v_1,\ldots,v_k)\cdot (e_{\diag}(e_{\ones}(X^T))\cdot e_{\mathsf{min} + 1})^T \\
&+ \ldots + e_\ell(X\cdot e_{\mathsf{max}},v_1,\ldots,v_k)\cdot (e_{\diag}(e_{\ones}(X^T))\cdot e_{\mathsf{max}})^T
\end{align}
It is clear that this expression iterates over $k$ canonical vectors and references $\ell$ independent vectors updating each of them in their particular way.
\label{app:def}

\subsection{Simplifications}\label{app:simp}
When showing results based on induction of expressions in \langfor, it is often convenient to assume that function applications $f(e_1,\ldots,e_k)$ for $f\in\Fun_k$ are restricted to
the case when all expressions $e_1,\ldots,e_k$ have type $1\times 1$. This does not loose generality. Indeed,
for general function applications $f(e_1,\ldots,e_k)$, if we have $\ssum$, scalar product and function application on scalars (here denoted by $f_{1\times 1}$), we can simulate full function application, as follows:
 $$
f(e_1,\ldots, e_k) :=\Sigma v_i \Sigma v_j. f_{1\times 1}(v_i^T\cdot e_1\cdot v_j, \ldots ,v_i^T\cdot e_k\cdot v_j) \times v_i\cdot v_j^T.
$$

Furthermore, it also convenient at times to use the pointwise functions
$f_\odot^k:\RR^k\mapsto \RR:(x_1,\ldots,x_k)\mapsto x_1\times\cdots \cdot x_k$ and 
$f_\oplus^k:\RR^k\mapsto \RR:(x_1,\ldots,x_k)\mapsto x_1+\cdots + x_k$. In fact, it is readily observed that adding these functions does not extend the expressive power of \langfor:
\begin{lemma}
\label{lm-prod-sum}
We have that $\langforf{\emptyset} \equiv \langforf{\{f_\odot^k,f_\oplus^k \ | \ k\in \mathbf{N}\}}$.
\end{lemma}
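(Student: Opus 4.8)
The plan is to prove the two inclusions separately. The inclusion $\langforf{\emptyset} \subseteq \langforf{\{f_\odot^k,f_\oplus^k \mid k\in \mathbf{N}\}}$ is immediate, since every core expression is already a valid expression over the larger set of functions and the two semantics agree on it. For the converse I would argue by induction on the structure of a $\langforf{\{f_\odot^k,f_\oplus^k \mid k\in \mathbf{N}\}}$ expression, producing an equivalent $\langforf{\emptyset}$ expression of the same type. All cases except the applications of $f_\oplus^k$ and $f_\odot^k$ are handled by simply pushing the translation through the (core) operator, so the whole argument reduces to showing that these two families of pointwise functions are definable without any pointwise function at all.

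The case of $f_\oplus^k$ is trivial: since \langfor\ has matrix addition as a core operator and $f_\oplus^k$ is exactly the pointwise sum, we set $f_\oplus^k(e_1,\ldots,e_k) := e_1 + \cdots + e_k$ (and $f_\oplus^1(e_1):=e_1$). For $f_\odot^k$, which computes the $k$-fold pointwise (Hadamard) product, it suffices to define the binary Hadamard product $e \hadprod e'$ of two expressions of equal type $(\alpha,\beta)$ in core \langfor\ and then iterate it, i.e. $f_\odot^k(e_1,\ldots,e_k) := (\cdots((e_1 \hadprod e_2)\hadprod e_3)\cdots)\hadprod e_k$. The binary Hadamard product can be assembled column by column: using a fresh loop variable $v$ of type $(\beta,1)$ ranging over the canonical vectors of dimension $\dom(\beta)$, I would set
\[
e\hadprod e' := \ffor{v}{X}{X + \bigl(\diag(e\cdot v)\cdot e'\cdot v\bigr)\cdot v^T}.
\]
Here $e\cdot v$ extracts the current column of $e$, the product $\diag(e\cdot v)\cdot(e'\cdot v)$ is the (vector) Hadamard product of the corresponding columns of $e$ and $e'$, and right-multiplying by $v^T$ places this vector into the matching column of the accumulator $X$.

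A routine type check confirms the expression has type $(\alpha,\beta)$, and unfolding the semantics shows that the $j$-th iteration adds the Hadamard product of the $j$-th columns of $e$ and $e'$ as the $j$-th column of the accumulator, leaving the other columns untouched; hence the final value equals $\sem{e}{\I}\hadprod\sem{e'}{\I}$. The only point needing justification is that $\diag$ is available here: although it is not a core primitive, Example~\ref{ex:diag} shows it is expressible in \langfor\ using only for-loops and core operators (no pointwise functions), so the displayed expression indeed lives in $\langforf{\emptyset}$. The one mildly delicate step — and essentially the only content of the lemma — is thus finding this column-wise simulation of the Hadamard product; everything else is bookkeeping in the induction.
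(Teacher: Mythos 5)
Your proof is correct. The typing of the column-wise loop checks out, unfolding the semantics does give the Hadamard product (with $v$ and $X$ fresh, the inner occurrences of $e$ and $e'$ are unaffected by the loop's reassignments), and the appeal to Example~\ref{ex:diag} is legitimate since that expression uses only for-loops and core operators, so everything you write lives in $\langforf{\emptyset}$. The paper treats this lemma as ``readily observed,'' and its implicit argument differs from yours in mechanics: it reduces \emph{every} pointwise application to applications on $1\times 1$ entries via the double loop $f(e_1,\ldots,e_k) := \Sigma v_i\, \Sigma v_j.\ f_{1\times 1}(v_i^T\cdot e_1\cdot v_j,\ldots,v_i^T\cdot e_k\cdot v_j)\times v_i\cdot v_j^T$, and then observes that on scalars $f_\odot^k$ is just iterated multiplication of $1\times 1$ matrices and $f_\oplus^k$ iterated addition, both core operations. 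Your route replaces this entry-wise extraction by a column-wise one: $f_\oplus^k$ is handled outright as matrix addition (more direct than the paper's template), and $f_\odot^k$ by iterating a binary Hadamard product $e\hadprod e'$ assembled one column at a time using $\diag$. Both arguments rest on the same underlying idea --- loops over canonical vectors turn pointwise operations into matrix operations on the extracted pieces --- but the paper's version is uniform in $f$ (the same construction works for any pointwise function once its scalar restriction is expressible), and is the one reused elsewhere in the appendix, whereas yours is tailored to the two functions at hand, uses a single loop instead of two nested ones, and in exchange needs the small extra observation that $\diag$ is itself definable without pointwise functions.
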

In fact, this lemma also holds for the smaller fragments we consider.
We also observe that having $f_\odot^2:\RR^2\to\RR$ allows us to define scalar multiplication:
$$
e_1\times e_2 :=f_{\kprod}(\ones(e_2)^T\cdot e_1 \cdot \ones(e_2)^T, e_2).
$$
Conversely, $f_\odot^k$ can be expressed using scalar multiplication, as can be seen from our simulation of general function applications by pointwise function application on scalars.
Finally, a notational simplification is that when using scalars $a\in\RR$ in our expressions, we write sometimes
$a$ instead of $[a]$. For example,  $(1-e_{\ones}(v)^T\cdot v)$ stands for  $([1]-e_{\ones}(v)^T\cdot v)$.

%

\section{Proofs of Section~\ref{sec:formatlang}}

\subsection{Order predicates}\label{app:order}
We detail how order information on canonical vectors can be obtained in \langfor.
We provide explicit expressions for the operators mentioned in Section~\ref{sec:formatlang}
and furthermore, we also define expressions for operators that will be used in our proofs.

To begin with, we can easily obtain the last canonical vector using the expression 
$$
e_{\mathsf{max}} := \ffor{v}{X}{v}.
$$ 
In other words, we simply overwrite $X$ with the current canonical vector in each iteration.
Hence, at the end, $X$ is assigned to the last canonical vector.

%
As already mentioned in the main body of the paper,
to define an order relation for canonical vectors, we notice that the following matrix:
\[
S_{\leq} = \begin{bmatrix}
    1 & 1 & \cdots &  1 \\
    0 & \ddots & \ddots & \vdots \\
    \hdotsfor{3} & 1 \\
    0 & \cdots & \cdots & 1 
\end{bmatrix}.
\]
has the property that for two canonical vectors $b_i$ and $b_j$ of the same dimension, 
$$b_i^T\cdot S_{\leq} \cdot b_j=\begin{cases}1 & \text{if $i\leq j$}\\
0 &\text{otherwise}.
\end{cases}
$$
We observe that $S_{\leq}$ can be expressed in \langfor as 
follows:
$$
S_{\leq}:=\ffor{v}{X}{X + \bigl((X\cdot e_{\mathsf{max}}) + v \bigr)\cdot v^T + v\cdot e^T_{\mathsf{max}}},
$$
where $e_{\mathsf{max}}$ is as defined above. 
The intuition behind this expression is that by using the last canonical vector $b_n$, as returned by $e_{\mathsf{max}}$, we have access to the last column of $X$ (via the product $X\cdot e_{\mathsf{max}}$). We use this column such that after the $i$-th iteration, this column contains the $i$-th column of $S_{\leq}$. This is done by incrementing $X$ with $v\cdot e_{\mathsf{max}}^T$.
To construct $S_{\leq}$, in the $i$-th iteration we further increment $X$ with 
(i)~the current last column in $X$ (via $X\cdot e_{\mathsf{max}}\cdot v^T$) which holds
the $(i-1)$-th column of $S_{\leq}$; and (ii)~the current canonical vector (via $v\cdot v^T$). Hence, after iteration $i$, $X$ contains the first $i$ columns of $S_{\leq}$ and holds the $i$th column of $S_{\leq}$ in its last column. It is now readily verified that $X=S_{\leq}$ after the $n$th iteration.
%
%

By defining 
$$
\mathsf{succ}(u,v) := u^T\cdot S_{\leq} \cdot v,
$$
we obtain an order relation that allows us to discern whether one canonical vector comes before 
the other in the order given by $S_{\leq}$. If we want a strict order, we can just use the matrix
$S_< := S_{\leq} - e_{\mathsf{Id}}$, where $e_{\mathsf{Id}}$ is an expression in \langfor which returns the identity matrix (of appropriate dimension). Given this, we define
$$\mathsf{succ}^+(u,v) := u^T\cdot S_{<} \cdot v.$$
from which we can also derive 
$$
\mathsf{max}(u):=u^T\cdot e_{\mathsf{max}}.
$$
which is an expression that returns the last canonical vector.

Interestingly, we can also define the \textit{previous} relation between canonical vectors. 
For this, we require the following matrix:
\[
\mathsf{Prev} = \begin{bmatrix}
    0 & 1 & \cdots &  0 \\
    0 & \ddots & \ddots & \vdots \\
    \hdotsfor{3} & 1 \\
    0 & \cdots & \cdots & 0
\end{bmatrix},
\]
Using this matrix, we have that for a canonical vector $b_i$:
\[
\mathsf{Prev}\cdot b_i=\begin{cases}
               b_{i-1}, \text{ if } i > 1. \\
              \mathbf{0}, \text{ if } i = 1.
            \end{cases}
\]
where $\mathbf{0}$ is a vector of zeros of the same type as $b_i$. Notice also that $\ones(u)^T\cdot \mathsf{Prev} \cdot u$ is equal to zero, for a canonical vector $u$, if and only if $u = b_1$ is the first canonical vector, and zero otherwise.
Therefore the expression $\mathsf{min}(u)$ is defined as $$\mathsf{min}(u) := 1 - \ones(u)^T\cdot \mathsf{Prev} \cdot u,$$ and, when evaluated over canonical vectors, will result in $1$ if and only if $u=b_1$ is the first canonical vector.
To define the first canonical vector in the order given by \texttt{for}, we can then write:
$$e_{\mathsf{min}} := \ffor{v}{X}{X + \mathsf{min}(v)\times v},$$
Finally, we show that $\mathsf{Prev}$ can be defined using the following \langfor expression:
$$e_{\mathsf{Prev}}:= \ffor{v}{X}{X + \bigl((1 - \mathsf{max}(v))\times v\cdot e_{\mathsf{max}}^T - (X\cdot e_{\mathsf{max}})\cdot e_{\mathsf{max}}^T + (X\cdot e_{\mathsf{max}})\cdot v^T\bigr)}.$$
Here, $X$ is initialized as $\mathbf{0}$ and thus in the first iteration we put
 $b_1$ in the last column of $X$ (note that $X\cdot e_{\mathsf{max}}$ is also zero in the first iteration). Next, in iteration two, we add a matrix that has the stored vector $X\cdot e_{\mathsf{max}}$ (the previous canonical vector) in the column indicated by $v$ (the current canonical vector) and $v-X\cdot e_{\mathsf{max}}$ in the last column, to replace the vector stored. As a consequence, $b_2$ is now stored in the last column. In the last iteration, we have $b_{n-1}$ already in the last column, so no further update of $X$ is required.
 
To get the \textit{next} relation we simply do $e_{\mathsf{Next}} = e_{\mathsf{Prev}}^T$. We have that for a canonical vector $b_i$:
\[
{\mathsf{Next}}\cdot b_i=\begin{cases}
               b_{i+1}, \text{ if } i < n. \\
              \mathbf{0}, \text{ if } i = n.
            \end{cases}
\]
In this way, we also can obtain the following operators for a canonical vector $v$: 
$$\mathsf{prev}(v):=e_{\mathsf{Prev}}\cdot v.$$
$$\mathsf{next}(v):=e_{\mathsf{Next}}\cdot v.$$
More generally, we define 
\begin{align*}
    e_{\mathsf{getPrevMatrix}}(v)&:=\sprod w.  \mathsf{succ}(w,v)\times e_{\mathsf{Prev}} + (1 - \mathsf{succ}(w,v))\times e_{\mathsf{Id}}\\
    e_{\mathsf{getNextMatrix}}(v)&:=\sprod w. \mathsf{succ}(w,v)\times e_{\mathsf{Next}} + (1 - \mathsf{succ}(w,v))\times e_{\mathsf{Id}}
\end{align*}
expressions that, when $v$ is interpreted as canonical vector $b_i$, output $\mathsf{Prev}^i$ and $\mathsf{Next}^i$ respectively.
Note that
\[
\mathsf{Prev}^j\cdot b_i=\begin{cases}
               b_{i-j}, \text{ if } i > j. \\
              \mathbf{0}, \text{ if } i \leq j.
            \end{cases}
\]
and
\[
\mathsf{Next}^j\cdot b_i=\begin{cases}
               b_{i+j}, \text{ if } i + j \leq n. \\
              \mathbf{0}, \text{ if } i + j > n.
            \end{cases}
\]
Finally, define
$$
e_{\mathsf{min}+i}:=\underbrace{e_{\mathsf{getNextMatrix}}(\ldots e_{\mathsf{getNextMatrix}}}_{i \text{ times}}(e_{\mathsf{min}}))
$$
and
$$
e_{\mathsf{max}-i}:=\underbrace{e_{\mathsf{getPrevMatrix}}(\ldots e_{\mathsf{getPrevMatrix}}}_{i \text{ times}}(e_{\mathsf{max}}))
$$
We note that some these expressions were already used in Section~\ref{app:def}.

\section{Proofs of Section~\ref{sec:queries}}
We next provide more details about how to perform LU-decomposition (without and with pivoting)
and to compute the determinant and inverse of a matrix.
\subsection{LU-decomposition}
\newtheorem*{ALU}{Proposition~\ref{prop:gauss}}

We start with LU-decomposition without pivoting. We recall proposition \ref{prop:gauss}:
\begin{ALU}
  There exists $\langforf{f_/}$ expressions $e_L(V)$ and $e_U(V)$ such that
  $\sem{e_L}{\I}=L$ and $\sem{e_U}{\I}=U$ form an LU-decomposition of $A$,
  where $\conc(V)=A$ and $A$ is LU-factorizable.
\end{ALU}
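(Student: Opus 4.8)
Since the main text already exhibits $e_U(V)=\bigl(\initf{e_{\mathsf{Id}}}{y}{X}{\red{X\cdot V}{y}\cdot X}\bigr)\cdot V$ and verifies $\sem{e_U}{\I}=U$, the only real work is to build $e_L(V)$ and then conclude $A=L\cdot U$. The plan is to first notice that the loop already used for $e_U$ computes $L^{-1}$ as a by-product. Concretely, I would set
\[
e_{L^{-1}}(V) := \initf{e_{\mathsf{Id}}}{y}{X}{\red{X\cdot V}{y}\cdot X},
\]
that is, $e_U(V)$ with the trailing factor $\cdot V$ removed (renaming the bound variables apart from those used below). By the same induction that justifies $e_U$ in the main text, when $\conc(V)=A$ the variable $X$ holds $T_i\cdots T_1$ after the $i$-th iteration, so $\sem{e_{L^{-1}}}{\I}=T_n\cdots T_1$. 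Since $T_n\cdots T_1\cdot A=U$ while $A=L\cdot U$ by the definition $L:=T_1^{-1}\cdots T_n^{-1}$, we get $L^{-1}=T_n\cdots T_1$, so $e_{L^{-1}}$ indeed evaluates to $L^{-1}$.

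Next I would invert $L^{-1}$ by exploiting its triangular structure rather than a general inversion routine. Each $T_i=I+c_i\cdot b_i^T$ is unit lower triangular, because $c_i$ has nonzero entries only strictly below the diagonal; hence their product $L^{-1}$ is unit lower triangular as well. Therefore $N:=I-L^{-1}$ is strictly lower triangular and thus nilpotent with $N^n=\mathbf{0}$, so the Neumann series terminates: $L=(L^{-1})^{-1}=(I-N)^{-1}=\sum_{k=0}^{n-1}N^k$. This finite sum is computable by a single for-loop in Horner form,
\[
e_L(V):=\initf{e_{\mathsf{Id}}}{v}{X}{e_{\mathsf{Id}}+\bigl(e_{\mathsf{Id}}-e_{L^{-1}}(V)\bigr)\cdot X},
\]
and a straightforward induction shows $X=\sum_{k=0}^{i}N^{k}$ after the $i$-th iteration, so after $n$ iterations $X=\sum_{k=0}^{n}N^{k}=\sum_{k=0}^{n-1}N^{k}=L$, using $N^n=\mathbf{0}$. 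The division function $f_/$ occurs only inside $\red$ (through $e_{L^{-1}}$), and every sub-expression has type $(\alpha,\alpha)$, so $e_L\in\langforf{f_/}$ is well-typed. Combining $\sem{e_L}{\I}=L$ with $\sem{e_U}{\I}=U$ and the Gaussian-elimination identity $A=L\cdot U$ recalled in the main text then yields the claim.

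The delicate point is entirely in the first paragraph: proving by induction that the loop body $\red{X\cdot V}{y}\cdot X$ produces $T_i\cdots T_1$ at step $i$. This amounts to checking that $\red{X\cdot V}{b_i}$ evaluates to $T_i$ precisely when $X=T_{i-1}\cdots T_1$, which relies on $\ccol{X\cdot V}{b_i}$ returning the $i$-th column of $A^{(i-1)}=T_{i-1}\cdots T_1\cdot A$ with its upper part zeroed, and on the pivot $b_i^T\cdot(X\cdot V)\cdot b_i=A^{(i-1)}_{ii}$ being nonzero so that $f_/$ is well defined — exactly the LU-factorizability hypothesis. Everything after this reduces to the elementary fact that a unit lower triangular matrix is inverted by a length-$n$ Neumann series, which needs no further linear-algebra machinery.
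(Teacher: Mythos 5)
Your proof is correct, and it departs from the paper's argument at exactly the point where the paper's argument breaks down. The first half is the same: you define $e_{L^{-1}}(V) := \initf{e_{\mathsf{Id}}}{y}{X}{\red{X\cdot V}{y}\cdot X}$ and argue, as the paper does, that it evaluates to $L^{-1}=T_n\cdots T_1$. For the remaining step the paper uses a shortcut: it claims that $L^{-1}=I-c_1\cdot b_1^T-\cdots-c_{n-1}\cdot b_{n-1}^T$ while $L=I+c_1\cdot b_1^T+\cdots+c_{n-1}\cdot b_{n-1}^T$, and hence sets $e_L:=-1\times e_{L^{-1}}(V)+2\times e_{\mathsf{Id}}$, i.e., it simply flips the signs of the entries below the diagonal. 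The telescoping identity behind this, $\prod_i(I+d_i\cdot b_i^T)=I+\sum_i d_i\cdot b_i^T$ for vectors $d_i$ vanishing in positions $1,\ldots,i$, holds only when the factors appear with increasing index from left to right, because then every cross term contains a factor $b_i^T\cdot d_j=(d_j)_i=0$ with $i<j$; it applies to $L=T_1^{-1}\cdots T_n^{-1}$ but \emph{not} to $L^{-1}=T_n\cdots T_1$, whose expansion has fill-in below the diagonal. Concretely, for the LU-factorizable matrix $A$ with rows $(1,0,0)$, $(1,1,0)$, $(1,1,1)$ one gets $U=I$ and $L=A$, yet $-1\times L^{-1}+2\times I$ has a $0$ in entry $(3,1)$, so the paper's $e_L$ fails to satisfy $\sem{e_L}{\I}\cdot\sem{e_U}{\I}=A$ on this instance. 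Your route instead observes that $L^{-1}$ is unit lower triangular, so $N=I-L^{-1}$ is strictly lower triangular and nilpotent, and the Horner-style loop $\initf{e_{\mathsf{Id}}}{v}{X}{e_{\mathsf{Id}}+(e_{\mathsf{Id}}-e_{L^{-1}}(V))\cdot X}$ computes $\sum_{k=0}^{n-1}N^k=(L^{-1})^{-1}=L$; this uses nothing about how the product $T_n\cdots T_1$ expands, only that it is unit triangular, so it is immune to the ordering subtlety. The cost is one extra for-loop (with the loop-invariant $e_{L^{-1}}(V)$ re-evaluated in each iteration, which is harmless) instead of a constant-size affine correction, and everything stays in $\langforf{f_/}$ with the same reliance on LU-factorizability for the divisions; what it buys is an argument that is actually sound. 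In short, your proof establishes the proposition, whereas the paper's own justification of $e_L$ needs repair --- either by your Neumann series, or by accumulating $I-\sum_i c_i\cdot b_i^T$ directly, which is the ordering in which the telescoping is actually valid.
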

\begin{proof}
	Let $A$ be an LU-factorizable matrix. We already explained how the expression 
	$e_U(V)$ is obtained in the main body of the paper, i.e., 
	$$
	e_{U}(V) :=  \left( \initf{e_{\mathsf{Id}}}{y}{X}{\red{X\cdot V}{y}\cdot X} \right) \cdot V.
	$$
	We recall that $e_U(A)=T_n\cdot\cdots\cdot T_1\cdot A$ with $L^{-1}=T_n\cdot\cdots\cdot T_1$. Let
	$$
	e_{L^{-1}}(V) :=  \initf{e_{\mathsf{Id}}}{y}{X}{\red{X\cdot V}{y}\cdot X}.
	$$
such that	$$
	e_{\mathsf{U}}(V) :=  e_{L^{-1}}(V) \cdot V.
	$$
	%
%
%
It now suffices to observe that, since $T_n=I$,
\begin{align*}
  L^{-1}&=(I-c_1\cdot b_1^T)\cdots (I-c_{n-1}\cdot  b_{n-1}^T) \\
  &=I-c_1\cdot b_1^T-\cdots - c_{n-1}\cdot b_{n-1}^T
\end{align*}
and hence,
\begin{align*}
  L&=(I+c_1\cdot b_1^T)\cdots (I+c_{n-1}\cdot b_{n-1}^T) \\
  &=I+c_1\cdot b_1^T+\cdots + c_{n-1}\cdot b_{n-1}^T.
\end{align*}
As a consequence, to obtain $L$ from $L^{-1}$ we just need to multiply every entry below the diagonal by $-1$. Since both  $L$ and $L^{-1}$ are lower triangular, this can done 
by computing $L=-1\times L^{-1} + 2\times I$. Translated into \langfor, this means that we can define
$$
e_{L}(V) :=  -1\times e_{L^{-1}}(V) + 2\times e_{\mathsf{Id}},
$$
which concludes the proof of the proposition.
\end{proof}

\subsection{LU-decomposition with pivoting}
\newtheorem*{PALU}{Proposition~\ref{prop:palu}}
We next consider LU-decomposition with pivoting. We recall proposition \ref{prop:palu}:

\begin{PALU}
  There exist expressions $e_{L^{-1}P}(V)$ and $e_U(V)$ in $\langforf{f_/,f_{>0}}$  such that
  $L^{-1}\cdot P=\sem{e_{L^{-1}P}}{\I}$ and $U=\sem{e_U}{\I}$, satisfy $L^{-1}\cdot P\cdot A=U$, where $\I$ is an instance such that $\conc(V)=A$. 
\end{PALU}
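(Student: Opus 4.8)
The plan is to refine the pivot-free construction of Proposition~\ref{prop:gauss} by inserting, before each elimination step, a row swap that brings a suitable nonzero entry onto the diagonal; the function $f_{>0}$ will supply the case distinction ``has a usable pivot been found or not''. As before I maintain a single matrix variable $X$, initialised to the identity, which accumulates the product $T_i P_i\cdots T_1 P_1$, so that after the $n$th iteration $X = L^{-1}P$ and $X\cdot V = U$. Concretely I would set
\[
e_{L^{-1}P}(V) := \initf{e_{\mathsf{Id}}}{y}{X}{\, e_T(X,V,y)\cdot e_P(X,V,y)\cdot X \,}, \qquad e_U(V) := e_{L^{-1}P}(V)\cdot V,
\]
where in the $i$th iteration ($y = b_i$) the subexpression $e_P$ evaluates to the transposition $P_i$ and $e_T$ to the elimination matrix $T_i$ computed from the already-pivoted matrix. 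Each ingredient below depends only on $X,V,y$, so the whole construction is a single well-typed expression, subexpressions being duplicated where needed.

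The heart of the argument is pivot selection, carried out on the current working matrix $M = X\cdot V$. I first detect nonzero entries at or below the diagonal of column $i$ by setting $\mathsf{nz}(v) := f_{>0}\big((v^T\cdot M\cdot y)\times(v^T\cdot M\cdot y)\big)$, which is $1$ exactly when the scalar $v^T\cdot M\cdot y$ is nonzero, and $\mathsf{cand}(v) := \mathsf{succ}(y,v)\cdot \mathsf{nz}(v)$, which marks the candidate rows $v=b_j$ with $j\ge i$ and $M_{ji}\neq 0$. The \emph{first} such row is isolated with a prefix count using the $\Sigma$ operator: with $\mathsf{before}(v) := \Sigma w.\, \mathsf{succ}^+(w,v)\cdot \mathsf{cand}(w)$ and $\mathsf{first}(v) := \mathsf{cand}(v)\cdot(1 - f_{>0}(\mathsf{before}(v)))$, the pivot row is $p := \Sigma v.\, \mathsf{first}(v)\times v$ (a canonical vector, or $\mathbf 0$ if the column is already reduced) and $\mathsf{hasPivot} := f_{>0}\big(\Sigma v.\, \mathsf{cand}(v)\big)$. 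I then build $e_P$ as the guarded transposition
\[
e_P := \mathsf{hasPivot}\times\big(e_{\mathsf{Id}} - y\cdot y^T - p\cdot p^T + y\cdot p^T + p\cdot y^T\big) + (1-\mathsf{hasPivot})\times e_{\mathsf{Id}},
\]
which swaps rows $i$ and $j$ when a pivot exists, correctly collapses to $e_{\mathsf{Id}}$ when $p = y$, and equals the identity when no pivot exists.

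Given the pivoted matrix $M' = e_P\cdot M$, whose $(i,i)$ entry is nonzero whenever $\mathsf{hasPivot}=1$, I reuse the reduction machinery of Proposition~\ref{prop:gauss} but guard the division so that $f_/$ is never applied to a zero denominator: I replace the denominator $y^T\cdot M'\cdot y$ by $y^T\cdot M'\cdot y + (1-\mathsf{hasPivot})$ and multiply the resulting off-diagonal column by $\mathsf{hasPivot}$, so that $e_T = T_i$ when a pivot exists and $e_T = e_{\mathsf{Id}}$ otherwise (in the no-pivot case the column is already reduced, so no elimination is needed and $U$ simply carries a zero on its diagonal). Correctness then follows by induction on $i$: assuming $X = T_{i-1}P_{i-1}\cdots T_1 P_1$ after iteration $i-1$, the iteration multiplies on the left first by $P_i$ and then by $T_i$ exactly as in classical Gaussian elimination with partial pivoting, making $X\cdot V$ upper triangular in its first $i$ columns. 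After the final iteration $X\cdot V = U$ and $X = T_n P_n\cdots T_1 P_1$; invoking the standard fact that this product equals $L^{-1}P$ for the single permutation $P = P_n\cdots P_1$ and a unit lower-triangular $L$ (the interleaved permutations only relabel below-diagonal multipliers, keeping them below the diagonal) yields $\sem{e_{L^{-1}P}}{\I}=L^{-1}P$ and $\sem{e_U}{\I}=U$ with $L^{-1}P\cdot A = U$, as required.

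The main obstacle I anticipate is the pivot selection combined with the two guarded branches: extracting the first nonzero entry at or below the diagonal purely from the order predicates $\mathsf{succ},\mathsf{succ}^+$ and $f_{>0}$, and arranging that $f_/$ is never evaluated at a zero denominator while still forcing both $e_T$ and $e_P$ to degenerate to the identity in the no-pivot case. This is exactly where the limited \texttt{if-then-else} afforded by $f_{>0}$ is essential; the remaining verification that the accumulated product equals $L^{-1}P$ is the classical analysis of Gaussian elimination with partial pivoting.
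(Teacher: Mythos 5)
Your proposal is correct and takes essentially the same route as the paper's proof: an identity-initialised for-loop accumulating a product of guarded elimination and transposition matrices, $f_{>0}$ applied to squared entries to test non-zeroness, a division guard that substitutes a harmless non-zero denominator in the no-pivot case so that both the permutation and the elimination matrix collapse to the identity, and the same closing algebraic observation that the interleaved transpositions can be pushed past the $T_\ell$'s (conjugation only relabels the below-diagonal multipliers) to obtain $L^{-1}\cdot P\cdot A=U$. The only divergence is an implementation detail in pivot selection: you isolate the first admissible pivot row declaratively via a $\Sigma$-prefix count over the order predicates together with an explicit $\mathsf{hasPivot}$ flag, whereas the paper's $\texttt{neq}$ gadget scans the column with a for-loop that locks in the first non-zero entry and falls back to $y$ itself when none exists, which makes its permutation $I-u\cdot u^T$ (with $u$ the difference of two canonical vectors) degenerate to the identity without any separate guard.
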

\begin{proof}
We assume that $f_{/}$ and $f_{>0}$ are in $\mathcal{F}$. Let $A$ be an arbitrary matrix.
%
By contrast to when $A$ is LU-factorizable, during the LU-decomposition process we may need row interchange (pivoting) in each step of the iteration. Let us assume that row interchange is needed immediately before 
step $k$, $1\leq k\leq n$. In other words, we now aim to reduce the $k$-th column of $A_k=T_{k-1}\cdots T_1\cdot A$, 
or $A_k=A$ if $k=1$, but now $A_k$ has a zero pivot, i.e., $(A_{k})_{kk}=0$. 
Let $P$ be the matrix that denotes the necessary row interchange. If we know
$P$, then 
to compute $T_k$ we need to perform $\red{P\cdot X\cdot A}{v}$ in this iteration,
where $\red{\cdot}$ is the expression in \langfor reducing a column, as explained in the main body of the paper.
Furthermore, we need to apply the permutation $P$ to the current result, resulting in the 
expression $\initf{I}{v}{X}{\red{P\cdot X\cdot A}{v}\cdot P\cdot X}$. We now remark that
$P$ is a permutation matrix of the  form $P = I - u\cdot u^T$ and it denotes an interchange (if multiplied by left) of rows $i$ and $j$ if $u=(b_{i}-b_{j})$. Note that we are performing a row interchange for column $k$ and thus $i=k$ and $j>k-1$. If no interchange is needed, $i=j=k$ and $P=I$.
Also note that when $k=n$ no interchange takes place. Furthermore, if no suitable $b_j$ can
be found, this implies that no interchange is required as well and we can move on to next column.

To find the vector $u$ in $P$, we can, for example, find the first entry $j\geq k$ in column $k$ of $A_k$ that holds a non-zero value. More generally, we can find the first entry in a vector $a$ that holds a non-zero value by using the function $f_{>0}$. Indeed, consider the following expression:
$$
\nneq{a}{u}:=\ffor{v}{X}{\left( 1-e_{\ones}(v)^T\cdot X \right) \times f_{>0}\left( ( v^T\cdot a )^2 \right)\times v + \mathsf{max}(v)\times\left( 1-e_{\ones}(v)^T\cdot X \right)\times \left( 1 - f_{>0}\left( ( v^T\cdot a )^2 \right) \right) \times u}
$$
Here, $\nneq{a}{u}$ receives two $n$ dimensional vectors $a$ and $u$ and outputs a 
canonical vector $b_j$ such that $a_j$ is the first non-zero entry of $a$, or $u$ if such non-zero value does not exist in $a$. We check for $f_{>0}((\cdot)^2)$ 
in case a negative number is tested. The above expression simply checks in each iteration
whether $X$ already holds a canonical vector. If so, then $X$ is not updated. Otherwise,
$X$ is replaced by the current canonical vector $b_j$ if and only if $b_j^T\cdot a$ is non-zero. Furthermore, when the final canonical vector is considered and $X$ does not hold
a canonical vector yet and $b_n^T\cdot a$ is zero, the vector $u$ is returned.

We use $\nneq{a}{u}$ to find a pivot for a specific column. Let us assume again that we
want to find a pivot in column $k$ of $A_k$. We can then first make all entries in that column, with indexes smaller or equal to $k$, zero, just as we did by means of $\ccol{\cdot}{\cdot}$ in the
definition of $\red{\cdot}$. Except, now we also need to make the $k$the entry zero as well.
Let us denote by $\ccoleq{\cdot}{\cdot}$ the operation $\ccol{\cdot}{\cdot}$, as defined in the main body of the paper, but using $\mathsf{succ}$ instead of $\mathsf{succ}^+$ (to include the $k$ entry). Given this, we can construct $P=I-u\cdot u^T$ as follows:
$$
e_{P_u}(A,u) := e_{\mathsf{Id}} - \left[ u - \nneq{ \ccoleq{A}{u} }{u} \right]\cdot \left[ u - \nneq{ \ccoleq{A}{u} }{u} \right]^T.
$$ 
From the explanations given above, it should be clear that $e_{P_u}(A,u)$ computes the necessary permutation matrix of $A_k$ for the column indicated by $u$, or $I$
if no permutation is needed, or if such permutation does not exist (so we skip the current column). Also, we have to modify the $\red{V}{y}$ operators, as follows:
$$
\red{V}{y}:= e_{\mathsf{Id}}+ f_{>0}\left( ( y^T\cdot V\cdot y)^2 \right)\times f_/(\ccol{V}{y},\left[ -(y^T\cdot V\cdot y)\times e_{\ones}(y) + \left( 1 - f_{>0}\left( ( y^T\cdot V\cdot y)^2 \right) \right)\times e_{\ones}(y) \right])\cdot y^T,
$$
so that when $V$ is interpreted by a matrix $B$ and $y=b_i$, it returns $I+c_ib_i^T$ if $B_{ii}$ is not zero. 
If $B_{ii}=0$ then we divide $\ccol{B}{b_i}$ by $e_{\ones}(b_i)$ (so we don't get \textit{undefined}), 
but we don't add $c_ib_i^T$ precisely because $B_{ii}=0$, and return the identity so nothing happens. We check 
for $f_{>0}((\cdot)^2)$ in case a negative number is tested.

\thomas{Extra line of explanation added.}

Finally, we define
$$
e_{L^{-1}P}(V):=\initf{e_{\mathsf{Id}}}{v}{X}{\red{e_{P_v}(X\cdot V,v)\cdot X\cdot V}{v}\cdot e_{P_v}(X\cdot V,v)\cdot X}
$$
and $e_{\mathsf{U}}(V):=e_{L^{-1}P}(V)\cdot V$ as the desired expressions.

As a final observation, in the definition of $e_{L^{-1}P}(V)$ 
we interlaced permutation matrices with the $T_i$'s. More specifically, 
$A_k=T_k\cdot P\cdot T_{k-1}\cdots T_1\cdot A$. We observe, however, that for $\ell\leq k-1$ and
$T_{\ell}=I-c_\ell\cdot b_\ell^T$, we have that  $b_\ell^T\cdot P=b_\ell$ because $b_\ell$ has zeroes in positions in the rows involved in the row exchange $P$. Also, note that  $P^2=I$ and thus 
$$P\cdot T_\ell\cdot P=P^2-P\cdot c_\ell\cdot b_\ell^T\cdot P=I-\widehat{c}_\ell\cdot b_l^T=\widehat{T}_\ell.$$
%
%
As a consequence,
$$
T_k\cdot P\cdot T_{k-1}\cdots T_1=T_k\cdot P\cdot T_{k-1}\cdot P^2\cdot T_{k-2}\cdot P^2\cdots P^2 \cdot T_1\cdot P^2=T_k\cdot (P\cdot T_{k-1}\cdot P)\cdots (P\cdot T_1\cdot P)\cdot P=\widehat{T}_{k-1}\cdots \widehat{T}_1\cdot P,
$$
and thus we may assume that $P$ occurs at the end. Hence, we obtain $L^{-1}\cdot P\cdot A=U$.
%
%
\end{proof}

\subsection{Determinant and inverse}\label{app:inverse}
\newtheorem*{INVERSE}{Proposition~\ref{prop:inverse}}
We next turn our attention to computing the inverse and determinant of a matrix.
To show Proposition \ref{prop:inverse} we first show that it holds when considering non-singular lower or upper triangular matrices.
\begin{lemma}\label{prop:upperlowerinverse}
There are $\langforf{f_/}$ expressions $e_{\mathsf{upperDiagInv}}(V)$ and $e_{\mathsf{lowerDiagInv}}(V)$
such that $\sem{e_{\mathsf{upperDiagInv}}}{\I}=A^{-1}$ when $\I$ assigns $V$
to an invertible upper triangular matrix $A$ and $\sem{e_{\mathsf{lowerDiagInv}}}{\I}=A^{-1}$ when $\I$ assigns $V$
to an invertible lower triangular matrix $A$.
\end{lemma}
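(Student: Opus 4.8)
The plan is to use the classical splitting of a triangular matrix into its diagonal and nilpotent parts, together with the fact that the resulting Neumann series terminates. Write $A = D + N$, where $D$ is the diagonal of $A$ and $N = A - D$. Since $A$ is an invertible triangular matrix, every diagonal entry $A_{ii}$ is non-zero, so $D$ is invertible and $N$ is strictly triangular (strictly upper in the upper case, strictly lower in the lower case). Setting $M := D^{-1}N = D^{-1}A - I$, the matrix $M$ is strictly triangular and hence nilpotent, i.e. $M^n = \mathbf{0}$ for an $n\times n$ matrix. Consequently the Neumann series terminates, $(I+M)^{-1} = \sum_{i=0}^{n-1}(-M)^i$, and since $A = D(I+M)$ we obtain $A^{-1} = (I+M)^{-1}\cdot D^{-1}$. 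Crucially, this reasoning never uses which of the two triangular shapes $N$ has, only that $D^{-1}N$ is strictly triangular, so the very same expression serves for both $e_{\mathsf{upperDiagInv}}$ and $e_{\mathsf{lowerDiagInv}}$.

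First I would build $D^{-1}$ in $\langforf{f_/}$ by extracting each diagonal entry $v^T\cdot V\cdot v$ and inverting it with $f_/$, accumulating $\tfrac{1}{A_{ii}}\,b_i b_i^T$ over a for-loop:
$$
e_{\mathsf{diagInv}}(V) := \ffor{v}{X}{X + f_/(1, v^T\cdot V\cdot v)\times (v\cdot v^T)}.
$$
Evaluated on an invertible triangular $A$, this yields $D^{-1}$, with the divisions well-defined precisely because the diagonal has no zero entry. Writing $M$ for the subexpression $e_{\mathsf{diagInv}}(V)\cdot V - e_{\mathsf{Id}}$, the key step is to compute the finite sum $\sum_{i=0}^{n-1}(-M)^i$ with a single for-loop via the Horner-style recursion $S_0 = I$, $S_i = I - M\cdot S_{i-1}$, which satisfies $S_k = \sum_{i=0}^{k}(-M)^i$. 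This gives
$$
e_{\mathsf{neumann}}(V) := \initf{e_{\mathsf{Id}}}{v}{X}{e_{\mathsf{Id}} - M\cdot X},
$$
where the initialised for-loop (expressible in \langfor, as discussed in Section~\ref{sec:formatlang}) runs over all $n$ canonical vectors, so the accumulator reaches $S_n = \sum_{i=0}^{n}(-M)^i$. Since $(-M)^n = \mathbf{0}$, this equals $(I+M)^{-1}$; note that $M$ depends only on $V$ and not on $X$ or $v$, so re-evaluating it inside each iteration is harmless. I would then define both $e_{\mathsf{upperDiagInv}}(V)$ and $e_{\mathsf{lowerDiagInv}}(V)$ to be $e_{\mathsf{neumann}}(V)\cdot e_{\mathsf{diagInv}}(V)$, which evaluates to $(I+M)^{-1}\cdot D^{-1} = A^{-1}$.

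The main thing to get right, rather than a genuine obstacle, is the verification that $M$ is nilpotent of index at most $n$ and that the loop bound is aligned so that the over-count by one iteration (producing $S_n$ rather than $S_{n-1}$) is absorbed by the vanishing top term $(-M)^n$. The correctness of $e_{\mathsf{diagInv}}$ and $e_{\mathsf{neumann}}$, and the final identity $A^{-1} = (I+M)^{-1}\cdot D^{-1}$, then follow by a routine induction on the number of loop iterations, exactly in the style of Examples~\ref{ex:onevec} and~\ref{ex:diag}. Divisions occur only inside $e_{\mathsf{diagInv}}$ and are always by non-zero diagonal entries, so $\langforf{f_/}$ indeed suffices, and well-typedness is immediate once $v$ is given type $(\alpha,1)$ and $X$ type $(\alpha,\alpha)$.
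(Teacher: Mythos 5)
Your proof is correct, and its mathematical core is exactly the paper's: split $A = D + N$, note that $M = D^{-1}N$ is strictly triangular and hence nilpotent, and invert via the terminating Neumann series $A^{-1} = (I+M)^{-1}\cdot D^{-1}$, with $D^{-1}$ obtained entrywise using $f_/$. The difference lies in how the finite series $\sum_{i=0}^{n}(-M)^i$ is realized as a \langfor expression. The paper builds a generic power-sum expression $e_{\mathsf{ps}}(V)$ computing $I + V + \cdots + V^n$ through a nested loop $\ssum v.\,\sprod w.$ whose inner product is steered by the order predicate $\mathsf{succ}(w,v)$, and then instantiates it at $-1\times D^{-1}N$; you instead run a single initialized for-loop implementing the Horner recursion $S_i = I - M\cdot S_{i-1}$, which needs no order information and uses only $n$ matrix products, but whose loop body updates $X$ multiplicatively and additively and so does not have the $\ssum$/$\sprod$ quantifier shape. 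That shape is what the paper's version buys: because its construction uses only $\ssum$, $\sprod$, order and $f_/$, the paper can reuse this lemma to conclude that matrix inversion is already expressible in \langmprod extended with order (Section~\ref{app:asset_order}), a conclusion your Horner loop as written would not yield. Conversely, your version buys uniformity and economy: since nilpotency is insensitive to which triangular shape $N$ has, one expression serves both cases, whereas the paper handles lower triangular matrices by transposition, $e_{\mathsf{lowerDiagInv}}(V) := e_{\mathsf{upperDiagInv}}(V^T)^T$.
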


\begin{proof} We start by considering the expression:
$$
e_{\mathsf{ps}}(V):= e_{\mathsf{Id}} + \ssum v.\sprod w. \left[ \mathsf{succ}(w,v)\times V + (1 - \mathsf{succ}(w,v))\times e_{\mathsf{Id}} \right].
$$
Here, $e_{\mathsf{ps}}(A)$ results in $I+A+A^2+\cdots + A^n$ for any matrix $A$. In the expression, the outer loop defines which power we compute. 
That is, when $v$ is the $i$th canonical vector, we compute $A^i$.
Computing $A^i$ is achieved via the inner product loop, which uses $\mathsf{succ}(w,v)$ to 
determine whether $w$ comes before or is $v$ in the ordering of canonical vectors.
When this is the case, we multiply the current result by $A$, and when $w$ is greater 
than $v$, we use the identity as not to affect the already computed result. We add the identity at the end.

Now, let $A$ be an $n\times n$ matrix that is upper triangular and let $D_A$ be the matrix consisting of the diagonal elements of $A$, i.e.,
\[
D_A = \begin{bmatrix}
    a_{11} & \cdots & \cdots &  0 \\
    0 & a_{22} & \cdots &  0 \\
    0 & \ddots & \vdots & \vdots \\
    \vdots & \cdots& \cdots & a_{nn}
\end{bmatrix}.
\]
We can compute $D_A$ by the expression:
$$
e_{\mathsf{getDiag}}(V) := \ssum v. (v^T\cdot V\cdot v) \times v\cdot v^T.
$$
Let $T=A-D_A$, then
$$
A^{-1}=\left[ D_A+T \right]^{-1}= \left[ D_A\left( I+D_A^{-1}T\right) \right]^{-1} = \left( I+D_A^{-1}T\right)^{-1}D_A^{-1}.
$$
We now observe that $D_{A}^{-1}$ simply consists of the inverses of the elements on the diagonal. This can be expressed, as follows:
$$
e_{\mathsf{diagInverse}}(V):=\ssum v. f_{/}(1,v^T\cdot V\cdot v)\times v\cdot v^T=\ssum v. f_{/}(1,v^T\cdot V\cdot v)\times v\cdot v^T,
$$
Where $f_{/}$ is the division function. In the last equality we take advantage of the fact that the diagonals of $A$ and $D_A$ are the same.

We now focus on the computation of $\left( I+D_A^{-1}\cdot T\right)^{-1}$. First, by construction, $D_A^{-1}\cdot T$ is strictly upper triangular and thus nilpotent, 
such that $\left( D_A^{-1}\cdot T\right)^n=0$, where $n$ is the dimension of $A$.
Recall the following algebraic identity 
$$(1+x)\left( \sum_{i=0}^{m}(-x)^i \right)=1-(-x)^{m+1}.$$
By choosing $m=n-1$ and applying it to $x=D_A^{-1}\cdot T$, we have
$$
\left(I+D_A^{-1}\cdot T \right)\left( \sum_{i=0}^{n-1}(-D_A^{-1}\cdot T)^i \right)=I- \left( -D_A^{-1}\cdot T\right)^n =I.
$$
Hence,
$$
\left(I+D_A^{-1}\cdot T \right)^{-1}=\sum_{i=0}^{n-1}(-D_A^{-1}\cdot T)^i=\sum_{i=0}^{n}(-D_A^{-1}\cdot T)^i.
$$
We now observe that
$$
e_{\mathsf{ps}}(-1\times D_A^{-1}\cdot T)=\sum_{i=0}^{n}(-D_A^{-1}\cdot T)^i=\left(I+D_A^{-1}\cdot T \right)^{-1},
$$
and thus 
$$
A^{-1}= e_{\mathsf{ps}}\left(-1\times \left[e_{\mathsf{diagInverse}}(A)(A-e_{\mathsf{getDiag}}(A))\right] \right)e_{\mathsf{diagInverse}}(A).
$$
Seeing this as an expression:
$$
e_{\mathsf{upperDiagInv}}(V):= e_{\mathsf{ps}}\left(-1\times \left[e_{\mathsf{diagInverse}}(V)(V-e_{\mathsf{getDiag}}(V))\right] \right)e_{\mathsf{diagInverse}}(V),
$$
we see that  when interpreting $V$ as an  upper triangular invertible matrix, 
$e_{\mathsf{upperDiagInv}}(A)$ evaluates to $A^{-1}$.

To deal with invertible lower triangular matrices $A$, we observe that  $\left(A^{-1}\right)^T=\left(A^T\right)^{-1}$ and $A^T$ is upper triangular.
Hence, it suffices to define
$$
e_{\mathsf{lowerDiagInv}}(V):= e_{\mathsf{upperDiagInv}}(V^T)^T.
$$
This concludes the proof of the lemma.
\end{proof}

We are now ready to prove proposition \ref{prop:inverse}. We recall:
\begin{INVERSE}
  There are $\langforf{f_/}$ expressions $e_{\mathsf{det}}(V)$ and $e_{\mathsf{inv}}(V)$ such that
  $\sem{e_{\mathsf{det}}}{\I}=\mathsf{det}(A)$, and  
  $\sem{e_{\mathsf{inv}}}{\I}=A^{-1}$ when $\I$ assigns $V$
  to $A$ and $A$ is invertible.
\end{INVERSE}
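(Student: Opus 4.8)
The plan is to follow Csanky's algorithm exactly as outlined in the main body: recover the coefficients $c_1,\dots,c_n$ of the characteristic polynomial from the power sums $S_k=\mathsf{tr}(A^k)$ by solving the lower-triangular Newton system $S\cdot c=s$, and then read off $\mathsf{det}(A)=(-1)^n c_n$ and $A^{-1}=\frac{-1}{c_n}\sum_{i=0}^{n-1}c_iA^{n-1-i}$. Since the correctness of these formulas is classical (Newton's identities together with Cayley--Hamilton), the whole task reduces to exhibiting each ingredient as a $\langforf{f_/}$ expression over the square-matrix schema with $\ttype(V)=(\alpha,\alpha)$ and $\dom(\alpha)=n$. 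Throughout I would reuse the order predicates of Section~\ref{sec:formatlang} and, crucially, the triangular-inverse expression $e_{\mathsf{lowerDiagInv}}$ from Lemma~\ref{prop:upperlowerinverse}.

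First I would build the power-sum vector. Writing $e_{\mathsf{pow}}(V,v):=\sprod w.\,[\mathsf{succ}(w,v)\times V+(1-\mathsf{succ}(w,v))\times e_{\mathsf{Id}}]$ for the $|v|$-th power of $A$ (the inner loop already used inside $e_{\mathsf{ps}}$), the vector $s$ with $s_k=\mathsf{tr}(A^k)$ is
\[
e_s(V):=\ssum v.\,\Big(\ssum u.\,u^T\cdot e_{\mathsf{pow}}(V,v)\cdot u\Big)\times v .
\]
Next I would assemble the Newton matrix $S$, whose $(i,j)$ entry is $i$ on the diagonal, $S_{i-j}$ strictly below it, and $0$ above. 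The diagonal part is $\ssum v.\,(\ssum w.\,\mathsf{succ}(w,v))\times v\cdot v^T$, using that $\ssum w.\,\mathsf{succ}(w,v)$ equals the index $|v|$. For the band below the diagonal I would place $S_k$ on the $k$-th sub-diagonal using the shift matrix $\mathsf{Next}^k$ produced by $e_{\mathsf{getNextMatrix}}$, giving $\ssum v.\,(v^T\cdot e_s(V))\times e_{\mathsf{getNextMatrix}}(v)$; the term for $v=b_n$ contributes $\mathsf{Next}^n=\mathbf{0}$ harmlessly. Adding the two parts yields $e_S(V)$. Because $S$ is lower triangular with nonzero diagonal $1,\dots,n$, the coefficient vector is $e_c(V):=e_{\mathsf{lowerDiagInv}}(e_S(V))\cdot e_s(V)$.

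From here the determinant is immediate: $c_n=e_{\mathsf{max}}^T\cdot e_c(V)$, and since $\mathsf{det}(A)=(-1)^nc_n$ with $(-1)^n$ computed by $\sprod v.\,(-1)$ (reading the scalar $-1$ as a $1\times 1$ matrix), I set $e_{\mathsf{det}}(V):=(\sprod v.\,(-1))\times(e_{\mathsf{max}}^T\cdot e_c(V))$. For the inverse I reindex $\sum_{i=0}^{n-1}c_iA^{n-1-i}=\sum_{i=1}^{n}c_{i-1}A^{n-i}$ and compute
\[
e_{\mathsf{sp}}(V):=\ssum v.\,\big(\mathsf{prev}(v)^T\!\cdot e_c(V)+\mathsf{min}(v)\big)\times\big(\sprod w.[\mathsf{succ}^+(v,w)\times V+(1-\mathsf{succ}^+(v,w))\times e_{\mathsf{Id}}]\big),
\]
where $\mathsf{prev}(v)^T\cdot e_c(V)$ extracts $c_{i-1}$ for $i\ge 2$ while $\mathsf{min}(v)$ injects the missing $c_0=1$ at $i=1$, and the product loop builds $A^{n-i}$ (there are exactly $n-i$ vectors $w$ with $v<w$). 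Then $e_{\mathsf{inv}}(V):=f_/(-1,\,e_{\mathsf{max}}^T\cdot e_c(V))\times e_{\mathsf{sp}}(V)$.

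The hard part will not be any single expression but the index bookkeeping that glues them together: verifying that $S_k$ really lands on the $k$-th sub-diagonal via $\mathsf{Next}^k$, and that in the reconstruction the coefficient $c_{i-1}$ is paired with the correct power $A^{n-i}$, including the special treatment of $c_0=1$. I would discharge these by an entrywise check on canonical vectors, after which the correctness of $e_{\mathsf{det}}$ and $e_{\mathsf{inv}}$ follows from Csanky's identities. Note that the only nontrivial function used anywhere in the construction is division, so all expressions lie in $\langforf{f_/}$, as required.
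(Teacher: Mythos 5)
Your construction follows the paper's own proof almost line by line: Csanky's method, matrix powers via the $\sprod$-based $e_{\mathsf{pow}}$, traces summed with $\ssum$, the lower-triangular Newton matrix assembled from shift matrices, inversion of that matrix via Lemma~\ref{prop:upperlowerinverse} ($e_{\mathsf{lowerDiagInv}}$), and the final read-off of $\mathsf{det}(A)$ and $A^{-1}$. The only structural differences are cosmetic: you build $S$ sub-diagonal by sub-diagonal using $e_{\mathsf{getNextMatrix}}(v)$ where the paper builds it column by column from a shifted copy of the trace vector, and you keep the main body's unnormalized system (diagonal $1,\ldots,n$, $S_i=\mathsf{tr}(A^i)$) where the paper's appendix switches to a renormalized variant with all ones on the diagonal.

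There is, however, one concrete slip, inherited from the main-body display you copied. Newton's identities for $p_A(x)=\sum_{i=0}^n c_i x^{n-i}$ with $c_0=1$ read $k\,c_k+\sum_{j=1}^{k-1}\mathsf{tr}(A^{k-j})\,c_j=-\mathsf{tr}(A^k)$, i.e.\ $S\cdot c=-s$, not $S\cdot c=s$. (Check $A=\mathrm{diag}(a,b)$: your system gives $c_1=a+b$, $c_2=-ab$, the negatives of the true coefficients.) Solving the system as you wrote it thus yields $\tilde c=-c$. For the determinant this is only a global sign, but for the inverse it is worse: in $e_{\mathsf{sp}}$ you pair the sign-flipped $\tilde c_{i-1}$ with a hard-coded $+\mathsf{min}(v)$ standing for $c_0=1$, so the relative sign between the $A^{n-1}$ term and the rest is wrong, and no constant prefactor can turn $\frac{1}{c_n}\bigl(A^{n-1}-\sum_{j\ge 1}c_jA^{n-1-j}\bigr)$ into $\frac{-1}{c_n}\bigl(A^{n-1}+\sum_{j\ge 1}c_jA^{n-1-j}\bigr)$; in the diagonal example above your $e_{\mathsf{inv}}$ returns $\frac{1}{ab}\mathrm{diag}(2a+b,\,a+2b)$ instead of $\mathrm{diag}(1/a,1/b)$. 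The repair is trivial — solve against $-e_s(V)$ (equivalently, inject $-\mathsf{min}(v)$ and flip the sign of $e_{\mathsf{det}}$) — and, to be fair, the paper commits comparable constant-level slips itself (its appendix drops the main body's $-1/c_n$ and uses a differently normalized system under a footnote). But this is exactly the kind of error your promised ``entrywise check on canonical vectors'' is supposed to catch, and as written the check fails for $e_{\mathsf{inv}}$.
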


\begin{proof}
Let $A$ be an $n \times n$ matrix. As mentioned in the main body of the paper, we will implement Csanky's algorithm. Let $p_A(x):=\mathsf{det}(xI-A)$ denote  characteristic polynomial of $A$.  We write $p_A(x)=1 + \sum_{i=1}^n c_ix^i$ and let  $S_i:=\frac{1}{i+1}\mathsf{tr}(A^i)$ with $\mathsf{tr}(\cdot)$ the trace operator which sums up the diagonal elements of a matrix.
Then, the coefficients $c_1,\ldots,c_n$ are known to satisfy\footnote{We use a slightly different, but equivalent, system of equations than the one mentioned in the paper.} 
$$
\underbrace{\left(\begin{matrix}
1 & 0 & 0 & \cdots & 0 & 0\\
S_1 & 1 & 0 & \cdots  &0 & 0\\
S_2 & S_1 & 1 & \cdots  &0 & 0\\
\vdots & \vdots & \vdots & \vdots & \vdots & 0\\
S_{n-1} & S_{n-2} & S_{n-3} & \cdots & S_1 & 1\\
\end{matrix}\right)}_{S}\cdot
\underbrace{\left(\begin{matrix}
c_1\\
c_2\\
c_3\\
\vdots\\
c_n\\
\end{matrix}\right)}_{\bar c}=\underbrace{\left(\begin{matrix}
S_1\\
S_2\\
S_3\\
\vdots\\
S_n\\
\end{matrix}\right)}_{\bar b}
$$
and furthermore, $c_n=(-1)^n\mathsf{det}(A)$ and if $c_{n}\neq 0$, then
$$
A^{-1}=\frac{1}{c_n}\sum_{i=0}^{n-1}c_i A^{n-1-i},
$$
with $c_0=1$. It is now easy to see that we can compute the $S_i's$ in $\langfor$. Indeed, for
$i=1,\ldots,n$ we can consider
$$
e_{\mathsf{powTr}}(V,v):=\ssum w. w^T\cdot\left(e_{\mathsf{pow}}(V,v)\cdot V\right)\cdot w
$$
with 
$$
e_{\mathsf{pow}}(V,v):= \sprod w. (\mathsf{succ}(w,v)\times V+(1-\mathsf{succ}(w,v))\times e_{\mathsf{Id}}).
$$
We have that $e_{\mathsf{pow}}(A,b_j)=A^{j}$ and thus $e_{\mathsf{powTr}}(A,b_j)=\mathsf{tr}(A^{j})$. Define:
$$
e_{S}(V,v):=f_{/}(1, 1 + \ssum w. \mathsf{succ}(w,v))\times e_{\mathsf{powTr}}(V,v).
$$
Here $e_{S}(A,b_i)=S_i$. Note that $i+1$ is computed summing up to the dimension indicated by $v$, and adding 1.
We can now easily construct the vector $\bar b$ used in the system of equations by means of the expression:
$$
e_{\bar b}(V):=\ssum w.e_{S}(V,w)\times w.
$$
We next construct the matrix $S$. We need to be able to \textit{shift} a vector $a$ in $k$ positions, i.e.,
such that $(a_1,\ldots,a_n)\mapsto (0,\ldots,a_1,\ldots,a_{n-k})$. We use $e_{\mathsf{getNextMatrix}}$ 
defined in section \ref{app:order}, i.e., we define:
$$
e_{\mathsf{shift}}(a,v):=\ssum w.(w^T\cdot a)\times(e_{\mathsf{getNextMatrix}}(v)\cdot w)
$$
performs the desired shift when $u$ is assigned a vector $a$ and $v$ is $b_k$. 
The matrix $S$ is now obtained as follows:
$$
S(V):= e_{\mathsf{Id}} + \ssum v. e_{\mathsf{shift}}(e_{\bar b}(V), v)\cdot v^T
$$
We now observe that $S$ is lower triangular with nonzero diagonal entries. So,
Lemma~\ref{prop:upperlowerinverse} tells us that we can invert it, i.e.,
$e_{\mathsf{lowerDiagInv}}(S)=S^{-1}$. As a consequence,
$$
e_{\bar c}(V):=e_{\mathsf{lowerDiagInv}}(S(V))\cdot e_{\bar b}(V).
$$
outputs $\bar c$ when $V$ is interpreted as matrix $A$. Observe that we only use the division operator. We now have all coefficients of the characteristic polynomial of $A$.

We can now define
$$
e_{\mathsf{det}}(V):=\left( \left(\left( \sprod w. (-1)\times e_{\ones}(V)\right)^T\cdot e_{\mathsf{max}}\right) \times e_{\bar c}(V) \right)^T\cdot e_{\mathsf{max}},
$$
an expression that, when $V$ is interpreted as any matrix $A$, outputs $\mathsf{det}(A)$.
Here, $(\sprod w. (-1)\times e_{\ones}(V))$ is the $n$ dimensional vector with $(-1)^n$ in all of its entries.
Since $c_n=(-1)^n\mathsf{det}(A)$, we extract $(-1)^n(-1)^n\mathsf{det}(A)=\mathsf{det}(A)$ with $e_{\mathsf{max}}$.

For the inverse, we have that
$$
A^{-1}=\frac{1}{c_n}\sum_{i=0}^{n-1}c_i A^{n-1-i} = \frac{1}{c_n}A^{n-1} + \sum_{i=1}^{n-1}\frac{c_i}{c_n}A^{n-1-i}.
$$
We compute $\frac{1}{c_n}A^{n-1}$ as
$$
f_{/}(1, e_{\bar c}(A)^T\cdot e_{\mathsf{max}})\times e_{\mathsf{pow}}(A, e_{\mathsf{max}})
$$
and $\sum_{i=1}^{n-1}\frac{c_i}{c_n}A^{n-1-i}$ as
$$
\ssum v. f_{/}\left( e_{\bar c}(A)^T\cdot v, e_{\bar c}(A)^T\cdot e_{\mathsf{max}} \right)\times e_{\mathsf{invPow}}(A, v),
$$
where
$$
e_{\mathsf{invPow}}(V, v):= \sprod w. (1-\mathsf{max}(w)) \times \left[ (1 - \mathsf{succ}(w,v))\times V + \mathsf{succ}(w,v)\times e_{\mathsf{Id}} \right] + \mathsf{max}(w)\times e_{\mathsf{Id}}.
$$
Here, $e_{\mathsf{invPow}}(A, b_i)=A^{n-1-i}$ and $e_{\mathsf{invPow}}(A, b_n)=I$.
Note that we always multiply by $e_{\mathsf{Id}}$ in the last step.
To conclude, we define:
$$
e_{\mathsf{inv}}(V):= f_{/}(1, e_{\bar c}(V)^T\cdot e_{\mathsf{max}})\times e_{\mathsf{pow}}(V, e_{\mathsf{max}}) + \left[ \ssum v. f_{/}\left( e_{\bar c}(V)^T\cdot v, e_{\bar c}(V)^T\cdot e_{\mathsf{max}} \right)\times e_{\mathsf{invPow}}(V, v) \right],
$$
an expression that, when $V$ is interpreted as any invertible matrix $A$, computes $A^{-1}$.
\end{proof}

\thomas{Is the following okay? Delete or modify as you want.}

As an observation, here we only use operators $\ssum$ and $\sprod$ defined in section \ref{sec:restrict}. 
We also assume access to order.

\section{Proofs of Section~\ref{sec:circuits}}

\subsection{Linear space functions}
We start by showing a crucial ingredient for making the correspondence between \langfor
and arithmetic circuits. More specifically, 
we show that any polynomial time Turing machine, working within linear space and producing linear space output, can be simulated in \langfor. 
For this proof and section only, we will denote the canonical vectors as
$\mathbf{e}_1, \ldots, \mathbf{e}_n$, since $b$ will be used to represent a value on a position of a tape.

We consider  deterministic Turing Machines (TM) $T$ consisting of $\ell$ read-only input tapes, 
denoted by $R_1,\ldots,R_\ell$,
a work tape, denoted by $W$, and a write-only output tape, denoted by $O$. The TM $T$ has a set $Q$ of $m$
states, denoted by $q_0,\ldots,q_m$. We assume that $q_0$ is the initial state and $q_m$ is the accepting state.
The input and tape alphabet are $\Sigma=\{0,1\}$ and $\Gamma=\Sigma\cup\{\rhd,\lhd\}$, respectively. The special 
symbol $\rhd$ denotes the beginning of each of the tapes, the symbol $\lhd$ denotes the end of the $\ell$ input tapes. 
The transition function $\Delta$ is defined as usual, i.e., 
$\Delta:Q\times \Gamma^{\ell+2} \to Q\times \Gamma^{2}\times \{\leftarrow,\sqcup,\rightarrow\}^{\ell+2}$ 
such that $\Delta(q,(a_1,\ldots,a_{\ell},b,c))=\bigl(q',(b',c'),(\mathsf{d}_1,\ldots,\mathsf{d}_{\ell+2})\bigr)$
with $\mathsf{d}_i\in \{\leftarrow,\sqcup,\rightarrow\}$, means that when $T$ is in state $q$ and the $\ell+2$ 
heads on the tapes read symbols $a_1,\ldots,a_{\ell},b,c$, respectively, then $T$ transitions to state $q'$,
writes $b',c'$ on the work and output tapes, respectively, at the position to which the work and output 
tapes' heads points at, and finally moves the heads on the tapes according 
$\mathsf{d}_1,\ldots,\mathsf{d}_{\ell+2}$. More specifically, $\leftarrow$  indicates a move to the left, 
$\rightarrow$ a move to the right, and finally, $\sqcup$ indicates that the head does not move.

We assume that $\Delta$ is defined such that it ensures that on none of the tapes, heads can move beyond 
the leftmost marker $\rhd$. Furthermore, the tapes $R_1,\ldots,R_\ell$ are treated as read-only and the heads 
on these tapes cannot move beyond the end markers $\lhd$. Similarly, $\Delta$ ensures that the output tape $O$ 
is write only, i.e., its head cannot move to the left.  We also assume that $\Delta$ does not change the 
occurrences of $\rhd$ or writes $\lhd$ on the work and output tape.

A configuration of $T$ is defined in the usual way. That is, a configuration of the input tapes is of the form
$\rhd w_1qw_2\lhd$ with $w_1,w_2\in\Sigma^*$ and represents that the current tape content is 
$\rhd w_1w_2\lhd$, $T$ is in state $q$ and the head is positioned on the first symbol of $w_2$. 
Similarly, configurations of the work and output tape are represented by $\rhd w_1qw_2$. 
A configuration of $T$ is consists of configurations for all tapes. Given two configurations 
$c_1$ and $c_2$, we say that $c_1$ yields $c_2$ if $c_2$ is the result of applying the transition 
function $\Delta$ of $T$ based on the information in $c_1$. As usual, we close this ``yields'' relation 
transitively.

Given $\ell$ input words $w_1,\ldots,w_\ell\in\Sigma^*$, we assume that the initial configuration of 
$T$ is given by
$\bigl(q_0\rhd  w_1\lhd,q_0\rhd w_2\lhd,\ldots, q_0\rhd w_\ell\lhd,q_0\rhd, q_0\rhd \bigr)$ and an 
accepting configuration is assumed to be of the form 
$\bigl(\rhd q_m w_1\lhd,\rhd q_m w_2\lhd,\ldots, \rhd q_m w_\ell\lhd,\rhd q_m,\rhd q_m w\bigr)$ for some
$w\in\Sigma^*$. We say that $T$ computes the function $f:(\Sigma^*)^{\ell}\to\Sigma^*$ if for every
$w_1,\ldots,w_\ell\in\Sigma^*$, the initial configuration yields (transitively) an accepting 
configuration such that the configuration on the output tape is
given by $\rhd q_m f(w_1,\ldots,w_\ell)$.

We assume that once $T$ reaches an accepting configuration it stays indefinitely in that configuration 
(i.e., it loops). We further assume that $T$ only reaches an accepting configuration when all its input
words have the same size. Furthermore, when all inputs have the same size, $T$ will reach an accepting 
configuration.

We say that $T$ is a \textit{linear space machine} when it reaches an accepting configuration 
on inputs of size $n$ by using $\mathcal{O}(n)$ space on its work tape and additionally needs 
$\mathcal{O}(n^k)$ steps to do so. A \textit{linear input-output function} is a function of the form 
$f=\bigcup_{n\geq 0} f_n:(\Sigma^n)^\ell\to\Sigma^n$. In other words, for every $\ell$ words of the same 
size $n$, $f$ returns a word of size $n$. We say that a linear input-output function is a 
\textit{linear space input-output function} if
there exists a linear space machine  $T$ that for every $n\geq 0$, on input $w_1,\ldots,w_\ell\in\Sigma^n$ 
the TM $T$ has
$f_n(w_1,\ldots,w_\ell)$ on its the output tape when (necessarily) reaching an accepting configuration.

\begin{proposition} \label{prop:transducer}
Let $f=\bigcup_{n\geq 0}f_n:(\Sigma^n)^\ell\to \Sigma^n$ be a linear space input-ouput function 
computed by a linear space  machine $T$ with $m$ states, $\ell$ input tapes, which consumes 
$\mathcal{O}(n)$ space and runs in $\mathcal{O}(n^{k-1})$ time on inputs of size $n$. 
There exists (i)~a $\mathsf{MATLANG}$ 
schema $\mathcal{S}=(\mathcal{M},\textsf{size})$ where $\mathcal{M}$ consists matrix 
variables\footnote{We also need a finite number of auxiliary variables, these will be specified 
in the proof.} 
$Q_1,\ldots,Q_m,R_1,\ldots,R_\ell,H_1,\ldots,H_\ell,W_1,\ldots,W_s,H_{W_1},\ldots,H_{W_s},O,H_O, v_1,\ldots,v_{k}$ 
with $\mathsf{size}(V)=\alpha\times 1$ for all $V\in\mathcal{M}$; and (ii)~a $\mathsf{MATLANG}$ 
expression $e_f$ over $\mathcal{S}$ such that for the instance 
$\I=(\mathcal{D},\textsf{mat})$ over $\mathcal{S}$ with $\mathcal{D}(\alpha)=n$ and 
$$\mathsf{mat}(R_i)=\mathsf{vec}(w_i)\in \mathbb{R}^n\text{, for $i\in[\ell]$ and all other matrix variables instantiated with the zero vector in $\mathbb{R}^n$} $$
for words $w_1,\ldots,w_\ell\in\Sigma^n$ and such that $\mathsf{vec}(w_i)$ is the $n\times 1$-vector 
encoding the word $w_i$, we have that  $\mathsf{mat}(O)=\mathsf{vec}(f_n(w_1,\ldots,w_n))\in\mathbb{R}^n$ 
after evaluating $e_f$ on $\I$.
\end{proposition}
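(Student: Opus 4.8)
\textit{Proof plan.} The plan is to encode an entire configuration of $T$ on inputs of length $n$ by the bundle of $n\times 1$ vectors listed in the statement, to express a single transition step of $T$ as a \langfor\ expression built only from the core operators together with the order predicates of Section~\ref{app:order}, and finally to iterate this step sufficiently often with nested initialized for-loops. Concretely, I would encode the control state by indicator vectors $Q_1,\ldots,Q_m$, exactly one of which is nonzero (equal to $\mathbf{e}_1$) and marks the current state; the read-only input tapes by the fixed vectors $R_1,\ldots,R_\ell$ holding $\mathsf{vec}(w_1),\ldots,\mathsf{vec}(w_\ell)$; each head position by a canonical vector ($H_i$ for the $i$-th input tape, $H_O$ for the output tape); and the $\mathcal{O}(n)$-size work tape by a constant number $s$ of length-$n$ blocks $W_1,\ldots,W_s$ with corresponding head markers $H_{W_1},\ldots,H_{W_s}$, only one of which is nonzero at any time. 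The output tape content lives in the single vector $O$, which at the end of the computation should equal $\mathsf{vec}(f_n(w_1,\ldots,w_\ell))$.

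The heart of the construction is a function-free \langfor\ expression realizing one step of $T$. First I would read the symbol scanned on each tape: the symbol under the $i$-th input head is the scalar $H_i^T\cdot R_i$, the symbol under the work head is $\sum_{j=1}^s H_{W_j}^T\cdot W_j$, and similarly for the output head. Since $Q$ and $\Gamma$ are finite, the transition function $\Delta$ consists of a fixed, $n$-independent number of rules; for each rule with left-hand side $(q,a_1,\ldots,a_\ell,b,c)$ I would form a $1\times1$ indicator that evaluates to $\kone$ precisely when the current state is $q$ and the scanned symbols match $a_1,\ldots,a_\ell,b,c$, using the polynomial equality test $xy+(1-x)(1-y)$ on the relevant scalars so that no extra functions are needed. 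Multiplying each rule's updates by its indicator and summing over all rules then applies exactly the unique matching rule. The updates themselves are all expressible: a state change permutes the indicators $Q_i$; writing a symbol $b'$ at the work head replaces $W_j$ by $W_j+(b'-H_{W_j}^T\cdot W_j)\times H_{W_j}$; and a head move is obtained by left-multiplying the head vector by the $\mathsf{Prev}$ or $\mathsf{Next}$ matrices of Section~\ref{app:order} (with $\sqcup$ leaving it unchanged).

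With a single-step expression $e_{\mathsf{step}}$ in hand that maps the encoding of a configuration to the encoding of its successor, I would assemble $e_f$ by initializing the configuration to the starting one (the initial state, every head at its left input cell, empty work and output tapes) and iterating $e_{\mathsf{step}}$ a total of $n^{k}$ times via $k$ nested initialized for-loops over $v_1,\ldots,v_k$, updating the whole bundle of configuration vectors simultaneously through the multi-variable \texttt{for} shorthand of Section~\ref{app:def}. Because $T$ uses $\mathcal{O}(n^{k-1})$ steps and loops forever once it accepts, running $n^k$ iterations (at least as many as its running time, and hard-coding the finitely many inputs below the asymptotic threshold) is safe: the output block $O$ has already stabilized to $\mathsf{vec}(f_n(w_1,\ldots,w_\ell))$, and returning $O$ finishes the construction.

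The main obstacle, and the part requiring the most care, is the faithful simulation of the unbounded-in-length but $\mathcal{O}(n)$-bounded work tape by a constant number of length-$n$ blocks: a head step that runs off the end of block $W_j$ must re-enter block $W_{j+1}$ (and conversely), so the write/move logic must detect $H_{W_j}=\mathbf{e}_n$ (resp. $\mathbf{e}_1$) via $\mathsf{max}$/$\mathsf{min}$ and transfer the active marker to the neighbouring block; the left/right markers $\rhd,\lhd$ and the prohibition on moving past them are encoded by the fact that $\mathsf{Prev}$ and $\mathsf{Next}$ return $\mathbf{0}$ at the boundary positions. Since $s$ is a fixed constant this is a bounded case analysis that can be folded into the finite set of transition terms, but verifying that exactly one block stays active and that boundary crossings compose correctly across $n^k$ steps is the delicate bookkeeping of the proof. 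A secondary point to check is that all of the above stays within the function-free fragment, which holds because every test and update above reduces to sums, products, scalar multiplications, transposes, and the order matrices.
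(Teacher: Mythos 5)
Your overall architecture --- indicator vectors for the states, canonical-vector heads, a constant number $s$ of length-$n$ work-tape blocks, per-rule $0/1$ indicators multiplying guarded updates, the multi-variable for-loop of Section~\ref{app:def} iterated $n^k$ times, and hard-coding of the finitely many inputs below the asymptotic threshold --- is exactly the paper's construction, and most of your bookkeeping (block crossings detected via $\mathsf{min}$/$\mathsf{max}$, the write update $W_j + (b'-H_{W_j}^T\cdot W_j)\times H_{W_j}$, movement by the $\mathsf{Prev}$/$\mathsf{Next}$ matrices) coincides with what the paper does.

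However, your treatment of the tape markers is a genuine gap. You propose to encode ``$\rhd$, $\lhd$ and the prohibition on moving past them'' by letting $\mathsf{Prev}$ and $\mathsf{Next}$ send the head vector to $\mathbf{0}$ at the boundary. But in the machine model of the statement the markers are readable symbols: $\Delta$ is defined on $\Gamma=\Sigma\cup\{\rhd,\lhd\}$, the initial configuration places \emph{every} head on $\rhd$, and a machine must read $\lhd$ to detect the end of its read-only input. So the simulation has to represent ``head on a marker'' as a configuration from which the head can later move back onto the tape. Under your encoding this is impossible: once $H_i=\mathbf{0}$, multiplying by $\mathsf{Next}$ or $\mathsf{Prev}$ yields $\mathbf{0}$ again, so the head position is lost irrecoverably --- the simulation already fails at the initial configuration. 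Moreover, with $H_i=\mathbf{0}$ the scalar $H_i^T\cdot R_i=0$ is indistinguishable from reading the symbol $0$, and on an input tape (which has markers at both ends) the zero vector cannot tell $\rhd$ from $\lhd$. The paper resolves precisely this point with an extra device: a head on $\rhd$ (resp.\ $\lhd$) is stored as $2\,\mathbf{e}_1$ (resp.\ $2\,\mathbf{e}_n$), the symbol tests become $\mathsf{min}(1/2\times H_i)$ and $\mathsf{max}(1/2\times H_i)$, and the move expressions multiply or divide by $2$ when stepping onto or off a marker, so the position always remains recoverable. Your proof needs this device or an equivalent one (e.g.\ enlarging the control information by constantly many indicator variables recording which heads currently sit on which marker, together with explicit re-initialization of the head to $e_{\mathsf{min}}$ or $e_{\mathsf{max}}$ when it leaves a marker); without it the single-step expression is undefined exactly on the configurations in which markers are scanned, which include the very first one.
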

\begin{proof}
	The expression $e_f$ we construct will simulate the TM $T$. To have some more control on the space 
	and time consumption of $T$, let us first assume that $n$ is large enough, say larger than $n\geq N$, 
	such that $T$ runs in $sn$ space and $cn^{k-1}\leq n^k$ time for constants $s$ and $c$. We deal with $n<N$ later on.

\floris{To connect with my earlier comment in B1. I guess we assume that $n$ is also large enough to update $m+\ell+2s+2$independent vectors??}
\thomas{I updated the definition of the operator so there should be no problem now.}
To simulate $T$ we need to encode states, tapes and head positions. The matrix variables in 
$\mathcal{M}$ mentioned in the proposition will take these roles. More specifically, the variables 
$R_1,\ldots,R_\ell$ will hold the input vectors, $W_1,\ldots,W_s$ will hold the contents of the work
tape, where $s$ is the constant mentioned earlier, and $O$ will hold the contents of the output tape. 
The vectors corresponding to the work and output tape are initially set to the zero vector. 
The vector for the input tape $R_i$ is set to $\mathsf{vec}(w_i)$, for $i\in[\ell]$.

 With each tape we associate a matrix variable encoding the position of the head. More specifically, 
 $H_1,\ldots,H_\ell$ correspond to the input tape heads,
$H_{W_1},\ldots, H_{W_s}$ are the heads for the work tape, and $H_O$ is the head of the output tape. 
All these vectors are initialised with the zero vector. Later on, these vectors will be zero except 
for a single position, indicating the positions in the corresponding tapes the heads point to. 
For those positions $j$, $1<j<n$, the head vectors will carry value $1$.  When $j=1$ or $n$ and when 
it concerns positions for the input tape, the head vectors can carry value $1$ or $2$. We need to treat 
these border cases separately
because we only have $n$ positions available to store the input words, whereas the actual input tapes 
consist of $n+2$ symbols because of $\rhd$ and $\lhd$. So when, for example, $H_1$ has a $1$ in its first
entry, we interpret it as the head is pointing to the first symbol of the input word $w_1$. When $H_1$
has a $2$ in its first position, we interpret it as the head pointing to $\rhd$. The end marker $\lhd$ is
dealt with in the same way, by using value $1$ or $2$ in the last position of $H_1$. We use this encoding
for all input tapes, and also for the work tape $W_1$ and output tape $O$ with the exception that no end 
marker $\lhd$ is present.


To encode the states, we use the variables $Q_1,\ldots,Q_m$. We will ensure that when $T$ is in state 
$q_i$ when
 $\mathsf{mat}(Q_i)=[1,0,\ldots,0]^T\in\mathbb{R}^n$, otherwise $\mathsf{mat}(Q_i)$ is the zero 
 vector in $\mathbb{R}^n$.	

Finally, the variables $v_1,\ldots,v_{k}$ represent $k$ canonical vectors  which are used to iterate 
in for-loops. By iterating over then, we can perform $n^{k}$ iterations, 
which suffices for simulating the $\mathcal{O}(n^{k-1})$ steps used by $T$ to reach an accepting configuration. 

With these matrix variables in place, we start by defining $e_f$. It will consists of two subexpressions
$e_f^{\geq N}$, for dealing with $n\geq N$, and $e_f^{<N}$, for dealing with $n<N$. We explain the expression
$e_f^{\geq N}$ first.

In our  expressions we use subexpressions which we defined before in section \ref{app:order}. These subexpression 
require some auxiliary variables, as detailed below. As a consequence, $e_f$ will be an expressions 
defined over an extended schema $\mathcal{S}'$. Hence, the instance $\I$ in the statement of the Proposition 
is  an instance $\I'$ of $\mathcal{S}'$ which
coincides with $\I$ on $\mathcal{S}$ and in which the auxiliary matrix variables are all instantiated with 
zero vectors or matrices, depending on their size.

Now, we specify the finite auxiliary variables involved in the \langfor expression. These arise
when computing the following \langfor expressions defined 

\begin{itemize}
	\item $e_{\mathsf{Prev}}(z,Z,z',Z')$, and expression over auxiliary variables $z$, $z'$, $Z$ and $Z'$ with 
	$\mathsf{size}(z)=\mathsf{size}(z')=\mathsf{size}(Z)=\alpha\times 1$ and 
	$\mathsf{size}(Z')=\alpha\times\alpha$. On input $\I'$ with 
	$\mathsf{mat}(z)=\mathsf{mat}(z')=\mathsf{mat}(Z)$ the zero column vector of dimension $n$, 
	and $\mathsf{mat}(Z')$ the zero $n\times n$ matrix,
	$\sem{e_{\mathsf{Prev}}}{\I'}$ returns the $n\times n$ matrix $\mathsf{Prev}$ such that 
	$$\mathsf{Prev}\cdot \mathbf{e}_i:=\begin{cases} 
	\mathbf{e}_{i-1} & \text{if $i>1$}\\
	\mathbf{0} & \text{if $i=1$}.
	\end{cases}
	$$
	\item $e_{\mathsf{Next}}(z,Z,z',Z')$, and expression over auxiliary variables $z$, $z'$, $Z$ and $Z'$ 
	with $\mathsf{size}(z)=\mathsf{size}(z')=\mathsf{size}(Z)=\alpha\times 1$ and 
	$\mathsf{size}(Z')=\alpha\times\alpha$. On input $\I'$ with 
	$\mathsf{mat}(z)=\mathsf{mat}(z')=\mathsf{mat}(Z)$ the zero column 
	vector of dimension $n$, and $\mathsf{mat}(Z')$ the zero $n\times n$ matrix,
	$\sem{e_{\mathsf{Next}}}{\I'}$ returns the $n\times n$ matrix $\mathsf{Next}$ such that 
	$$\mathsf{Next}\cdot \mathbf{e}_i:=\begin{cases} 
	\mathbf{e}_{i+1} & \text{if $i<n$}\\
	\mathbf{0} & \text{if $i=n$}.
	\end{cases}
	$$
	\item $\textsf{min}(v,z,Z,z',Z)$ with auxiliary variables $z$, $z'$, $Z$ and $Z'$ as before, 
	and $v$ is one of the (vector) variables in $\mathcal{M}$. For an $n\times 1$ vector $\mathbf{v}$, 
	on input $\I'[v\gets \mathbf{v}]$	$$\sem{\mathsf{min}}{\I'[v\gets\mathbf{v}]}:=\begin{cases} 1 & \text{if $\mathbf{v}=\mathbf{e}_1$}\\
		0 & \text{otherwise}.
		\end{cases}$$

	\item $\textsf{max}(v,z,Z,z',Z)$ with auxiliary variables $z$, $z'$, $Z$ and $Z'$ as before, and 
	and $v$ is one of the (vector) variables in $\mathcal{M}$. For an $n\times 1$ vector $\mathbf{v}$, 
	on input $\I'[v\gets \mathbf{v}]$
	
	$$\sem{\mathsf{max}}{\I'[v\gets\mathbf{v}]}:=\begin{cases} 1 & \text{if $\mathbf{v}=\mathbf{e}_{n}$}\\
		0 & \text{otherwise}.
		\end{cases}$$
	\item $e_{\mathsf{min}}(z,Z,z',Z',z'',Z'')$, an expressions with
	auxiliary variables $z$, $z'$, $z''$, $Z$, $Z'$ and $Z''$ with 
	$\mathsf{size}(z)=\mathsf{size}(z')=\mathsf{size}(z'')=\mathsf{size}(Z)=\mathsf{size}(Z'')=\alpha\times 1$ 
	and $\mathsf{size}(Z')=\alpha\times\alpha$. On input $\I'$ with 
	matrix variables instantiated with zero vectors (or matrix for $Z'$),
 	 $\sem{e_{\mathsf{min}}}{\I'}=\mathbf{e}_1$. 
	\item $e_{\mathsf{max}}(z,Z,z',Z',z'',Z'')$, an expressions with
	auxiliary variables $z$, $z'$, $z''$, $Z$, $Z'$ and $Z''$ with 
	$\mathsf{size}(z)=\mathsf{size}(z')=\mathsf{size}(z'')=\mathsf{size}(Z)=\mathsf{size}(Z'')=\alpha\times 1$ 
	and $\mathsf{size}(Z')=\alpha\times\alpha$. On input $\I'$ with 
	matrix variables instantiated with zero vectors (or matrix for $Z'$),
 	 $\sem{e_{\mathsf{max}}}{\I'}=\mathbf{e}_n$. 	 		
\end{itemize}
We thus see that we only need $z,z',z'',Z,Z',Z''$ as auxiliary variables and these can be re-used 
whenever $e_f$ calls these functions. From now one, we omit the auxiliary variables from the description 
of $e_f$.

Let us first define $e_f^{\geq N}$. Since we want to simulate $T$ we need to be able to check which 
transitions of $T$ can be applied based on a current configuration. More precisely,
suppose that we want to check whether $\delta(q_i,(a_1,\ldots,a_{\ell},b,c))$ is applicable, then we 
need to check whether $T$ is in state $q_i$, we can do this by checking 
$\mathsf{min}(Q_i)$, and whether the heads on the tapes read symbols $a_1,\ldots,a_{\ell},b,c$. We 
check the latter by the following expressions.
For the input tape $R_i$ we define
$$
\mathsf{test\_inp}^i_b:=\begin{cases}
(1-\mathsf{min}(1/2\cdot H_i))\cdot(1-\mathsf{max}(1/2\cdot H_i))\cdot(1- R_i^T\cdot H_i) & \text{if $b=0$}\\
(1-\mathsf{min}(1/2\cdot H_i))\cdot(1-\mathsf{max}(1/2\cdot H_i))\cdot(R_i^T\cdot H_i) & \text{if $b=1$}\\
\mathsf{min}(1/2\cdot H_i) & \text{if $b=\rhd$}\\
\mathsf{max}(1/2\cdot H_i) & \text{if $b=\lhd$},\\
\end{cases}
$$
which returns $1$ if and only if either $b\in\{0,1\}$ is the value in $\mathsf{mat}(R_i)$ at the 
position encoded by $\mathsf{mat}(H_i)$, or when $b=\rhd$ and $\mathsf{mat}(H_i)$ is the vector 
$(2,0,\ldots,0)\in\mathbb{R}^n$, or when $b=\lhd$ and $\mathsf{mat}(H_i)$ is the vector 
$(0,0,\ldots,2)\in\mathbb{R}^n$. Similarly, for the output tape we define
$$
\mathsf{test\_out}_b:=\begin{cases}
(1-\mathsf{min}(1/2\cdot H_O))\cdot(1- O^T\cdot H_O) & \text{if $b=0$}\\
(1-\mathsf{min}(1/2\cdot H_O))\cdot(O^T\cdot H_O) & \text{if $b=1$}\\
\mathsf{min}(1/2\cdot H_O) & \text{if $b=\rhd$},\\
\end{cases}
$$
and for the work tapes $W_1,\ldots,W_s$ we define
$$
\mathsf{test\_work}^i_b:=\begin{cases}
(1-\mathsf{min}(1/2\cdot H_{W_i}))\cdot(1- W_i^T\cdot H_{W_i})) & \text{if $b=0$}\\
(1-\mathsf{min}(1/2\cdot H_{W_i}))\cdot (W_i^T\cdot H_{W_i}) & \text{if $b=1$}\\
\mathsf{min}(1/2\cdot H_{W_i}) & \text{if $b=\rhd$ and $i=1$}.\\
\end{cases}
$$
We then combine all these expressions into a single expression for $q_i\in Q$, 
$a_1,\ldots,a_\ell,b,c\in\Gamma$:
$$
\mathsf{isconf}_{q_i,a_1,\ldots,a_\ell,b,c}:=
\mathsf{min}(Q_i)\cdot \left(\prod_{j=1}^{\ell} \mathsf{test\_inp}_{a_j}^j\right)
\cdot\left(\sum_{j=1}^s \mathsf{test\_work}_b^j\right)\cdot \mathsf{test\_out}_{c}.
$$
This expression will return $1$ if and only if the vectors representing the tapes, 
head positions and states are such that $\mathsf{Q_i}$ is the first canonical vector 
(and thus $T$ is in state $q_i$), the heads point to entries in the tape vectors storing 
the symbols $a_1,\ldots,a_{\ell}, b,c$ or they point to the first (or last for input tapes) 
positions but have value $2$ (when the symbols are $\rhd$ or $\lhd$). 

To ensure that at the beginning of the simulation of $T$ by $e_f^{\geq N}$ we correctly encode 
that we are in the initial configuration, we thus need to initialise all vectors 
$\mathsf{mat}(H_1),\mathsf{mat}(H_2),\ldots, \mathsf{mat}(H_\ell), \mathsf{mat}(H_{W_1}),\mathsf{mat}(H_O)$ 
with the vector $(2,0,0,\ldots,0)\in\mathbb{R}$ since all heads read the symbol $\rhd$. Similarly, 
we have to initialise $\mathsf{Q_1}$ with the first canonical vector since $T$ is in state $q_0$.

We furthermore need to be able to correctly adjust head positions. We do this by means of the predecessor 
and successor expressions described above. 
A consequence of our encoding is that we need to treat the border cases (corresponding to $\rhd$ and 
$\lhd$) differently. More specifically, for the input tapes $R_i$ and heads $H_i$ we define 
$$
\mathsf{move\_inp}^i_{\mathsf{d}}:=
\begin{cases}
2\times \mathsf{min}(H_i)\times H_i + 1/2\times\mathsf{max}(1/2\times H_i)\times H_i  + (1-\mathsf{min}(H_i))(1-\mathsf{max}(1/2 \times H_i))\times e_{\mathsf{Prev}}\cdot H_i  
& \text{if $\mathsf{d}=\leftarrow$}\\
2\times \mathsf{max}(H_i)\times H_i + 1/2\times\mathsf{min}(1/2\times H_i)\times H_i  + (1-\mathsf{min}(1/2\times H_i))(1-\mathsf{max}(H_i))\times e_{\mathsf{Next}}\cdot H_i  
 & \text{if $\mathsf{d}=\rightarrow$}\\
H_i & \text{if $\mathsf{d}=\sqcup$}. 
\end{cases}
$$
In other words, we shift to the previous (or next) canonical vector when $\mathsf{d}$ is $\leftarrow$ 
or $\rightarrow$, respectively, unless we need to move to or from the position that will hold $\rhd$ 
or $\lhd$. In those case we readjust $\mathsf{mat}(H_i)$ (which will either $(1,0,\ldots,0)$, $(2,0,\ldots,0)$, 
$(0,\ldots,0,1)$ or $(0,\ldots,0,2)$) by either dividing or multiplying with $2$. In this way we can 
correctly infer whether or not the head points to the begin and end markers. For the output tape we 
proceed in a similar way, but only taking into account the begin marker and recall that we do not have 
moves to the left:
$$
\mathsf{move\_outp}_{\mathsf{d}}:=
\begin{cases}
1/2\times\mathsf{min}(1/2\times H_O)\times H_O  + (1-\mathsf{min}(1/2\times H_O))\times e_{\mathsf{Next}}\times H_O  
 & \text{if $\mathsf{d}=\rightarrow$}\\
H_O & \text{if $\mathsf{d}=\sqcup$}. 
\end{cases}
$$
Since we represent the work tape by $s$ vectors $W_1,\ldots,W_s$ we need to ensure that only one 
of the head vectors $H_{W_i}$ has a non-zero value and that by moving left or right, we need to 
appropriately update the right head vector. We do this as follows. We first consider the work tapes 
$W_i$ for $i\neq 1,s$ and define
$$
\mathsf{move\_work}^i_{\mathsf{d}}:=
\begin{cases}
	-\mathsf{min}(H_{W_i})\times H_{W_i} + (1-\mathsf{min}(H_{W_i}))\times e_{\mathsf{Prev}}\cdot H_{W_i} + \mathsf{min}(H_{W_{i+1}})\times e_{\mathsf{max}} & \text{if $\mathsf{d}=\leftarrow$}\\
		-\mathsf{max}(H_{W_i})\times H_{W_i} + (1-\mathsf{max}(H_{W_i}))\times e_{\mathsf{Next}}\cdot H_{W_i} + \mathsf{max}(H_{W_{i-1}})\times e_{\mathsf{min}} & \text{if $\mathsf{d}=\rightarrow$}\\
	H_{W_i} & \text{if $\mathsf{d}=\sqcup$}. 	
\end{cases}
$$
In other words, we set the $H_{W_i}$ to zero when a move brings us to either $W_{i-1}$ or $W_{i+1}$, we
move the successor or predecessor when staying within $W_i$, or initialise $H_{W_i}$ with the first or 
last canonical vector when moving from $W_{i-1}$ to $W_i$ (right move) or from $W_{i+1}$ to $W_i$ (left move).
For $i=s$ we can ignore the parts in the previous expression that involve $W_{s+1}$ (which does not exist):
$$
\mathsf{move\_work}^s_{\mathsf{d}}:=
\begin{cases}
	-\mathsf{min}(H_{W_s})\times H_{W_i} + (1-\mathsf{min}(H_{W_s}))\times e_{\mathsf{Prev}}\cdot H_{W_s}  & \text{if $\mathsf{d}=\leftarrow$}\\
		-\mathsf{max}(H_{W_s}) \times H_{W_s} + (1-\mathsf{max}(H_{W_s}))\times e_{\mathsf{Next}}\cdot H_{W_s} + \mathsf{max}(H_{W_{s-1}})\times e_{\mathsf{min}} & \text{if $\mathsf{d}=\rightarrow$}\\
	H_{W_s} & \text{if $\mathsf{d}=\sqcup$}. 	
\end{cases}
$$
For $i=1$, we can ignore the part involving $W_{0}$ (which does not exist) but have to take $\rhd$ 
into account:
$$
\mathsf{move\_work}^1_{\mathsf{d}}:=
\begin{cases}
	2\times \mathsf{min}(H_{W_1})\times H_{W_i} + (1-\mathsf{min}(H_{W_1}))\times e_{\mathsf{Prev}}\cdot H_{W_1} + \mathsf{min}(H_{W_{2}})\times e_{\mathsf{max}} & \text{if $\mathsf{d}=\leftarrow$}\\
		1/2\times\mathsf{min}(1/2\times H_{W_1})\times H_{W_1} + (1-\mathsf{max}(1/2\times H_{W_1}))\times e_{\mathsf{Next}}\cdot H_{W_1}  & \text{if $\mathsf{d}=\rightarrow$}\\
	H_{W_1} & \text{if $\mathsf{d}=\sqcup$}. 	
\end{cases}
$$
A final ingredient for defining $e_f^{\geq N}$ are expressions which update the work and output tape.
To define these expression, we need the position and symbol to put on the tape. For the output tape we define
$$
\mathsf{write\_outp}_b:=\begin{cases}
\mathsf{min}(1/2\times H_O)\times O & \text{if $b=\rhd$}\\
(1-\mathsf{min}(1/2\times H_O))\times\left((1-O^T\cdot H_O)\times O + (O^T\cdot H_O)\times (O-H_O)\right) &\text{if $b=0$}\\
(1-\mathsf{min}(1/2\times H_O))\times\left((1-O^T\cdot H_O)\times (O+H_O) + (O^T\cdot H_O)\times O\right) &\text{if $b=1$}\\
\end{cases}
$$
and similarly for the work tapes $i\neq 1$:
$$
\mathsf{write\_work}_b^i:=\begin{cases}
W_i & \text{if $b=\rhd$}\\
(1-W_i^T\cdot H_{W_i})\times W_i + (W_i^T\cdot H_{W_i})\times (W_i-H_{W_i}) &\text{if $b=0$}\\
(1-W_i^T\cdot H_{W_i})\times (W_i+H_{W_i}) + (W_i^T\cdot H_{W_i})\times W_i &\text{if $b=1$},
\end{cases}
$$
and for  $W_1$ we have to take care again of the begin marker:
$$
\mathsf{write\_work}_b^1:=\begin{cases}
\mathsf{min}(1/2\times H_{W_1})\times W_1 & \text{if $b=\rhd$}\\
(1-\mathsf{min}(1/2\times H_{W_1})\times\left((1-W_1^T\cdot H_{W_1})\times W_1 + (W_1^T\cdot H_{W_1})\times (W_1-H_{W_1})\right) &\text{if $b=0$}\\
(1-\mathsf{min}(1/2\times H_{W_1})\times\left((1-W_1^T\cdot H_{W_1})\times (W_1+H_{W_1}) + (W_1^T\cdot H_{W_1})\times W_1\right) &\text{if $b=1$}.
\end{cases}
$$

%
%

We are now finally ready to define $e_f^{\geq N}$ :
\begin{multline*}
e_f^{\geq N}:= \mathsf{for\,} v_1,\ldots,v_{k},Q_1,\ldots,Q_m,H_1,\ldots,H_\ell,W_1,\ldots,W_s, H_{W_1},\ldots,H_{W_s},O,H_O . \\
(e_{Q_1},\ldots,e_{Q_m},e_{H_1},\ldots,e_{H_\ell},e_{W_1},\ldots,e_{W_s},e_{H_{W_1}},\ldots,e_{H_{W_s}},e_{O}, e_{H_O}).
\end{multline*}
We refer to section \ref{app:def} for the definition of this form of the for-loop. The expressions used are (we use $\star$ below to mark irrelevant information in the transitions):
 \allowdisplaybreaks
\begin{align*}
	e_{Q_1}&:=\left(\prod_{j=1}^{k} \textsf{min}(v_i)\right)\times e_{\mathsf{min}}
	+ \sum_{\substack{(q_i,a_1,\ldots,a_\ell,b,c)\\
	\Delta(q_i,a_1,\ldots,a_\ell,b,c)=(q_1,\star)}} \!\!\!\!\!\!\!\!\! \mathsf{isconf}_{q_i,a_1,\ldots,a_\ell,b,c}\times e_{\mathsf{min}} \\
	e_{Q_j}&:=\sum_{\substack{(q_i,a_1,\ldots,a_\ell,b,c)\\
	\Delta(q_i,a_1,\ldots,a_\ell,b,c)=(q_j,\star)}} \!\!\!\!\!\!\!\!\! \mathsf{isconf}_{q_i,a_1,\ldots,a_\ell,b,c}\times e_{\mathsf{min}}
	 \quad \text{for $j\neq 1$}\\
	e_{H_i}&:=2\left(\prod_{j=1}^{k} \textsf{min}(v_i)\right)\times e_{\mathsf{min}}
	+\sum_{\substack{(q,a_1,\ldots,a_\ell,b,d)\\
	\Delta(q,a_1,\ldots,a_\ell,b,c)=(\star,\mathsf{d_i},\star)}}\!\!\!\!\!\!\!\!\! \mathsf{isconf}_{q,a_1,\ldots,a_\ell,b,c}\times\mathsf{move\_inp}^i_{\mathsf{d}_i}\\
	e_{H_{W_i}}&:=2\left(\prod_{j=1}^{k} \textsf{min}(v_i)\right)\times e_{\mathsf{min}}
	+\sum_{\substack{(q,a_1,\ldots,a_\ell,b,d)\\
	\Delta(q,a_1,\ldots,a_\ell,b,c)=(\star,\mathsf{d_{\ell+1}},\star)}}\!\!\!\!\!\!\!\!\! \mathsf{isconf}_{q,a_1,\ldots,a_\ell,b,c}\times\mathsf{move\_work}_{\mathsf{d}_{\ell+1}}^i\\
	e_{H_O}&:=2\left(\prod_{j=1}^{k} \textsf{min}(v_i)\right)\times e_{\mathsf{min}}
	+\sum_{\substack{(q,a_1,\ldots,a_\ell,b,d)\\
	\Delta(q,a_1,\ldots,a_\ell,b,c)=(\star,\mathsf{d}_{\ell+2})}}\!\!\!\!\!\!\!\!\! \mathsf{isconf}_{q,a_1,\ldots,a_\ell,b,c}\times\mathsf{move\_outp}_{\mathsf{d}_{\ell+2}}\\
	e_{W_i}&:=\sum_{\substack{(q,a_1,\ldots,a_\ell,b,d)\\
	\Delta(q,a_1,\ldots,a_\ell,b,c)=(\star,b',c',\star)}}\!\!\!\!\!\!\!\!\! \mathsf{isconf}_{q,a_1,\ldots,a_\ell,b,c}\times\mathsf{write\_work}_{b'}^i\\
	e_{O}&:=\sum_{\substack{(q,a_1,\ldots,a_\ell,b,d)\\
	\Delta(q,a_1,\ldots,a_\ell,b,c)=(\star,b',c',\star)}}\!\!\!\!\!\!\!\!\! \mathsf{isconf}_{q,a_1,\ldots,a_\ell,b,c}\times\mathsf{write\_outp}_{c'}.
\end{align*}

The correctness of $e_f^{\geq N}$ should be clear from the construction (one can formally verify this by
induction on the number of iterations). We next explain how the border cases $n<N$ can be dealt with.
For each $n<N$ and every possible input words
$w_1,\ldots,w_\ell$ of size $n$, we define a \langfor expression which checks whether
$\mathsf{mat}(R_i)=\mathsf{vec}(w_i)$ for each $i\in[\ell]$. This can be easily done since $n$ 
can be regarded as a constant. For example, to check whether $\mathsf{mat}(R_i)=[0,1,1]^T$ we simply write
$$
(1- R_i^T\cdot e_{\mathsf{min}})\times (R_i^T\cdot e_{\mathsf{Next}}\cdot e_{\mathsf{min}})\times (1- R_i^T\cdot e_{\mathsf{Next}}\cdot e_{\mathsf{Next}}\cdot e_{\mathsf{min}})\times (1- e_{\ones}(R_i)^T\cdot e_{\mathsf{Next}}\cdot e_{\mathsf{Next}}\cdot e_{\mathsf{Next}}\cdot e_{\mathsf{min}})
$$
which will evaluate to $1$ if and only if $\mathsf{mat}(R_i)=[0,1,1]^T$. We note that the final factor is in 
place to check that the dimension of $\mathsf{mat}(R_i)$ is three.
  We denote by
$e_{n,w}^i$ the expression which evaluates to $1$ if and only if $\mathsf{mat}(R_i)=\mathsf{vec}(w)$
for $|w|=n$.
We can similarly
write any word $w$ of fixed size in the matrix variable $O$. For example, suppose that $w=101$
then we write 
$$
O+ e_{\mathsf{min}}+  e_{\mathsf{Next}}\cdot e_{\mathsf{Next}}\cdot e_{\mathsf{min}}.
$$
We write $e_{n,w}$ be the expression which write $w$ of size $|w|=n$ in matrix variable $O$.
Then, the expressions
$$
e_{n,w_1,\ldots,w_n,w}:=e_{n,w_1}^1\cdot\cdots\cdot e_{nw_{\ell}}^\ell\cdot e_{n,w}
$$
will write $w$ in $O$ if and only if $\mathsf{mat}(R_i)=\mathsf{vec}(w_i)$ for $i\in[\ell]$.
We now simply take the disjunction over all words 
$w_1,\ldots,w_\ell\in\Sigma^n$ and $w=f_n(w_1,\ldots,w_\ell)\in\Sigma^n$:
$$
e_n:=\sum_{w_1,\ldots,w_\ell\in\Sigma^n} e_{n,w_1,\ldots,w_\ell,f_n(w_1,\ldots,w_\ell)},
$$
which correctly evaluates $f_n$. We next take a further disjunction by letting ranging from 
$n=0,\ldots, N-1$:
$$
e_f^{<N}:=\sum_{n=0}^{N-1} e_n
$$
Since every possible input is covered and only a unique expression 
$ e_{n,w_1,\ldots,w_\ell,f_n(w_1,\ldots,w_\ell)}$ will be triggered $e_f^{<N}$ will correctly
evaluate $f$ on inputs smaller than $N$.

Our final expression $e_f$ is now given by
$$
e_f:=e_f^{<N} + \mathsf{dim\_is\_greater\_than_N}\times e_f^{\geq N}
$$
where $\mathsf{dim\_is\_greater\_than_N}$ is the expression
$e_{\ones}(R_i)^T\cdot\underbrace{e_{\mathsf{Next}}\cdot\cdots\cdot e_{\mathsf{Next}}}_{\text{$N$ times}}$ 
which will evaluate to $1$ if an only if the input dimension is larger or equal than $N$.
\end{proof}

\subsection{Circuit evaluation}
\newtheorem*{CIRCUITTOML}{Theorem~\ref{th-circuits-ml}}

We prove theorem \ref{th-circuits-ml}:
\begin{CIRCUITTOML}
  For any uniform arithmetic circuit family $\{\Phi_n\mid n=1,2,\ldots\}$ of logarithmic depth there is a \langfor schema $\Sch$ and an expression $e_\Phi$ using a matrix variable $v$, with $\ttype(v)=(\alpha,1)$ and $\ttype(e) = (1,1)$, such that for any input $a_1,\ldots ,a_n$ to the circuit $\Phi_n$:
  \begin{itemize}
  \item If $\I = (\dom,\conc)$ is a \lang\ instance such that $\dom(\alpha) = n$ and $\conc(v) = [a_1 \ldots a_n]^T$
  \item Then $\sem{e}{\I} = \Phi_n(a_1,\ldots ,a_n)$.
  \end{itemize}
\end{CIRCUITTOML}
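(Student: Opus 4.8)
The strategy is to realize, inside \langfor, the standard depth-first evaluation of $\Phi_n$ that maintains two stacks: a \emph{gates-stack} holding the gates on the current root-to-frontier path, and a \emph{values-stack} holding the partial value accumulated for each of those gates. I would reduce the whole construction to two ingredients already available: (i) Proposition~\ref{prop:transducer}, which turns any \logspace{} (hence linear-space, polynomial-time, linear-output) Turing machine into a \langfor{} expression acting on vector encodings; and (ii) the uniformity of the family, i.e.\ the \logspace{} machine $M_\Phi$ that prints $\Phi_n$ on input $1^n$. From $M_\Phi$ I would build, by the usual composition of \logspace{} transducers, auxiliary machines answering the structural queries needed for the traversal: whether a gate is a sum, product, or input gate; the constant or input-variable attached to a leaf; whether $g'$ is a child of $g$; the first child of $g$; the next sibling of a child; and whether a child is the last one. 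Since gate identifiers fit in $O(\log n)\le n-3$ bits and the circuit has polynomial size, each such machine runs in \logspace{} and outputs a gate identifier of length at most $n$; by Proposition~\ref{prop:transducer} each is therefore computable by a \langfor{} expression whose matrix variables encode the (padded) gate identifiers as $0/1$ vectors of dimension $n$.

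\textbf{Encoding a configuration.} I would store an entire configuration of the two-stack algorithm in a single $n\times n$ matrix $M$. The gates-stack occupies columns $1,\dots,n-3$, the $j$-th stack entry being written in row $j$ as the binary encoding of a gate identifier. Column $n-2$ holds the values-stack (the $j$-th accumulated value in row $j$), while columns $n-1$ and $n$ hold the heights of the gates-stack and of the values-stack, each encoded as a canonical vector $b_i^n$. The crucial point is the logarithmic-depth hypothesis: a root-to-frontier path has length $O(\log n)\le n$ for all large $n$, so neither stack ever exceeds $n$ entries and the whole configuration fits in $M$. Reading the top of a stack is then row selection against the height vector; incrementing or decrementing a height uses the $\mathsf{Next}$/$\mathsf{Prev}$ matrices together with the $\mathsf{succ}$, $\mathsf{min}$, $\mathsf{max}$ predicates of Section~\ref{sec:formatlang}; and pushing or popping is a rank-one update placing a row at the position named by (height $\pm 1$).

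\textbf{The update step and the main loop.} The heart of the expression is a single \langfor{} subexpression performing one step of the traversal. Comparing the two height vectors (via $\mathsf{succ}$) tells whether the top gate is \emph{fresh}: if the gates-stack is exactly one taller than the values-stack, I would initialize the top gate's value ($0$ for a sum gate, $1$ for a product gate, or the input/constant value for a leaf, read off through the query expressions) and push it onto the values-stack, then, if the gate is internal, push its first child onto the gates-stack. Otherwise the children of the top gate are being consumed: I would locate the child currently on top and either replace it by its next sibling, or, when it was the last child, pop it together with its value and fold that value into the parent's accumulator using $+$ for a sum parent and $\times$ for a product parent. Since the traversal takes at most $|\Phi_n|=n^k$ steps, I would iterate this update $n^k$ times by nesting $k$ for-loops over canonical vectors of dimension $n$, starting from the configuration whose gates-stack contains only the output gate. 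After the loop, the single entry remaining on the values-stack, namely cell $(1,n-2)$, equals $\Phi_n(a_1,\dots,a_n)$, which I extract as a $1\times 1$ result using the order vectors $e_{\mathsf{min}}$ and $e_{\mathsf{max}}$.

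\textbf{Main obstacle.} The delicate part is not any individual \langfor{} gadget but the bookkeeping of the update step: making the fresh/consuming case analysis, the child iteration, and the value aggregation all simultaneously correct and all expressible by the available arithmetic and order primitives, while keeping every intermediate object inside the $n\times n$ matrix. As in the proof of Proposition~\ref{prop:transducer}, the finitely many inputs below the threshold where $\mathsf{depth}(\Phi_n)\le c\log n$ (and hence where the $\le n$ stack bound is guaranteed) must be handled separately; for each such $n$ I would hard-code into $e_\Phi$ an expression reading $\conc(v)$ and returning the constant circuit value, exactly as the $e_f^{<N}$ term is assembled in Proposition~\ref{prop:transducer}. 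Correctness of the whole construction then follows by induction on the number of executed steps, matching the matrix configuration at each stage to the state of the two-stack algorithm.
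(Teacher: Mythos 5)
Your proposal is correct and follows essentially the same route as the paper's own proof: depth-first evaluation with a gates-stack and values-stack encoded in an $n\times n$ matrix (gate identifiers in columns $1,\dots,n-3$, values in column $n-2$, stack heights as canonical vectors in the last two columns), structural queries about the circuit answered by \logspace{} machines derived from the uniformity machine and compiled into \langfor{} via Proposition~\ref{prop:transducer}, the fresh-versus-aggregate case analysis driven by comparing the two height vectors, $n^k$ iterations realized by $k$ nested for-loops, and hard-coded handling of the finitely many small dimensions. No gaps to report.
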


\begin{proof}

Let $\Phi_n$ be a circuit with $n$ input gates and such that it can be computed by a $L-$uniform arithmetic circuit of log-depth. Each gate of the circuit that encodes $f$ has an $\texttt{id}\in\lbrace 0,1 \rbrace^n$. From now on, when we write $g$ for a gate of the circuit, we mean the $\texttt{id}$ encoding $g$.
Let $n^k$ be a polynomial such that the number of wires $W(n)\leq n^k$ for $n$ big enough. Further, we assume that $2W(n)\leq n^k$. We need this because the for-matlang simulation of the circuit is in a depth first search way, so $2W(n)$ wires will be traversed.
Then we have that:
\begin{itemize}
	\item the number of gates is bounded by $n^k$.
	\item we need at most $k\log (n)$ bits to store the $id$ of a gate.
	\item the depth of the circuit is at most $k'\log (n)$ for some $k'$.
\end{itemize}

So, let $n_0$ and $k$ such that $\forall n\geq n_0:$

\begin{align*}[right=\empheqrbrace (\star)]
    2W(n)&\leq n^k \\
	k \ceil{\log (n)} &\leq n-3 \\
	k' \ceil{\log(n)} &\leq n
\end{align*}

We know $n_0$ and $k$ exist. Let $n\geq n_0$. Towards the end, we will deal with the case when $n<n_0$.

Let $g$ be a gate. The children of $g$ are denoted by $g_1,\ldots, g_l$.
\begin{center}
\begin{tikzpicture}[level distance=1.5cm,
  level 1/.style={sibling distance=1.5cm},
  every node/.style = {
  	shape=circle,
    draw,
    align=center,
    top color=white,
    bottom color=white
    }]
  \node {\( g \)}
    child {node { \( g_1 \) }}
    child {node { \( \cdots \) }}
    child {node { \( g_l \) }};
\end{tikzpicture}
\end{center}

For example, a circuit that encodes the function $f(a_1,a_2,a_3,a_4)=a_1a_2 +a_3a_4$ is 

\begin{center}
\begin{tikzpicture}[level distance=1.5cm,
  level 1/.style={sibling distance=3cm},
  level 2/.style={sibling distance=1.5cm},
  every node/.style = {
  	shape=circle,
    draw,
    align=center,
    top color=white,
    bottom color=white
    }
  ]
  \node { \( + \) }
    child { node { \( \times \) }
      child { node { \( a_1 \) } }
      child {node { \( a_2 \) } }
    }
    child { node { \( \times \) }
      child { node { \( a_3 \) } }
      child { node { \( a_4 \) } }
    };
\end{tikzpicture}
\end{center}

We can simulate the polynomial $x^2+xy$ by doing $f(A)$ where $A=[x \hspace{1ex} x \hspace{1ex} x \hspace{1ex} y]^T$. The main idea is to traverse the circuit top down in a depth first search way and store visited gates in a stack and its corresponding current values in another stack, and aggregate in the iterations according to the gate type.

For a stack $S$, the operations are standard:

\begin{itemize}
	\item $\push{S}{s}$: pushes $s$ into $S$.
	\item $\pop{S}$: pops the top element.
	\item $\getsize{S}$: the length of the stack.
	\item $\gettop{S}$: the top element in the stack.
\end{itemize}

For the pseudo-code, $\cG$ and $\cV$ denote stacks of gates and values, respectively. The property that holds during the simulation is that the value in $\cV[i]$ is the value that $\cG[i]$ currently outputs. The algorithm ends with $\cG=\left[ g_{\texttt{root}}\right]$ and $\cV=\left[ v_{\texttt{root}}\right]$ after traversing the circuit, and returns $v_{\texttt{root}}$

During the evaluation algorithm there will be two possible configurations of $\cG$ and $\cV$.

\begin{enumerate}
	\item $\getsize{\cG} = \getsize{\cV} + 1$: this means that $\gettop{\cG}$ is a gate that we visit for the first time and we need to initialize its value.
	
	\item $\getsize{\cG} = \getsize{\cV}$: here $\gettop{\cV}$ is the value of evaluating the circuit in gate $\gettop{\cG}$. Therefore, we need to aggregate the value $\gettop{\cV}$ to the parent gate of $g$.
\end{enumerate}

We assume the circuit has input gates, $+, \times$-gates and allow constant $1$-gate.

The idea is to traverse the circuit top down in a depth first search way. For example, in the circuit $f(a_1,a_2,a_3,a_4)=a_1a_2 +a_3a_4$ above, we would initialize the output gate value as $0$ because it is a $+$ gate, so $\cG=\lbrace +\rbrace$, $\cV=\lbrace 0\rbrace$. Then stack the left $\times$ gate to $\cG$, stack its initial value (i.e. $1$) to $\cV$. Now stack $a_1$ to $\cG$ and its value (i.e. $a_1$) to $\cV$. Since we are on an input gate we pop the gate and value pair off of $\cG$ and $\cV$ respectively, aggregate $a_1$ to $\gettop{\cV}$ and continue by stacking the $a_2$ gate to $\cG$. We pop $a_2$ off of $\cV$ (and its gate off of $\cG$) and aggregate its value to $\gettop{\cV}$. We pop and aggregate the value of the left $\times$ gate to $\gettop{\cV}$ (the root value). Then continue with the right $\times$ gate branch similarly.

For the pseudo-code, we supply ourselves with the following functions:

\begin{itemize}
	\item[--] $\isplus{g}$: true if and only if $g$ is a $+$-gate.
	\item[--] $\isprod{g}$: true if and only if $g$ is a $\times$-gate.
	\item[--] $\isone{g}$: true if and only if $g$ is a $1$-gate.
	\item[--] $\isinput{g}$: true if and only if $g$ is an input gate.
	\item[--] $\getfirst{g}$: outputs the first child of $g$.
	\item[--] $\getinput{g}$: outputs $A[i]$ when $g$ is the $i$-th input.
	\item[--] $\isnotlast{g_1}{g_2}$: true if and only if $g_2$ is not the last child gate of $g_1$.
	\item[--] $\nextgate{g_1}{g_2}$: outputs the next child gate of $g_1$ after $g_2$.
	\item[--] $\getroot$: outputs the root gate of the circuit.
\end{itemize}

The corresponding $\lbrace 0,1 \rbrace^n\rightarrow\lbrace 0,1 \rbrace^n$ functions are:

\begin{itemize}
	\item[--] $\isplus{g}$: $1$ if and only if $g$ is a $+$-gate.
	\item[--] $\isprod{g}$: $1$ if and only if $g$ is a $\times$-gate.
	\item[--] $\isone{g}$: $1$ if and only if $g$ is a $1$-gate.
	\item[--] $\isinput{g}$: $1$ if and only if $g$ is an input gate.
	\item[--] $\getfirst{g}$: outputs the $\texttt{id}$ of the first child of $g$.
	\item[--] $\getinput{g}$: outputs canonical vector $b_i$, where the $i$-th input gate of $\Phi_n$ is encoded by $g$.
	\item[--] $\isnotlast{g_1}{g_2}$: $1$ if and only if $g_2$ is not the last child gate of $g_1$.
	\item[--] $\nextgate{g_1}{g_2}$: outputs the $\texttt{id}$ of the next child gate of $g_1$ after $g_2$.
	\item[--] $\getroot$: outputs the $\texttt{id}$ of the root gate of the circuit.
\end{itemize}

The previous functions are all definable by an $L$-transducer and can be defined from the $L$-transducer of $f$. Then, by proposition \ref{prop:transducer}, for each of these functions there is a \langfor expression that simulates them.

Now, we give the pseudo-code of the top-down evaluation. We define the functions $Initialize$ (algorithm \ref{alg:init_code}), $Aggregate$ (algorithm \ref{alg:agg_code}) and $Evaluate$ (algorithm \ref{alg:eval_code}). The main algorithm is $Evaluate$.

\begin{algorithm}
\caption{Initialize (pseudo-code)}\label{alg:init_code}
\begin{algorithmic}[1]
\Function{Initialize}{$\cG, \cV, A$}\Comment{The stacks and input. Here, $\getsize{\cG} =  \getsize{\cV} + 1$}
	\If{$\isplus{\gettop{\cG}}$}
		\State $\push{\cV}{0}$
		\State $\push{\cG}{\getfirst{\gettop{\cG}}}$
	\ElsIf{$\isprod{\gettop{\cG}}$}
		\State $\push{\cV}{1}$
		\State $\push{\cG}{\getfirst{\gettop{\cG}}}$
	\ElsIf{$\isone{\gettop{\cG}}$}
		\State $\push{\cV}{1}$
	\ElsIf{$\isinput{\gettop{\cG}}$}
		\State $\push{\cV}{A\left[ \getinput{\gettop{\cG}} \right]}$
	\EndIf
	\State \textbf{return} $\cG, \cV$
\EndFunction
\end{algorithmic}
\end{algorithm}

\begin{algorithm}
\caption{Aggregate (pseudo-code)}\label{alg:agg_code}
\begin{algorithmic}[1]
\Function{Aggregate}{$\cG, \cV$}\Comment{Here, $\getsize{\cG} =  \getsize{\cV}$}
	\State $g = \pop{\cG}$
	\State $v = \pop{\cV}$
	\If{$\isplus{\gettop{\cG}}$}
		\State $\gettop{\cV} = \gettop{\cV} + v$
	\ElsIf{$\isprod{\gettop{\cG}}$}
		\State $\gettop{\cV} = \gettop{\cV} \cdot v$
	\EndIf
	\If{$\isnotlast{\gettop{\cG}}{g}$}
		\State $\push{\cG}{\nextgate{\gettop{\cG}}{g}}$
	\EndIf
	\State \textbf{return} $\cG, \cV$
\EndFunction
\end{algorithmic}
\end{algorithm}

\begin{algorithm}
\caption{Evaluate (pseudo-code)}\label{alg:eval_code}
\begin{algorithmic}[1]
\Function{Evaluate}{$A$}\Comment{Input $ n\times 1$ vector $A$. Here, $\cG$ and $\cV$ are empty}
	\State $\push{\cG}{\getroot}$
	\While{$\getsize{\cG}\neq 1$ or $\getsize{\cV}\neq 1$}
		\If{$\getsize{\cG}\neq \getsize{V}$}
			\State $(\cG,\cV) := \texttt{Initialize}(\cG,\cV,A)$
		\Else
			\State $(\cG,\cV):= \texttt{Aggregate}(\cG,\cV)$
		\EndIf
	\EndWhile
	\State \textbf{return} $\gettop{\cV}$
\EndFunction
\end{algorithmic}
\end{algorithm}

The $Evaluate$ algorithm gives us the output of the circuit. Note that after each iteration it either holds that $\getsize{\cG} =  \getsize{\cV} + 1$ or $\getsize{\cG} =  \getsize{\cV}$. Furthermore, when we start we have $\getsize{\cG}=1$ and $\getsize{\cV}=0$. The condition $\getsize{\cG}= 1$ and $\getsize{\cV}=1$ holds only when we have traversed all the circuit, and the value in $\gettop{\cV}$ is the value that the root of the circuit outputs after its computation.

Next, we show how to encode this algorithm in $\langfor$.

Let $n_0\in\mathbb{N}$ be big enough for ($\star$) to hold and let $n\geq k$. Hence, the number of gates (values) is bounded by $n^k$ and we need $k\log (n)$ bits to encode the id of each gate.

To simulate the two stacks $\cG$ and $\cV$ we keep a matrix $X$ of dimensions $n \times n$.

\begin{itemize}
	\item Column $n$ will store a canonical vector that marks the top of stack $V$ (values).
	\item Column $n-1$ will store a canonical vector that marks the top of stack $G$ (gates).
	\item Column $n-2$ is the stack of values where $X[1, n-2]$ is the bottom of the stack.
	\item Columns $1$ to $n-3$ are the stack of gates.
\end{itemize}

If we have $j$ gates in the stack and currently $\getsize{\cG}=\getsize{\cV}$ then $X$ would look like:

\[
X = \begin{bmatrix}
    \texttt{id}_1 & v_1 & 0 & 0 \\
    \texttt{id}_2 & v_2 & 0 & 0 \\
    \vdots & \vdots & \vdots & \vdots \\
    \texttt{id}_j & v_j & 1 & 1 \\
    0 & 0 & 0 & 0 \\
    \vdots & \vdots & \vdots & \vdots \\
     0 & 0 & 0 & 0
\end{bmatrix}.
\]

Since $n\geq n_0$, $(\star)$ holds and thus we never use more than $n-3$ bits to encode an $\texttt{id}$. Also, $j\leq n$ given that we never keep more gates than the depth of the tree. As a consequence, we never keep more than $n$ values either.

\thomas{The following remark is necessary because of dimensions (check typing of START). I don't know if the reverse part is correct, but it makes sense to me since that way zeroes to the right actually mean nothing in the binary number (if it is reversed).}

An important detail is that the $\texttt{ids}$ of the gates are encoded as $\texttt{id}_r000$ for it to have dimension $n$, where $\texttt{id}_r$ is the corresponding binary number in reverse.

We make a series of definitions to make the notation more clear. Refer to section \ref{app:order} for more information
about these expressions.

Let $b_i$ be the $i$-th canonical vector. $\mathsf{Next}$ and $\mathsf{Prev}$ denote the successor and predecessor matrices respectively, such that

\[
  			\mathsf{Next}\cdot b_i=\begin{cases}
               b_{i+1} \text{ if } i\leq n \\
               \mathbf{0} \text{ otherwise }
            \end{cases}
\]

\[
  			\mathsf{Prev}\cdot b_i=\begin{cases}
               b_{i-1} \text{ if } i\geq n \\
               \mathbf{0} \text{ otherwise }
            \end{cases}
\]

We write expressions $e_{\mathsf{min}}$ for the first canonical vector and $e_{\mathsf{max}}$ for the last canonical vector. For any $i$ we write 
\begin{align*}
	e_{\mathsf{min}+\mathsf{i}} &= \mathsf{Next}^i\cdot e_{\mathsf{min}} \\
	e_{\mathsf{max}+\mathsf{i}} &= \mathsf{Prev}^i\cdot e_{\mathsf{max}}
\end{align*}

We use the extra $\lbrace 0,1 \rbrace^n\rightarrow\lbrace 0,1 \rbrace^n$ functions that have a $\langfor$ translation:

\[
  			\mathsf{min}(e)=\begin{cases}
               1 \text{ if } e=e_{\mathsf{min}} \\
               0 \text{ otherwise }
             \end{cases}
\]

\[
  			\mathsf{max}(e)=\begin{cases}
               1 \text{ if } e=e_{\mathsf{max}} \\
               0 \text{ otherwise }
             \end{cases}
\]

\[
  			\mathsf{succ}(b_i,b_j)=\begin{cases}
               1 \text{ if } i\leq j \\
               0 \text{ otherwise }
             \end{cases}
\]

When used in $\langfor$ these functions output $[0]$ and $[1]$.

Now 
\begin{align*}
	e_{V}&:=e_{\mathsf{max}-2} \\
	e_{G_{top}}&:=e_{\mathsf{max}-1} \\
	e_{V_{top}}&:=e_{\mathsf{max}}
\end{align*}

For a canonical vector, let $$\Iden{b_i}:=\ssum v. \mathsf{succ}(v,b_i)\cdot (v\cdot v^T).$$ This matrix has ones in the diagonal up to position $i$ marked by $e_{i}$. We define the following sub-matrices of $X$:
\begin{align*}
	V_{top} &:= X\cdot e_{V_{top}} \\
	V &:= \Iden{V_{top}} \cdot X \cdot e_V \\
 	G_{top} &:=X\cdot e_{G_{top}} \\
 	G &:= \Iden{G_{top}}\cdot X \cdot \Iden{e_{\mathsf{max}-3}}
\end{align*}

For example, if we are in a step where $\getsize{\cG}=\getsize{\cV} + 1$ then

\[
X = \begin{bmatrix}
    \texttt{id}_1 & v_1 & 0 & 0 \\
    \texttt{id}_2 & v_2 & 0 & 0 \\
    \vdots & \vdots & \vdots & \vdots \\
    \texttt{id}_{j-1} & v_{j-1} & 0 & 1 \\
    \texttt{id}_j & 0 & 1 & 0 \\
    0 & 0 & 0 & 0 \\
    \vdots & \vdots & \vdots & \vdots \\
     0 & 0 & 0 & 0
\end{bmatrix}, 
G = \begin{bmatrix}
    \texttt{id}_1 & 0 & 0 & 0 \\
    \texttt{id}_2 & 0 & 0 & 0\\
    \vdots & \vdots & \vdots & \vdots  \\
    \texttt{id}_{j-1} & 0 & 0 & 0\\
    \texttt{id}_j & 0 & 0 & 0\\
    0 & 0 & 0 & 0 \\
    \vdots & \vdots & \vdots & \vdots \\
     0 & 0 & 0 & 0
\end{bmatrix}, 
V = \begin{bmatrix}
    v_1  \\
    v_2 \\
    \vdots   \\
    v_{j-1} \\
    0 \\
    0 \\
    \vdots \\
     0 
\end{bmatrix}, 
G_{top} = \begin{bmatrix}
    0  \\
    0 \\
    \vdots   \\
    0 \\
    1 \\
    0 \\
    \vdots \\
     0 
\end{bmatrix}, 
V_{top} = \begin{bmatrix}
    0  \\
    0 \\
    \vdots   \\
    1 \\
    0 \\
    0 \\
    \vdots \\
     0 
\end{bmatrix}
\]

Here, $V$ is a vector encoding the stack of values in $X$ and $G$ is a matrix encoding the stack of gates in $X$.
Note that what is \textit{over} the top of the stacks is always set to zero due to $\Iden{G_{top}}$ and $\Iden{V_{top}}$.
Also, note that $G$ is of the same size as $X$. We sometimes omit the zeroes due to simplicity.

To set the initial state (algorithm \ref{alg:eval_code} line 2) we define the $\langfor$ expression: $$\text{START}:= e_{\mathsf{min}}\cdot \getroot^T + e_{\mathsf{min}}\cdot e_{G_{top}}^T.$$
For the initialize step, we define the $\langfor$ expressions: INIT${\_}$PLUS (algorithm \ref{alg:init_code}, lines 2, 3, 4), INIT${\_}$PROD (algorithm \ref{alg:init_code}, lines 5, 6, 7), CONST (algorithm \ref{alg:init_code}, lines 8, 9) and INPUT (algorithm \ref{alg:init_code}, lines 10, 11):

\begin{align*}
	\text{INIT{\_}PLUS} &:= \isplus{G^T\cdot G_{top}}\times \left[ G + (\mathsf{Next}\cdot G_{top}) \cdot \getfirst{G^T\cdot G_{top}}^T  + \mathsf{Next}\cdot G_{top}\cdot e_{G_{top}}^T +V\cdot e_{V}^T + \mathsf{Next}\cdot V_{top}\cdot e_{V_{top}}^T \right] \\
	\text{INIT{\_}PROD} &:= \isprod{G^T\cdot G_{top}}\times \left[ G + (\mathsf{Next}\cdot G_{top}) \cdot \getfirst{G^T\cdot G_{top}}^T + \mathsf{Next}\cdot G_{top}\cdot e_{G_{top}}^T +(V + \mathsf{Next}\cdot v_{top})\cdot e_{V}^T + \mathsf{Next}\cdot V_{top}\cdot e_{V_{top}}^T \right] \\
	\text{CONST} &:= \isone{G^T\cdot G_{top}}\times \left[ G + (V + \mathsf{Next}\cdot V_{top})\cdot e_{V}^T + \mathsf{Next}\cdot V_{top}\cdot e_{V_{top}}^T \right] \\
	\text{INPUT} &:= \isinput{G^T\cdot G_{top}}\times \left[ G + \left(V + \left( v^T \cdot \mathsf{Next}\cdot V_{top} \cdot \getinput{G^T\cdot G_{top}}^T \right)\right)\cdot e_{V}^T + \mathsf{Next}\cdot V_{top}\cdot e_{V_{top}}^T \right]
\end{align*} 
Where $v$ is the matrix variable stated in the theorem, the one associated with the input $A$ of the circuit.
Here, $G^T\cdot G_{top}$ is to get the current id in the top of the stack. In INIT${\_}$PLUS we get the current stack $G$, we add $\mathsf{Next}\cdot G_{top} \cdot \getfirst{G^T\cdot G_{top}}^T$ which is an $n\times n$ matrix with the first child of $G^T\cdot G_{top}$ in the next row. Then $\mathsf{Next}\cdot G_{top}\cdot e_{G_{top}}^T$ adds $\mathsf{Next}\cdot G_{top}$ to the $n-1$ column to mark the gate we added as the top. Next, we do the same with the values by adding $V\cdot e_{V} + \mathsf{Next}\cdot V_{top}\cdot e_{V_{top}}^T$.

The $\langfor$ expression equivalent to algorithm \ref{alg:init_code} is $$\text{INIT}:=\text{INIT{\_}PLUS}+\text{INIT{\_}PROD}+\text{CONST}+\text{INPUT}.$$

The idea is to return the matrix for the next iteration. Recall that here $\getsize{\cG}=\getsize{\cV} + 1$. So, when the operation is INPUT or CONST, if we start with

\[
\begin{bmatrix}
    \texttt{id}_1 & v_1 & 0 & 0 \\
    \texttt{id}_2 & v_2 & 0 & 0 \\
    \vdots & \vdots & \vdots & \vdots \\
    \texttt{id}_{j-1} & v_{j-1} & 0 & 1 \\
    \texttt{id}_j & 0 & 1 & 0 \\
    0 & 0 & 0 & 0 \\
    \vdots & \vdots & \vdots & \vdots \\
     0 & 0 & 0 & 0
\end{bmatrix}, \text{ then we return }
\begin{bmatrix}
    \texttt{id}_1 & v_1 & 0 & 0 \\
    \texttt{id}_2 & v_2 & 0 & 0 \\
    \vdots & \vdots & \vdots & \vdots \\
    \texttt{id}_{j-1} & v_{j-1} & 0 & 0 \\
    \texttt{id}_j & v_j & 1 & 1 \\
    0 & 0 & 0 & 0 \\
    \vdots & \vdots & \vdots & \vdots \\
     0 & 0 & 0 & 0
\end{bmatrix}.
\]

When the operation is INIT{\_}PLUS or INIT{\_}PROD, if we start with 

\[
\begin{bmatrix}
    \texttt{id}_1 & v_1 & 0 & 0 \\
    \texttt{id}_2 & v_2 & 0 & 0 \\
    \vdots & \vdots & \vdots & \vdots \\
    \texttt{id}_{j-1} & v_{j-1} & 0 & 1 \\
    \texttt{id}_j & 0 & 1 & 0 \\
    0 & 0 & 0 & 0 \\
    0 & 0 & 0 & 0 \\
    \vdots & \vdots & \vdots & \vdots \\
     0 & 0 & 0 & 0
\end{bmatrix}, \text{ then we return }
\begin{bmatrix}
    \texttt{id}_1 & v_1 & 0 & 0 \\
    \texttt{id}_2 & v_2 & 0 & 0 \\
    \vdots & \vdots & \vdots & \vdots \\
    \texttt{id}_{j-1} & v_{j-1} & 0 & 0 \\
    \texttt{id}_j & v_j & 0 & 1 \\
    \texttt{id}_{j+1} & 0 & 1 & 0 \\
    0 & 0 & 0 & 0 \\
    \vdots & \vdots & \vdots & \vdots \\
     0 & 0 & 0 & 0
\end{bmatrix}.
\]

For the aggregate expression (algorithm \ref{alg:agg_code}) we do the following. Let $$\pondIden{b_i}{c}=\ssum v. (v^T\cdot b_i)\cdot c\cdot v\cdot v^T + (1-v^T\cdot b_i)\cdot v \cdot v^T,$$ namely, it is the identity with $c$ in position $(i,i)$.

We define the expressions: AGG${\_}$PLUS (algorithm \ref{alg:agg_code}, lines 4, 5), AGG${\_}$PROD (algorithm \ref{alg:agg_code}, lines 6, 7),  IS${\_}$NOT${\_}$LAST (algorithm \ref{alg:agg_code}, lines 8, 9), IS${\_}$LAST and POP:

\begin{align*}
	\text{POP} &:= \Iden{\mathsf{Prev}\cdot G_{top}}\cdot G + \mathsf{Prev}\cdot V_{top}\cdot e_{V_{top}}^T  \\
	\text{AGG{\_}PLUS} &:= \isplus{G^T \cdot \left( P \cdot G_{top}\right)} \times \left[ \left( \Iden{\mathsf{Prev}\cdot V_{top}} \cdot V + \left( V^T \cdot V_{top} \right)\left( \mathsf{Prev}\cdot V_{top} \right)\right) \cdot e_{V}^T \right] \\
	\text{AGG{\_}PROD} &:= \isprod{G^T \cdot \left( P \cdot G_{top}\right)} \times \left[ \left( \pondIden{\mathsf{Prev}\cdot V_{top}}{V^T \cdot V_{top}} \cdot \Iden{\mathsf{Prev}\cdot V_{top}} \cdot V \right) \cdot e_{V}^T \right] \\
	\text{IS{\_}NOT{\_}LAST} &:= \isnotlast{G^T \cdot \left( P \cdot G_{top}\right)}{G^T \cdot G_{top}} \times \left[  G_{top} \cdot \nextgate{G^T \cdot \left( \mathsf{Prev}\cdot G_{top} \right) }{G^T \cdot G_{top}}^T + G_{top}\cdot e_{G_{top}}^T \right] \\
	\text{IS{\_}LAST} &:= \left( 1 - \isnotlast{G^T \cdot \left( P \cdot G_{top}\right)}{G^T \cdot G_{top}} \right)\times \left[ \left( \mathsf{Prev}\cdot G_{top} \right) \cdot e_{G_{top}}^T \right]
\end{align*}

The $\langfor$ expression equivalent to algorithm \ref{alg:agg_code} is $$\text{AGG}:=\text{POP} + \text{AGG{\_}PLUS}+\text{AGG{\_}PROD}+\text{IS{\_}NOT{\_}LAST}+\text{IS{\_}LAST}.$$

The $Evaluate$ method (algorithm \ref{alg:eval_code}) is defined as follows:

\begin{align*}
	\text{EVAL}&[v]= \\
	&e_{\mathsf{min}}^T \cdot \big\lbrace \ffor{X}{v_1, \ldots, v_k}: \\
	&\left( \sprod_{i=1}^k \mathsf{min}(v_i)\right) \times START + \\
	&\left( 1- \sprod_{i=1}^k \mathsf{min}(v_i)\right) \times \left( \left(1 - \mathsf{min}(G_{top})\cdot \mathsf{min}(V_{top}) \right) \times \left[ \left( 1 - G_{top}^T\cdot V_{top} \right) \times \text{INIT} + \left(  G_{top}^T\cdot V_{top} \right) \times \text{AGG} \right] + \mathsf{min}(G_{top})\times \mathsf{min}(V_{top})\times X\right) \\ 
	&\big\rbrace\cdot e_{V}
\end{align*}

Note that the $ \texttt{for}$-expression does the evaluation. The final output is in $X[1,max-2]$, we extract this value by multiplying the final result as $e_{\mathsf{min}}^T\cdot [\texttt{for}(\ldots )]\cdot e_{V}$.

Finally, we need to take care of all $n<n_0$, where $(\star)$ does not necessarily hold. For any $i$, let: $$\text{Eval}[i,A]:= \text{ the } 1\times 1 \text{ matrix with the value of the polynomial } \Phi_n(A) \text{ when } n=i.$$

Then we define: 
$$
\Phi_n(a_1,\ldots, a_i)=\ssum_{i=0}^{n_0-1}\left( e_{\mathsf{min}+\mathsf{i}}^T\cdot (e_{\mathsf{diag}}(e_{\ones}(v))\cdot e_{\mathsf{max}}) \right) \times \text{EVAL}[i,v] + \left( (\mathsf{Next}^{n_0}\cdot e_{\mathsf{min}})^T\cdot e_{\ones} (v) \right)\times \text{EVAL}[v].
$$ 
Above, $(e_{\mathsf{min}+\mathsf{i}}^T\cdot e_{\mathsf{max}})$ checks if the dimension is equal to $i$ (we multiply
by the $n\times n$ identity $e_{\mathsf{diag}}(e_{\ones}(v))$ to ensure typing), 
and $(\mathsf{Next}^{n_0}\cdot e_{\mathsf{min}})^T\cdot e_{\ones} (e_{\mathsf{min}})$ checks if the 
dimension is greater or equal than $n_0$.

\end{proof}

\subsection{From MATLANG to uniform ACs}


We consider circuits over matrices (multiple output gates). We will write 
$\Phi(A_1,\ldots ,A_k)$, where $\Phi$ is an arithmetic circuit with multiple output gates, and each 
$A_i$ is a matrix of dimensions $\alpha_i\times \beta_i$, with $\alpha_i,\beta_i \in \{n,1\}$ to denote 
the input matrices for a circuit $\Phi$. We will also write $\ttype(\Phi)=(\alpha,\beta)$, with 
$\alpha,\beta\in \{n,1\}$, to denote the size of the output matrix for $\Phi$. 
When $\{\Phi_n\mid n=1,2,\ldots\}$ is a uniform family of 
arithmetic circuits over matrices, we will assume that the Turing machine for generating $\Phi_n$ also 
gives us the information about how to access a position of each input matrix, and how to access the 
positions of the output matrix, as is usually done when handling matrices with arithmetic 
circuits \cite{Raz02}. The notion of degree is extended to be the sum of the degrees of all 
the output gates. The former will be denoted as $\Phi_{n}[i,j]$ when $\ttype(\Phi)=(n,n)$, 
$\Phi_{n}[i,1]$ when $\ttype(\Phi)=(n,1)$, $\Phi_{n}[1,j]$ when $\ttype(\Phi)=(1,n)$ and 
$\Phi_{n}$ when $\ttype(\Phi)=(1,1)$. Also, when we write $a \oplus b$ we mean 

\begin{center}
\begin{tikzpicture}[level distance=1.5cm,
  level 1/.style={sibling distance=1.5cm},
  every node/.style = {
  	shape=circle,
    draw,
    align=center,
    top color=white,
    bottom color=white
    }]
  \node {\( + \)}
    child {node { \( a \) }}
    child {node { \( b \) }};
\end{tikzpicture}
\end{center}
When we write $\bigoplus_{l=1}^n a_l$ we mean 

\begin{center}
\begin{tikzpicture}[level distance=1.5cm,
  level 1/.style={sibling distance=1.5cm},
  every node/.style = {
  	shape=circle,
    draw,
    align=center,
    top color=white,
    bottom color=white
    }]
  \node {\( + \)}
    child {node { \( a_1 \) }}
    child {node { \( \cdots \) }}
    child {node { \( a_n \) }};
\end{tikzpicture}
\end{center}
Same with $\otimes$. Now we prove the statement.

\newtheorem*{LANGINCIRC}{Theorem~\ref{th-ml-to-circuits}}

\begin{LANGINCIRC}
  Let $e$ be a \langfor expression over a schema $\Sch$, and let $V_1,\ldots ,V_k$ be the variables of $e$ such that $\ttype(V_i)\in \{(\alpha,\alpha), (\alpha,1), (1,\alpha), (1,1)\}$. Then there exists a uniform arithmetic circuit family over matrices $\Phi_n(A_1,\ldots ,A_k)$ such that:
  \begin{itemize}
  \item For any instance $\I = (\dom,\conc)$ such that $\dom(\alpha) = n$ and $\conc(V_i) = A_i$ it holds that:
  \item $\sem{e}{\I} = \Phi_n(A_1,\ldots ,A_k)$.
  \end{itemize}
\end{LANGINCIRC}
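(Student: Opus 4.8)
\emph{Proof plan.} The natural approach is a structural induction on $e$, working with the core syntax of \langfor (recall $\ones$ and $\diag$ are redundant, and here $\Fun=\emptyset$, so the only cases are $V$, $e^T$, $e_1\cdot e_2$, $e_1+e_2$, $e_1\times e_2$, and $\ffor{v}{X}{e'}$). I would strengthen the statement so that the induction hypothesis asserts a \logspace-uniform family of arithmetic circuits over matrices for \emph{every} subexpression, where the free variables of that subexpression — including $v$ and $X$ once we descend under a for-loop — play the role of the circuit inputs. Throughout I would maintain the invariant that every subexpression has type in $\{(\alpha,\alpha),(\alpha,1),(1,\alpha),(1,1)\}$; this follows from the square-matrix restriction together with the typing rules, so every intermediate circuit has output dimensions in $\{n,1\}\times\{n,1\}$, and I will use the access conventions for input/output positions of circuits over matrices fixed before the statement.

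The base case $e=V$ is immediate: $\Phi_n$ exposes the input gates for $V$ directly as its output gates. The non-recursive inductive cases are routine local constructions on top of the circuits $\Phi^{e_1}_n,\Phi^{e_2}_n$ from the IH: transposition rewires outputs by swapping row/column indices; addition places a sum gate at each output position; matrix multiplication places at each position $(i,j)$ a gate computing $\bigoplus_{\ell}\Phi^{e_1}_n[i,\ell]\otimes\Phi^{e_2}_n[\ell,j]$ (with $\ell$ ranging over $n$ or $1$ according to the shared dimension); and scalar multiplication multiplies the single output of the $(1,1)$-typed $\Phi^{e_1}_n$ into every output of $\Phi^{e_2}_n$. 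Uniformity is preserved in each case because the composite generator runs the sub-generators and appends a single connecting layer of constant description, and the composition of \logspace\ transducers is again \logspace.

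The crux is the for-loop $\ffor{v}{X}{e'}$. By the IH the body $e'$ yields a uniform family $\Psi_n$ whose inputs include $v$ (type $(\alpha,1)$) and the accumulator $X$ (of the loop's output type) besides the remaining free variables. I would realize the loop by \emph{unrolling} it into $n$ chained copies $\Psi_n^{(1)},\ldots,\Psi_n^{(n)}$: in copy $t$ the $v$-input gates are replaced by the constant gates encoding $b_t^n$ (a $1$ in position $t$, zeros elsewhere); the $X$-inputs of $\Psi_n^{(1)}$ are wired to constant $0$-gates (the initialization $A_0=\mathbf 0$); and for $t>1$ the $X$-inputs of $\Psi_n^{(t)}$ are wired to the output gates of $\Psi_n^{(t-1)}$. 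The loop output is the output of $\Psi_n^{(n)}$, and correctness follows by an easy induction on $t$ showing $\Psi_n^{(t)}$ computes $A_t$. Since the body has size $\mathrm{poly}(n)$ and we chain exactly $n$ copies, each for-loop multiplies the size by a factor of $n$; as the number of (possibly nested) for-loops in $e$ is a constant independent of $n$, the final circuit remains of polynomial size. No degree bound is claimed here — e.g.\ the squaring loop legitimately produces exponential degree.

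The main obstacle I anticipate is establishing \logspace-uniformity of the unrolled loop circuit, i.e.\ producing a single \logspace\ transducer that on input $1^n$ emits all $n$ copies with correctly offset gate identifiers, the hardwired canonical vectors, and the inter-copy wiring. This is handled by the standard composition-of-\logspace-transducers technique: the composite machine maintains only a copy-counter $t$ and an index offset, both of size $O(\log n)$, and for each $t$ invokes the generator for $\Psi_n$ as a subroutine, translating local gate ids to global ones and redirecting the gates reading $v$ and $X$ as above. Verifying that this bookkeeping stays within logarithmic space — and that copy boundaries, the canonical-vector constants, and the accumulator wiring are all computable from $t$ and a local gate id alone — is the technical heart of the argument, the remaining cases being mechanical.
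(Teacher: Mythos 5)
Your proposal is correct and follows essentially the same route as the paper's proof: structural induction over the core syntax with local gate constructions for the algebraic operators, the for-loop handled by unrolling into $n$ chained copies of the body circuit with the zero matrix hardwired as the initial accumulator and the canonical vectors $b_t^n$ hardwired as constants, and uniformity argued via the composition-of-\logspace-transducers technique while tracking only the current iteration index. The paper's construction $\Phi^{e}_n=\Phi^{e'}_n(\Phi^{e'}_n(\cdots(\Phi^{e'}_n(\Phi^{\mathbf{0}},\Phi^{v_1}),\Phi^{v_2})\cdots),\Phi^{v_n})$ is exactly your chained unrolling, and its closing remark on uniformity matches your copy-counter argument.
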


\begin{proof}

Let $e$ be a \langfor expression. 

If $e=V$ then $\Phi_n^e:=\Phi(A)$, and we have that
\begin{itemize}
	\item If $\ttype(V)=(1,1)$ then $\ttype(\Phi^e_n)=(1,1)$ and $\Phi^e_n$ has the one input/output gate.
	\item If $\ttype(V)=(1,\alpha)$ then $\ttype(\Phi^e_n)=(1,n)$ and $\Phi^e_n$ has $n$ input/output gates.
  \item If $\ttype(V)=(\alpha,1)$ then $\ttype(\Phi^e_n)=(n,1)$ and $\Phi^e_n$ has $n$ input/output gates.
	\item If $\ttype(V)=(\alpha,\alpha)$ then $\ttype(\Phi^e_n)=(n,n)$ and $\Phi^e_n$ has $n^2$ input/output gates. 
\end{itemize}

If $e=e'^T$ then $\Phi^e_n=\Phi^{e'}_n$. 
\begin{itemize}
	\item If $\ttype(\Phi^{e'}_n)=(1, 1)$ then $\Phi^e_n=\Phi^{e'}_n$ and $\type(\Phi^e_n)=(1,1)$.
	\item If $\ttype(\Phi^{e'}_n)=(1, n)$ then $\type(\Phi^e_n)=(n,1)$ and $\Phi^e_n[i,1]:=\Phi^{e'}_n[1,i]$. 
  \item If $\ttype(\Phi^{e'}_n)=(n, 1)$ then $\type(\Phi^e_n)=(1,n)$ and $\Phi^e_n[1,j]:=\Phi^{e'}_n[j,1]$. 
  \item If $\ttype(\Phi^{e'}_n)=(n, n)$ then $\type(\Phi^e_n)=(n,n)$ and $\Phi^e_n[i,j]:=\Phi^{e'}_n[j,i]$. 
\end{itemize}

If $e={\ones}(e')$ where $\ttype(\Phi^{e'}_n)=(\alpha,\beta)$ then $\ttype(\Phi^{e}_n)=(\alpha,1)$ and $\Phi^e_n[i,1]:=1$.

If $e=e_1 + e_2$ we have

\begin{itemize}
	\item When $\ttype(\Phi^{e_1}_n)=\ttype(\Phi^{e_2}_n)=(1, 1)$  then $\ttype(\Phi^{e}_n)=(1, 1)$ and $\Phi^e_n:=\Phi^{e_1}_n \oplus \Phi^{e_2}_n$.
  \item When $\ttype(\Phi^{e_1}_n)=\ttype(\Phi^{e_2}_n)=(1, n)$  then $\ttype(\Phi^{e}_n)=(1, n)$ and $\Phi^e_n[1,j]:=\Phi^{e_1}_n[1,j] \oplus \Phi^{e_2}_n[1,j]$.
  \item When $\ttype(\Phi^{e_1}_n)=\ttype(\Phi^{e_2}_n)=(n, 1)$  then $\ttype(\Phi^{e}_n)=(n, 1)$ and $\Phi^e_n[i,1]:=\Phi^{e_1}_n[i,1] \oplus \Phi^{e_2}_n[i,1]$.
  \item When $\ttype(\Phi^{e_1}_n)=\ttype(\Phi^{e_2}_n)=(n, n)$  then $\ttype(\Phi^{e}_n)=(n, n)$ and $\Phi^e_n[i,j]:=\Phi^{e_1}_n[i,j] \oplus \Phi^{e_2}_n[i,j]$.
\end{itemize}

If $e=f(e_1, \ldots, e_k)$ we have two cases

\begin{itemize}
  \item When $f$ is the function $f_{\odot}$ (recall that this function is definable in $\langf{\emptyset}$ by Lemma \ref{lm-prod-sum}) then
  \begin{itemize}
    \item If $\ttype(\Phi^{e_1}_n)=\ldots =\ttype(\Phi^{e_k}_n)=(1, 1)$ then $\Phi^e_n:=\bigotimes_{l=1}^k \Phi^{e_l}_n$.
    \item If $\ttype(\Phi^{e_1}_n)=\ldots =\ttype(\Phi^{e_k}_n)=(1, n)$ then $\Phi^e_n[1,j]:=\bigotimes_{l=1}^k \Phi^{e_l}_n[1,j]$.
    \item If $\ttype(\Phi^{e_1}_n)=\ldots =\ttype(\Phi^{e_k}_n)=(n, 1)$ then $\Phi^e_n[i,1]:=\bigotimes_{l=1}^k \Phi^{e_l}_n[i,1]$.
    \item If $\ttype(\Phi^{e_1}_n)=\ldots =\ttype(\Phi^{e_k}_n)=(n, n)$ then $\Phi^e_n[i,j]:=\bigotimes_{l=1}^k \Phi^{e_l}_n[i,j]$.
  \end{itemize}
	\item When $f$ is any function, we prove the case when $\ttype(\Phi^{e_1}_n)=\ldots =\ttype(\Phi^{e_k}_n)=(1, 1)$
  (only case necessary, as discussed in Appendix \ref{app:simp}). Here $\Phi^e_n$ is 
	
\begin{center}
\begin{tikzpicture}[level distance=1.5cm,
  level 1/.style={sibling distance=1.5cm},
  every node/.style = {
  	shape=circle,
    draw,
    align=center,
    top color=white,
    bottom color=white
    }]
  \node {\( f \)}
    child {node { \( \Phi^{e_1}_n \) }}
    child {node { \( \cdots \) }}
    child {node { \( \Phi^{e_k}_n \) }};
\end{tikzpicture}
\end{center}

Note that since for the context of this result we only consider $\langfor = \langf{\emptyset}$, this case is not strictly necessary, modulo for $f_\odot,f_\oplus$ due to Lemma \ref{lm-prod-sum}. However, if we extend the circuits with the same functions allowed in \langfor, then our inductive construction still goes through, as just illustrated.

\end{itemize}

If $e=e_1\cdot e_2$ we have

\begin{itemize}
	\item When $\ttype(\Phi^{e_1}_n)=(1,1)$ and $\ttype(\Phi^{e_2}_n)=(1, 1)$ then $\ttype(\Phi^{e}_n)=(1, 1)$ and $\Phi^{e}_n:=\Phi^{e_1}_n \otimes \Phi^{e_2}_n$.
  \item When $\ttype(\Phi^{e_1}_n)=(1,1)$ and $\ttype(\Phi^{e_2}_n)=(1, n)$ then $\ttype(\Phi^{e}_n)=(1, n)$ and $\Phi^{e}_n[1,j]:=\Phi^{e_1}_n \otimes \Phi^{e_2}_n[1,j]$.
  \item When $\ttype(\Phi^{e_1}_n)=(n,1)$ and $\ttype(\Phi^{e_2}_n)=(1, 1)$ then $\ttype(\Phi^{e}_n)=(n, 1)$ and $\Phi^{e}_n[i,1]:=\Phi^{e_1}_n[i,1] \otimes \Phi^{e_2}_n$.
  \item When $\ttype(\Phi^{e_1}_n)=(n,1)$ and $\ttype(\Phi^{e_2}_n)=(1, n)$ then $\ttype(\Phi^{e}_n)=(n, n)$ and $\Phi^{e}_n[i,j]:=\Phi^{e_1}_n[i,1] \otimes \Phi^{e_2}_n[1,j]$.
  \item When $\ttype(\Phi^{e_1}_n)=(1,n)$ and $\ttype(\Phi^{e_2}_n)=(n, 1)$ then $\ttype(\Phi^{e}_n)=(1, 1)$ and $$\Phi^{e}_n:=\bigoplus_{k=1}^n \left( \Phi^{e_1}_n[1,k] \otimes \Phi^{e_2}_n[k,1] \right).$$
  \item When $\ttype(\Phi^{e_1}_n)=(1,n)$ and $\ttype(\Phi^{e_2}_n)=(n, n)$ then $\ttype(\Phi^{e}_n)=(1, n)$ and $$\Phi^{e}_n[1,j]:=\bigoplus_{k=1}^n \left( \Phi^{e_1}_n[1,k] \otimes \Phi^{e_2}_n[k,j] \right).$$
  \item When $\ttype(\Phi^{e_1}_n)=(n,n)$ and $\ttype(\Phi^{e_2}_n)=(n, 1)$ then $\ttype(\Phi^{e}_n)=(n, 1)$ and $$\Phi^{e}_n[i,1]:=\bigoplus_{k=1}^n \left( \Phi^{e_1}_n[i,k] \otimes \Phi^{e_2}_n[k,1] \right).$$
  \item When $\ttype(\Phi^{e_1}_n)=(n,n)$ and $\ttype(\Phi^{e_2}_n)=(n, n)$ then $\ttype(\Phi^{e}_n)=(n, n)$ and $$\Phi^{e}_n[i,j]:=\bigoplus_{k=1}^n \left( \Phi^{e_1}_n[i,k] \otimes \Phi^{e_2}_n[k,j] \right).$$
\end{itemize}

If $e=\ffor{X}{v}e'(X, v)$, then define $\Phi^{\mathbf{0}}$ 
as the zero matrix circuit $\ttype(\Phi^{\mathbf{0}})=(1,1)$ if $\ttype(\Phi^{e'}_n)=(1,1)$ and 
$\ttype(\Phi^{\mathbf{0}})=(n,n)$ if $\ttype(\Phi^{e'}_n)=(n,n)$. Also, $\Phi^{\mathbf{0}}=0$ and
$\Phi^{\mathbf{0}}[i,j]=0$ $\forall i,j$ for each case respectively. Now for $i=1,\ldots, n$, define
$\Phi^{v_i}$ as the circuit such that $\ttype(\Phi^{v_i})=(n,1)$ and $\Phi^{v_i}[i,1]:=1$ and zero otherwise.
Finally, define

$$\Phi^{e}_n=\Phi^{e'}_n\left( \Phi^{e'}_n \left( \cdots \left( \Phi^{e'}_n\left( \Phi^{\mathbf{0}}, \Phi^{v_1}\right), \Phi^{v_2}\right)\cdots, \Phi^{v_{n-1}} \right), \Phi^{v_n} \right).$$

Note that every circuit adds a constant number of layers except when $e=\ffor{X}{v}e'(X, v)$. 
This means that the depth still is polynomial. When $e=\ffor{X}{v}e'(X, v)$
we have that the depth of the circuit is $n\cdot p(n)$, where the depth of $e'(X, v)$ is $p(n)$, 
so it also remains polynomial.

Here, we do not need to translate scalar multiplication
because it can be simulated using the $\mathsf{ones}$ operator and $f_{\kprod}$ (see section \ref{app:simp}).

Finally, we remark that when composing the circuits the fact that uniformity is preserved (i.e. the resulting circuit can be generated by a \logspace\ machine) is proved analogously as when composing two \logspace\ transducers \cite{aroraB2009}. The only more involved case is treating for-loop construction, however, notice here that we only need to keep track of where we are in the evaluation (i.e. which $v_i$ we are processing), and not of all the previous results, given that they update the resulting matrix in a fixed order.

\end{proof}

\subsection{Undecidability}
\newtheorem*{Undec}{Proposition~\ref{prop-undec}}
Let $e$ be a \langfor expression over a matrix schema $\mathcal{S}=(\mathcal{M},\textsf{size})$ and let $V_1,\ldots, V_k$ be
the variables of $e$, each of type $(\alpha,\alpha)$, $(1,\alpha)$, $(\alpha,1)$ or $(1,1)$. We know from Theorem~\ref{th-ml-to-circuits}
that there exists a uniform arithmetic circuit family $\{\Phi_n \mid n=1,2,\ldots\}$
such that $\sem{e}{\I}=\Phi_n(A_1,\ldots,A_k)$ for any instance $\I$ such that
$\mathcal{D}(\alpha)=n$ and $\conc(V_i)=A_i$ for $i=1,\ldots,k$. We are interested in deciding
whether there exists such a  uniform arithmetic circuit family $\{\Phi_n \mid n=1,2,\ldots\}$
of polynomial degree, i.e., such that $\mathsf{degree}(\Phi_n)=\mathcal{O}(p(n))$ for some polynomial $p(x)$. If such a circuit family exists, we call $e$ of polynomial degree.

\begin{Undec}
	Given a \langfor expression $e$ over a schema $\Sch$, it is undecidable to check whether $e$ is of polynomial degree.
\end{Undec}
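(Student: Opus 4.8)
\textit{Proof plan.} The plan is to reduce the (undecidable) halting problem to the polynomial-degree question. Given a Turing machine $M$, I will construct a \langfor expression $e_M$ with the property that $e_M$ is of polynomial degree if and only if $M$ does \emph{not} halt on the empty input. Since halting is undecidable, this immediately gives undecidability of polynomial degree.

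The construction combines two gadgets already available in the paper. The first is the exponential-degree expression $e_{\texttt{exp}} = \ffor{v}{X=A}{X\cdot X}$ discussed after Corollary~\ref{th-equivalence}, which on an instance with $\dom(\alpha)=n$ and $\conc(A)=[a]$ computes $a^{2^n}$, a polynomial of degree $2^n$. The second is Proposition~\ref{prop:transducer}: any linear-space, polynomial-time Turing machine can be simulated by a \langfor expression. Using the latter I would build a scalar (that is, $(1,1)$-typed) subexpression $e_{\mathsf{halt}}$ that, on an instance of dimension $n$, evaluates to $1$ if $M$ halts within $n$ steps on the empty input and to $0$ otherwise. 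This predicate is computed by a machine $T_M$ which, on input $1^n$ (realized in \langfor by $\ones(v)$), simulates $M$ for $n$ steps and writes the answer bit in the first output position; since $M$ visits at most $n$ tape cells in $n$ steps, $T_M$ runs in space $\mathcal{O}(n)$, and simulating each of the $n$ steps costs $\mathcal{O}(n)$ time, so $T_M$ runs in polynomial time and falls squarely within the regime of Proposition~\ref{prop:transducer}. From the resulting expression, $e_{\mathsf{halt}}$ is obtained by projecting the output vector onto its first entry. I would then set
$$
e_M := \ffor{v}{X=A}{e_{\mathsf{halt}}\cdot (X\cdot X) + (1-e_{\mathsf{halt}})\cdot X},
$$
noting that $e_{\mathsf{halt}}$ depends only on the dimension $n$, not on $X$ or the current value of $v$. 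On a dimension-$n$ instance every iteration squares $X$ when $M$ halts within $n$ steps and leaves $X$ unchanged otherwise, so $\sem{e_M}{\I}$ equals $a^{2^n}$ if $M$ halts within $n$ steps and $a$ otherwise.

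It remains to verify the equivalence, which is routine. If $M$ never halts then $e_{\mathsf{halt}}$ evaluates to $0$ on every instance, so $e_M$ computes the identity $a\mapsto a$ of degree $1$, realized by a circuit family of degree $1$; hence $e_M$ is of polynomial degree. Conversely, if $M$ halts after $T$ steps, then for every $n\geq T$ the expression $e_M$ computes the polynomial $a^{2^n}$; since the degree of a single-output arithmetic circuit over $\RR$ equals the degree of the polynomial it computes, any circuit family equivalent to $e_M$ must have degree at least $2^n$ for all such $n$, so the minimum degree of $e_M$ is not bounded by any polynomial. Thus $e_M$ has polynomial degree exactly when $M$ does not halt.

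The step I expect to be the main obstacle is the construction of $e_{\mathsf{halt}}$: one must confirm that ``$M$ halts within $n$ steps'' genuinely satisfies the linear-space and polynomial-time hypotheses of Proposition~\ref{prop:transducer} (the linear-space bound being the delicate point, as it relies on the fact that $M$ can only touch $n$ cells in $n$ steps), and one must carefully feed $1^n$ into, and extract a single answer bit out of, the vector-valued input/output interface of that proposition. Everything else is bookkeeping around the two existing gadgets.
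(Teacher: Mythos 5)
Your proposal follows essentially the same route as the paper's own proof: both reduce from halting on the empty input, use the Turing-machine simulation of Proposition~\ref{prop:transducer} to build a scalar expression deciding whether $M$ halts within $n$ steps, and combine it with the repeated-squaring gadget so that the resulting expression has polynomial degree if and only if $M$ never halts. The only inessential difference is the combination step: the paper takes the product $d_M := e_M\cdot e_{\mathsf{exp}}$ with a gadget computing the constant $n^{2^n}$ (whose degree lower bound rests on uniformity, hence polynomial size), whereas you gate the squaring inside the loop to produce $a^{2^n}$, which makes the lower bound slightly more direct since it follows immediately from the degree of the polynomial $a^{2^n}$.
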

\begin{proof}
We show undecidability based on the following undecidable language:
$$
\{ \langle M\rangle\mid \text{$M$ is a deterministic TM which halts on the empty input}\},
$$
where $\langle M\rangle$ is some string encoding of $M$.
Consider a TM $M$ described by $(Q,\Gamma=\{0,1\},q_0,q_m,\Delta)$
with $Q=\{q_1,\ldots,q_m\}$ its states, $q_1$ being the initial state and $q_m$ being
the halting state, $\Gamma$ is the tape alphabet, and $\Delta$ is a transition function
from $Q\times \Gamma\to Q\times\Gamma\times \{\leftarrow,\sqcup,\rightarrow\}$. The simulation
of linear space TM, as given in the proof of Proposition~\ref{prop:transducer} can be easily modified to
any TM $M$ provided that we limit the execution of $M$ to exactly $n$ steps. Let $e_M$ denote this expression. Similarly
as in the linear space TM simulation, we have vector variables $Q_1,\ldots,Q_m$ encoding the
states, a single relation $T$ encoding the tape and relation $H_T$ encoding the position
of the tape.  When an instance $\I$ assigns $n$ to $\alpha$, we have a tape of length $n$ at our disposal. This suffices if we let $M$ run for $n$ steps. We further observe that all vector variables can be assumed to be zero, initially.
This is because we do not have input. So, let $\I_n^0$ denote the instance which assigns vector variables to the $n$-dimensional zero vector.  Furthermore, by contrast to the linear space TM simulation, we use a single vector $v$ (instead of $k$ such vectors) to simulate $n$ steps of $M$. Finally, we modify the expression given in the proof of Proposition~\ref{prop:transducer} such $\sem{e_M}{\I_n^0}$  returns $1$ if $M$
halts in at most $n$ steps, and $0$ if $M$ did not halt yet after $n$ steps.

As a consequence, when $M$ does not halt, $\sem{e_M}{\I_n^0}=0$ for all $n\geq 0$. When $M$ halts, there will be an $n$ such that $\sem{e_M}{\I_n^0}=1$ It now suffices to consider the \langfor expression
$$
d_M:=e_M\cdot e_{\mathsf{exp}}
$$
where $e_{\texttt{exp}} = \ffor{v}{X=\ones(X)^T\cdot\ones(X)}{X\cdot X}$ such that
$e_{\texttt{exp}}(\I_n^0)=n^{2^n}$. Then, when $M$ does not halt we can clearly compute $d_M$ with a constant degree circuit ``0''
for any $n$, otherwise, the circuit needed will be of exponential degree
for at least one $n$, simply because no polynomial degree uniform  circuit family can compute $n^{2^n}$. In other words, deciding whether $d_M$ has polynomial degree coincides with deciding whether $M$ halts.
\end{proof}

%
%
%

\section{Proofs of Section~\ref{sec:restrict}}

\subsection{From \langsum to \rak}
\newtheorem*{SUMTOARA}{Proposition~\ref{prop:sum_to_ara}}

We prove proposition \ref{prop:sum_to_ara}.

\begin{SUMTOARA}
  For each \langsum expression $e$ over schema $\Sch$ such that $\Sch(e)=(\alpha,\beta)$ with $\alpha\neq 1\neq\beta$, there exists a \rak  expression $\arae(e)$ over relational schema $\text{Rel}(\Sch)$ such that $\text{Rel}(\Sch)(\arae(e))=\{\row_\alpha,\row_\beta\}$ and 
	such that for any instance $\I$ over~$\Sch$,
	$$
	\sem{e}{\I}_{i,j}=\ssem{\arae(e)}{\text{Rel}(\I)}(t)
	$$
	for tuple $t(\mathrm{row}_\alpha)=i$ and $t(\mathrm{col}_\beta)=j$. Similarly for when $e$ has schema $\Sch(e)=(\alpha,1)$, $\Sch(e)=(1,\beta)$ or $\Sch(e)=(1,1)$, then $\arae(e)$ has schema $\text{Rel}(\Sch)(\arae(e))=\{\mathrm{row}_\alpha\}$,
	$\text{Rel}(\Sch)(\arae(e))=\{\mathrm{col}_\alpha\}$, or
	$\text{Rel}(\Sch)(\arae(e))=\{\}$, respectively.
\end{SUMTOARA}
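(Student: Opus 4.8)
\textit{Proof plan.} I would prove this by structural induction on the \langsum expression $e$, defining the translation $\arae(\cdot)$ compositionally and checking the entry-by-entry correspondence at each step. The first move is to \emph{strengthen the statement} so that it survives the induction: a subexpression of a closed \langsum expression may contain free vector variables $v_1,\dots,v_m$ bound by enclosing $\Sigma$-operators. So I would prove the more general claim that for a subexpression $e$ of output type $(\alpha,\beta)$ with free canonical-vector variables $v_1,\dots,v_m$ of types $(\gamma_1,1),\dots,(\gamma_m,1)$, there is an $\mathsf{RA}_K^+$ expression $\arae(e)$ whose schema is $\{\row_\alpha,\col_\beta\}$ augmented by one fresh \emph{index attribute} $d_{v_\ell}$ per free variable, such that for every instance $\I$ and all choices $i_1,\dots,i_m$ we have $\sem{e}{\I[v_1:=b_{i_1},\dots,v_m:=b_{i_m}]}_{p,q}=\ssem{\arae(e)}{\text{Rel}(\I)}(t)$ for the tuple $t$ with $t(\row_\alpha)=p$, $t(\col_\beta)=q$, $t(d_{v_\ell})=i_\ell$. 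The proposition is then the case $m=0$, and the $(\alpha,1),(1,\beta),(1,1)$ variants are handled by omitting the corresponding attribute(s).

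For the routine operators the translation mirrors the semiring structure. A matrix variable $V$ becomes $R_V$. A free variable $v$ of type $(\gamma,1)$ must encode the canonical vector $b_i$, whose $p$-th entry is $\kone$ iff $p=i$; I would realize this as the diagonal relation over $\{\row_\gamma,d_v\}$, obtained by joining two renamed copies of the index relation $R_\gamma$ and filtering with $\sigma_{\{\row_\gamma,d_v\}}$, so the result is $\kone$ exactly when $\row_\gamma=d_v$ lies in the active domain $[1,\dom(\gamma)]$. Transposition becomes a renaming $\rho$ that exchanges the row and column attributes; matrix addition $e_1+e_2$ becomes the union $\arae(e_1)\cup\arae(e_2)$, since union applies $\ksum$ pointwise; and scalar multiplication $e_1\times e_2$ (with $e_1$ of type $(1,1)$, hence empty schema) becomes the join $\arae(e_1)\bowtie\arae(e_2)$. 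For pointwise function applications, $f_\oplus$ reduces to iterated union and $f_\odot$ to a join on all attributes (the Hadamard product); arbitrary external $\RR$-functions lie outside pure $\mathsf{RA}_K^+$ and so are excluded from this comparison (equivalently one works over function-free \langsum).

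The two genuinely interesting cases are matrix multiplication and the $\Sigma$-quantifier, and both hinge on projection. For $e_1\cdot e_2$ with $e_1:(\alpha,\beta)$ and $e_2:(\beta,\gamma)$, I would rename the shared dimension to a fresh common attribute $m$, form the join $\rho(\arae(e_1))\bowtie\rho(\arae(e_2))$ over $\{\row_\alpha,m,\col_\gamma\}$, and then project $m$ away with $\pi_{\{\row_\alpha,\col_\gamma\}}$; the $\bigksum$ in the semantics of projection ranges over the support, and since entries with $m>\dom(\beta)$ are $\kzero$, the sum reproduces $\sum_m (e_1)_{im}(e_2)_{mj}$. For $\Sigma v.e$, recall $\sem{\Sigma v.e}{\I}=\sum_{i=1}^{n}\sem{e}{\I[v:=b_i]}$ with $n=\dom(\gamma)$; since the index $i$ is tracked precisely by the attribute $d_v$, the translation is the projection removing it, $\arae(\Sigma v.e):=\pi_{S\setminus\{d_v\}}(\arae(e))$ where $S$ is the schema of $\arae(e)$, and the projection's sum over the support of $d_v$ again equals the sum over $i=1,\dots,n$.

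The main obstacle I expect is exactly the treatment of canonical-vector variables together with the $\Sigma$ binder. The delicate points are: realizing a free canonical-vector variable as an equality (diagonal) relation \emph{restricted to the active domain}, which is precisely where the auxiliary index relations $R_\gamma$ are indispensable, because $\mathsf{RA}_K^+$ records absence of a tuple by $\kzero$ and has no other way to range over the valid indices; and verifying that the $\bigksum$ appearing in the semantics of projection (for both matrix multiplication and $\Sigma$) ranges over exactly the intended index set, which rests on all out-of-range entries being $\kzero$ and $\kzero$ annihilating under $\kprod$. A secondary, purely bookkeeping obstacle is maintaining freshness and consistency of attribute names through the induction — the summation attribute introduced for each product and the distinct index attributes $d_{v_\ell}$ for possibly equal dimensions $\gamma_\ell$ — which I would manage with renamings $\rho$ so that every join and projection acts on the intended attributes.
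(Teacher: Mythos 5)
Your proposal follows essentially the same route as the paper's own proof: the same strengthening of the induction hypothesis (the paper reserves attributes $\gamma_1,\gamma_2,\ldots$ for the iterator variables where you use $d_{v_\ell}$), the same encoding of a free canonical-vector variable as an equality relation restricted to the active domain, namely $\sigma_{\lbrace \row_\alpha,\gamma_p\rbrace}\bigl(\rho_{\row_\alpha\to \alpha}(R_\alpha)\bowtie \rho_{\gamma_p\to \alpha}(R_\alpha)\bigr)$, the same join translations for Hadamard/scalar products, the same rename--join--project translation of matrix multiplication, and the same projection $\pi_{S\setminus\{\gamma_p\}}$ for the $\Sigma$ binder, with correctness resting on out-of-range entries being $\kzero$, exactly as you note.

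There is, however, one step that fails as written: the addition case. In the paper's $\mathsf{RA}_{K}^+$ syntax a union $\arae'\cup\arae''$ is only well-formed when $\cR(\arae')=\cR(\arae'')$, but under your strengthened claim the schema of $\arae(e_i)$ carries one index attribute per free variable of $e_i$, and the two summands of $e_1+e_2$ need not have the same free variables (consider $v_1\cdot v_1^T + v_2\cdot v_2^T$ under two enclosing $\Sigma$'s). The renamings you invoke for bookkeeping cannot repair this, because the two schemas have different attribute sets; you must \emph{add} attributes to one side. The paper fixes this on the matrix side: it assumes, without loss of generality, that both summands have the same free variables, padding a summand such as $e_1(v_1,v_2)$ to $(v_3^T\cdot v_3)\times e_1(v_1,v_2)$, which is value-preserving since $v^T\cdot v=1$ when $v$ is instantiated by a canonical vector. (Equivalently, one can pad on the relational side by joining $\arae(e_1)$ with a renamed copy of the domain relation $R_\gamma$ for each missing index attribute, which multiplies by $\kone$ on the valid range while extending the schema.) With that repair in place, the rest of your argument goes through exactly as in the paper.
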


\begin{proof}
We start from a matrix schema $\Sch=(\Mnam,\size)$, where $\Mnam\subset \Mvar$ is a finite set of matrix variables, 
and $\size: \Mvar \mapsto \DD\times \DD$ is a function that maps each matrix variable to a pair of size symbols. 
On the relational side we have for each size symbol $\alpha\in\DD\setminus\{1\}$, attributes $\alpha$, $\row_\alpha$, 
and $\col_\alpha$ in $\att$. We also reserve some special attributes $\gamma_1,\gamma_2,\ldots$ whose role will become clear shortly.
For each $V\in\Mnam$ and $\alpha \in \DD$ we denote
by $R_V$ and $R_\alpha$ its corresponding relation name, respectively. 

Then, given $\Sch$ we define the relational 
schema $\text{Rel}(\Sch)$ such that $\fdom(\text{Rel}(\Sch)) =  \{R_\alpha \mid \alpha\in\DD\} \cup \{R_V \mid V \in \Mnam\}$
where $\text{Rel}(\Sch)(R_\alpha) = \{\alpha\}$ and for all $V\in\Mnam$:
\[
\text{Rel}(\Sch)(R_V) = \begin{cases}
\lbrace\row_\alpha,\col_\beta \rbrace & \text{ if $ \size(V)=(\alpha,\beta)$} \\
\lbrace\row_\alpha \rbrace & \text{ if $ \size(V)=(\alpha,1)$} \\
\lbrace\col_\beta \rbrace  &
\text{ if $ \size(V)=(1,\beta)$} \\
\lbrace\rbrace & \text{ if $\size(V)=(1,1)$}.
\end{cases}
\]

Next, for a matrix instance $\I = (\dom,\conc)$ over $\Sch$,
let $V\in\Mnam$ with $\size(V)=(\alpha,\beta)$ and let $\conc(V)$ be its corresponding $K$-matrix of dimension $\dom(\alpha)\times \dom(\beta)$.
The $K$-instance in $\mathsf{RA}_{K}^+$ according to $\I$ is $\text{Rel}(\I)$ with data domain $\ddom = \mathbb{N} \setminus \{0\}$. For each $V\in\Mnam$ we define 
$R_V^{\text{Rel}(\I)}(t):=\conc(V)_{ij}$ whenever $t(\row_\alpha) = i \leq \dom(\alpha)$ and $t(\col_\beta) = j \leq \dom(\beta)$, and $\kzero$ otherwise. 
Also, for each $\alpha \in \DD$ we define $R_\alpha^{\text{Rel}(\I)}(t):=\kone$ whenever $t(\alpha) \leq \dom(\alpha)$, and $\kzero$ otherwise.
If $\size(V)=(\alpha,1)$ then $R_V^{\text{Rel}(\I)}(t):=\conc(V)_{i1}$ whenever $t(\row_\alpha) = i \leq \dom(\alpha)$ and $\kzero$ otherwise.
Similarly, if $\size(V)=(1,\beta)$ then $R_V^{\text{Rel}(\I)}(t):=\conc(V)_{1j}$ whenever $t(\col_\beta) = j \leq \dom(\beta)$ and $\kzero$ otherwise.
If $\size(V)=(1,1)$ then $R_V^{\text{Rel}(\I)}(()):=\conc(V)_{11}$.


Let $e$ be a \langsum expression. In the following we need to distinguish between matrix variables $v$
that occur in $e$ as part of a sub-expression $\ssum v. (\cdot)$, i.e., those variables that will be used to iterate over by means of canonical vectors, and those that are not. To make this distinction clear, we use $v_1,v_2,\ldots$ for those ``iterator'' variables, and capital $V$ for the other variables occurring in $e$. For simplicity, we assume that each occurrence of $\ssum$ has its own iterator variable associated with it. 

We define free (iterator) variables, as follows.
$\mathsf{free}(V):=\emptyset$, $\mathsf{free}(v):=\{v\}$, $\mathsf{free}(e^T):=\mathsf{free}(e)$, $\mathsf{free}(e_1+e_2):=\mathsf{free}(e_1)\cup \mathsf{free}(e_2)$, $\mathsf{free}(e_1\cdot e_2):=\mathsf{free}(e_1)\cup \mathsf{free}(e_2)$,
 $\mathsf{free}(f_\odot(e_1,\ldots,e_k)):=\mathsf{free}(e_1)\cup\cdots \cup \mathsf{free}(e_k)$, and $\mathsf{free}(e=\ssum V. e_1)=\mathsf{free}(e_1)\setminus\{v\}$. We will explicitly denote the free variables in an expression $e$ by writing $e(v_1,\ldots,v_k)$.

We now use the following induction hypotheses:
\begin{itemize}
	\item If $e(v_1,\ldots,v_k)$ is of type $(\alpha,\beta)$ then there exists a
	\rak expression $\arae$ such that $\text{Rel}(\Sch)(\arae(e))=\{\row_\alpha,\col_\beta,\gamma_1,\ldots,\gamma_k\}$
	and such that 
	$$
	\ssem{\arae(e)}{\text{Rel}(\I)}(t)=\sem{e}{\I[v_1\gets b_{i_1},\ldots,v_k\gets b_{i_k}]}_{i,j}
	$$
	for tuple $t(\mathrm{row}_\alpha)=i$, $t(\mathrm{col}_\beta)=j$ and $t(\gamma_s)=i_s$ for $s=1,\ldots, k$.
	\item If $e(v_1,\ldots,v_k)$ is of type $(\alpha,1)$ then there exists a
	\rak expression $\arae$ such that $\text{Rel}(\Sch)(\arae(e))=\{\row_\alpha,\gamma_1,\ldots,\gamma_k\}$
	and such that 
	$$
	\ssem{\arae(e)}{\text{Rel}(\I)}(t)=\sem{e}{\I[v_1\gets b_{i_1},\ldots,v_k\gets b_{i_k}]}_{i,1}
	$$
	for tuple $t(\mathrm{row}_\alpha)=i$,  and $t(\gamma_s)=i_s$ for $s=1,\ldots, k$.
	And similarly for when $e$ is type $(1,\beta)$.
	\item If $e(v_1,\ldots,v_k)$ is of type $(1,1)$ then there exists a
	\rak expression $\arae$ such that $\text{Rel}(\Sch)(\arae(e))=\{\gamma_1,\ldots,\gamma_k\}$
	and such that 
	$$
	\ssem{\arae(e)}{\text{Rel}(\I)}(t)=\sem{e}{\I[v_1\gets b_{i_1},\ldots,v_k\gets b_{i_k}]}_{1,1}
	$$
	for tuple $t(\gamma_s)=i_s$ for $s=1,\ldots, k$.
\end{itemize}
Clearly, this suffices to show the proposition since we there consider expressions $e$ for which $\mathsf{free}(e)=\emptyset$, in which case the above statements reduce to the one given in the proposition.

The proof is by induction on the structure of \langsum expressions. In line with the simplifications in Section~\ref{app:simp}, it suffices to consider pointwise function application with $f_\odot$ instead of scalar multiplication. (We also note that we can express the one-vector operator in \langsum, so scalar multiplication can be expressed using $f_\odot$ in \langsum).

Let $e$ be a \langsum expression.
\begin{itemize}
  \item If $e=V$ then $\arae (e):=R_V$.
  \item If $e=v_p$ then $\arae (e):=\sigma_{\lbrace \row_\alpha,\gamma_p\rbrace}\bigl(\rho_{\row_\alpha\to \alpha}(R_\alpha)\bowtie \rho_{\gamma_p\to \alpha}(R_\alpha)\bigr)$ when  $v_p$ is of type $(\alpha,1)$. It is here that we introduce the attribute $\gamma_p$ associated with iterator variable $v_p$.
 We note that 
$$ \ssem{\arae(v_p)}{\text{Rel}(\I)}(t)=\sem{v_p}{\I[v_p\gets b_{j}]}_{i,1}=(b_j)_{i,1}
$$
for $t(\mathrm{row}_\alpha)=i$ and $t[\gamma_p]=j$. Indeed, $(b_j)_{i,1}=\kone$ if $j=i$
and this holds when $t(\mathrm{row}_\alpha)=t[\gamma_p]=j$, and $(b_j)_{i,1}=\kzero$ if $j\neq i$
and this also holds when $t(\mathrm{row}_\alpha)\neq t[\gamma_p]=j$.


  \item If $e(v_1,\ldots,v_k)=(e_1(v_1,\ldots,v_k))^T$ with $\Sch (e_1)=(\alpha, \beta)$ then \[
\arae(e) :=
\begin{cases}
\rho_{\mathrm{row}_\alpha \to \mathrm{col}_\alpha,\mathrm{col}_\beta \to \mathrm{row}_\beta}\bigl(\arae(e_1)\bigr) & \text{if } \alpha \neq 1 \neq \beta; \cr
\rho_{\mathrm{row}_\alpha \to \mathrm{col}_\alpha}\bigl(\arae(e_1)\bigr) & \text{if } \alpha \neq 1 = \beta; \cr
\rho_{\mathrm{col}_\beta \to \mathrm{row}_\beta}\bigl(\arae(e_1)\bigr) & \text{if } \alpha = 1 \neq \beta; \cr
\arae(e_1) & \text{if } \alpha = 1 = \beta.
\end{cases}
\]
\floris{There is an issue here since $e_1$ and $e_2$ can have different free iterator variables. I think we can ensure that both have the same by introducing them somehow, or alternative, since all operators are linear, by pushing $+$ all the way down? Not sure.}
\item If $e=e_1(v_1,\ldots,v_k)+e_2(v_1,\ldots,v_k)$ with $\Sch (e_1)=\Sch (e_2)=(\alpha, \beta)$ then $\arae (e):=\arae (e_1)\cup \arae (e_2)$. We assume here that $e_1$ and $e_2$ have the same free variables. This is without loss of generality. Indeed, as an example, suppose that we have $e_1(v_1,v_2)$
and $e_2(v_2,v_3)$. Then, we can replace $e_1$ by  $e_1(v_1,v_2,v_3)=(v_3^T\cdot v_3)\times e_1(v_1,v_2)$
and similarly, $e_2$ by $e_2(v_1,v_2,v_3)=(v_1^T\cdot v_1)\times e_2(v_2,v_3)$, where in addition we replace scalar multiplication with its simulation using $f_{\odot}$ and the ones vector, as mentioned earlier. 
	%
  \item If $e=f_\odot(e_1,\ldots, e_k)$ with $\Sch(e_i)=\Sch(e_j)$ for all $i,j\in[1,k]$, then $\arae(e):=\arae(e_1)\Join \cdots \Join\arae(e_k)$.

  \item If $e=e_1\cdot e_2$ with $\Sch (e_1)=(\alpha, \gamma)$ and $\Sch (e_2)=(\gamma, \beta)$, we have two cases. If $\gamma = 1$ then $\arae (e):=\arae (e_1)\Join \arae (e_2)$.
If $\gamma\neq 1$ then
$$
\arae (e) := \pi_{\lbrace \row_{\alpha},\col_{\beta}, \gamma_1,\ldots,\gamma_k \rbrace}\left(\rho_{C\to \col_\gamma}(\arae (e_1))\Join \rho_{C\to \row_\gamma}(\arae (e_2)) \right),
$$
when $\text{Rel}(\Sch)(\arae(e_1))=\{\row_\alpha,\col_\gamma,\gamma_1',\ldots,\gamma_{\ell}'\}$,
$\text{Rel}(\Sch)(\arae(e_2))=\{\row_\gamma,\col_\beta,\gamma_1'',\ldots,\gamma_{\ell}''\}$ and $\{\gamma_1,\ldots,\gamma_k\}=\{\gamma_1',\ldots,\gamma_k',\gamma_1'',\ldots,\gamma_m''\}$.

  \item If $e(v_1,\ldots,v_{p-1},v_{p+1},\ldots,v_k)=\ssum v_p. e_1(v_1,\ldots,v_k)$ where $\Sch(e_1)=(\alpha,\beta)$ and $\Sch(V)=(\gamma,1)$. Then we do 
  $$
  \arae (e):=\pi_{\text{Rel}(\Sch)(\arae(e_1))\setminus\{\gamma_p\}} \arae (e_1).
  $$
 Indeed, by induction we know that 
 $$
\ssem{\arae(e_1)}{\text{Rel}(\I)}(t)=\sem{e}{\I[v_1\gets b_{i_1},\ldots,v_k\gets b_{i_k}]}_{i,j}
$$
for tuple $t(\mathrm{row}_\alpha)=i$, $t(\mathrm{col}_\beta)=j$ and $t(\gamma_s)=i_s$ for $s=1,\ldots, k$.
Hence, for $t(\mathrm{row}_\alpha)=i$, $t(\mathrm{col}_\beta)=j$ and $t(\gamma_s)=i_s$ for $s=1,\ldots, k$ and $s\neq p$,
$$
\ssem{\arae(e_1)}{\text{Rel}(\I)}(t):=\bigksum_{i_p=1,\ldots,\dom(\gamma)} \sem{e_1}{\I[v_1\gets b_{i_1},\ldots,v_k\gets b_{i_k}]}_{i,j},$$
which precisely corresponds to 
$$
\sem{\ssum v_p. e_1(v_1,\ldots,v_k)}{\I[v_1\gets b_{i_1},\ldots,v_{p-1}\gets b_{p-1},v_{p+1}\gets b_{p+1},\ldots,v_k\gets b_k]}_{i,j}.
$$

\end{itemize}
All other cases, when expressions have type $(\alpha,1)$, $(1,\beta)$ or $(1,1)$ can be dealt with in a similar way.
\end{proof}

\subsection{From \rak to \langsum}
Let $\cR$ be binary relational schema. For each $R\in \cR$ we associate a matrix variable 
$V_R$ such that, if $R$ is a binary relational signature, then $V_R$ represents a (square) matrix, 
if $R$ is unary, then $V_R$ represents a vector and if $|R|=0$ then $V_R$ represents a constant. Formally, 
fix a symbol $\alpha \in \DD \setminus \{1\}$. Let $\text{Mat}(\cR)$ denote the \lang \ schema
$(\Mnam_\cR,\size_\cR)$ such that $\Mnam_\cR = \{ V_R \mid R \in \cR\}$ and $\size_\cR(V_R) = (\alpha, \alpha)$ 
whenever $|R| = 2$, $\size_\cR(V_R) = (\alpha, 1)$ whenever $|R|=1$ and $\size_\cR(V_R) = (1, 1)$ whenever $|R|=0$. 
Let $\cJ$ be the $K$-instance of $\cR$ and suppose that $\adom(\cJ) = \{d_1, \ldots, d_n\}$ is 
the active domain (with arbitrary order) of $\cJ$. 
Define the matrix instance $\text{Mat}(\cJ) = (\dom_\cJ,\conc_\cJ)$ such 
that $\dom_\cJ(\alpha) = n$, $\conc_\cJ(V_R)_{i,j} = R^{\cJ}((d_i, d_j))$ whenever $|R|=2$, $\conc_\cJ(V_R)_{i} = R^{\cJ}((d_i))$ 
whenever $|R|=1$, 
\floris{This case relates to nullary relations. What does $R^{\cJ}$ mean?}
and $\conc_\cJ(V_R)_{1,1} = R^{\cJ}$ whenever $|R|=0$. 
Note that we consider the active domain of the whole $K$-instance.

\newcommand{\earae}{e_{\arae}}

We next translate \rak expressions in to \langsum expressions over an extended schema. More specifically, for each attribute $A \in \att$ we define a vector variable $v_A$ of type $(\alpha,1)$. Then for each \rak expression $\arae$ with attributes $A_1, \ldots, A_k$ we define a \langsum expression $\earae(v_{A_1}, \ldots, v_{A_k})$ of type $(1,1)$ such that the following inductive hypothesis holds:
$$
\sem{\earae}{\text{Mat}(\cJ)[v_{A_1} \gets b_{i_1},\ldots, v_{A_k} \gets b_{i_k}]} = 
\ssem{\arae}{\cJ}(t) \ \ \ \ \ \ \  (*)
$$
where $t(A_s)=i_s$ for $s=1,\ldots, k$. The proof of this claim follows by induction on the structure of expressions:
\begin{itemize} \itemsep3mm
	\item If $\arae=R$, then $\earae:=v_{A_1}^T \cdot V_R \cdot v_{A_2}$ if $\mathcal{R}(R)=\{A_1,A_2\}$ with $A_1<A_2$; 
	$\earae:=V_R^T \cdot v_A$ if $\mathcal{R}(R)=\{A\}$; and 
	$\earae:=V_R$ if $\mathcal{R}(R)=\{\}$.
	\item If $\arae=\arae_1\cup \arae_2$ then
	$\earae:=e_{\arae_1} + e_{\arae_2}$.
	\item If $\arae=\pi_{Y}(\arae_1)$ for $Y\subseteq \mathcal{R}(\arae_1)$ and $\{B_1, \ldots, B_l\} = \mathcal{R}(\arae_1) \setminus Y$ then
	$$
	\earae:= \Sigma v_{B_1}. \ \Sigma v_{B_2}. \ \ldots \Sigma v_{B_l}. \ e_{\arae_1}
	$$
	\item If $\arae=\sigma_{Y}(\arae_1)$ with $Y\subseteq\mathcal{R}(\arae_1)$ then
	$$
	\earae:=e_{\arae_1}\cdot \prod_{A,B\in Y} (v_{A}^T \cdot v_{B}).
	$$
	Here $\Pi$ is the matrix multiplication of expressions of type $(1,1)$.
	\item If $\arae=\rho_{X\mapsto Y}(\arae_1)$ then
	$$\earae:=e_{\arae_1}[v_B\gets v_A\mid A\in X, B\in Y, A\mapsto B].$$
	In other words, we rename variable $v_B$ with variable $v_B$ in all the expression $e_{\arae_1}$. 
	\item If $\arae=\arae_1\bowtie \arae_2$ then
	$\earae:=e_{\arae_1} \cdot e_{\arae_1}$ where the product is over expression of type $(1,1)$.
\end{itemize}
One can check, by induction over the construction, that the inductive hypothesis $(*)$ holds in each case.
Now we can obtain proposition \ref{prop:ara_to_sum}.

\newtheorem*{ARATOSUM}{Proposition~\ref{prop:ara_to_sum}}

\begin{ARATOSUM}
  Let $\cR$ be a binary relational schema. For each $\mathsf{RA}_{K}^+$  expression $\arae$ over $\cR$  such that $|\cR(\arae)| = 2$, there exists a \langsum  expression $\Psi(\arae)$ over \lang \ schema $\text{Mat}(\cR)$ such that for any $K$-instance $\cJ$ with $\adom(\cJ) = \{d_1, \ldots, d_n\}$ over $\cR$,
	$$
	\ssem{\arae}{\cJ}((d_i, d_j))=\sem{\Psi(\arae)}{\text{Mat}(\cJ)}_{i,j}.
	$$
	Similarly for when $|\cR(\arae)| = 1$, or $|\cR(\arae)| = 0$ respectively.
\end{ARATOSUM}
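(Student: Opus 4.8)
The plan is to piggyback on the indexed translation $\earae$ constructed just above, together with its inductive invariant $(*)$, and to obtain $\Psi(\arae)$ by ``materializing'' the scalar $\earae$ into a matrix by summing it over its free iterator variables. Concretely, recall that $\earae(v_{A_1},\ldots,v_{A_k})$ has type $(1,1)$, that the free iterator variables of $\earae$ are exactly the attribute-variables of $\cR(\arae)$, and that for $\cR(\arae)=\{A_1,A_2\}$ with $A_1<A_2$ the invariant $(*)$ gives $\sem{\earae}{\text{Mat}(\cJ)[v_{A_1}\gets b_i,v_{A_2}\gets b_j]}=\ssem{\arae}{\cJ}((d_i,d_j))$. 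I would then define
$$\Psi(\arae):=\Sigma v_{A_1}.\,\Sigma v_{A_2}.\ \earae\times(v_{A_1}\cdot v_{A_2}^T),$$
which is a well-typed \langsum expression of type $(\alpha,\alpha)$, since the two $\Sigma$-quantifiers bind precisely the free iterator variables of $\earae$, leaving a closed expression.

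The verification would be a direct unfolding of the semantics of $\Sigma$ and of scalar multiplication. Iterating $v_{A_1}$ and $v_{A_2}$ over the canonical vectors $b_1,\ldots,b_n$ and using that $b_i\cdot b_j^T$ is the matrix unit carrying a single $\kone$ in position $(i,j)$, one obtains
$$\sem{\Psi(\arae)}{\text{Mat}(\cJ)}=\bigoplus_{i=1}^n\bigoplus_{j=1}^n \sem{\earae}{\text{Mat}(\cJ)[v_{A_1}\gets b_i,v_{A_2}\gets b_j]}\times(b_i\cdot b_j^T),$$
so that its $(i,j)$ entry equals $\sem{\earae}{\text{Mat}(\cJ)[v_{A_1}\gets b_i,v_{A_2}\gets b_j]}$, which by $(*)$ is precisely $\ssem{\arae}{\cJ}((d_i,d_j))$. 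For the lower-arity cases I would use the analogous assembly: for $\cR(\arae)=\{A\}$, set $\Psi(\arae):=\Sigma v_A.\ \earae\times v_A$, whose $i$-th entry is $\sem{\earae}{\text{Mat}(\cJ)[v_A\gets b_i]}=\ssem{\arae}{\cJ}((d_i))$; and for $\cR(\arae)=\emptyset$, simply take $\Psi(\arae):=\earae$, whose value is $\ssem{\arae}{\cJ}(())$.

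Given that the real labour — the inductive construction of $\earae$ and the proof of the invariant $(*)$ for each $\mathsf{RA}_{K}^+$ operator — is already carried out above, the only genuine content remaining for this proposition is the assembly step and its correctness, which is routine. The one point deserving care is the bookkeeping of free iterator variables: I must check that the $\Sigma$-quantifiers in $\Psi$ close over exactly the attribute-variables in $\cR(\arae)$, so that $\Psi(\arae)$ is independent of any ambient assignment, and that the order convention $A_1<A_2$ is applied consistently so that the first attribute indexes rows and the second indexes columns, matching the tuple convention $t=(u,v)$. This is also the natural place to recall that intermediate subexpressions of $\arae$ may have arbitrary arity, which the multi-variable form of $(*)$ handles uniformly, even though the final result is restricted to arity at most two.
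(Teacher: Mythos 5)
Your proposal is correct and follows essentially the same route as the paper: the paper likewise treats the inductive construction of $e_{\arae}$ with invariant $(*)$ as the real work, and then defines $\Psi(\arae) = \Sigma v_{A_1}.\,\Sigma v_{A_2}.\ e_{\arae} \cdot (v_{A_1}\cdot v_{A_2}^T)$ for the binary case, $\Sigma v_A.\,(v_A \cdot e_{\arae})$ for the unary case, and $e_{\arae}$ itself for the nullary case, verifying correctness by unfolding the semantics of $\Sigma$ exactly as you do. The only cosmetic difference is your use of scalar multiplication $\times$ where the paper writes matrix product $\cdot$, which is immaterial since $e_{\arae}$ has type $(1,1)$.
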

\begin{proof}
As a consequence of the previous discussion above, when $\arae$ is a \rak expression 
such that $\mathcal{R}(\arae)=\{A_1,A_2\}$ with $A_1<A_2$ then we define
$$
\Psi(\arae) \ = \ \Sigma v_{A_1}. \ \Sigma v_{A_2}. \ \earae \cdot (v_{A_1} \cdot v_{A_2}^T). 
$$
Instead, when $\mathcal{R}(\arae)=\{A\}$ we have
$$
\Psi(\arae) \ = \ \Sigma v_{A}. \  (v_{A} \cdot \earae). 
$$
And when $\mathcal{R}(\arae)=\{\}$ we have
$$
\Psi(\arae) \ = \ \earae.
$$
By using the inductive hypothesis $(*)$ one can check that $\Psi(\arae)$ works in each case as expected. 
\end{proof}

\subsection{Weighted logics and \langprod}
\newtheorem*{WL}{Proposition~\ref{prop:wl}}

We prove proposition \ref{prop:wl}:

\begin{WL}
  Weighted logics over $\Gamma$ and \langprod over $\Sch$ have the same expressive power. More specifically,
  \begin{itemize}
  	\item for each \langprod expression $e$ over $\Sch$ such that $\Sch(e)=(1,1)$, there exists a WL-formula $\Phi(e)$ over $\text{WL}(\Sch)$ such that for every instance $\I$ of~$\Sch$, 
  	$
  	\sem{e}{\I} = \ssem{\Phi(e)}{\text{WL}(\I)}
  	$.
  	\item for each WL-formula $\varphi$ over $\Gamma$ without free variables, there exists a \langprod expression $\Psi(\varphi)$ such that for any structure $\cA$ over~$\text{Mat}(\Gamma)$,
  	$
  	\ssem{\varphi}{\cA}=\sem{\Psi(\varphi)}{\text{Mat}(\cA)}
  	$.
  \end{itemize}	
\end{WL}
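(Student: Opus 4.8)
The plan is to prove both inclusions by mutual structural induction, in direct analogy with the \langsum--\rak correspondence of Propositions~\ref{prop:sum_to_ara} and~\ref{prop:ara_to_sum}, but now matching the product quantifier $\Pi$ of WL to the Hadamard quantifier $\qhadprod v$ of \langprod. The guiding dictionary is: free first-order variables of a WL formula correspond to free ``iterator'' vector variables of a \langprod expression (each of type $(\alpha,1)$); an instantiation $\sigma(x)=a_i$ corresponds to setting the associated vector variable to the canonical vector $b_i$; the quantifier $\Sigma x$ corresponds to $\Sigma v$, and $\Pi x$ to $\qhadprod v$. Throughout I would invoke the simplification of Appendix~\ref{app:simp} to assume \langprod uses only the semiring operations (addition, the products $\cdot$, $\times$, $\circ$, and the pointwise $f_\oplus,f_\odot$), since these are exactly the operations WL can express.

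For the direction from WL to \langprod (the map $\Psi$), I would define, by induction on $\varphi$, a \langprod expression $e_\varphi$ of type $(1,1)$ whose free vector variables are $\{v_x \mid x \text{ free in } \varphi\}$, maintaining the invariant
$$
\sem{e_\varphi}{\text{Mat}(\cA)[v_{x_1}\gets b_{i_1},\dots,v_{x_k}\gets b_{i_k}]} \;=\; \ssem{\varphi}{\cA}(\sigma), \quad \sigma(x_s)=a_{i_s}.
$$
The atoms translate as $x=y \mapsto v_x^{T}\cdot v_y$ (the dot product of two canonical vectors is $\kone$ exactly when they coincide) and $R(\bar x)\mapsto v_x^{T}\cdot V_R\cdot v_y$, $V_R^{T}\cdot v_x$, or $V_R$ according to $\arity(R)\in\{2,1,0\}$. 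For the connectives, $\varphi_1\ksum\varphi_2 \mapsto e_{\varphi_1}+e_{\varphi_2}$ and $\varphi_1\kprod\varphi_2\mapsto e_{\varphi_1}\cdot e_{\varphi_2}$ (matrix product on $1\times1$ matrices is $\kprod$). For the quantifiers, $\Sigma x.\varphi'\mapsto \Sigma v_x. e_{\varphi'}$ and $\Pi x.\varphi'\mapsto \qhadprod v_x. e_{\varphi'}$; unfolding the definitions shows $\Sigma v$ sums the body over all $n$ canonical vectors and $\qhadprod v$ (initialised at $\kone$) multiplies it, matching $\bigksum_{a\in A}$ and $\bigkprod_{a\in A}$ since $|A|=n$. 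The only nuisance is that $e_{\varphi_1}$ and $e_{\varphi_2}$ may have different free variables in the $\ksum$ case; this is repaired by the standard padding trick, multiplying each $e_{\varphi_i}$ by factors $v_z^{T}\cdot v_z=\kone$ to force both to depend on the same variables without altering the value. Taking $\Psi(\varphi):=e_\varphi$ for a sentence $\varphi$ gives the claim.

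For the direction from \langprod to WL (the map $\Phi$), I would strengthen the hypothesis to all four types. For a subexpression $e$ with free iterator variables $v_1,\dots,v_k$ I produce a WL formula $\Phi(e)$ whose free variables are $\{y_1,\dots,y_k\}$ together with a row variable $r$ (if the row dimension is $\alpha$) and a column variable $c$ (if the column dimension is $\alpha$), so that $\ssem{\Phi(e)}{\text{WL}(\I)}(\sigma)$ equals the $(\sigma(r),\sigma(c))$-entry of $\sem{e}{\I[v_s\gets b_{\sigma(y_s)}]}$. A matrix variable $V$ maps to $R_V(r,c)$, $R_V(r)$, $R_V(c)$, or $R_V$ per its type; a free vector variable $v$ maps to the equality atom $(r=y_v)$; transpose swaps $r$ and $c$; addition and the products map to $\ksum$ and $\kprod$ of the translations (with the same padding for $+$); and $\Sigma v.e_1$, $\qhadprod v.e_1$ map to $\Sigma y_v.\Phi(e_1)$, $\Pi y_v.\Phi(e_1)$. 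The substantive case is matrix product: writing $z$ for a fresh variable, I set $\Phi(e_1\cdot e_2):=\Sigma z.\bigl(\Phi(e_1)[c\mapsto z]\kprod \Phi(e_2)[r\mapsto z]\bigr)$ when the shared dimension is $\alpha$, reproducing $\bigksum_{k}(e_1)_{r,k}\kprod(e_2)_{k,c}$, and I drop the summation in the degenerate subcases where the inner dimension is $1$.

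I expect the main obstacle to be the bookkeeping in the $\Phi$ direction: the matrix-product case must correctly thread the fresh inner-index variable and distinguish the several combinations of operand types (inner product, outer product, matrix--vector, and so on), while the $\ksum$ cases in both directions need the free-variable alignment argument. A conceptual point worth flagging, rather than a difficulty, is that the correspondence hinges on commutativity: $\Pi x.\varphi$ aggregates over $A$ with no prescribed order, and it is precisely because $\qhadprod v$ performs the commutative pointwise $\kprod$ (and all semirings here are commutative) that the two match; this is also why the non-commutative matrix-product quantifier of \langmprod is not captured by WL and is deferred to the next subsection. Finally, the square-matrix and arity-$\le 2$ assumptions are exactly what guarantee that every intermediate \langprod type lies in $\{(\alpha,\alpha),(\alpha,1),(1,\alpha),(1,1)\}$ over a single dimension $n=|A|$, keeping the index-variable bookkeeping finite and uniform.
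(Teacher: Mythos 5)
Your proposal is correct and follows essentially the same route as the paper's proof: both directions proceed by structural induction with the identical dictionary (free first-order variables $\leftrightarrow$ free iterator vector variables instantiated by canonical vectors, $\Sigma x \leftrightarrow \Sigma v$, $\Pi x \leftrightarrow \qhadprod v$, atoms $x=y \mapsto v_x^T\cdot v_y$ and $R(\bar{x})\mapsto v_x^T\cdot V_R\cdot v_y$), the same strengthened four-type invariant with row/column variables for the \langprod-to-WL direction, and the same fresh-variable summation $\Sigma z.\,\Phi(e_1)[c\mapsto z]\kprod\Phi(e_2)[r\mapsto z]$ for matrix product. Your explicit treatment of the free-variable padding in the $\ksum$ case and your remark on commutativity are sound refinements of details the paper leaves implicit.
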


\begin{proof}
Both directions are proved by induction on the structure of expressions.

\smallskip

\noindent \textbf{(\langprod to WL)} First, let $\Sch=(\Mnam,\size)$ be a schema of square matrices, that is, there exists an $\alpha$ such 
that $\size(V) \in \{1, \alpha\} \times \{1,\alpha\}$ for every $V \in \Mnam$.
Define the relational vocabulary $\text{WL}(\Sch) = \{R_V \mid V \in \Mnam\}$ such that $\arity(R_V) = 2$ 
if $\size(V) = (\alpha, \alpha)$, $\arity(R_V) = 1$ if $\size(V) \in \{(\alpha,1), (1,\alpha)\}$, and 
$\arity(R_V) = 0$ otherwise.
Then given a matrix instance $\I = (\dom,\conc)$ over $\Sch$ with  $\dom(\alpha) = n$ define the structure 
$\text{WL}(\I) = (\{1, \ldots, n\}, \{R_V^{\I}\} )$ such that 
$R_V^{\I}(i, j) = \conc(V)_{i,j}$ if $\size(V) = (\alpha, \alpha)$, $R_V^{\I}(i) = \conc(V)_{i}$ 
if $\size(V) \in \{(\alpha,1), (1,\alpha)\}$, and $R_V^{\I} = \conc(V)$ if $\size(V) = (1,1)$.

Similar to the proof of Proposition~\ref{prop:sum_to_ara}, for each expression $e(v_1, \ldots, v_k)$ of type $(\alpha, \alpha)$ we must encode in WL the $\alpha$ and the vector variables $v_1, \ldots, v_k$. For this, we use variables $x_{\alpha}^\row$, $x_{\alpha}^\col$, and $x_{v_i}$ for each variable $v_1, \ldots, v_k$. Then we use the following inductive hypothesis (similar to Proposition~\ref{prop:sum_to_ara}):

\newcommand{\varphie}{\varphi_e}
\newcommand{\xr}{x_{\alpha}^\row}
\newcommand{\xc}{x_{\alpha}^\col}

\begin{itemize}
	\item If $e(v_1,\ldots,v_k)$ is of type $(\alpha,\alpha)$ then there exists a WL formula $\varphie(x_{\alpha}^\row,x_{\alpha}^\col, x_{v_1}, \ldots, x_{v_k})$ such that
	$$
	\ssem{\varphie}{\text{WL}(\I)}(\sigma) \ = \ \sem{e}{\I[v_1\gets b_{i_1},\ldots,v_k\gets b_{i_k}]}_{i,j}
	$$
	for assignment $\sigma$ with $\sigma(\xr)=i$, $\sigma(\xc)=j$ and $\sigma(x_{v_s})=i_s$ for $s=1,\ldots, k$.
	
	\item If $e(v_1,\ldots,v_k)$ is of type $(\alpha,1)$ then there exists a WL formula $\varphie(x_{\alpha}^\row, x_{v_1}, \ldots, x_{v_k})$ such that
	$$
	\ssem{\varphie}{\text{WL}(\I)}(\sigma) \ = \ \sem{e}{\I[v_1\gets b_{i_1},\ldots,v_k\gets b_{i_k}]}_{i}
	$$
	for assignment $\sigma$ with $\sigma(\xr)=i$ and $\sigma(x_{v_s})=i_s$ for $s=1,\ldots, k$.
	And similarly for when $e$ is type $(1,\alpha)$.
	
	\item If $e(v_1,\ldots,v_k)$ is of type $(1,1)$ then there exists a WL formula $\varphie( x_{v_1}, \ldots, x_{v_k})$ such that
	$$
	\ssem{\varphie}{\text{WL}(\I)}(\sigma) \ = \ \sem{e}{\I[v_1\gets b_{i_1},\ldots,v_k\gets b_{i_k}]}
	$$
	for assignment $\sigma$ with $\sigma(x_{v_s})=i_s$ for $s=1,\ldots, k$.
\end{itemize}
If we prove the previous statement we are done, because the last bullet is what we want to show when $e$ has no free vector variables. 
Then rest of the proof is to go by induction on the structure of \langprod expressions.
For a WL-formula $\varphi$ and FO-variables $x,y$, we will write  $\varphi[x \mapsto y]$ the formula $\varphi$ when $x$ is replaced with $y$ all over the formula (syntactically).
Let $e$ be a \langprod expression.
\begin{itemize} \itemsep3mm
  \item If $e:=V$ and $\Sch(e)= (\alpha, \alpha)$ then $\varphie:=R_V(\xr, \xc)$. Similarly, if $\Sch(e)$ is of type $(\alpha,1)$, $(1, \alpha)$, or $(1,1)$, then $\varphie:=R_V(\xr)$, $\varphie:=R_V(\xc)$, and $\varphie:=R_V$, respectively.
  
  \item If $e:=v$, for $v\in \{v_1,\ldots ,v_k\}$, and $\Sch(v)= (\alpha,1)$ then $\varphie := \xr = x_v$. Similarly, if $\Sch(v)= (1,\alpha)$ then $\varphie := \xc = x_v$.
  
  \item if $e:= e_1^T$ and $\Sch(e)=(\alpha,\alpha)$ then
  $$
  \varphie:= \varphi_{e_1}[\xr \mapsto \xc, \xc \mapsto \xr].
  $$
  Similarly, if $\Sch(e)$ is equal to $(\alpha,1)$ or $(1,\alpha)$ then $\varphie:=\varphi_{e_1}[\xr \mapsto \xc]$ and $\varphie:=\varphi_{e_1}[\xc \mapsto \xr]$, respectively.

	\item If $e=e_1+e_2$ with $\Sch (e_1)=\Sch (e_2)$, then $\varphie:= \varphi_{e_1} \ksum \varphi_{e_2}$.
	
	\item If $e=f_\odot(e_1,\ldots, e_k)$ with $\Sch(e_i)=\Sch(e_j)$ for all $i,j\in[1,k]$, then $\varphie:= \varphi_{e_1} \kprod \varphi_{e_2} \cdots \kprod \varphi_{e_k}$.
	
	\item If $e=e_1\cdot e_2$ with $\Sch (e_1)=\Sch (e_2)=(\alpha, \alpha)$,  then $\varphie:= \Sigma y. \  \varphi_{e_1}[\xc \mapsto y] \kprod \varphi_{e_2}[\xr \mapsto y]$ where $y$ is a fresh variable not mentioned in $\varphi_{e_1}$ or $\varphi_{e_2}$. Instead, if $\Sch (e_1)= (\alpha', 1)$ and $\Sch (e_2)=(1, \alpha'')$ with $\alpha', \alpha'' \in \{\alpha, 1\}$, then $\varphie := \varphi_{e_1} \kprod \varphi_{e_2}$.
	
	\item If $e=\ssum v. e_1(v)$, then we define $\varphie := \Sigma x_{v}. \  \varphi_{e_1}(x_v)$.

  \item If $e=\qhadprod v. e_1(v)$, then $\varphie := \sprod x_{v}.\  \varphi_{e_1}(x_v)$.
\end{itemize}
From the construction it is now straightforward to check that the inductive hypothesis holds for all cases. To conclude this direction, we have to define $\Phi(e) := \varphie$ for every expression $e$ and we are done.

\medskip
\noindent \textbf{(WL to \langprod)} We now encode weighted structures into matrices and vectors. Let $\Gamma$ be a relational vocabulary 
where $\arity(R) \leq 2$. 
Define $\text{Mat}(\Gamma) = (\Mnam_\Gamma,\size_\Gamma)$ such 
that $\Mnam_\Gamma = \{ V_{R} \mid R \in \Gamma\}$ and $\size_\Gamma(V_{R})$ is equal to 
$(\alpha, \alpha), (\alpha, 1)$, or $(1,1)$ if $\arity(R)=2$, $\arity(R)=1$, or $\arity(R)=0$, 
respectively, for some $\alpha \in \DD$. Similarly, let $\cA = (A, \{R^{\cA}\}_{R \in \Gamma})$ 
be a structure with $A = \{a_1, \ldots, a_n\}$, ordered arbitrarily.
Then we define the matrix instance $\text{Mat}(\cA) = (\dom,\conc)$ such that $\dom(\alpha) = n$, 
$\conc(V_{R})_{i,j} = R^{\cA}(a_i, a_j)$ if $\arity(R)=2$, $\conc(V_{R})_{i,1} = R^{\cA}(a_i)$ if $\arity(R)=1$, 
and $\conc(V_{R})_{1,1} = R^{\cA}$ otherwise.

\newcommand{\evarphi}{e_\varphi}

Similar to the above direction, we have to encode the FO variables of a formula $\varphi$ with vector variables in the equivalent \langprod expression $\evarphi$. For this, for each FO variable $x$ we define a vector variable $v_x$ of type $(\alpha, 1)$. Then for each formula $\varphi(x_1, \ldots, x_k)$ we define an expression $\evarphi(v_{x_1}, \ldots, v_{x_k})$ of type $(1,1)$ such that for every assignment $\sigma$ of $x_1, \ldots, x_k$ we have:
$$
\sem{\evarphi}{\text{Mat}(\cA)[v_{x_1} \gets b_{i_1},\ldots,v_{x_1}\gets b_{i_k}]} \ = \ \ssem{\varphi}{\cA}(\sigma) 
$$
such that $\sigma(x_{s}) = i_s$ for every $s \leq k$. Note that when the formula has no free variables, the proof of the proposition is shown. Finally, we proceed by induction over the formula $\varphi$ over $\Gamma$.
\begin{itemize} \itemsep3mm
  \item If $\varphi:=x=y$, then $\evarphi:= v_x^T \cdot v_{y}$.
  \item If $\varphi:=R(x,y)$, then $\evarphi:=v_x^T \cdot V_R \cdot v_{y}$. Similarly, if $\varphi:=R(x)$ or $\varphi:=R$, then $\evarphi:= V_R^T \cdot v_{x}$  and $\evarphi:= V_R$, respectively. 
  \item If $\varphi = \varphi_1 \ksum \varphi_2$, then $\evarphi:= e_{\varphi_1} + e_{\varphi_2}$.
  \item If $\varphi = \varphi_1 \kprod \varphi_2$, then $\evarphi:= f_\odot(e_{\varphi_1},e_{\varphi_2})$.
  \item If $\varphi = \ssum x.\  \varphi_1$, then $\evarphi :=\ssum v_x.\ e_{\varphi_1}$.
  \item If $\varphi = \qhadprod x. \varphi_1$, then $\evarphi := \sprod v_x.\ e_{\varphi_1}$.
\end{itemize}
The inductive hypothesis can be proved following the above construction. To finish the proof, we define $\Psi(\varphi) := \evarphi$ and the proposition is shown.

\end{proof}

\subsection{Matrix inversion in \langmprod extended with order}\label{app:asset_order}

We conclude by verifying that the fragment defined in Section \ref{subsec:langlinear}, i.e, 
\langmprod extended with order and $f_{>0}$, 
can perform matrix inversion and compute the determinant. To this aim, we verify that all order
predicates in Section \ref{app:order} can be derived using $\ssum$, $\sprod$, $f_{>0}$ and 
$e_{S_{<}}$. Given this, it suffices to observe that Csanky's algorithm, as shown in Section~\ref{app:inverse}, only relies on expressions using $\ssum$ and $\sprod$ and order information on canonical vectors and $f_/$.
As consequence, our fragment can perform matrix inversion and compute the determinant.

It remains to show that if we have $e_{S_{<}}$, using $\ssum$ and $\sprod$ and $f_{>0}$ we can
can define all order predicates from Section~\ref{app:order}. We note that due to the restricted for-loops
in $\ssum$ and $\sprod$, we do not have access to the intermediate
result in the iterations and as such, it is unclear whether order information can be computed. This is why
we assume access to $e_{S_<}$.

We first remark that if we have $e_{S_{<}}$, we can also obtain
 $e_{S_{\leq}}$ by adding $e_{\mathsf{Id}}$. Hence,
%
%
we can compute $\mathsf{succ}$ and $\mathsf{succ}^+$ as well. Furthermore, 
\begin{align*}
  e_{\mathsf{min}}&:=\ssum v. \left[ \sprod w. \mathsf{succ}(w,v)\right] \times v. \\
  e_{\mathsf{max}}&:=\ssum v. \left[ \sprod w. \left( 1-\mathsf{succ}(w,v) \right) \right] \times v.
\end{align*}
Both expressions are only using $\ssum$ and $\sprod$ and $\mathsf{succ}$, so are in our fragment.
Furthermore, if we have $f_{>0}$ then we can define
$$
e_{\mathsf{Pred}}:= e_{S_{<}}- f_{>0}(e_{S_{<}}^2)
$$
Also, recall that  $e_{\mathsf{Next}}:=e_{\mathsf{Pred}}^T$. As a consequence, 
we can now define $\mathsf{prev}(v)$ and $\mathsf{next}(v)$ as in \ref{app:order}. Similarly,
it is readily verified that also $e_{\mathsf{getPrevMatrix}}(V)$,
$e_{\mathsf{getNextMatrix}}(V)$, $e_{\mathsf{min}+i}$ and $e_{\mathsf{max}+i}$ can be expressed
in our fragment.

\end{document}